\documentclass[11pt,reqno]{amsart}
\usepackage[utf8]{inputenc}
\usepackage[T1]{fontenc}
\usepackage{charter} 

\usepackage{amsmath, amssymb, amsthm, amsfonts}

\usepackage{geometry}
\usepackage{fancyhdr}
\usepackage{microtype}
\usepackage{enumitem}

\usepackage[all]{xy}
\usepackage{tikz-cd}

\usepackage{thmtools}

\usepackage[
    colorlinks=true, 
    linkcolor=blue,  
    citecolor=red,  
    urlcolor=blue,  
    hypertexnames=false
]{hyperref} 
\usepackage{cleveref}
\newtheorem{theorem}{Theorem}[section]
\newtheorem{lemma}[theorem]{Lemma}
\newtheorem{proposition}[theorem]{Proposition}
\newtheorem{corollary}[theorem]{Corollary}

\theoremstyle{definition}
\newtheorem{definition}[theorem]{Definition}
\newtheorem{example}[theorem]{Example}
\newtheorem{remark}[theorem]{Remark}
\newcommand{\CatPL}{\mathbf{PLGrp}_{\text{reg}}^{\text{sc}}}
\newcommand{\CatLB}{\mathbf{LieBiAlg}_{\text{reg}}}

\numberwithin{equation}{section}

\title[Drinfeld Correspondence in Infinite-Dimension]{Drinfeld Correspondence in Infinite Dimensions}

\author{Praful Rahangdale}
\address{
Praful Rahangdale:
 Institut f\"ur Mathematik, Universit\"at Paderborn,
Warburger Str. 100, 33098 Paderborn, Germany}
\email{praful@math.uni-paderborn.de}

\subjclass[2020]{Primary 53D17, 17B62; Secondary 58B25, 22E65, 37K10} 
\keywords{Poisson Lie groups, Lie bialgebras, Infinite-dimensional manifolds}

\begin{document}

 \begin{abstract}
 In this article, we establish the Drinfeld correspondence between Poisson Lie groups and their infinitesimal counterparts, Lie bialgebras, in the infinite-dimensional setting. Specifically, we extend this correspondence to regular Lie groups modeled on convenient vector spaces, with a particular focus on those modeled on nuclear Fr\'{e}chet and nuclear Silva spaces. Important examples of interest include the smooth loop group $C^{\infty}(\mathbb{S}^{1}, G)$ and the analytic loop group $C^{\omega}(\mathbb{S}^{1}, G)$ of a 1-connected real Lie group $G$, as well as $\widetilde{\mathrm{Diff}^{\infty}(M)_0}$ and $\widetilde{\mathrm{Diff}^{\omega}(M)_0}$---the universal covering groups of the identity components of the groups of smooth and real-analytic diffeomorphisms of a compact manifold $M$.
 \end{abstract}
\maketitle

\tableofcontents

\section{Introduction}

Poisson structures provide a mathematical framework to describe the dynamics of a classical system, where the phase space is a smooth manifold or a Lie group. A fundamental task in this context is to restrict the dynamics to certain submanifolds; a system possessing sufficiently many such submanifolds is called an integrable system. While Poisson structures and methods for identifying integrable systems are well-understood in finite dimensions, many physically significant systems—such as the Korteweg--de Vries (KdV) hierarchy and the non-linear Schrödinger equation—evolve in infinite-dimensional phase spaces \cite{9, 10, 16, 19}. Our aim is to develop a mathematical framework for describing such dynamics via Poisson structures on manifolds modeled on infinite-dimensional locally convex topological vector spaces.

(For the convenience of the reader, the relevant basic concepts and notation of infinite-dimensional calculus used throughout this work are collected in Appendix \ref{appendix:calculus}.)

The notion of a Poisson Lie group and its infinitesimal counterpart, the Lie bialgebra, were introduced by Drinfeld in \cite{7}, where the one-to-one correspondence between Poisson Lie groups and Lie bialgebras in the finite-dimensional case is established. We explore a framework in which this correspondence holds in an infinite-dimensional setting. Various difficulties arise when extending results concerning Poisson structures, leading to different approaches in the literature (e.g., \cite{5, 26, 32}), as summarized in \cite{15}. Primarily, the following obstacles prevent the direct application of finite-dimensional techniques:
\begin{enumerate}
    \item The set of differentials $df(x)$ for $f \in C^{\infty}(M)$ may not coincide with the cotangent space $T_{x}^*M$:
    \[
    \{df(x) \mid f \in C^{\infty}(M), x \in M\} \neq T_{x}^*M.
    \]
    \item Hamiltonian vector fields may not exist, as not all derivations on $C^{\infty}(M)$ are represented by smooth vector fields on the manifold.
    \item There is a failure of the isomorphism between multilinear forms and the exterior power of the tangent space: $L_{\text{alt}}^{k}(T_{x}^*M) \ncong \hat{\wedge}^{k}T_{x}M$.
\end{enumerate}

The correspondence between Banach Lie bialgebras and Manin triples was studied in \cite{32}. It was proven that the Lie algebra $\mathfrak{g}$ of a Poisson Lie group $(G, \mathbb{F}, \pi)$ carries both a Lie bialgebra structure and a Lie-Poisson space structure with respect to $\mathfrak{b} := \mathbb{F}_{e}$. This article provides a positive response to the open problem concerning the integration of a Banach Lie bialgebra $(\mathfrak{g}, \mathfrak{b}, \delta)$ to a Poisson Lie group structure $(G, \mathbb{F}_{\mathfrak{b}}, \pi)$, provided $\mathfrak{g}$ is the Lie algebra of a 1-connected Banach Lie group $G$ (see \cite[Remark 5.14]{32}).

In the algebraic setting, standard Lie bialgebra structures and their classical twists on loop algebras are explored in \cite{1, 2, 3}. A construction of a Lie bialgebra structure for smooth loop algebras is provided in \cite{23}. For manifolds that are regular as topological spaces and are modeled on nuclear Fréchet or nuclear Silva spaces, the following special facts are available:
\begin{enumerate}
    \item The existence of smooth bump functions \cite[16.10]{20} implies that $\{df(x) \mid f \in C^{\infty}(M)\} = T_{x}^*M$.
    \item The space of continuous derivations on $C^{\infty}(M)$ is topologically isomorphic to the space of smooth vector fields, ensuring the existence of Hamiltonian vector fields $X_{f} := \{-, f\} \in \Gamma(TM)$ \cite{30}.
    \item Grothendieck's solutions to the ``problem of topologies'' in the nuclear case imply that $L_{\text{alt}}^{k}(T_{x}^*M) \cong \hat{\wedge}^{k}T_{x}M$ is a topological isomorphism \cite[Corollary 2, 21.2]{17}, where $\hat{\wedge}$ denotes the skew-symmetric tensor product completed in the projective tensor product topology.
\end{enumerate}

We demonstrate that in this special case, the Drinfeld correspondence (Theorem \ref{D-4}) closely resembles the classical finite-dimensional setting. We aim to generalize this correspondence. One direction holds in significant generality: we show that the differentiation of the Poisson bivector $\pi$ of a convenient Poisson Lie group $(G, \mathbb{F}, \pi)$ at the identity defines a Lie bialgebra structure $(\mathfrak{g}, \mathbb{F}_e, d\pi(e))$ on the Lie algebra $\mathfrak{g}$ (Theorem \ref{D-1}).

The converse holds if $G$ is a 1-connected regular Lie group. We show that there exists a left adjoint integration functor (denoted ``Int'') between the category of Poisson structures on 1-connected regular Lie groups, $\CatPL$, and the category of Lie bialgebras, $\CatLB$ (Theorem \ref{equivalence}).

\subsection{Structure of the article}
In this article, we aim to prove two versions of the Drinfeld theorem in an infinite-dimensional setting based on the availability of Hamiltonian vector fields. For an infinite-dimensional convenient smooth manifold $M$ modeled on a convenient space $E$, we begin our article by extending the notion of a Poisson structure $(M, \mathbb{F}, \pi)$ in Definition \ref{3.2}. 

(\textit{For simplicity, we will call a convenient smooth manifold a smooth manifold and convenient smooth morphisms smooth morphisms, respectively.})

We then discuss Poisson Lie group structures in the convenient setting. In Section \ref{4}, we prove that the linearization of the Poisson Lie group structure $(G, \mathbb{F}_{\mathfrak{b}}, \pi)$ at the identity induces a Lie bialgebra structure $(\mathfrak{g}, \mathfrak{b}, \delta)$ on $\mathfrak{g}$ (Theorem \ref{D-1}). Moreover, if the existence of Hamiltonian vector fields is assumed ($\pi^{\sharp}(\mathbb{F}_{\mathfrak{b}}) \subseteq TG$), we show that $(\mathfrak{g}, \mathfrak{b}, \delta)$ is a Lie bialgebra and $\mathfrak{g}$ is a Lie-Poisson space with respect to $\mathfrak{b}$. The one-to-one correspondence between Manin triples $(\mathfrak{d}, \mathfrak{g}, \mathfrak{b})$ and Lie bialgebra structures $(\mathfrak{g}, \mathfrak{b}, \delta)$ is also discussed in Section \ref{4}. 

In Section \ref{5}, we provide a method to construct examples of Manin triples from a $C^{*}$-algebra equipped with a torus-invariant, faithful, tracial state. We provide explicit examples of Manin triples, including:
\begin{enumerate}
    \item $C^{\infty}(\mathbb{S}^{1}, \mathfrak{g})$; the smooth loop algebra of a real semisimple Lie algebra $\mathfrak{g}$.
    \item $C^{\omega}(\mathbb{S}^{1}, \mathfrak{g})$; the analytic loop algebra of a real semisimple Lie algebra $\mathfrak{g}$.
    \item $\Gamma^{\infty}(T\mathbb{S}^{1})$; the space of smooth vector fields on $\mathbb{S}^{1}$.
    \item $\Gamma^{\omega}(T\mathbb{S}^{1})$; the space of real-analytic vector fields on $\mathbb{S}^{1}$.
\end{enumerate}

In Section \ref{6}, we prove the Drinfeld theorem in both cases: when Hamiltonian vector fields exist and when they do not.

\begin{theorem}[see Theorem \ref{D-1}]
    Let $(G, \mathbb{F}_{\mathfrak{b}}, \pi)$ be a convenient Poisson Lie group with Lie algebra $\mathfrak{g}$. Let $\mathfrak{b} \subseteq \mathfrak{g}'$ be a convenient space such that the canonical map $\operatorname{incl}: \mathfrak{b} \to \mathfrak{g}', \alpha \mapsto \alpha$ is smooth, and the coadjoint action $\operatorname{Ad}^*: G \times \mathfrak{g}' \to \mathfrak{g}'$ restricts to a smooth action on $\mathfrak{b}$. Then the differential $\delta := d\pi(e)$ of $\pi$ at the identity induces a unique (up to isomorphism) convenient Lie bialgebra structure $(\mathfrak{g}, \mathfrak{b}, \delta)$ on $\mathfrak{g}$, where $\mathbb{F}_{\mathfrak{b}}(e) = \mathfrak{b}$. Moreover, if $\mathfrak{g}$ and $\mathfrak{b}$ are Fréchet or Silva spaces, then it defines a topological (continuous) Lie bialgebra structure.
\end{theorem}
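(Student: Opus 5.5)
I will mimic the finite-dimensional argument of Drinfeld, working throughout with the subbundle $\mathbb{F}_{\mathfrak{b}}\subseteq T^*G$ rather than the full cotangent bundle. The Poisson bracket is only defined on functions whose differentials take values in $\mathbb{F}_{\mathfrak{b}}$.

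\textbf{Step 1: right trivialization.} Define $\pi_R: G\to L^2_{\mathrm{alt}}(\mathfrak{b})$ by $\pi_R(g)(\alpha,\beta):=\pi(g)\bigl((T\rho_{g^{-1}})^*\alpha,(T\rho_{g^{-1}})^*\beta\bigr)$. Here $\mathbb{F}_{\mathfrak{b}}$ is the right-translate of $\mathfrak{b}$, so this is well defined. Since $\operatorname{Ad}^*$ restricts to a smooth action on $\mathfrak{b}$, $\pi_R$ is a smooth map into the convenient space of bounded alternating forms on $\mathfrak{b}$. Multiplicativity $\pi(gh)=T\lambda_g\pi(h)+T\rho_h\pi(g)$ becomes the 1-cocycle identity
\[
\pi_R(gh)=\pi_R(g)+\operatorname{Ad}_g\pi_R(h),
\]
where $\operatorname{Ad}_g$ acts on $L^2_{\mathrm{alt}}(\mathfrak{b})$ by $(\operatorname{Ad}_g\omega)(\alpha,\beta)=\omega(\operatorname{Ad}_g^*\alpha,\operatorname{Ad}_g^*\beta)$. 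In particular $\pi_R(e)=0$. I then set $\delta:=d\pi_R(e):\mathfrak{g}\to L^2_{\mathrm{alt}}(\mathfrak{b})$, which is bounded linear as the derivative of a smooth map.

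\textbf{Step 2: the cocycle condition.} Differentiating the group cocycle identity at $g=e$ in the direction $x$ and then at $h=e$ in the direction $y$ uses only the convenient chain rule and the smoothness of $\operatorname{Ad}^*$ on $\mathfrak{b}$. It yields
\[
\delta[x,y]=\operatorname{ad}_x\delta(y)-\operatorname{ad}_y\delta(x),
\]
that is, $\delta$ is a 1-cocycle of $\mathfrak{g}$ with values in $L^2_{\mathrm{alt}}(\mathfrak{b})$. This is the expected form of the Lie bialgebra compatibility condition in the paper's definition.

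\textbf{Step 3: the Jacobi identity for the dual bracket.} Define $[\alpha,\beta]_{\mathfrak{b}}(x):=\delta(x)(\alpha,\beta)$, which we need to show is a Lie bracket on $\mathfrak{b}$. I will realize it via functions: for $\alpha\in\mathfrak{b}$, choose $f_\alpha\in C^\infty(G)$ with $df_\alpha(e)=\alpha$ and $df_\alpha\in\mathbb{F}_{\mathfrak{b}}$ near $e$, for instance $f_\alpha=\alpha\circ\log$ on a chart, or a right-invariant choice adapted to the chart. Since $\pi(e)=0$, one gets $d\{f_\alpha,f_\beta\}(e)=[\alpha,\beta]_{\mathfrak{b}}$, a computation using only the second-order jet of $\pi$ at $e$. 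The Jacobi identity of $\{\cdot,\cdot\}$ then transfers to $[\cdot,\cdot]_{\mathfrak{b}}$. Alternatively, I would differentiate the Schouten condition $[\pi,\pi]=0$ at $e$ directly, where lower-order terms vanish because $\pi(e)=0$.

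This is the main obstacle, because the availability of such test functions with differentials in $\mathbb{F}_{\mathfrak{b}}$ depends on Definition \ref{3.2}. If suitable local functions are not available, I will use the bivector route. To make it work, I will express $[\pi,\pi]$ evaluated on right-invariant sections of $\mathbb{F}_{\mathfrak{b}}$ through $\pi_R$ and $d\pi_R$, then evaluate at $e$.

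\textbf{Step 4: uniqueness and continuity.} Uniqueness up to isomorphism follows because $\delta$ is determined by $\pi$ and $\mathfrak{b}=\mathbb{F}_{\mathfrak{b}}(e)$, and a different choice of trivialization changes it only by a bialgebra isomorphism. For the topological statement, note that for Fréchet or Silva spaces, bounded linear and bilinear maps coincide with continuous ones, the latter being separately continuous and hence jointly continuous via Baire or the (DF) property. Bornological completeness then makes $\delta$ and the bracket $[\cdot,\cdot]_{\mathfrak{b}}$ continuous.
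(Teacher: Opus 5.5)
\textbf{There is a genuine gap in Step 3.} Your Steps 1, 2 and 4 follow the paper: the right-trivialized cocycle $\theta(g)=R^{**}_{g^{-1}}\pi(g)$, $\delta=d\theta(e)$, and the bornological/Baire arguments for continuity. Differentiating the group cocycle identity twice, instead of going through the semidirect-product homomorphism of Lemmas \ref{4.4} and \ref{4.5}, is a harmless variation. Step 3, however, is where the real content lies, and it fails in two places.

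First, you define $[\alpha,\beta]_{\mathfrak b}$ only as the functional $x\mapsto\delta(x)(\alpha,\beta)$, that is, as an element of $\mathfrak g'$. A convenient Lie bialgebra requires $\delta'$ to map $\mathfrak b\wedge\mathfrak b$ into $\mathfrak b$, boundedly for the convenient structure of $\mathfrak b$ itself, which may be finer than the one induced from $\mathfrak g'$. This does not follow from the cocycle identity: on an abelian group every bounded linear $\delta\colon\mathfrak g\to L^2_{skew}(\mathfrak b)$ is a cocycle. So the factorization condition (1) of Definition \ref{3.2} has to enter, and your proposal never uses it.

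Second, your route to the Jacobi identity needs, for every $\alpha\in\mathfrak b$, a function $f_\alpha$ with $df_\alpha$ a section of $\mathbb F_{\mathfrak b}$ near $e$ and $df_\alpha(e)=\alpha$. Nothing in the hypotheses provides these functions:
\begin{enumerate}
\item A logarithm need not exist; for $\mathrm{Diff}^\infty(M)$ the exponential map is not locally surjective.
\item For a chart $\phi$, the covectors $d(\alpha\circ\phi)(g)$ have no reason to lie in $\mathbb F_{\mathfrak b}(g)$ when $\mathfrak b\subsetneq\mathfrak g'$.
\item Right-invariant $1$-forms are not closed.
\end{enumerate}
The fallback via $[\pi,\pi]=0$ is not available either. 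In this generality there is no Schouten bracket on $\Gamma(L^2_{skew}(\mathbb F))$; the paper has one only when $\pi^\sharp$ takes values in $TG$, or in the nuclear setting of Section \ref{8}. The Poisson condition here is (1)--(2) of Definition \ref{3.2}, not $[\pi,\pi]=0$.

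\textbf{What the paper does instead needs no test functions.} For $\xi,\eta\in\mathfrak b$ it takes the right-invariant sections $\tilde\xi(g)=R^*_{g^{-1}}\xi$ and $\tilde\eta(g)=R^*_{g^{-1}}\eta$ of $\mathbb F_{\mathfrak b}$. These exist and depend boundedly on $\xi,\eta$ because the right action on $\mathbb F$ is smooth.
\begin{enumerate}
\item One has $\pi(\tilde\xi,\tilde\eta)(g)=\theta(g)(\xi,\eta)$, which is your $\pi_R$. Since $\pi(e)=0$, this gives $d\bigl(\pi(\tilde\xi,\tilde\eta)\bigr)(e)=\delta(\cdot)(\xi,\eta)$.
\item Condition (1) writes this differential as $i_{\mathbb F}\circ\widetilde{d(\pi(\tilde\xi,\tilde\eta))}$. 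Hence $\delta'(\xi\wedge\eta)=\widetilde{d(\pi(\tilde\xi,\tilde\eta))}(e)\in\mathfrak b$.
\item Boundedness follows by writing the bracket as $\operatorname{ev}_e\circ\widetilde d\circ P\circ(E\times E)$ (Proposition \ref{linearization}, Lemma \ref{4.7}).
\item The Jacobi identity is read off from condition (2) for these sections, differentiated at $e$. Because $\pi(e)=0$, only the values $\tilde\xi(e)=\xi$, etc., enter.
\end{enumerate}
Your fallback of evaluating on right-invariant sections is close to this, but it should go through conditions (1)--(2) rather than a Schouten bracket.

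Strictly, Definition \ref{3.2} is formulated for closed sections, so the paper is tacitly applying (1)--(2) to invariant sections, and your concern is legitimate. As written, though, your Step 3 neither adopts that reading nor produces closed sections with prescribed values at $e$. So it does not establish that $\delta'$ is a bounded Lie bracket on $\mathfrak b$.
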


\begin{theorem}[see Theorem \ref{D-2}]
    Let $G$ be a 1-connected convenient regular Lie group. Let $\mathfrak{b} \subseteq \mathfrak{g}'$ be a convenient space such that the canonical map $\operatorname{incl}: \mathfrak{b} \to \mathfrak{g}', \alpha \mapsto \alpha$ is smooth, and the coadjoint action $\operatorname{Ad}^*: G \times \mathfrak{g}' \to \mathfrak{g}'$ restricts to a smooth action on $\mathfrak{b}$. Then a convenient Lie bialgebra structure $(\mathfrak{g}, \mathfrak{b}, \delta)$ integrates to a convenient Poisson Lie group structure $(G, \mathbb{F}_{\mathfrak{b}}, \pi)$, where $\mathbb{F}_{\mathfrak{b}}$ is generated by the right translation of $\mathfrak{b}$ by the action of $G$ (i.e., $\mathbb{F}_{\mathfrak{b}}(g) := R_{g}^*\mathfrak{b}$), and $\pi$ is given by the right translation of the group 1-cocycle $\Theta$ of $G$, where $\Theta$ is the integration of the Lie algebra 1-cocycle $\delta$ given by formula \eqref{1-cocycle}.
\end{theorem}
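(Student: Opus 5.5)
We follow the classical Drinfeld scheme, transported to the convenient regular setting. Regard $\delta:\mathfrak{g}\to \hat{\wedge}^2\mathfrak{g}$ (or, more invariantly, $\delta$ as a bounded bilinear skew map $\mathfrak{b}\times\mathfrak{b}\to\mathfrak{g}'$-dual, i.e. a bivector evaluated on $\mathfrak{b}$) as a Lie algebra $1$-cocycle for the adjoint representation of $\mathfrak{g}$ on the space $V$ of bounded skew bilinear forms on $\mathfrak{b}$. The hypothesis that $\operatorname{Ad}^*$ restricts to a smooth action of $G$ on $\mathfrak{b}$ means $G$ acts smoothly on $V$, with derived action the one under which $\delta$ is a cocycle. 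The first step is to integrate $\delta$ to a smooth group $1$-cocycle $\Theta:G\to V$, $\Theta(gh)=\Theta(g)+\operatorname{Ad}_g\Theta(h)$, with $d\Theta(e)=\delta$. Our plan is to form the semidirect product $V\rtimes G$ (a regular Lie group, since $V$ is convenient and $G$ regular), note that $X\mapsto(\delta(X),X)$ is a Lie algebra morphism $\mathfrak{g}\to V\rtimes\mathfrak{g}$ precisely because of the cocycle identity, and integrate it using the Lie theorem for morphisms out of $1$-connected regular groups (Kriegl--Michor 40.3). The resulting homomorphism $G\to V\rtimes G$ has the form $g\mapsto(\Theta(g),g)$, and the homomorphism property is exactly the cocycle identity. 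Equivalently, one can write $\Theta(g)=\int_0^1 \operatorname{Ad}_{c(t)}\delta(\delta^r c(t))\,dt$ along a smooth path $c$ from $e$ to $g$. Path-independence uses simple connectivity, and this is formula \eqref{1-cocycle}.

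Next we define $\mathbb{F}_{\mathfrak{b}}(g):=R_g^*\mathfrak{b}$, which is a smooth subbundle of $T'G$ by right trivialization. We set $\pi(g)(R_g^*\alpha,R_g^*\beta):=\Theta(g)(\alpha,\beta)$, i.e. $\pi=R_{g*}\Theta(g)$. Smoothness of $\pi$ follows from smoothness of $\Theta$ and of the right trivialization. Multiplicativity, $\pi(gh)=L_{g*}\pi(h)+R_{h*}\pi(g)$, is a direct translation of the cocycle identity, using $\operatorname{Ad}_g=L_{g*}R_{g^{-1}*}$.

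The remaining and main step is the Jacobi identity for the bracket $\{f,h\}=\pi(df,dh)$ on functions with differentials in $\mathbb{F}_{\mathfrak{b}}$. We will check that the bracket closes on this admissible class, using smoothness of the $\operatorname{Ad}^*$-action on $\mathfrak{b}$ together with $d\Theta(e)=\delta$. We then reduce Jacobi to the vanishing of the Schouten-type trivector $[\pi,\pi]$, computed on triples of covectors in $\mathbb{F}_{\mathfrak{b}}$; no Hamiltonian vector fields are needed. Because $\pi$ is multiplicative, $[\pi,\pi]$ is again multiplicative. Right-translating it to $e$ therefore gives a group $1$-cocycle $G\to L^3_{\mathrm{alt}}(\mathfrak{b})$ whose derivative at $e$ is the co-Jacobi expression $\operatorname{Alt}(\delta\otimes 1)\delta$, and this vanishes because $\delta^t$ is a Lie bracket on $\mathfrak{b}$. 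A cocycle on a connected group with vanishing derivative is zero: differentiating the cocycle identity gives the ODE $\delta^r$ of the cocycle equal to $\operatorname{Ad}$ of the derivative, which is $0$. Uniqueness of solutions in regular groups then gives zero.

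We expect the obstacle to be making this Schouten computation rigorous without the identification $L^k_{\mathrm{alt}}\cong\hat\wedge^k$. To handle it, we intend to define $[\pi,\pi]$ directly as the Jacobiator of $\{\,,\}$ on right-invariant admissible $1$-forms $R^*\alpha$. We will show it is tensorial in these forms, verify multiplicativity at the level of these forms, and thereby avoid any completed tensor products.
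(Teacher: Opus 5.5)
Your first two steps are the paper's. You integrate $X\mapsto(X,\delta(X))$ into $G\rtimes L^2_{\mathrm{skew}}(\mathfrak b)$ via Lemma~\ref{4.4} and Proposition~\ref{proposition 4.10}, put $\pi(g)=R_g^{**}\Theta(g)$, and read multiplicativity off the cocycle identity. (Minor slip: with $\delta^r c=R_{c^{-1}*}c'$ the correct integrand is $\operatorname{Ad}_{c(t)}\delta(\delta^l c(t))$, equivalently the paper's \eqref{1-cocycle}; with $\delta^r$ your formula is wrong in general.) For the Jacobi identity you use the cocycle argument that the paper runs only in Section~\ref{8}, where $\mathfrak b=\mathfrak g'$ and $\mathfrak b'=\mathfrak g$; in the convenient case the paper instead evaluates $d(\pi(\alpha,\beta))$ on left-invariant sections. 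The difficulty is that the two steps you defer are exactly where the general setting bites.

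\emph{Closure.} In the right trivialization a section of $\mathbb F_{\mathfrak b}$ is a smooth map $\hat\alpha\colon G\to\mathfrak b$. Differentiate $\pi(\alpha,\beta)(g)=\Theta(g)(\hat\alpha(g),\hat\beta(g))$ along $t\mapsto\exp(tX)g$, using $\tfrac{d}{dt}|_{t=0}\Theta(\exp(tX)g)=\delta(X)+\operatorname{ad}^{(2)}_X\Theta(g)$. The result is $\langle X,[\hat\alpha,\hat\beta]_{\mathfrak b}\rangle+\Theta(g)(A_\alpha X,\hat\beta)+\Theta(g)(\hat\alpha,A_\beta X)$, where $A_\alpha X=D_X\hat\alpha+\operatorname{ad}^*_X\hat\alpha\in\mathfrak b$. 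Only the first term is controlled by $d\Theta(e)=\delta$. The others are functionals $\phi\circ A_\alpha$ on $\mathfrak g$ with $\phi=\Theta(g)(\cdot,\hat\beta(g))\in\mathfrak b'$. For closed sections these lie in $\mathfrak b$ when $\phi$ is evaluation at a vector of $\mathfrak g$, i.e.\ when $\pi^\sharp(\mathbb F)\subseteq TG$. Since $\tfrac{d}{dt}|_{t=0}\Theta(\exp tX)(\cdot,\beta)=\langle X,[\cdot,\beta]_{\mathfrak b}\rangle$, this is essentially the Lie--Poisson hypothesis, which the paper puts into Theorem~\ref{D-3} and not into this statement. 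Nothing in your proposal handles $\phi\in\mathfrak b'$ not coming from $\mathfrak g$, and smoothness of $\operatorname{Ad}^*$ on $\mathfrak b$ does not help.

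\emph{The trivector.} The Jacobiator $J(\alpha,\beta,\gamma)=\sum_{\mathrm{cyc}}\pi(\alpha,\widetilde{d\pi(\beta,\gamma)})$ is not $C^\infty(G)$-linear. For $f\in\mathcal A_{\mathbb F}$, replacing $\alpha$ by $f\alpha$ adds $\pi(\gamma,\alpha)\pi(\beta,\widetilde{df})+\pi(\alpha,\beta)\pi(\gamma,\widetilde{df})=(df\wedge\alpha)(\pi^\sharp\beta,\pi^\sharp\gamma)$. The tensorial object is $J-\sum_{\mathrm{cyc}}d\alpha(\pi^\sharp\beta,\pi^\sharp\gamma)$. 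It agrees with $J$ on closed forms, but right-invariant forms are not closed: $d\alpha^R(X^R,Y^R)=\pm\alpha([X,Y])$. So the quantity you plan to kill does not vanish even for honest Poisson Lie groups. In finite dimensions, $J(\alpha^R,\beta^R,\gamma^R)(g)=\pm\sum_{\mathrm{cyc}}\alpha([\Theta(g)^\sharp\beta,\Theta(g)^\sharp\gamma])$. This is nonzero for suitable $\alpha,\beta,\gamma$ at every point of $\mathrm{SU}(2)$ where a non-trivial multiplicative Poisson bivector is non-zero. Hence no ``tensorial, multiplicative, cocycle with zero derivative'' argument can apply to it. The correct tensor requires evaluating $d\alpha$ on $\pi^\sharp\beta,\pi^\sharp\gamma$, i.e.\ Hamiltonian vector fields again; this is why the paper only introduces $[\pi,\pi]_s$ for structures $(M,\mathbb F,\rho)$.

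The paper's own shortcut does not supply the missing idea either. For left-invariant $\alpha$, the covector $R^*_{\exp tY}\alpha(x\exp tY)$ depends on $t$ through $\operatorname{Ad}_{\exp(-tY)}$, so a term pairing $\pi(x)$ with $\operatorname{ad}^*_Y\alpha(e)$ is dropped there. It also evaluates a Jacobiator on non-closed sections. As it stands, your argument needs an extra input such as $\pi^\sharp(\mathbb F)\subseteq TG$, which is the situation of Theorem~\ref{D-3} and is automatic in Section~\ref{8}. Under that assumption the closure computation and the corrected trivector both make sense, and your cocycle argument is the natural way to conclude.
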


\begin{theorem}[see Theorem \ref{D-3}]
    Let $G$ be a 1-connected regular Lie group with convenient Lie algebra $\mathfrak{g}$. Let $\mathfrak{b} \subseteq \mathfrak{g}'$ be a convenient space such that the canonical map $\operatorname{incl}: \mathfrak{b} \to \mathfrak{g}', \alpha \mapsto \alpha$ is smooth and the coadjoint action $\operatorname{Ad}^*: G \times \mathfrak{g}' \to \mathfrak{g}'$ restricts to a smooth action on $\mathfrak{b}$. Then the following assertions are equivalent up to isomorphism:
    \begin{enumerate}
        \item $(G, \mathbb{F}_{\mathfrak{b}}, \pi)$ is a Poisson Lie group structure such that $\pi^{\sharp}(\mathbb{F}_x) \subset T_xG$ for each $x \in G$.
        \item $(\mathfrak{g}, \mathfrak{b}, \delta)$ is a Lie bialgebra structure, and $\mathfrak{g}$ is a Lie-Poisson space with respect to $\mathfrak{b}$.
        \item $(\mathfrak{d} = \mathfrak{g} \oplus \mathfrak{b}, \mathfrak{g}, \mathfrak{b})$ is a Manin triple.
    \end{enumerate}
\end{theorem}

Later, we show that the correspondence between convenient Poisson Lie groups and their Lie bialgebras extends to an equivalence between the category of Poisson structures on 1-connected regular Lie groups $\CatPL$ and the category of Lie bialgebras $\CatLB$.

\begin{theorem}[Equivalence between $\CatPL$ and $\CatLB$, see Theorem \ref{equivalence}]
    The differentiation functor $\mathcal{L}ie$, defined by taking the tangent map at the identity, establishes an equivalence of categories between $\CatPL$ and $\CatLB$. Its inverse is a unique left adjoint functor (Integration, denoted ``Int'') up to natural isomorphism. This means every Lie bialgebra structure $(\mathfrak{g}, \mathfrak{b}, \delta)$ in $\CatLB$ integrates to a unique Poisson Lie group structure $(G, \mathbb{F}, \pi)$ on the corresponding Lie group $G$ such that $\mathcal{L}ie(G) = \mathfrak{g}$.
\end{theorem}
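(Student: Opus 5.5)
The plan is to build the equivalence from the two directions already announced: differentiation (Theorem \ref{D-1}) and integration (Theorem \ref{D-2}). First I fix the morphisms. In $\CatPL$ a morphism $\phi:(G,\mathbb{F}_{\mathfrak{b}},\pi_G)\to(H,\mathbb{F}_{\mathfrak{c}},\pi_H)$ is a smooth group homomorphism that is a Poisson map. This means that $\phi^*$ maps $\mathbb{F}_{\mathfrak{c}}$ into $\mathbb{F}_{\mathfrak{b}}$, and that $\phi_*\pi_G(g)=\pi_H(\phi(g))$ on the relevant covectors. In $\CatLB$ a morphism $f:(\mathfrak{g},\mathfrak{b},\delta)\to(\mathfrak{h},\mathfrak{c},\delta')$ is a bounded Lie algebra morphism with $f'(\mathfrak{c})\subseteq\mathfrak{b}$ and $(f\otimes f)\circ\delta=\delta'\circ f$.

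\textbf{Defining $\mathcal{L}ie$.} Theorem \ref{D-1} gives $\mathcal{L}ie$ on objects. On morphisms I set $\mathcal{L}ie(\phi)=T_e\phi$. To check that this is a bialgebra morphism, I write multiplicativity of $\pi$ as $\pi(g)=(R_g)_*\Theta(g)$ and note that $\phi$ being Poisson gives $\wedge^2\operatorname{Ad}$-equivariance data, namely
\[
\Theta_H(\phi(g))=(\wedge^2 T_e\phi)\,\Theta_G(g).
\]
Differentiating at $e$ yields $\delta'\circ T_e\phi=(\wedge^2T_e\phi)\circ\delta$. Functoriality then follows from the chain rule.

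\textbf{Defining Int.} Let $\mathfrak{g}$ be in $\CatLB$; I take the regular-integrable case. Let $G$ be the 1-connected regular Lie group with Lie algebra $\mathfrak{g}$, which is unique up to isomorphism. Theorem \ref{D-2} equips $G$ with $(\mathbb{F}_{\mathfrak{b}},\pi)$, where $\pi=R_*\Theta$ and $\Theta$ is the unique group 1-cocycle integrating $\delta$.
\begin{itemize}
\item On morphisms, a bialgebra morphism $f$ integrates to a unique smooth homomorphism $\phi:G\to H$ with $T_e\phi=f$. This uses Lie's second theorem for 1-connected regular Lie groups in the convenient setting, which I assume from the regularity framework.
\item To see that $\phi$ is Poisson, I compare the two $H$-valued cocycles $g\mapsto\Theta_H(\phi(g))$ and $g\mapsto\wedge^2T_e\phi\,\Theta_G(g)$ for the $G$-action on $\wedge^2\mathfrak{h}$ via $\operatorname{Ad}\circ\phi$. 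Both vanish at $e$ and have the same derivative, so they agree by the uniqueness part of the cocycle integration used for Theorem \ref{D-2}. That uniqueness argument is: the difference $c$ satisfies a linear ODE $\delta^l c = (\text{derivative data}) = 0$ along paths, and $G$ is connected and regular.
\item The condition $\phi^*\mathbb{F}_{\mathfrak{c}}\subseteq\mathbb{F}_{\mathfrak{b}}$ follows from $f'(\mathfrak{c})\subseteq\mathfrak{b}$ and right-invariance, since $\phi\circ R_g=R_{\phi(g)}\circ\phi$.
\end{itemize}

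\textbf{Natural isomorphisms.} $\mathcal{L}ie\circ\mathrm{Int}=\mathrm{id}$ holds on the nose, because $\mathcal{L}ie(G)=\mathfrak{g}$, $\mathbb{F}_{\mathfrak{b}}(e)=\mathfrak{b}$ and $d\Theta(e)=\delta$. For $\mathrm{Int}\circ\mathcal{L}ie\cong\mathrm{id}$, let $(G,\mathbb{F},\pi)$ be given. Then $\pi(g)=(R_g)_*\Theta(g)$ with $\Theta$ a cocycle integrating $d\pi(e)$, and $\mathbb{F}(g)=R_g^*\mathbb{F}(e)$ for objects of $\CatPL$. By uniqueness, the identity map (composed with the isomorphism of 1-connected integrations) is a Poisson isomorphism, and naturality follows from uniqueness of lifted homomorphisms. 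Full faithfulness of $\mathcal{L}ie$ is the statement that homomorphisms between 1-connected regular groups correspond bijectively to Lie algebra morphisms, with the Poisson condition matching by the cocycle comparison above. An equivalence makes Int both a left and a right adjoint, and it is unique up to natural isomorphism.

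\textbf{Main obstacle.} I expect the hardest step to be the integration of Lie algebra morphisms to group homomorphisms, together with uniqueness of integrated cocycles, in the convenient regular setting. These go beyond Theorem \ref{D-2}'s construction, and the objects of $\CatLB$ must be restricted to those integrable to regular groups. I would first try to reduce both to the evolution map $\operatorname{Evol}$: define $\phi(\operatorname{Evol}(\gamma))=\operatorname{Evol}(f\circ\gamma)$, and show well-definedness using 1-connectedness via the standard smooth-homotopy argument, then reuse the same argument for cocycles.
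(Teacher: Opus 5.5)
Your proposal is correct and follows essentially the same route as the paper. Both arguments rest on three ingredients: Theorems \ref{D-1} and \ref{D-2}; Proposition \ref{proposition 4.10}, used to integrate bialgebra morphisms between 1-connected regular groups; and the identification of the cocycles $\Theta_2\circ\Phi$ and $(\phi^*|_{\mathfrak{b}_2})_*\circ\Theta_1$ through their common derivative at $e$. The only difference is packaging: the paper exhibits the hom-set bijection $\Phi_{Y,X}(\Psi)=\mathcal{L}ie(\Psi)\circ\eta_Y$ of the adjunction $\mathrm{Int}\dashv\mathcal{L}ie$, whereas you check full faithfulness and the unit/counit isomorphisms directly, and you get cocycle uniqueness from $dc\equiv 0$ on connected $G$ rather than by lifting to $G\ltimes L^2_{\mathrm{skew}}(\mathfrak{b}_2)$.
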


In the final section, we discuss a particular case where the modeling space of a Lie group is a nuclear Fréchet or a nuclear Silva space. We discuss examples fitting this setting, such as the smooth loop group $C^{\infty}(\mathbb{S}^{1}, G)$ and the real-analytic loop group $C^{\omega}(\mathbb{S}^{1}, G)$ of a finite-dimensional real Lie group $G$, as well as $\widetilde{\mathrm{Diff}^{\infty}(M)_0}$ and $\widetilde{\mathrm{Diff}^{\omega}(M)_0}$---the universal covering groups of the identity components of the groups of smooth and real-analytic diffeomorphisms of a compact manifold $M$.
\section{Poisson structures on infinite-dimensional manifolds}

Let $\pi_{\mathbb{F}}: \mathbb{F} \to M$ be a smooth vector bundle over a smooth manifold $M$ with a convenient typical fiber $F$. Let $L^2_{skew}(F)$ denote the space of continuous skew-symmetric bilinear forms on $F$, equipped with the topology of uniform convergence on the bounded subsets of $F\times F$.
We define the vector bundle $L^2_{skew}(\mathbb F)$ over $M$ as
\[ L^2_{skew}(\mathbb F):=\bigcup_{x\in M}L^2_{skew}(\mathbb F_x)\]
 where $L^2_{skew}(\mathbb F_x)$ denotes the space of all skew-symmetric bounded bilinear forms on $\mathbb F_x$, which is equipped with a convenient structure by \cite[Proposition 5.6]{20}.
The fibers of our new bundle are given by $L^2_{skew}(\mathbb{F})_x := L^2_{skew}(\mathbb{F}_x)$. The bundle projection $\pi_{skew}: L^2_{skew}(\mathbb{F}) \to M$ maps $\omega \in L^2_{skew}(\mathbb{F}_x)$ to $x \in M$.

For any local trivialization of the bundle $\mathbb{F}$ over an open set $U \subseteq M$, denoted by $\tau: \mathbb{F}|_U \to U \times F$, we have a fiber-wise topological isomorphism $\tau_x: \mathbb{F}_x \to F$ for each $x \in U$.

We define the induced local trivialization for $L^2_{skew}(\mathbb{F})$ over $U$ as
\[
\Theta_\tau : L^2_{skew}(\mathbb{F})|_U \to U \times L^2_{skew}(F)
\]
\[
\Theta_\tau(\omega) := (x, (\tau_x^{-1})^*\omega)
\]
where $x = \pi_{skew}(\omega)$, and the pullback $(\tau_x^{-1})^*\omega \in L^2_{skew}(F)$ is defined for all $u, v \in F$ by
\[
((\tau_x^{-1})^*\omega)(u, v) := \omega(\tau_x^{-1}(u), \tau_x^{-1}(v)).
\]

The inverse of this trivialization map is given by
\[
\Theta_\tau^{-1}: U \times L^2_{skew}(F) \to L^2_{skew}(\mathbb{F})|_U
\]
\[
(x, \eta) \mapsto \tau_x^*\eta
\]
where $\tau_x^*\eta \in L^2_{skew}(\mathbb{F}_x)$ is defined by $(\tau_x^*\eta)(w_1, w_2) = \eta(\tau_x(w_1), \tau_x(w_2))$ for $w_1, w_2 \in \mathbb{F}_x$.

To verify that this defines a smooth vector bundle structure, let $\sigma: \mathbb{F}|_V \to V \times F$ be another local trivialization such that $U \cap V \neq \emptyset$. The transition map for $\mathbb{F}$ is given by a smooth map $x \mapsto \sigma_x \circ \tau_x^{-1} \in GL(F)$.

The corresponding transition map for $L^2_{skew}(\mathbb{F})$ over $U \cap V$ is:
\[
\Theta_\sigma \circ \Theta_\tau^{-1}: (U \cap V) \times L^2_{skew}(F) \to (U \cap V) \times L^2_{skew}(F)
\]
\[
(x, \eta) \mapsto (x, (\sigma_x^{-1})^*(\tau_x^*\eta)) = (x, (\tau_x \circ \sigma_x^{-1})^*\eta).
\]

Since the map $x \mapsto \tau_x \circ \sigma_x^{-1} \in GL(F)$ is smooth, and the pullback operation on bounded bilinear forms depends smoothly on the operator $\tau_x \circ \sigma_x^{-1}$, the induced transition map into $GL(L^2_{skew}(F))$ is also smooth. Therefore, $L^2_{skew}(\mathbb{F})$ is a smooth vector bundle over $M$.
\begin{definition}[Weak subbundle]
    Let $p:E\to M$ be a convenient vector bundle over a smooth manifold $M$. A vector subbundle $p':F\to M$ of $p:E\to M$ endowed with a convenient structure is called a weak subbundle of $E$ if the canonical mapping $i_{F}: F\to E, v\mapsto v$ is a smooth vector bundle morphism with respect to the convenient structure of the vector bundle $p':F\to M$. 
\end{definition}

\begin{definition}[Poisson structure]\label{3.2}
Let $M$ be a convenient manifold modeled on a convenient space $E$. Let $\mathbb{F}$ be a weak  subbundle of $T'M$ with the canonical smooth vector bundle morphism $i_{\mathbb F}: \mathbb F\to T'M$ such that $\mathbb F_x$ separates points in $T_xM$ for each $x\in M$.

Let $\pi \in \Gamma(M, L_{\text{skew}}^{2}(\mathbb{F}))$ be a smooth bivector  such that for all local sections $\alpha,\beta,\gamma\in\Gamma(\mathbb F)$
satisfying $d\alpha=d\beta=d\gamma=0$, the following hold:
\begin{enumerate}
 \item The differential
$d(\pi(\alpha,\beta))$ factors uniquely through $\mathbb F$, i.e.
there exists $\widetilde{d(\pi(\alpha,\beta))}\in\Gamma(\mathbb F)$
such that
\[
d(\pi(\alpha,\beta))
=
i_{\mathbb F}\circ
\widetilde{d(\pi(\alpha,\beta))}.
\]
  \item The Jacobi identity
\[
\pi\big(\alpha,\widetilde{d(\pi(\beta,\gamma))}\big)
+
\pi\big(\beta,\widetilde{d(\pi(\gamma,\alpha))}\big)
+
\pi\big(\gamma,\widetilde{d(\pi(\alpha,\beta))}\big)
= 0
\]
holds.
\end{enumerate}
Then, the pair $(M, \mathbb{F}, \pi)$ is called a Poisson structure. 
\end{definition}

 \begin{corollary}\label{Poisson algebra}
 Let $(M,\mathbb F, \pi)$ be a Poisson structure. Then, the subalgebra 
 $$\mathcal A_{\mathbb F}:=\{f\in C^{\infty}(M): df \text{ factors as } i_{\mathbb F}\circ\widetilde{df} \text{ for  } \widetilde{df}\in \Gamma(\mathbb F)\}$$
with the convenient vector space structure induced by the linear mappings 
\begin{equation*}    
 \mathcal A_\mathbb F \xrightarrow{\operatorname{incl}} C^{\infty}(M,\mathbb R), f\mapsto f,\end{equation*}
$$\quad  
 \mathcal A_\mathbb F \xrightarrow {\widetilde{d}} \Gamma(\mathbb F), f\mapsto \widetilde{df}$$  is a Poisson algebra with respect to the bounded Lie bracket  $$\{-,-\}:\mathcal A_{\mathbb F}\times \mathcal A_{\mathbb F}\to \mathcal A_{\mathbb F}\\ , \{f, g\} := \pi(\widetilde{df}, \widetilde{dg}).$$
 \end{corollary}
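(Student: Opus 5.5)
We need to show four things: $\mathcal A_{\mathbb F}$ is a subalgebra, it carries a convenient structure, the bracket takes values in $\mathcal A_{\mathbb F}$ and is bounded, and the bracket is a Lie bracket satisfying the Leibniz rule.

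\textbf{Step 1: algebra and convenient structure.} For $f,g\in\mathcal A_{\mathbb F}$ the Leibniz rule gives
\[
d(fg)=f\,dg+g\,df=i_{\mathbb F}\circ(f\widetilde{dg}+g\widetilde{df}),
\]
so $fg\in\mathcal A_{\mathbb F}$ with $\widetilde{d(fg)}=f\widetilde{dg}+g\widetilde{df}$. Uniqueness of $\widetilde{df}$ follows from fiberwise injectivity of $i_{\mathbb F}$. Hence $\widetilde d$ is well defined and linear.

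We give $\mathcal A_{\mathbb F}$ the initial structure with respect to $(\operatorname{incl},\widetilde d)$. Its image in $C^\infty(M)\times\Gamma(\mathbb F)$ is the set of pairs $(f,\sigma)$ with $df=i_{\mathbb F}\circ\sigma$. This set is closed, because $d$ and $(i_{\mathbb F})_*$ are bounded linear maps into a convenient section space. A closed subspace of a convenient space is convenient.

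\textbf{Step 2: the bracket lands in $\mathcal A_{\mathbb F}$.} This is the main obstacle, since condition (1) of Definition \ref{3.2} is stated only for closed sections. The plan is to localize. In a chart $U$ around $x$, the section $\alpha=\widetilde{df}$ satisfies $i_{\mathbb F}\alpha=df$, which is closed. So $d\alpha=0$ in the sense intended in Definition \ref{3.2}, namely $d(i_{\mathbb F}\alpha)=0$. Likewise $\beta=\widetilde{dg}$ is closed.

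Condition (1) then directly yields $\widetilde{d(\pi(\alpha,\beta))}\in\Gamma(\mathbb F)$. Uniqueness lets the local factorizations glue to a global section. Therefore $\{f,g\}\in\mathcal A_{\mathbb F}$ with
\[
\widetilde{d\{f,g\}}=\widetilde{d(\pi(\widetilde{df},\widetilde{dg}))}.
\]
If the definition were read literally, with $d\alpha$ not interpreted via $i_{\mathbb F}$, I would first justify this interpretation.

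\textbf{Step 3: boundedness.} The map $(f,g)\mapsto\pi(\widetilde{df},\widetilde{dg})$ is the composite of the bounded map $\widetilde d\times\widetilde d$ with the pointwise evaluation
\[
\Gamma(\mathbb F)\times\Gamma(\mathbb F)\to C^\infty(M),\qquad (\alpha,\beta)\mapsto\pi(\alpha,\beta).
\]
This evaluation is bilinear and bounded, since it is induced by the smooth evaluation morphism $L^2_{skew}(\mathbb F)\times_M\mathbb F\times_M\mathbb F\to M\times\mathbb R$ (cartesian closedness, \cite{20}). So the bracket is bounded into $C^\infty(M)$.

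For boundedness into $\Gamma(\mathbb F)$, i.e. of $(f,g)\mapsto\widetilde{d\{f,g\}}$: the map $(f,g)\mapsto d\{f,g\}$ is bounded into $\Gamma(T'M)$ and factors through the injective morphism $i_{\mathbb F}$. I would deduce boundedness via the closed graph/uniform boundedness principle for convenient spaces applied to the closed subspace from Step 1, or else accept it as part of the convenient structure.

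\textbf{Step 4: Lie and Poisson identities.} Bilinearity and skew-symmetry follow from $\pi$ being a skew bilinear form. Leibniz in each argument follows from the formula for $\widetilde{d(fg)}$ in Step 1 together with bilinearity of $\pi$ over $C^\infty(M)$ pointwise. For Jacobi, expand
\[
\{f,\{g,h\}\}=\pi\bigl(\widetilde{df},\widetilde{d(\pi(\widetilde{dg},\widetilde{dh}))}\bigr)
\]
and use the identity in condition (2) with $\alpha=\widetilde{df}$, $\beta=\widetilde{dg}$, $\gamma=\widetilde{dh}$, which are closed by Step 2. This makes the cyclic sum vanish, so $(\mathcal A_{\mathbb F},\{-,-\})$ is a Poisson algebra.
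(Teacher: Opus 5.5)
Your argument follows the paper's proof. Boundedness of $\operatorname{incl}\circ\{-,-\}$ comes from composing $\widetilde d\times\widetilde d$ (bounded by the initial structure) with the bounded evaluation $(\alpha,\beta)\mapsto\pi(\alpha,\beta)$. The Leibniz rule comes from $\widetilde{d(gh)}=g\,\widetilde{dh}+h\,\widetilde{dg}$. The Jacobi identity is the literal instance of condition (2) of Definition~\ref{3.2} with $\alpha=\widetilde{df}$, $\beta=\widetilde{dg}$, $\gamma=\widetilde{dh}$.

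Your Steps 1 and 2 make explicit two things the paper leaves tacit:
\begin{itemize}
\item $\mathcal A_{\mathbb F}$ is a $c^\infty$-closed subspace of $C^\infty(M)\times\Gamma(\mathbb F)$, hence convenient;
\item $\{f,g\}\in\mathcal A_{\mathbb F}$, because $i_{\mathbb F}\circ\widetilde{df}=df$ is closed, so condition (1) applies.
\end{itemize}
No localisation or gluing is needed in Step 2: $\widetilde{df}$ is already a global closed section, and condition (1) applies to it directly.

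The one step that does not work as proposed is the second half of Step 3, namely boundedness of $(f,g)\mapsto\widetilde{d\{f,g\}}$ into $\Gamma(\mathbb F)$. Two parts of your plan are fine:
\begin{itemize}
\item the uniform boundedness principle does reduce this to separate boundedness;
\item the graph of $f\mapsto\widetilde{d\{f,g\}}$ is closed, since a limit $\sigma$ satisfies $i_{\mathbb F}\circ\sigma=d\{f,g\}$ and $i_{\mathbb F}$ is injective.
\end{itemize}
The gap is the closed graph theorem itself. The version available in this setting (De Wilde's) requires a webbed target. If $\mathbb F$ is a weak subbundle whose convenient structure is strictly finer than the one induced from $T'M$, there is no reason for $\Gamma(\mathbb F)$ to be webbed. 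Without that, a closed graph does not give boundedness: the identity from $\ell^1$ to $\ell^1$ with its finest locally convex topology has closed graph but is unbounded. So in general you must either include boundedness of the factorisation in condition (1) among the hypotheses, or restrict to a setting where a closed graph theorem is available for $\Gamma(\mathbb F)$.

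The paper does not close this gap either. Its boundedness argument only controls $\operatorname{incl}\circ\{-,-\}$. In the proof of Proposition~\ref{linearization} it simply asserts that the factorisation through $\mathbb F$ is bounded, which is exactly your fallback option of taking it as part of the structure.
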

 \begin{proof}
 The Poisson bracket \[\{-, -\}: \mathcal{A}_{\mathbb F} \times \mathcal{A}_{\mathbb F} \rightarrow \mathcal{A}_{\mathbb F}, (f,g)\mapsto \pi(\widetilde{df},\widetilde{dg})\] is bounded as $\pi \in L^2_{skew}(\mathbb F)$ is smooth, the differential operator $\widetilde d :\mathcal A_{\mathbb F}\to \Gamma(\mathbb F)$ is a bounded operator as the topology on $\mathcal A_{\mathbb F}$ is initial with respect to the mappings $\text{incl}$ and $\widetilde{d}$, and the evaluation $ (\pi,   \widetilde{df}, \widetilde{dg})\mapsto \pi(\widetilde{df}, \widetilde{dg})$ is bounded (evaluation map is always smooth in the convenient setting of \cite[Corollary 3.13]{20}). Now, consider 
\[ \{\{f,g\},h\}=\pi(d(\pi(\widetilde{df}, \widetilde{dg})),\widetilde{dh}).\] Thus, the Jacobi identity of $\{-,-\}$ follows from the Jacobi identity of $\pi$ (condition (2) of Definition \ref{3.2}).
To show that $\{-,-\}$ satisfies the Leibniz identity, we note that \[
\{f,gh\}=\pi(\widetilde{df}, \widetilde{d(gh))}=\pi(\widetilde{df}, g\widetilde{d(h)}+\widetilde{d(g)}h)=g\{f,h\}+\{f,g\}h.
\]
 \end{proof}
 \begin{remark}  
Note that the converse of Corollary 3.3  might not always hold. Let $\mathcal A\subseteq C^\infty( M)$ be a Poisson algebra equipped with a bounded Poisson bracket. Then, the bundle $\mathbb F$ formed by \[ \mathbb{F}_x := \{df(x) \mid f \in \mathcal{A} \}\] may not have a smooth vector bundle structure. Even if $\mathbb F$ has a smooth vector bundle structure, the existence of a smooth bivector $\pi \in \Gamma(L^2_{skew}(\mathbb F))$ is not guaranteed, as for each infinite-dimensional Hilbert manifold $H$, Poisson bracket exists on $C^\infty(M)$ such that the function $\{f,g\}$ for $f,g \in \mathcal A$ does not always depend on the 1-jets of $f$ and $g$ (see \cite{38}). This is because for each infinite-dimensional Hilbert manifold, there exist operational tangent vectors of the second order (see \cite[Remark 28.8]{20}). But in a special situation, when a manifold is modeled on a smoothly paracompact convenient manifold with bornological approximation property (see \cite[6.6]{20}), it is shown in \cite{39} that the dual map of the Poisson bracket, $\widehat{\{-,-\}}':C^\infty(M)'\rightarrow{} L^2_{skew}(C^\infty(M),\mathbb R)$ factors through a smooth section of a vector bundle $L^2_{skew}(T'M)$.
\end{remark}
\begin{remark}\label{smoooth anchor}
We will denote a Poisson structure $(M, \mathbb{F}, \pi)$ by $(M, \mathbb{F}, \rho)$ if the anchor map 
\[
\rho := \pi^\sharp: \mathbb{F} \to T''M, \quad \alpha \mapsto \pi(\alpha, -) 
\] 
takes values in $TM \subseteq T''M$. Note that $\rho$ is a smooth bundle morphism because $\pi: M \to L^2_{\text{skew}}(\mathbb{F})$ is smooth. By passing to a local trivialization and using the exponential law for multilinear mappings (\cite[Proposition 5.2]{20}), $\pi$ is smooth if and only if the induced bundle map $\widehat{\pi}: \mathbb{F} \to \mathbb{F}'$, defined fiberwise by $\alpha_x \mapsto \pi(x)(\alpha_x, -) = \rho(\alpha_x)$, is smooth. 

To see that the corestriction $\rho: \mathbb{F} \to TM$ is also smooth, we will show that the canonical inclusion map $\operatorname{incl}: TM \hookrightarrow T''M$ assigning $X \mapsto X$ is a smooth embedding. Let $E$ be the convenient vector space modeling the fibers of $TM$. The restriction of $\operatorname{incl}$ to a fiber is the canonical evaluation map $\iota_E: E \rightarrow E''$, defined by $\iota_E(v)(l) := l(v)$ for all $v \in E$ and $l \in E'$.

By definition in the convenient setting, a bounded linear map like $\iota_E$ is a smooth embedding if a curve $c: \mathbb{R} \rightarrow E$ is smooth if and only if when the composition $\iota_E \circ c: \mathbb{R} \rightarrow E''$ is a smooth curve in $E''$. 

If $c$ is a smooth curve in $E$, then because $\iota_E$ is a bounded linear operator, the composition $\iota_E \circ c$ is smooth. Conversely, suppose that the composition $\iota_E \circ c: \mathbb{R} \rightarrow E''$ is a smooth curve. For any continuous linear functional $l \in E'$, the point-evaluation map $\operatorname{ev}_l: E'' \rightarrow \mathbb{R}$ defined by $\Phi \mapsto \Phi(l)$ is a bounded linear operator and is therefore smooth. Thus, the composition $\operatorname{ev}_l \circ (\iota_E \circ c): \mathbb{R} \rightarrow \mathbb{R}$ is a smooth curve in $\mathbb{R}$. We observe that for any $t \in \mathbb{R}$ we have
\[ 
(\operatorname{ev}_l \circ \iota_E \circ c)(t) = \iota_E(c(t))(l) = l(c(t)) = (l \circ c)(t).
\]
This shows that $l \circ c: \mathbb{R} \rightarrow \mathbb{R}$ is smooth for every continuous linear functional $l \in E'$. By \cite[Corollary 2.3]{20}, a curve is smooth if and only if it is weakly smooth (i.e., smooth when composed with any continuous linear functional). Therefore, $c: \mathbb{R} \rightarrow E$ must be a smooth curve.

 Finally, passing to the vector bundles, we can evaluate $\operatorname{incl}$ over a local trivialization domain $U \subset M$. The map reads locally as $\operatorname{id}_U \times \iota_E: U \times E \rightarrow U \times E''$. Because the identity $\operatorname{id}_U$ is trivially a diffeomorphism and $\iota_E$ is a convenient embedding, their cartesian product is a smooth embedding of convenient manifolds. Gluing these local embeddings together establishes that the global bundle morphism $\operatorname{incl}: TM \hookrightarrow T''M$ is a smooth embedding, proving that $\rho$ is smooth.
\end{remark}

\begin{lemma}\cite[Proposition 7.7]{5}
Let $(M, \mathbb F, \rho)$ be a Poisson structure. Then, there exists $[\pi, \pi]_s \in \Gamma(L^3(T'M))$ such that
\[
[\pi, \pi]_s(\alpha, \gamma, \beta) = 2(\pi(d(\pi(\alpha, \beta)), \gamma) + \pi(\beta, d\pi(\gamma, \alpha)) +\pi(\gamma, d\pi(\alpha, \beta))).
\]
\end{lemma}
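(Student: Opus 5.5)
The approach is to show that the expression on the right-hand side, a priori defined only for closed local sections of $\mathbb F$, is the restriction of a tensorial object, i.e.\ a section of $L^3(T'M)$. This follows the strategy of \cite[Proposition 7.7]{5}.

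First, since $\rho=\pi^\sharp$ takes values in $TM$ (Remark \ref{smoooth anchor}), $\rho$ is a smooth bundle morphism $\mathbb F\to TM$. For $\alpha,\beta\in\Gamma(\mathbb F)$ the function $\pi(\alpha,\beta)$ is smooth. Its differential $d(\pi(\alpha,\beta))\in\Gamma(T'M)$, and $\pi(d(\pi(\alpha,\beta)),\gamma)$ is interpreted as $-\langle d\pi(\alpha,\beta),\rho(\gamma)\rangle = -\rho(\gamma)\big(\pi(\alpha,\beta)\big)$. Each term of the right-hand side is then a well-defined smooth function, since it is a derivative of a smooth function along a smooth vector field. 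Using $\rho$, I would rewrite the expression for arbitrary sections as the Schouten-type expression
\[
J(\alpha,\beta,\gamma)=\rho(\alpha)\pi(\beta,\gamma)+\rho(\beta)\pi(\gamma,\alpha)+\rho(\gamma)\pi(\alpha,\beta)+\pi\text{-terms involving } d\alpha,d\beta,d\gamma,
\]
where the correction terms are $\pi$ evaluated against $i_{\rho(\cdot)}d(\cdot)$. These are exactly the terms that vanish when $d\alpha=d\beta=d\gamma=0$.

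Second, I would prove $C^\infty(M)$-trilinearity of this corrected expression. Replacing $\alpha$ by $f\alpha$ produces extra terms $\rho(\beta)(f)\pi(\gamma,\alpha)+\rho(\gamma)(f)\pi(\alpha,\beta)$ from the Leibniz rule. These are cancelled by the terms $df\wedge\alpha$ coming from $d(f\alpha)$, by the standard finite-dimensional computation, which is purely algebraic once $\rho$ is smooth and maps into $TM$. Skew-symmetry is clear, so $J$ is a $C^\infty(M)$-trilinear skew map on $\Gamma(\mathbb F)$.

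Third, and this is the main obstacle, I must pass from a $C^\infty(M)$-multilinear map on sections to a pointwise smooth section of $L^3$. I need that $J(\alpha,\beta,\gamma)(x)$ depends only on $\alpha(x),\beta(x),\gamma(x)$, which requires locality plus the vanishing argument. Locality follows from trilinearity with bump-type cutoffs. In the convenient setting these may fail to exist globally, so I would argue in a local trivialization, where every $\alpha_x\in\mathbb F_x\cong F$ extends to a constant section. For the pointwise value I would compute directly: in a chart, write $J(x)(a,b,c)$ for constant sections $a,b,c\in F$ as $d\pi(x)(\rho_x(a))(b,c)+\text{cyclic}$ plus terms from the derivative of the trivialization. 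This gives an explicit formula in terms of the smooth maps $d\pi$ and $\rho$, visibly bounded and trilinear in $(a,b,c)$, and smooth in $x$ by the exponential law \cite[Proposition 5.2]{20}. Comparing on arbitrary sections via the tensoriality of step two identifies it with $J$.

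Finally, since the restriction to closed sections extends through $i_{\mathbb F}$ and $\mathbb F_x$ separates points, I would define $[\pi,\pi]_s:=2J$ viewed as an element of $\Gamma(L^3(T'M))$, using the local formula with $\rho$ taking values in $TM$. Restricting to closed $\alpha,\beta,\gamma$ kills the correction terms and yields the stated identity.
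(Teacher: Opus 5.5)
The paper gives no argument of its own for this lemma, only the citation \cite[Proposition 7.7]{5}, so your proof has to stand on its own. Its core is correct and is the standard argument. For arbitrary $\alpha,\beta,\gamma\in\Gamma(\mathbb F)$, the expression
\[
J(\alpha,\beta,\gamma)=\sum_{\mathrm{cyc}}\rho(\alpha)\bigl(\pi(\beta,\gamma)\bigr)-\sum_{\mathrm{cyc}}d\alpha\bigl(\rho(\beta),\rho(\gamma)\bigr)
\]
is well defined because $\rho$ takes values in $TM$. It is skew and $C^\infty(M)$-trilinear by exactly the cancellation you describe, and this uses neither condition of Definition \ref{3.2}.

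The genuine gap is your last step. $J(x)$ is built from $\pi(x)$ and $\rho_x$, which are only defined on $\mathbb F_x$, so what you construct is a section of $L^3_{\mathrm{skew}}(\mathbb F)$, not of $L^3(T'M)$. That $\mathbb F_x$ separates the points of $T_xM$ only means $\mathbb F_x$ is $\sigma(T'_xM,T_xM)$-dense in $T'_xM$. This gives no bounded extension of a trilinear form that is bounded merely for the finer convenient structure of $\mathbb F_x$. Moreover, $i_{\mathbb F}$ points the wrong way: it lets you restrict forms on $T'M$ to $\mathbb F$, not extend forms on $\mathbb F$ to $T'M$. Extending $J(x)$ would in effect require extending $\pi(x)$ itself to $T'_xM$, and nothing in the hypotheses provides this (it is automatic only in cases like $\mathbb F=T'M$). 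So your argument gives $[\pi,\pi]_s\in\Gamma(L^3_{\mathrm{skew}}(\mathbb F))$, which is the natural home of this object. You should claim that, not membership in $\Gamma(L^3(T'M))$.

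Two further points. First, in step 3, appealing to the tensoriality of step 2 to pass from constant to arbitrary local sections does not work in infinite dimensions. A local section $U\to F$ of $\mathbb F|_U\cong U\times F$ is in general not a finite $C^\infty$-combination of constant sections, so $C^\infty(M)$-linearity says nothing about it. Instead, expand $J(\alpha,\beta,\gamma)(x)$ in the trivialization for arbitrary local representatives $\tilde\alpha,\tilde\beta,\tilde\gamma$, and check directly that all terms containing $D\tilde\alpha(x)$, $D\tilde\beta(x)$, $D\tilde\gamma(x)$ cancel. This is the same algebra as in step 2, with $D\tilde\alpha(x)[\rho_x\tilde\beta(x)]$ playing the role of $\rho(\beta)(f)\,\alpha$. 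Since $J$ is first order, locality is automatic and no cut-off functions are needed.

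Second, restricting to closed sections yields the cyclic formula $\sum_{\mathrm{cyc}}\pi\bigl(\alpha,d\pi(\beta,\gamma)\bigr)$ (up to sign and the factor $2$), not the formula as printed. In the printed formula, the first and third terms are $\pi(\theta,\gamma)+\pi(\gamma,\theta)=0$ with $\theta=d\pi(\alpha,\beta)$. The surviving term $2\pi(\beta,d\pi(\gamma,\alpha))$ is not tensorial: for $\pi=\partial_x\wedge\partial_y$ on $\mathbb R^2$, $\alpha=y\,dy$, $\beta=\gamma=dx$, it equals $2$ at the origin although $\alpha(0)=0$. The cyclic version is clearly what is meant. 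Note that for it, the Poisson hypothesis already makes the right-hand side vanish on closed sections, so your construction has content precisely because it never uses that hypothesis.
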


\begin{remark}[Lie algebroid structure on $\Gamma(\mathbb F)$ ]
 
We note that $\Gamma(\mathbb{F})$ is a module of the following algebra:
\[
\mathcal{A} := \{f \in C^{\infty}(M) \mid df(x) \in \mathbb{F}_{x}\}.
\]
It can be given a Lie algebroid structure using the following Lie bracket (see \cite[Proposition 7.3]{5}):
\[
[\alpha, \beta]_{\rho} := \mathcal{L}_{\rho(\alpha)}\beta - \mathcal{L}_{\rho(\beta)}\alpha - d(\pi(\alpha, \beta)).
\]
where $\mathcal L_{\rho (\alpha)}$, $\mathcal L_{\rho (\beta)}$ denotes the Lie derivative with respect to the vector fields $\rho(\alpha),  \rho(\beta)$ respectively.
\end{remark}

\begin{remark}
Note that the Jacobi identity of $[-,-]_\rho$ implies that the bundle morphism $\rho: \mathbb{F} \rightarrow TM$ satisfies $\rho[\alpha, \beta] = [\rho(\alpha), \rho(\beta)]$. Hence, it induces a Lie algebra morphism $\tilde{\sigma}: \Gamma(M, \mathbb{F}) \rightarrow \Gamma(M, TM) = \mathfrak{X}(M)$. 
\end{remark}

\begin{proposition}[Hamiltonian vector fields]
Let $(M, \mathbb{F}, \rho)$ be a Poisson manifold, and let $\mathcal{A}_{\mathbb{F}}$ be the convenient Poisson subalgebra defined in Corollary \ref{Poisson algebra}. Then, for each $f \in \mathcal{A}_{\mathbb{F}}$, the Hamiltonian vector field corresponding to $f$, defined by $X_{f} := \tilde{\rho}(\widetilde{df}) = \{f, -\}$, is a smooth vector field on $M$. Furthermore, the assignment mapping 
\[X_{-}: \mathcal{A}_{\mathbb {F}} \rightarrow \Gamma(TM), \quad f \mapsto X_{f}\] 
is a bounded, and therefore smooth, linear operator.
\end{proposition}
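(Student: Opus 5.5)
The plan is to express $X_{-}$ as a composite of maps that are already known to be bounded linear operators. Specifically, I would factor it as
\[
\mathcal A_{\mathbb F}\xrightarrow{\ \widetilde d\ }\Gamma(\mathbb F)\xrightarrow{\ \rho_*\ }\Gamma(TM),
\]
where $\widetilde d$ is the differential operator from Corollary \ref{Poisson algebra} and $\rho_*$ is pushforward of sections along the anchor $\rho=\pi^\sharp$. Smoothness of each $X_f$ comes first. By the definition of $\mathcal A_{\mathbb F}$, for $f\in\mathcal A_{\mathbb F}$ the section $\widetilde{df}\in\Gamma(\mathbb F)$ is smooth. By Remark \ref{smoooth anchor}, the corestriction $\rho:\mathbb F\to TM$ is a smooth vector bundle morphism over $\operatorname{id}_M$. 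Hence $X_f=\rho\circ\widetilde{df}$ is a composite of smooth maps and is a smooth section of $TM$.

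Next comes the identification $X_f=\{f,-\}$ on $\mathcal A_{\mathbb F}$. For $g\in\mathcal A_{\mathbb F}$ I would compute
\[
X_f(g)=dg(X_f)=\big(i_{\mathbb F}\widetilde{dg}\big)\big(\rho(\widetilde{df})\big)=\pi(\widetilde{df},\widetilde{dg})=\{f,g\},
\]
using $\rho(\alpha)=\pi(\alpha,-)$ together with the fact that $TM\subseteq T''M$ pairs with $\mathbb F\subseteq T'M$ by evaluation. Linearity of $f\mapsto X_f$ is immediate, since $\widetilde d$ and $\rho$ are both fiberwise linear.

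Boundedness is the remaining point. The map $\widetilde d:\mathcal A_{\mathbb F}\to\Gamma(\mathbb F)$ is bounded, because the convenient structure on $\mathcal A_{\mathbb F}$ is by construction initial with respect to $\operatorname{incl}$ and $\widetilde d$. It then suffices to show that $\rho_*:\Gamma(\mathbb F)\to\Gamma(TM)$, $s\mapsto\rho\circ s$, is bounded. The space $\Gamma(TM)$ carries the convenient structure initial with respect to the maps $s\mapsto s\circ c$ into $C^\infty(\mathbb R,TM)$ for smooth curves $c$, or equivalently with respect to local trivializations, into $C^\infty(U,E)$. Working locally over a chart $U$ on which both bundles are trivial, $\rho$ becomes a smooth map $U\times F\to U\times E$, linear in the fiber, and $\rho_*$ becomes $s\mapsto\big(x\mapsto\rho_x(s(x))\big)$. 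By the exponential law, $\rho$ corresponds to a smooth map $U\to L(F,E)$. Pushforward by it is then the composite of the smooth evaluation map $L(F,E)\times E\to E$ with the diagonal, so it is a smooth operation $C^\infty(U,F)\to C^\infty(U,E)$ (see \cite[Corollary 3.13]{20} and the exponential law in \cite{20}). Being linear and smooth, it is bounded. Gluing over a cover, $\rho_*$ is bounded, and therefore so is $X_-=\rho_*\circ\widetilde d$. Finally, a bounded linear map between convenient spaces is smooth \cite{20}.

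I expect the main obstacle to be this boundedness of $\rho_*$ on section spaces: it requires the exponential law for the space of smooth sections in the convenient setting. Everything else is formal once Remark \ref{smoooth anchor} is available.
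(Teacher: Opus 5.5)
Your proposal is correct and follows the paper's proof: smoothness of $X_f$ comes from writing $X_f=\rho\circ\widetilde{df}$, with $\rho$ smooth by Remark~\ref{smoooth anchor}. Boundedness comes from the factorization $X_{-}=\tilde{\rho}\circ\widetilde{d}$, where $\widetilde{d}$ is bounded because $\mathcal{A}_{\mathbb{F}}$ carries the initial structure.

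The only difference is cosmetic. The paper shows $\tilde{\rho}\colon\Gamma(\mathbb{F})\to\Gamma(TM)$ is bounded globally, noting that $(\tilde{\alpha},x)\mapsto\rho(\tilde{\alpha}(x))$ is the smooth composite $\rho\circ\operatorname{ev}_{\mathbb{F}}$ and applying the exponential law \cite[Corollary 3.13]{20}. You run the same exponential-law argument chart by chart and glue; note that the evaluation map you need there is $L(F,E)\times F\to E$, not $L(F,E)\times E\to E$.
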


\begin{proof}
First, we show the section smoothness for a fixed element $f \in \mathcal{A}_{\mathfrak{b}}$. By the definition of the subalgebra $\mathcal{A}_{\mathbb{F}}$, the  differential $df$ factors through a unique smooth section $\widetilde{df} \in \Gamma(\mathbb{F})$ such that $df = \iota_{\mathbb{F}} \circ \widetilde{df}$. By Remark \ref{smoooth anchor} the anchor map $\rho: \mathbb{F} \rightarrow TM$ is a smooth vector bundle morphism. let $\rho: \mathbb{F} \rightarrow TM$ be a smooth bundle morphism. For any smooth section $\tilde{\alpha} \in \Gamma(\mathbb{F})$, the composition $\rho \circ \tilde{\alpha}$ is a smooth mapping from $M$ to $TM$. Because $\rho$ is a bundle morphism, it preserves the base point, meaning $\rho(\tilde{\alpha}(x)) \in T_{x}M$. Thus, $\tilde{\rho}(\tilde{\alpha}) \in \Gamma(TM)$.
 Therefore, the composition $X_{f} = \tilde{\rho}(\widetilde{df})$ lies in $\Gamma(TM)$. Thus, $X_{f}$ is a smooth vector field.

Second, we prove the operator smoothness of the assignment $X_{-}$. The mapping $f \mapsto X_{f}$ can be written as the composition of two linear operators:
\[X_{-} = \tilde{\rho} \circ \widetilde{d}.\]
Recall that $\mathcal{A}_{\mathbb F}$ is equipped with the initial convenient vector space structure induced by the mappings $\text{incl}: \mathcal{A}_{\mathbb{F}} \rightarrow C^{\infty}(M)$ and the differential operator $\widetilde{d}: \mathcal{A}_{\mathbb{F}} \rightarrow \Gamma(\mathbb{F})$. Thus, $\widetilde{d}$ is a bounded operator. Furthermore, the induced map $\tilde{\rho}:\Gamma(\mathbb F)\to \Gamma(TM)$ is a bounded linear map between convenient vector spaces. Indeed, consider the evaluation map $\text{ev}_{\mathbb{F}}: \Gamma(\mathbb{F}) \times M \to \mathbb{F}, (\tilde{\alpha}, x) \mapsto \tilde{\alpha}(x)$, which is smooth in the convenient setting. The map $\widehat{\tilde{\rho}}: \Gamma(\mathbb{F}) \times M \to TM, (\tilde \alpha, x)\mapsto \tilde \rho(\tilde\alpha)(x)$ is exactly the composition $\rho \circ \text{ev}_{\mathbb{F}}$. Since $\rho$ is smooth by assumption, the composition $\rho \circ \text{ev}_{\mathbb{F}}$ is smooth. By the exponential law (\cite[Corollary 3.13]{20}), the corresponding map $\tilde{\rho}: \Gamma(\mathbb{F}) \to \Gamma(TM)$ is bounded.  Because the composition of bounded linear maps in convenient calculus is bounded, the operator $X_{-}$ is bounded.
\end{proof}
\begin{remark}
Note that according to the Definition \ref{3.2} of a Poisson structure, the existence of the Hamiltonian vector field for every smooth function on a manifold is not automatic. See \cite[Table 1]{15} for the existence of the Hamiltonian vector field in different existing definitions of Poisson structures in infinite-dimension settings (see \cite[section 4]{15}). 
\end{remark}

\begin{definition}[Product Poisson structure]
Let $(M, \mathbb F_1, \pi_1)$ and $(N, \mathbb F_2, \pi_2)$ be two Poisson structures. Then $M\times N$ carries a natural Poisson structure ($M\times N, \mathbb F,  \pi)$ where
\begin{enumerate}
    \item $\mathbb F$ is a subbundle of $T'M\oplus T'M$ defined as $\mathbb F(p,q):=\mathbb F_1(p)\oplus \mathbb F_2(q).$
    \item $\pi \in \Gamma (L^2_{skew}(\mathbb F))$ is defined by \[
    \pi (\alpha_1+\alpha_2, \beta_1+\beta _2):=\pi_1(\alpha_1, \beta_1)+\pi_2(\alpha_2, \beta_2)
    \]  where $\alpha_1, \beta_1 \in \mathbb F_1$ and $\alpha_2, \beta_2 \in \mathbb F_2.$
    \end{enumerate}

\end{definition}

\begin{definition}[Poisson morphism]
Let $(M, \mathbb{F}_1, \pi_1)$ and $(N, \mathbb{F}_2, \pi_2)$ be two Poisson structures on smooth manifolds $M$ and $N$, respectively. A smooth map $\Phi: M \to N$ is called a Poisson morphism if it satisfies the following two conditions:
\begin{enumerate}
    \item For each $x \in M$, the dual of the differential $(d\Phi_x)^*: T_{\Phi(x)}'N \to T_x'M$ restricts to a well-defined bounded linear map between the respective subbundles, mapping $(\mathbb{F}_2)_{\Phi(x)}$ into $(\mathbb{F}_1)_x$. Globally, this means the dual map restricts to a smooth bundle morphism
    \[
    (d\Phi)^*|_{\mathbb{F}_2}: \Phi^*\mathbb{F}_2 \to \mathbb{F}_1.
    \]
    \item The bivector fields are $\Phi$-related. That is, the pushforward of $\pi_1$ along $\Phi$ coincides with $\pi_2$, which is written as
    \[
    \pi_2(\Phi(x)) = \Phi_*\pi_1(x),
    \]
    where $\Phi_* : L^2_{\text{skew}}((\mathbb{F}_1)_x) \to L^2_{\text{skew}}((\mathbb{F}_2)_{\Phi(x)})$ is the pushforward map on the bilinear forms induced by the dual differential. That means, for all $x \in M$ and covectors $\alpha, \beta \in (\mathbb{F}_2)_{\Phi(x)}$, this compatibility condition evaluates to
    \[
    \pi_2(\Phi(x))(\alpha, \beta) = \pi_1(x)((d\Phi_x)^*\alpha, (d\Phi_x)^*\beta).
    \]
\end{enumerate}
\end{definition}
\section{Poisson Lie groups and their linearization at the identity}\label{3}

This section concerns the linearization of a Poisson Lie group $G$ endowed with a Poisson structure $(G, \mathbb{F}, \pi)$. We show that differentiation of the Poisson tensor $\pi$ at the identity element $e \in G$ induces a Lie bialgebra structure on the Lie algebra $\mathfrak{g} = T_e G$. Furthermore, if the existence of Hamiltonian vector fields is guaranteed by the Poisson structure, we show that the Lie algebra $\mathfrak{g}$ is a Lie Poisson space with respect to the subspace $\mathfrak{b} := \mathbb{F}_e \subseteq \mathfrak{g}'$.

\begin{definition}[Poisson Lie group] \label{Lie-Poisson-group}
A Lie group $G$ modeled on a convenient Lie algebra $\mathfrak{g}$ is called a convenient Poisson Lie group with a convenient Poisson structure $(G, \mathbb{F}, \pi)$ if 
\begin{enumerate}
    \item The smooth left and right actions of $G$ on $T'G$ restrict to  smooth actions on $\mathbb F:$ \[ L^{*}: G\times \mathbb F \to \mathbb F, G\times \mathbb F(g_0)\ni(g, \alpha)\mapsto L_{g}^*\alpha\in \mathbb F(gg_0) ,\]
    \[ R^{*}: G\times \mathbb F \to \mathbb F, G\times \mathbb F(g_0)\ni(g, \alpha)\mapsto R_{g}^*\alpha\in \mathbb F(g_0g).\]

\item The product map $m:G \times G\to G, (g,h)\to gh$ is a Poisson map, equivalently,
\begin{equation} \label{multiplicative}  
\pi(xy) = L^{**}_x\pi(y) + R^{**}_y\pi(x)
\end{equation}

where $L^{**}_x$ and $R^{**}_y$ are the induced left and right actions of $G$ on $L^2_{skew}(\mathbb F)$ from the restricted smooth left and right actions $L^*$ and $R^*$ of $G$ on $\mathbb F$, respectively.
\end{enumerate}
\end{definition}

\begin{definition}
A bivector  $\pi \in \Gamma(L^2_{skew}(\mathbb F))$ is called left invariant if $\pi(gh) = L_{g}^{**}\pi(h)$ and right invariant if $\pi(gh) = R_{h}^{**}\pi(g)$ for all $g, h \in G$.
\end{definition}
\begin{proposition}\label{linearization}
Let $(G, \mathbb{F}, \pi)$ be a convenient Poisson Lie group, let $\mathfrak{g}$ be the Lie algebra of $G$, and let $\mathfrak{b} = \mathbb{F}_e$. Then, the following map defines a bounded Lie algebra structure on $\mathfrak{b}$:
\begin{equation} \label{linearzitaion_of_pi}
[\alpha(e), \beta(e)] = \widetilde{d(\pi(\alpha, \beta))}(e),
\end{equation}
where $\alpha$ and $\beta$ are left-invariant (or right-invariant) sections of $\mathbb{F}$.
\end{proposition}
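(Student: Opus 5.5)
The plan is to show first that $\pi$ vanishes at the identity, so that the right-hand side of \eqref{linearzitaion_of_pi} is a genuine first-order linearization at a zero of $\pi$. The bracket will then be identified with the transpose of $\delta := d\pi(e)$, and skew-symmetry, boundedness and the Jacobi identity will be derived from the corresponding properties of $\pi$ (Definition \ref{3.2}).

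\emph{Step 1: $\pi(e)=0$.} Put $x=y=e$ in \eqref{multiplicative}. Since $L_e^{**}$ and $R_e^{**}$ are the identity, this gives $\pi(e)=2\pi(e)$, hence $\pi(e)=0$.

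\emph{Step 2: well-definedness.} For $\alpha_0,\beta_0\in\mathfrak b$, take the left-invariant sections $\alpha(g)=L_{g^{-1}}^*\alpha_0$ and $\beta(g)=L_{g^{-1}}^*\beta_0$. By condition (1) of Definition \ref{Lie-Poisson-group} these are smooth sections of $\mathbb F$. Then $g\mapsto \pi(g)(\alpha(g),\beta(g))$ is smooth, because evaluation is smooth by \cite[Corollary 3.13]{20}. Its differential at $e$ is computed by the product rule on a local trivialization. Because $\pi(e)=0$, only the term $d\pi(e)(v)(\alpha_0,\beta_0)$ survives. So
\[
d(\pi(\alpha,\beta))(e)(v)=\delta(v)(\alpha_0,\beta_0),\qquad v\in\mathfrak g .
\]
The same argument shows that the value depends only on $\alpha_0,\beta_0$ and not on the chosen extension: any sections through $\alpha_0,\beta_0$ give the same value, in particular right-invariant ones, or $f\alpha$ with $f(e)=1$. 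It remains to see that the functional $v\mapsto \delta(v)(\alpha_0,\beta_0)$ lies in $\mathfrak b$, so that the tilde in \eqref{linearzitaion_of_pi} makes sense. I would obtain this from condition (1) of Definition \ref{3.2}, applied to closed local sections through $\alpha_0$ and $\beta_0$; alternatively, one can differentiate the relation $\pi(g)=R^{**}_{g}\pi(e)+\dots$ coming from multiplicativity and use that $\mathbb F=R^*\mathfrak b$ is invariant under the actions. Skew-symmetry of the bracket is then immediate from that of $\pi$. Boundedness follows because the bracket is the composite of the bounded map $\delta\colon \mathfrak g\to L^2_{skew}(\mathfrak b)$ with evaluation, and $\widetilde{d}$ is bounded as in Corollary \ref{Poisson algebra}.

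\emph{Step 3: Jacobi identity.} Take local sections $\alpha,\beta,\gamma$ with $d\alpha=d\beta=d\gamma=0$ and prescribed values $\alpha_0,\beta_0,\gamma_0$ at $e$. Differentiate the identity in condition (2) of Definition \ref{3.2} at $e$. By the computation of Step 2, each term satisfies
\[
d\bigl[\pi(\alpha,\widetilde{d\pi(\beta,\gamma)})\bigr](e)=\delta(\,\cdot\,)\bigl(\alpha_0,[\beta_0,\gamma_0]\bigr),
\]
since all terms carrying a factor $\pi(e)$ vanish. Hence the cyclic sum of $[\alpha_0,[\beta_0,\gamma_0]]$ pairs to zero with every $v\in\mathfrak g$. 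Since $\mathfrak b\subseteq\mathfrak g'$ consists of functionals on $\mathfrak g$, this gives the Jacobi identity in $\mathfrak b$.

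\emph{Main obstacle.} Steps 2 and 3 need closed local sections of $\mathbb F$ with arbitrary prescribed value in $\mathfrak b$ at $e$, and invariant sections are not closed in general. My first attempt would use a chart around $e$ and try to correct the translated form $R^*\alpha_0$ to a closed one. If this fails, I would instead redo the Jacobi computation for arbitrary sections, where $\pi(\alpha,\widetilde{d\pi(\beta,\gamma)})$ acquires extra terms involving $d\alpha$, $d\beta$, $d\gamma$. Each such term is multiplied by $\pi$, so it vanishes to first order at $e$ because $\pi(e)=0$. In either case the linearized identity only sees first-order data at a zero of $\pi$.
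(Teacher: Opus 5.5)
\textbf{There is a genuine gap: the closed sections your argument needs are never constructed.} Your computations hold as far as they go. $\pi(e)=0$ follows from multiplicativity, and $d\pi(e)$ is intrinsic at a zero of $\pi$. The identity $d(\pi(\alpha,\beta))(e)=\delta(\cdot)(\alpha(e),\beta(e))$ holds for \emph{arbitrary} smooth sections. Differentiating condition (2) of Definition~\ref{3.2} at $e$ does give the Jacobi identity in $\mathfrak b$, provided the right sections exist.

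That proviso is the obstacle you flag, and the proposal does not resolve it. Two claims need, for every $\alpha_0\in\mathfrak b$, a local section $\alpha$ of $\mathbb F$ near $e$ with $d\alpha=0$ and $\alpha(e)=\alpha_0$: that $v\mapsto\delta(v)(\alpha_0,\beta_0)$ lies in $\mathfrak b$ rather than merely in $\mathfrak g'$, and the Jacobi identity. Invariant sections do not qualify, since $d\xi^L(X^L,Y^L)=-\xi([X,Y])$.

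Neither fallback closes the gap.
\begin{itemize}
\item Differentiating $\pi(xy)=L_x^{**}\pi(y)+R_y^{**}\pi(x)$ at first order only gives the cocycle identity for $\delta$. Take the vector group $G=\mathfrak g$ with $\mathbb F=G\times\mathfrak b$. Every bounded linear $\delta\colon\mathfrak g\to L^2_{skew}(\mathfrak b)$ then defines a multiplicative $\pi=\delta$ with invariant $\mathbb F$. There, $\mathfrak b$-valuedness (resp.\ Jacobi) is exactly condition (1) (resp.\ (2)) of Definition~\ref{3.2} applied to the constant, hence closed, sections. Multiplicativity alone contributes nothing.
\item Redoing the Jacobi computation for arbitrary sections has no hypothesis to start from. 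Definition~\ref{3.2} postulates neither the factorization $\widetilde{d\pi(\beta,\gamma)}$ nor identity (2) for non-closed sections. The finite-dimensional passage from exact to arbitrary forms (via $[\pi,\pi]$) works only because exact forms fill the fibres, and that is precisely what is missing here.
\end{itemize}

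A smaller point concerns boundedness. Boundedness of $\delta$ gives boundedness of the bracket into $\mathfrak g'$, not into $\mathfrak b$ with its own convenient structure. The boundedness of $\widetilde d$ in Corollary~\ref{Poisson algebra} is built into the initial structure on $\mathcal A_{\mathbb F}$ and does not transfer to $\mathfrak b$.

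The paper's proof does not supply the missing ingredient either. It extends $\xi,\eta$ to left-invariant sections by a bounded map $E$. It then applies condition (1) of Definition~\ref{3.2} directly to $\xi^L,\eta^L$, although that condition is only postulated for closed sections. It writes the bracket as $\operatorname{ev}_e\circ\widetilde d\circ P\circ(E\times E)$ and states that Jacobi is a direct consequence of condition (2).

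To make either argument rigorous you need an extra input. For example: every element of $\mathfrak b$ is the value at $e$ of a closed local section of $\mathbb F$. This holds when $\mathbb F=T'G$ (take $d(\alpha_0\circ\phi)$ for a chart $\phi$ with $d\phi(e)=\mathrm{id}$) and when $G$ is abelian (constant forms). In general you would have to prove it or assume it.

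Granting it, your route is better than the paper's. Since $\pi(e)=0$, the bracket is the transpose of $\delta=d\pi(e)$ and does not depend on the chosen extension. That is exactly what justifies evaluating \eqref{linearzitaion_of_pi} on invariant sections. Your Step~3 then proves the Jacobi identity, where the paper only asserts it.
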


\begin{proof}
For any $\xi, \eta \in \mathfrak{b}$, let $\alpha = \xi^L$ and $\beta = \eta^L$ denote the unique left-invariant sections of $\mathbb{F}$ extending $\xi$ and $\eta$, respectively (so that $\alpha(e) = \xi$ and $\beta(e) = \eta$). Since the left translations of $G$ on $T'G$ restrict smoothly to $\mathbb{F}$, the extension map 
\[
E: \mathfrak{b} \to \Gamma(\mathbb{F}), \quad \xi \mapsto \xi^L
\]
is a bounded linear map. 

Because $\pi \in \Gamma(M, L_{\text{skew}}^{2}(\mathbb{F}))$ is a smooth bivector, the evaluation map 
\[
P: \Gamma(\mathbb{F}) \times \Gamma(\mathbb{F}) \to C^\infty(G), \quad (\alpha, \beta) \mapsto \pi(\alpha, \beta)
\]
is a bounded bilinear map. Furthermore, the differential operator $d: C^\infty(G) \to \Gamma(T'G)$ is a bounded linear map (see \cite[Theorem 3.18]{20}). By the definition of the Poisson structure (Definition \ref{3.2}), the differential $d(\pi(\alpha, \beta))$ factors uniquely through $\mathbb{F}$, meaning the operation
\[
\tilde d:f \mapsto \widetilde{df} \in \Gamma(\mathbb{F})
\]
is also bounded and linear on its domain. Finally, the evaluation at the identity 
\[
\operatorname{ev}_e: \Gamma(\mathbb{F}) \to \mathfrak{b}, \quad \gamma \mapsto \gamma(e)
\]
is a bounded linear map. 

The Lie bracket on $\mathfrak{b}$ is given by the composition of these bounded linear and bilinear mappings
\[
[\xi, \eta] = \operatorname{ev}_e \left( \widetilde{d} \left( P(E(\xi), E(\eta)) \right) \right).
\]
Because it is a composition of bounded maps, $[-,-]: \mathfrak{b} \times \mathfrak{b} \to \mathfrak{b}$ is a bounded bilinear map. The fact that this bracket satisfies the Jacobi identity on $\mathfrak{b}$ is a direct consequence of the Jacobi identity of the Poisson bivector $\pi$ ( condition (2) of Definition \ref{3.2}). Thus, $\mathfrak{b}$ is equipped with a bounded Lie algebra structure.
\end{proof}

Now, if $\mathfrak{b}$ is Fréchet or Silva, $[-,-]$ is a continuous bilinear map by the following two lemmas.

\begin{lemma}
If $E$ is a Fréchet space or a Silva space, then a bounded linear map from $E$ to any locally convex space is continuous.
\end{lemma}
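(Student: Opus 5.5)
The plan is to use the classical fact that Fréchet spaces and Silva spaces are \emph{bornological}. For a bornological space, a linear map into any locally convex space is continuous as soon as it is bounded. I will first establish bornologicity in each of the two cases. Then I will derive continuity from boundedness by a direct absorption argument.

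\textbf{Step 1: Fréchet case.} Let $E$ be a Fréchet space, hence metrizable, and let $T:E\to F$ be bounded linear. Suppose $T$ were not continuous at $0$. Then there is a $0$-neighbourhood $V$ in $F$ and a sequence $x_n\to 0$ with $Tx_n\notin V$. Using metrizability, I choose scalars $\lambda_n\to\infty$ such that $\lambda_n x_n\to 0$ still; for instance, with a translation-invariant metric $d$, pass to a subsequence with $d(x_n,0)<1/n^2$ and take $\lambda_n=n$. A convergent sequence is bounded, so $\{\lambda_n x_n\}$ is bounded in $E$. Hence $\{\lambda_n Tx_n\}$ is bounded in $F$, which means $\lambda_n Tx_n\in cV$ for some $c>0$ and all $n$. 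But then $Tx_n\in (c/\lambda_n)V\subseteq V$ for large $n$ (taking $V$ absolutely convex), a contradiction. Equivalently, I could quote that metrizable locally convex spaces are bornological.

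\textbf{Step 2: Silva case.} Let $E=\varinjlim E_n$ be a Silva space, i.e.\ an inductive limit of Banach spaces $E_n$ with compact linking maps $E_n\to E_{n+1}$. Since $E$ carries the locally convex inductive limit topology, $T$ is continuous if and only if each composition $T\circ\iota_n:E_n\to F$ is continuous. Each inclusion $\iota_n$ is continuous, so it maps bounded sets to bounded sets. Therefore $T\circ\iota_n$ is a bounded linear map on the Banach space $E_n$, and by Step 1 it is continuous. Hence $T$ is continuous. This argument uses only that $E$ is a locally convex inductive limit of normed (hence bornological) spaces, so compactness of the linking maps is not even needed.

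\textbf{Expected obstacle.} The only delicate point is in Step 2: the universal property of the inductive limit topology must be the \emph{locally convex} one. This holds for Silva spaces by definition. One should also note that the Silva topology coincides with the bornological (Mackey-closure) topology that the paper uses for convenient spaces, so that ``bounded'' means the same thing in both structures. I would cite the standard reference (e.g.\ \cite{20}, or Jarchow) for the fact that Silva spaces are bornological, and present the argument above as the justification.
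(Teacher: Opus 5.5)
Your proof is correct and takes the same route as the paper. The paper simply cites two facts: Fr\'echet and Silva spaces are bornological, and bounded linear maps on bornological spaces are continuous. You prove these inline, using the Mackey-convergence argument in the metrizable case (translation invariance gives $d(nx,0)\le n\,d(x,0)$, which is what makes $\lambda_n=n$ work) and the universal property of the locally convex inductive limit in the Silva case.
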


\begin{proof}
Note that Fréchet and Silva space is bornological (\cite[8.2, Corollary 2]{28}). By (\cite[8.3]{28}), a linear map from a bornological space to a locally convex space is continuous if and only if it is bounded.

\end{proof}

\begin{lemma}
If $E$ is a Fréchet space or a Silva space, then a separately continuous bilinear map from $E \times E$ to any locally convex space is continuous.
\end{lemma}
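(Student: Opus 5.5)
The plan is to handle the two cases separately, using the classical Banach–Steinhaus (uniform boundedness) principle for Fréchet spaces and the structure of Silva spaces as regular inductive limits of Banach spaces with compact linking maps. Let $b: E\times E\to F$ be separately continuous, with $F$ locally convex.

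\emph{Fréchet case.} Since $E\times E$ is metrizable, it suffices to prove sequential continuity at $(0,0)$; bilinearity then gives continuity at every point via
\[
b(x,y)-b(x_0,y_0)=b(x-x_0,y-y_0)+b(x-x_0,y_0)+b(x_0,y-y_0).
\]
So let $x_n\to 0$ and $y_n\to 0$. The family of continuous linear maps $b(x_n,-):E\to F$ is pointwise bounded: for fixed $y$, the sequence $b(x_n,y)$ converges to $0$ by continuity in the first variable, hence is bounded. Since $E$ is Fréchet, hence barrelled, Banach–Steinhaus shows that $\{b(x_n,-)\}$ is equicontinuous. Therefore for every continuous seminorm $q$ on $F$ there is a continuous seminorm $p$ on $E$ with $q(b(x_n,y))\le p(y)$ for all $n$. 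Hence $q(b(x_n,y_n))\le p(y_n)\to 0$. This is the standard result \cite[Theorem 5.1 (or the corresponding statement)]{28}.

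\emph{Silva case.} Write $E=\varinjlim E_k$, where the $E_k$ are Banach spaces with compact inclusions. Then $E\times E=\varinjlim(E_k\times E_k)$ is again a Silva space, and in particular it is a (DF)-space and a sequential, indeed compactly generated, $k$-space. First, each restriction $b_k: E_k\times E_k\to F$ is separately continuous. Since $E_k$ is Banach, the Fréchet case just proved shows that $b_k$ is jointly continuous. Because the inductive limit topology on $E\times E$ coincides with the final topology with respect to the inclusions $E_k\times E_k\hookrightarrow E\times E$ (in the category of topological spaces, by the compact-regularity of Silva limits), $b$ is continuous. This holds even though $b$ is only bilinear and not linear, because continuity is checked on each step via the final topology.

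\emph{Main obstacle.} The delicate step is justifying that $E\times E$ carries the topological colimit topology of the steps $E_k\times E_k$. Products and inductive limits do not commute in general, although they do for countable sequences of compact inductive limits. I would cite the fact that Silva spaces are $k_\omega$-spaces, and that products of $k_\omega$-spaces are $k_\omega$ with the product colimit. Alternatively, one can use the characterization that Silva spaces are sequential and every convergent sequence lies and converges in some step $E_k$. Then sequential continuity at $(0,0)$ reduces immediately to the Banach case, and sequentiality of $E\times E$ (again a Silva space) finishes the argument.
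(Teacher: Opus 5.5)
Your proposal is correct and follows essentially the same route as the paper. The Fréchet case goes through Banach--Steinhaus equicontinuity, where the paper simply cites the standard result. The Silva case restricts to the Banach steps $E_k\times E_k$ and uses $E\times E=\varinjlim(E_k\times E_k)$. Your explicit point that this colimit must hold in the category of topological spaces (via the $k_\omega$/compact-regularity of Silva limits) fills in what the paper's appeal to ``the universal property of the direct limit topology'' leaves implicit. The locally convex universal property alone only controls linear maps, not bilinear ones.
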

\begin{proof}
Let $\beta: E \times E \to F$ be a separately continuous bilinear map into a locally convex space $F$. 

If $E$ is a Fréchet space, then by \cite[5.1]{28}, every separately continuous bilinear map on $E \times E$ is  continuous. 

Now, suppose $E$ is a Silva space. By definition, $E = \varinjlim E_n$ is a regular direct limit of Banach spaces $E_n$. The product $E \times E$ is also a Silva space, and its topology coincides with the regular direct limit of the products, so $E \times E = \varinjlim (E_n \times E_n)$. By the universal property of the direct limit topology, $\beta$ is continuous if and only if all of its restrictions $\beta|_{E_n \times E_n}$ are continuous. 

Since each $E_n$ is a Banach space, it is inherently a Fréchet space. Because $\beta$ is separately continuous on $E \times E$, its restriction $\beta|_{E_n \times E_n}$ is a separately continuous bilinear map on a Fréchet space. By the Fréchet case established above, each restriction $\beta|_{E_n \times E_n}$ is continuous. Therefore, $\beta$ is continuous on $E \times E$.
\end{proof}

\begin{definition}[Lie group $1$-cocycle]\label{Lie group 1-cocycle}
Let $G$ be a Smooth Lie group with a smooth action of $G$ on a convenient space $V$ which is  a Lie group homomorphism $\alpha:G\to \mathrm{GL}(V) $. Then a map $\theta: G\to V$ is called a Lie group $1$-cocycle relative to $\alpha$ if \[ \theta(gh)=\theta(g)+\alpha (g )(\theta(h)), \] for any $g,h\in G$.
\end{definition}
\begin{definition}[Lie algebra $1$-cocycle]\label{Lie algebra 1-cocycle}
The derivative $d\theta (e): \mathfrak g\to \mathfrak{gl}(V)$ satisfies\[
d\theta(e)([X,Y])=d\alpha(e)(X)(d\theta(e)(Y))-d\alpha(e)(Y)(d\theta(e)(X))
\] for any $X, Y\in \mathfrak g$. We then call $d\theta(e)$ a Lie algebra $1$-cocycle of $\mathfrak g$ relative to the action $d\alpha(e)$.
\end{definition}

\begin{proposition}[$1$-cocycle of a Poisson Lie group]\label{Poisson 1-cocycle}
Let $(G, \mathbb{F}, \pi)$ be a Poisson Lie group structure. The map $\theta: G \rightarrow L^2_{\text{skew}}(\mathfrak{b})$ defined by $\theta(x) := R_{x^{-1}}^{**}\pi(x)$ is a smooth $1$-cocycle of $G$ relative to the restricted smooth adjoint action $\operatorname{Ad}^{(2)}$ of $G$ on $L^2_{\text{skew}}(\mathfrak{b})$ given by
\[ 
\operatorname{Ad}^{(2)}: G \times L^2_{\text{skew}}(\mathfrak{b}) \rightarrow L^2_{\text{skew}}(\mathfrak{b}), \quad (g, \eta) \mapsto \left( R_{g^{-1}}^{**} \circ L_{g}^{**} \right)(\eta), 
\]
where $L_{g}^{**}$ and $R_{h}^{**}$ denote the pushforwards by left and right translations. For $\eta \in L^2_{\text{skew}}(\mathfrak{b})$ and $\alpha, \beta \in \mathfrak{b}$, these are defined explicitly as
\[ 
L_{g}^{**}\eta(\alpha, \beta) := \eta(L_{g}^*\alpha, L_{g}^*\beta), \quad R_{h}^{**}\eta(\alpha, \beta) := \eta(R_{h}^{*}\alpha, R_{h}^*\beta). 
\]
Furthermore, if the Lie algebra $\mathfrak{g}$ of $G$ is a Fréchet or a Silva space, then $\theta$ is a Bastiani smooth $1$-cocycle, and the adjoint action is Bastiani smooth.
\end{proposition}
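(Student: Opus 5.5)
The plan has three parts: check that $\theta$ is well defined and smooth, derive the cocycle identity from the multiplicativity condition \eqref{multiplicative}, and then upgrade convenient smoothness to Bastiani smoothness in the Fréchet/Silva case.

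First, $\theta(x)$ lies in $L^2_{\text{skew}}(\mathfrak{b})$. By condition (1) of Definition \ref{Lie-Poisson-group}, $R^*_{x^{-1}}$ maps $\mathbb{F}(e)=\mathfrak{b}$ into $\mathbb{F}(x)$. Hence the pushforward $R^{**}_{x^{-1}}$ sends the skew form $\pi(x)$ on $\mathbb{F}_x$ to the skew form
\[
(\alpha,\beta)\mapsto \pi(x)(R^*_{x^{-1}}\alpha,R^*_{x^{-1}}\beta)
\]
on $\mathfrak{b}$. For smoothness, I would write $\theta$ as a composition. The map $x\mapsto (x^{-1},\pi(x))$ is smooth, since inversion is smooth and $\pi$ is a smooth section. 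The map $(g,\omega)\mapsto R^{**}_g\omega$, from $G\times L^2_{\text{skew}}(\mathbb{F})$ to $L^2_{\text{skew}}(\mathbb{F})$, is smooth. To see this, pass to a local trivialization and use the exponential law for multilinear maps (\cite[Proposition 5.2]{20}), as in Remark \ref{smoooth anchor}. This works because $R^*:G\times\mathbb{F}\to\mathbb{F}$ is a smooth action, linear on fibers. The same argument shows that $\operatorname{Ad}^{(2)}(g,\eta)=R^{**}_{g^{-1}}L^{**}_g\eta$ is a smooth action on $L^2_{\text{skew}}(\mathfrak{b})$. That it is a group action (a homomorphism into $\mathrm{GL}$) follows because $L^*$ and $R^*$ are actions and left and right translations commute.

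Next, the cocycle identity. Start from \eqref{multiplicative}, $\pi(xy)=L^{**}_x\pi(y)+R^{**}_y\pi(x)$, and apply $R^{**}_{(xy)^{-1}}=R^{**}_{x^{-1}}R^{**}_{y^{-1}}$, fixing the order of composition from the contravariance of pullbacks. The second term becomes $R^{**}_{x^{-1}}\pi(x)=\theta(x)$. For the first term, commute $R^{**}_{y^{-1}}$ past $L^{**}_x$ to get
\[
R^{**}_{x^{-1}}L^{**}_x\bigl(R^{**}_{y^{-1}}\pi(y)\bigr)=\operatorname{Ad}^{(2)}_x\theta(y).
\]
Hence $\theta(xy)=\theta(x)+\operatorname{Ad}^{(2)}(x)\theta(y)$, as required by Definition \ref{Lie group 1-cocycle}. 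The main care here is bookkeeping the covariance conventions, since the paper's $L^{**}$ and $R^{**}$ are defined via pullbacks of covectors. I would verify the identity pointwise by evaluating both sides on $\alpha,\beta\in\mathfrak{b}$.

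Finally, the Bastiani statement, which I expect to be the main obstacle. For Fréchet spaces, convenient smoothness and Bastiani smoothness coincide, since $c^\infty$-topologies agree on metrizable spaces (\cite[4.11]{20}). For Silva spaces the same holds, as these are also $c^\infty$-complete bornological spaces whose $c^\infty$-topology agrees with the locally convex one. So it suffices to show that the relevant spaces have these properties. The issue is that $L^2_{\text{skew}}(\mathfrak{b})$ need not itself be Fréchet. To get around this, I would use that $\mathfrak{b}$ is Fréchet or Silva, so that bounded bilinear maps on it are continuous by the two preceding lemmas. I would then argue continuity of the derivatives $d^k\theta$ via the explicit formula for $\theta$, using that the modelling space is Fréchet or Silva. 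If a space-level argument fails, I would instead restrict to the statement that $\theta$ composed with evaluations at $\alpha,\beta\in\mathfrak{b}$ is Bastiani smooth, uniformly on bounded sets.
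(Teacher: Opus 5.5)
Your cocycle computation and the convenient-smoothness step are sound and follow the paper's line. You apply $R^{**}_{(xy)^{-1}}$ to the multiplicativity identity and commute left and right pushforwards. Your order $R^{**}_{(xy)^{-1}}=R^{**}_{x^{-1}}R^{**}_{y^{-1}}$ is the one compatible with the covariant convention $R^{**}_{h_1}R^{**}_{h_2}=R^{**}_{h_2h_1}$; the paper's display writes the two factors the other way round, harmlessly. For smoothness you compose bundle-level maps and use the exponential law in trivializations. The paper instead fixes $\alpha,\beta\in\mathfrak b$, writes $\theta(x)(\alpha,\beta)=\pi(x)(\tilde\alpha(x),\tilde\beta(x))$ with right-invariant sections, and invokes the uniform boundedness principle. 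Either route works.

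The gap is the Bastiani claim, which as written is not an argument.
\begin{itemize}
\item ``Argue continuity of $d^k\theta$ via the explicit formula'' supplies no mechanism.
\item Your fallback, Bastiani smoothness of the scalar maps $x\mapsto\theta(x)(\alpha,\beta)$, proves less than the claim unless uniformity over bounded sets of $(\alpha,\beta)$ is actually established. That uniformity is exactly the content of the claim.
\item Bastiani smoothness of $\operatorname{Ad}^{(2)}$ is never addressed.
\item You use a hypothesis the statement does not contain: only $\mathfrak g$, not $\mathfrak b$, is assumed Fr\'echet or Silva.
\end{itemize}

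More importantly, you place the obstruction in the target of $\theta$, where there is none, and miss it in the domain of $\operatorname{Ad}^{(2)}$, where there is one. For $\theta$ the argument is as follows.
\begin{itemize}
\item A conveniently smooth map is continuous from the $c^\infty$-topology of its domain to the $c^\infty$-topology of its target.
\item The $c^\infty$-topology of the target is finer than the locally convex topology of $L^2_{\text{skew}}(\mathfrak b)$.
\item If $\mathfrak g$ is Fr\'echet or Silva, so is $\mathfrak g^{k+1}$, and its $c^\infty$-topology is the product topology. This is because convergent sequences are Mackey convergent (Kriegl--Michor 4.11), which is the fact the paper's proof leans on.
\item Apply this in charts to the conveniently smooth maps $d^k\theta\colon U\times\mathfrak g^k\to L^2_{\text{skew}}(\mathfrak b)$. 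Convenient derivatives are limits of difference quotients in the locally convex topology.
\end{itemize}
This gives Bastiani smoothness of $\theta$ with no condition on $\mathfrak b$ at all.

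For $\operatorname{Ad}^{(2)}$, however, $L^2_{\text{skew}}(\mathfrak b)$ is a factor of the domain. The same argument then needs the $c^\infty$-topology of $\mathfrak g\times L^2_{\text{skew}}(\mathfrak b)$ and its powers to coincide with the product topology. This holds if $L^2_{\text{skew}}(\mathfrak b)$ is of the same type as $\mathfrak g$, e.g.\ $\mathfrak b=\mathfrak g'$ with $\mathfrak g$ nuclear, where $L^2_{\text{skew}}(\mathfrak g')\cong\hat\wedge^2\mathfrak g$. It need not hold when a non-normable Fr\'echet factor meets a Silva factor. Some such additional input, or a separate argument exploiting linearity in $\eta$, is required there. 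The paper itself only says ``by identical arguments'', but your proposal provides neither ingredient.
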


\begin{proof}
First, we verify the algebraic $1$-cocycle condition. The left and right actions of $G$ on $\mathbb{F}$ imply that the adjoint action $\operatorname{Ad}^{(2)}$ is well-defined. By the definition of a Poisson Lie group, the bivector field $\pi$ is multiplicative, meaning $\pi(xy) = L_{x}^{**}\pi(y) + R_{y}^{**}\pi(x)$ for all $x,y \in G$. Applying $R_{(xy)^{-1}}^{**}$ to both sides yields
\begin{align*}
\theta(xy) &= R_{(xy)^{-1}}^{**} \left( L_{x}^{**}\pi(y) + R_{y}^{**}\pi(x) \right) \\
&= \left( R_{y^{-1}}^{**} \circ R_{x^{-1}}^{**} \circ L_{x}^{**} \right) \pi(y) + \left( R_{y^{-1}}^{**} \circ R_{x^{-1}}^{**} \circ R_{y}^{**} \right) \pi(x) \\
&= \left( R_{y^{-1}}^{**} \circ L_{x}^{**} \right) \left( R_{x^{-1}}^{**}\pi(y) \right) + R_{x^{-1}}^{**}\pi(x) \\
&= \operatorname{Ad}^{(2)}_{x}(\theta(y)) + \theta(x),
\end{align*}
where we used the fact that left and right translations commute. This shows that $\theta$ satisfies the $1$-cocycle condition.

Next, we show that $\theta$ is smooth in the convenient sense. Since $\mathfrak{b}$ is a convenient vector space, the space of bounded skew-symmetric bilinear forms $L^2_{\text{skew}}(\mathfrak{b})$ is also convenient. By the convenient uniform boundedness principle (\cite[Theorem 5.18]{20}), a curve into a space of bounded multilinear maps is smooth if and only if its composition with point evaluations is smooth. For fixed $\alpha, \beta \in \mathfrak{b}$, we evaluate $\theta(x)$ to obtain
\[ 
\theta(x)(\alpha, \beta) = \pi(x)(R_{x^{-1}}^*\alpha, R_{x^{-1}}^*\beta) = \pi(x)(\tilde{\alpha}(x), \tilde{\beta}(x)), 
\]
where $\tilde{\alpha}, \tilde{\beta} \in \Gamma(\mathbb{F})$ are the right-invariant sections defined by $\tilde{\alpha}(x) := R_{x^{-1}}^*\alpha$ and $\tilde{\beta}(x) := R_{x^{-1}}^*\beta$. The mappings $x \mapsto \tilde{\alpha}(x)$ are smooth because the right action $R^{*}: G \times \mathfrak{b} \rightarrow \mathbb{F}$ is assumed to be smooth. Because $\pi$ is a smooth section, its evaluation on smooth sections $\tilde{\alpha}$ and $\tilde{\beta}$ yields a smooth function from $G$ to $\mathbb{R}$. Therefore, $\theta: G \rightarrow L^2_{\text{skew}}(\mathfrak{b})$ is conveniently smooth.

It remains to show that the adjoint action $\operatorname{Ad}^{(2)}: G \times L^2_{\text{skew}}(\mathfrak{b}) \rightarrow L^2_{\text{skew}}(\mathfrak{b})$ is conveniently smooth. By the exponential law for convenient vector spaces, the joint smoothness of this map is equivalent to the smoothness of the curried map
\[ 
\check{\operatorname{Ad}}^{(2)}: G \rightarrow C^{\infty}\left(L^2_{\text{skew}}(\mathfrak{b}), L^2_{\text{skew}}(\mathfrak{b})\right). 
\]
Because the action is linear in the second variable, $\check{\operatorname{Ad}}^{(2)}$ actually takes values in the space of bounded linear operators $L\left(L^2_{\text{skew}}(\mathfrak{b}), L^2_{\text{skew}}(\mathfrak{b})\right)$. Since $L\left(L^2_{\text{skew}}(\mathfrak{b}), L^2_{\text{skew}}(\mathfrak{b})\right)$ is a $c^\infty$-closed (closed with respect to the convenient topology) linear subspace of the smooth mapping space $C^{\infty}\left(L^2_{\text{skew}}(\mathfrak{b}), L^2_{\text{skew}}(\mathfrak{b})\right)$ by \cite[Proposition 5.6]{20}, it suffices to check smoothness into this subspace. Applying the uniform boundedness principle, this map is smooth if and only if its evaluation on any fixed form $\omega \in L^2_{\text{skew}}(\mathfrak{b})$ is a smooth map from $G$ to $L^2_{\text{skew}}(\mathfrak{b})$. Applying the uniform boundedness principle a second time to the space $L^2_{\text{skew}}(\mathfrak{b})$, it suffices to show that for fixed vectors $b_1, b_2 \in \mathfrak{b}$, the scalar mapping
\[ 
g \mapsto \operatorname{Ad}^{(2)}_g(\omega)(b_1, b_2) = \omega(\operatorname{Ad}_{g^{-1}}^*b_1, \operatorname{Ad}_{g^{-1}}^*b_2) 
\]
is a smooth function on $G$. Because $G$ acts smoothly on $\mathfrak{b}$ via the restricted standard adjoint action (due to assumption (2) of Definition \ref{Lie-Poisson-group} ), the map $g \mapsto \operatorname{Ad}_{g^{-1}}^*b_i$ is smooth. The composition with the bounded, and hence smooth, bilinear form $\omega$ guarantees that this evaluation is a smooth function (\cite[Corollary 3.13]{20}). Thus, the joint action $\operatorname{Ad}^{(2)}$ is conveniently smooth.

Finally, assume $G$ is a smooth Lie group with a Fréchet or Silva Lie algebra $\mathfrak{g}$. In this setting, we note that smoothness in the convenient sense is equivalent to smoothness in the Bastiani sense. Since $\mathfrak{b}$ is a Montel space, the convenient topology of uniform convergence on bounded sets on $L^2_{\text{skew}}(\mathfrak{b})$ (for which it is Mackey complete) coincides with the topology of uniform convergence on compact sets. Therefore, to demonstrate the continuity of $\theta$, it is sufficient to show that any convergent sequence (or net) in a Fréchet (respectively, Silva) space is Mackey convergent, which holds true by \cite[4.11]{20}. This property extends to each derivative of $\theta$. Thus, all derivatives of $\theta$ are continuous, proving $\theta$ smooth in the Bastiani sense. By identical arguments, the adjoint action $\operatorname{Ad}^{(2)}$ is also Bastiani smooth.
\end{proof}

\begin{definition}[Poisson Lie group homomorphism]\label{poissonhom}\label{Cocycle compatibility}
Let $(G, \mathbb{F}_1, \pi_1)$ and $(H, \mathbb{F}_2, \pi_2)$ be two Poisson Lie groups, and let $\Theta_1$ and $\Theta_2$ be their associated $1$-cocycles. A Lie group homomorphism $\Phi: G \to H$ is called a Poisson Lie group homomorphism if it is also a Poisson morphism between $G$ and $H$. 

Let $\phi := d\Phi(e) : \mathfrak{g} \to \mathfrak{h}$ denote the induced Lie algebra homomorphism, and let $\phi^*: \mathfrak{h}' \to \mathfrak{g}'$ denote its dual. Following the conditions for a Poisson morphism, $\phi^*$ must restrict to a well-defined bounded linear map between the respective subbundles at the identity
\[
\phi^*|_{\mathfrak{b}_2}: \mathfrak{b}_2 \to \mathfrak{b}_1.
\]
The compatibility condition of the Lie group homomorphism $\Phi$ with respect to the group $1$-cocycles $\Theta_1$ and $\Theta_2$ is written as
\begin{equation}  
\Theta_2 \circ \Phi = (\phi^*|_{\mathfrak{b}_2})_* \circ \Theta_1,
\end{equation} 
where $(\phi^*|_{\mathfrak{b}_2})_*: L^2_{\text{skew}}(\mathfrak{b}_1) \to L^2_{\text{skew}}(\mathfrak{b}_2)$ is the induced pushforward map on the skew-symmetric bilinear forms. Equivalently, for all $g \in G$ and dual elements $\alpha, \beta \in \mathfrak{b}_2$, we have
\[
\Theta_2(\Phi(g))(\alpha, \beta) = \Theta_1(g)(\phi^*(\alpha), \phi^*(\beta)).
\]
\end{definition}
\subsubsection{Regular Lie group structure on a semidirect product }\cite[section 5.8]{20} \label{regularsemidirect}\\
Let $G, H$ be regular smooth regular Lie groups and $\alpha : G\times H\to H$ be a smooth action of $G$ on $H$ by automorphisms on $H$ such that $\hat{\alpha}: G\to \operatorname{Aut}(H)$ is a group homomorphism. Then the semidirect product $G \rtimes H$ carries a smooth regular Lie group structure, where the 

product map \[(g_1, h_1) \times (g_2, h_2) \mapsto (g_1\alpha(h_1)(g_2), h_1h_2),\] and the inverse map \[(g_1, h_1) \mapsto (\alpha(h_1^{-1})(g_1), h_1^{-1})\] are smooth in the convenient sense. \\

 The following lemma arises naturally from the semidirect product Lie group structure on $G \rtimes L$.

\begin{lemma}\label{4.4}
Let $G$ be a smooth Lie group with a smooth action $\alpha$ on a convenient vector space $L$. A smooth map $\theta: G \rightarrow L$ is a $1$-cocycle relative to $\alpha$ if and only if the map $\tilde{\theta}: G \rightarrow G \rtimes L$ defined by $\tilde{\theta}(g) := (g, \theta(g))$ is a smooth Lie group homomorphism.
\end{lemma}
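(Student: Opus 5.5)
The plan is to verify the equivalence directly from the group law on the semidirect product $G \rtimes L$, where $G$ acts on $L$ via $\alpha$. I will write that law in the form
\[
(g_1, v_1)\cdot(g_2, v_2) = (g_1 g_2,\; v_1 + \alpha(g_1)(v_2)).
\]
This is the convention compatible with the cocycle identity $\theta(gh)=\theta(g)+\alpha(g)(\theta(h))$ of Definition \ref{Lie group 1-cocycle}; it is the specialization of the construction in \S\ref{regularsemidirect} to the abelian group $(L,+)$, with the factors written in the order $G \rtimes L$. Smoothness of multiplication and inversion is taken from \S\ref{regularsemidirect}. Alternatively, it follows directly, since $(g_1,v_1,g_2,v_2)\mapsto v_1+\alpha(g_1)(v_2)$ is smooth: $\alpha: G\times L\to L$ is smooth and addition is bounded. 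The inverse $(g,v)\mapsto (g^{-1}, -\alpha(g^{-1})(v))$ is smooth for the same reason. So $G\rtimes L$ is a convenient Lie group, and in particular a convenient manifold $G\times L$.

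For smoothness of $\tilde\theta$, I will use that smooth maps into the product manifold $G\times L$ are exactly those whose components are smooth. This holds in the convenient setting because products of manifolds carry the product chart structure. Hence $\tilde\theta=(\operatorname{id}_G,\theta)$ is smooth if and only if $\theta$ is smooth. The smoothness assertion therefore reduces to the hypothesis on $\theta$ in one direction, and it is automatic that smoothness of $\tilde\theta$ gives smoothness of $\theta=\operatorname{pr}_L\circ\tilde\theta$ in the other.

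For the homomorphism property I will compute
\[
\tilde\theta(g)\tilde\theta(h)=(gh,\;\theta(g)+\alpha(g)(\theta(h))),\qquad \tilde\theta(gh)=(gh,\;\theta(gh)).
\]
The first components always agree, so $\tilde\theta$ is multiplicative exactly when the second components agree for all $g,h$, which is precisely the $1$-cocycle identity. Preservation of the identity is automatic: setting $g=h=e$ in the cocycle identity gives $\theta(e)=0$, so $\tilde\theta(e)=(e,0)$. Conversely, a homomorphism maps $e$ to the neutral element $(e,0)$, and its multiplicativity is equivalent to the cocycle identity by the computation above.

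The only genuine obstacle is bookkeeping of conventions. In \S\ref{regularsemidirect} the action is written with the roles of the factors swapped, so I must check that the ordering above is the one for which the cocycle identity, rather than its right-handed variant, corresponds to multiplicativity. If the paper's ordering were used literally, one would instead pass to the opposite convention via $g\mapsto g^{-1}$, or rewrite the product as $L\rtimes G$. I will state explicitly that the above multiplication is the one meant, and note that it is smooth for the reasons given in the first paragraph.
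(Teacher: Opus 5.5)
Your proposal is correct and is essentially the argument the paper intends. The paper gives no separate proof and only remarks that the lemma ``arises naturally from the semidirect product Lie group structure''; your comparison of $\tilde\theta(g)\tilde\theta(h)$ with $\tilde\theta(gh)$ is exactly that observation, and your choice of the law $(g_1,v_1)(g_2,v_2)=(g_1g_2,\,v_1+\alpha(g_1)(v_2))$ (which fixes the swapped roles in the formula of \S\ref{regularsemidirect}) and your direct smoothness check (which does not need the regularity assumed there) are both appropriate.
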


We note that the action $\alpha$ of $G$ on $L$ induces a natural Lie algebra action $d\alpha(e)$ of the Lie algebra $\mathfrak{g}$ on $L$. The space $\mathfrak{g} \times L$ carries a semidirect product Lie algebra structure, denoted $\mathfrak{g} \rtimes L$, equipped with the following Lie bracket
\[
[(X, l_1), (Y, l_2)] := ([X, Y]_{\mathfrak{g}}, X \cdot l_2 - Y \cdot l_1)
\]
for $X, Y \in \mathfrak{g}$ and $l_1, l_2 \in L$, where $X \cdot l = d\alpha(e)(X)(l)$. The following lemma is the infinitesimal counterpart of Lemma \ref{4.4}, following directly from this Lie algebra structure.

\begin{lemma}\label{4.5}
Let $\mathfrak{g}$ be a convenient Lie algebra with a smooth action $\tilde{\alpha}: \mathfrak{g} \to \mathfrak{gl}(l)$ on a convenient vector space $l$. A bounded linear map $\delta: \mathfrak{g} \rightarrow l$ is a bounded $1$-cocycle of $\mathfrak{g}$ relative to $\tilde{\alpha}$ if and only if the map $\tilde{\delta}: \mathfrak{g} \rightarrow \mathfrak{g} \rtimes l$ defined by $\tilde{\delta}(X) := (X, \delta(X))$ is a bounded Lie algebra homomorphism.
\end{lemma}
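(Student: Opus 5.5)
The plan is a direct verification on both sides, with $\tilde{\delta}$ in the role of a graph map. Recall that a bounded linear $\delta:\mathfrak g\to l$ is a $1$-cocycle relative to $\tilde\alpha$ precisely when
\[
\delta([X,Y]_{\mathfrak g})=\tilde\alpha(X)(\delta(Y))-\tilde\alpha(Y)(\delta(X))\qquad\text{for all }X,Y\in\mathfrak g,
\]
which is the infinitesimal condition of Definition \ref{Lie algebra 1-cocycle} written for an abstract action. The proof therefore splits into two parts: an algebraic identity, and a boundedness check for $\tilde\delta$.

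First I will settle boundedness. The map $\tilde\delta=(\operatorname{id}_{\mathfrak g},\delta)$ goes into the product $\mathfrak g\times l$, which carries the product convenient structure. A linear map into a product is bounded if and only if each component is bounded. Hence $\tilde\delta$ is bounded exactly when $\delta$ is, and $\tilde\delta$ is injective with left inverse the bounded projection $\mathrm{pr}_1$. I will also note that the semidirect bracket on $\mathfrak g\rtimes l$ is bounded. This holds because $[-,-]_{\mathfrak g}$ is bounded and $(X,l)\mapsto \tilde\alpha(X)(l)$ is bounded bilinear, since $\tilde\alpha$ is a smooth, hence bounded, linear map into $\mathfrak{gl}(l)$ and evaluation is bounded by \cite[Corollary 3.13]{20}. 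Consequently "bounded Lie algebra homomorphism" makes sense in this setting.

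Next comes the algebraic core, which is a one-line computation. Using the bracket of $\mathfrak g\rtimes l$,
\[
[\tilde\delta(X),\tilde\delta(Y)]=\bigl([X,Y]_{\mathfrak g},\,X\cdot\delta(Y)-Y\cdot\delta(X)\bigr),\qquad \tilde\delta([X,Y])=\bigl([X,Y]_{\mathfrak g},\,\delta([X,Y])\bigr).
\]
The first components always agree. Therefore $\tilde\delta$ preserves brackets if and only if the second components agree, and that is exactly the cocycle identity. Together with the boundedness equivalence from the first step, this gives both directions.

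The only point needing care is the Jacobi identity for $\mathfrak g\rtimes l$, which is needed so that "Lie algebra homomorphism" is meaningful. It follows from $\tilde\alpha$ being a Lie algebra morphism, $\tilde\alpha([X,Y])=[\tilde\alpha(X),\tilde\alpha(Y)]$, by the standard expansion. I regard this, together with confirming the boundedness of the action map, as the only mild obstacle; everything else is formal.
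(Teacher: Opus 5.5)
Your proposal is correct and takes essentially the paper's approach. The paper gives no written proof beyond saying the lemma follows directly from the semidirect-product bracket $[(X,l_1),(Y,l_2)] = ([X,Y]_{\mathfrak g}, X\cdot l_2 - Y\cdot l_1)$, and your comparison of second components is exactly that; your remarks on boundedness of $\tilde\delta$ and on the Jacobi identity for $\mathfrak g\rtimes l$ add care the paper leaves implicit.
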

\begin{proposition}\cite[Theorem 7.3]{22}\label{proposition 4.10}
Let $G$ and $H$ be smooth Lie groups with the Lie algebras $\mathfrak{g}$ and $\mathfrak{h}$. Such that $G$ is 1-connected, and $H$ is regular. Then if $\delta: \mathfrak{g} \rightarrow \mathfrak{h}$ is a bounded Lie algebra homomorphism, then there exists $\theta: G \rightarrow H$ a unique smooth Lie group homomorphism such that $d(\theta)(e) = \delta$.
\end{proposition}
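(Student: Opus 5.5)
The statement cited from \cite[Theorem 7.3]{22} is the Lie's second theorem for regular Lie groups in the convenient setting. Since it is quoted from the literature, I would reconstruct its proof along the standard lines for regular Lie groups.

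The plan is to build $\theta$ by solving a left logarithmic derivative equation in $H$ along paths in $G$. Uniqueness is then handled separately.

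\textbf{Step 1 (path lifting).} For a smooth curve $c\colon[0,1]\to G$ with $c(0)=e$, consider its left logarithmic derivative $\delta^l c(t)=T_{c(t)}L_{c(t)^{-1}}\dot c(t)\in\mathfrak g$. This is a smooth curve in $\mathfrak g$. Applying the bounded map $\delta$ gives the smooth curve $\delta\circ\delta^l c$ in $\mathfrak h$. Since $H$ is regular, the evolution
\[
\operatorname{evol}_H(\delta\circ\delta^l c)
\]
exists. It is the unique smooth curve $\gamma$ in $H$ with $\gamma(0)=e$ and $\delta^l\gamma=\delta\circ\delta^l c$. I would define $\theta(c(1)):=\gamma(1)$.

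\textbf{Step 2 (independence of the path).} This is the main obstacle. Take a smooth homotopy $c_s$ with fixed endpoints; such homotopies exist because $G$ is simply connected, and smooth curves are dense in the convenient sense. Set $X=\delta^l c_s(t)$ and $Y=$ the left logarithmic derivative in $s$. These satisfy the Maurer--Cartan equation
\[
\partial_s X-\partial_t Y=[X,Y].
\]
Because $\delta$ is a Lie algebra homomorphism, $\delta X$ and $\delta Y$ satisfy the same equation in $\mathfrak h$.

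Next let $\gamma_s$ be the evolution of $\delta X(s,\cdot)$, and let $Z$ be the left logarithmic derivative of $\gamma_s$ in $s$. The same identity for $(\delta X,Z)$ gives
\[
\partial_t(Z-\delta Y)=[\delta X, Z-\delta Y].
\]
This is a linear ODE with initial value $0$ at $t=0$. Uniqueness of solutions for such equations in a regular group can be checked by conjugating with $\gamma_s$, which turns it into $\partial_t(\operatorname{Ad}_{\gamma_s}(Z-\delta Y))=0$. Hence $Z=\delta Y$. At $t=1$ we have $Y=0$ because the endpoints are fixed, so $\partial_s\gamma_s(1)=0$.

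Smooth dependence of $\gamma_s$ on $s$ follows from smoothness of $\operatorname{evol}_H$ on $C^\infty([0,1],\mathfrak h)$, which is part of the definition of regularity. This is exactly where regularity of $H$ is essential.

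\textbf{Step 3 (homomorphism and smoothness).} To show $\theta(gh)=\theta(g)\theta(h)$, concatenate a path to $g$ with the $g$-left-translate of a path to $h$. Left logarithmic derivatives are invariant under left translation, so the evolution of the concatenation is $\theta(g)$ times the evolution of the second piece. Path independence from Step 2 then gives the identity.

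For smoothness, use the group property: it suffices to check smoothness near $e$. In a chart $\varphi$ around $e$, use the paths $t\mapsto\varphi^{-1}(t\varphi(g))$. Their logarithmic derivatives depend smoothly on $g$, and $\operatorname{evol}_H$ is smooth, so $\theta$ is smooth near $e$.

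Differentiating at $e$ along $c(t)=\exp$-free curves with $\dot c(0)=X$ gives $d\theta(e)=\delta$.

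\textbf{Step 4 (uniqueness).} Suppose $\theta'$ is another smooth homomorphism with $d\theta'(e)=\delta$. Then
\[
\delta^l(\theta'\circ c)=\delta\circ\delta^l c .
\]
By uniqueness of evolutions in $H$, $\theta'\circ c=\gamma$. Since $G$ is connected, every point is joined to $e$ by a smooth curve, so $\theta'=\theta$.
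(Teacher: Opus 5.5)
The paper gives no proof of this proposition; it is quoted from the cited reference, so there is no in-paper argument to compare with. Your reconstruction is the standard proof of Lie's second theorem for regular groups. The lift $\operatorname{evol}_H(\delta\circ\delta^l c)$ is the horizontal lift for the flat connection on $G\times H\to G$ determined by the $\mathfrak h$-valued Maurer--Cartan form $\delta\circ\kappa^l_G$, and your Step 2 is the usual proof that flatness makes this transport invariant under homotopies rel endpoints.

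Steps 1, 3 and 4 are fine. Uniqueness of solutions of $\delta^l\gamma=\xi$ holds in every Lie group, because $\gamma_1\gamma_2^{-1}$ has zero derivative. Concatenated paths should be reparametrized to be flat at the junction. Using left rather than right logarithmic derivatives is harmless, by inversion.

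There is a sign slip in Step 2. The two Maurer--Cartan identities give $\partial_t(Z-\delta Y)=-[\delta X,Z-\delta Y]$, not $+[\delta X,Z-\delta Y]$. With the correct sign, $\partial_t\bigl(\operatorname{Ad}_{\gamma_s}(Z-\delta Y)\bigr)=\operatorname{Ad}_{\gamma_s}\bigl(\partial_t(Z-\delta Y)+[\delta X,Z-\delta Y]\bigr)=0$, which is what you claim. With the sign you wrote, this quantity would not vanish.

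The step that is asserted rather than proved is the existence of a smooth homotopy $(s,t)\mapsto c_s(t)$ with fixed endpoints. Simple connectivity is a topological hypothesis. It yields only a continuous homotopy for the manifold ($c^\infty$-)topology. Step 2 needs a smooth two-parameter map, both for the identity $\partial_sX-\partial_tY=[X,Y]$ and for the smooth dependence of $\gamma_s$ on $s$ through $\operatorname{evol}_H$.

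Your justification, ``smooth curves are dense,'' does not supply this. It concerns curves, not two-parameter homotopies. Moreover, for a general convenient model space the $c^\infty$-topology need not be a locally convex vector topology. So the usual smoothing procedure is not automatically available: subdivide $[0,1]^2$, interpolate affinely in convex chart neighbourhoods, then regularize while keeping the two boundary paths.

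You need one of two things:
\begin{enumerate}
\item[(a)] a genuine smoothing lemma for maps $[0,1]^2\to G$. This is standard when the model space is Fr\'echet or Silva, where the $c^\infty$-topology is the locally convex one; that covers the examples used later in the paper.
\item[(b)] a reading of ``1-connected'' as ``every smooth loop is smoothly null-homotopic.''
\end{enumerate}

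No smoothing is needed locally. Take paths inside a star-shaped chart neighbourhood $\varphi\colon V\to E$ of $e$, and let $\ell$ be the loop formed by your path followed by the reversed radial path. The homotopy $(\lambda,t)\mapsto\varphi^{-1}(\lambda\,\varphi(\ell(t)))$ is then smooth, so your Step 2 already yields a smooth local homomorphism. The smoothing issue enters only when extending it to all of $G$.
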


The following lemma shows that $d(\theta)(e)$, the differentiation of $\theta$ induces a $1$-cocycle of $\mathfrak{g}$ with values in $L^2_{skew}(\mathfrak{b})$, and the dual of it induces a Lie bracket on $\mathfrak{b}$.

\begin{lemma}\label{4.7}
Let $(G, \mathbb{F}, \pi)$ be a convenient Poisson Lie group, and let $\theta$ be the corresponding smooth $1$-cocycle. Then, $\delta := d\theta(e): \mathfrak{g} \rightarrow L^2_{\text{skew}}(\mathfrak{b})$ is a smooth $1$-cocycle of $\mathfrak{g}$ with values in $L^2_{\text{skew}}(\mathfrak{b})$ relative to the adjoint action $\operatorname{ad}^{(2)}$ of $\mathfrak{g}$ on $L^2_{\text{skew}}(\mathfrak{b})$ given by
\begin{align*}
\operatorname{ad}^{(2)}&: \mathfrak{g} \to \mathfrak{gl}(L^2_{\text{skew}}(\mathfrak{b})), \\
\operatorname{ad}^{(2)}_{X} \eta (\alpha, \beta) &= \eta(\operatorname{ad}^*_{X} \alpha, \beta) + \eta(\alpha, \operatorname{ad}^*_X \beta)
\end{align*}
for $X \in \mathfrak{g}, \eta \in L^2_{\text{skew}}(\mathfrak{b})$, and $\alpha, \beta \in \mathfrak{b}$. Furthermore, the map $[-,-]_{\mathfrak{b}} := \delta'_{\mid \mathfrak{b} \times \mathfrak{b}}: \mathfrak{b} \times \mathfrak{b} \rightarrow \mathfrak{b}$ defines a Lie algebra structure on $\mathfrak{b}$ with a bounded Lie bracket morphism given as in Proposition 4.3 by
\[
[\alpha, \beta]_{\mathfrak{b}} = d(\theta(\alpha, \beta))(e)
\]
for $\alpha, \beta \in \mathfrak{b}$.
\end{lemma}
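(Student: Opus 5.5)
The argument has two parts. First, differentiating the group cocycle identity of Proposition \ref{Poisson 1-cocycle} at the identity gives the Lie algebra cocycle property of $\delta$. Second, we identify the transpose of $\delta$ on $\mathfrak{b}\times\mathfrak{b}$ with the bracket of Proposition \ref{linearization}, which is already known to be a bounded Lie bracket.

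\emph{Cocycle property.} By Lemma \ref{4.4}, the map $\tilde\theta(g)=(g,\theta(g))$ is a smooth group homomorphism $G\to G\rtimes L^2_{\text{skew}}(\mathfrak{b})$. This uses that $\operatorname{Ad}^{(2)}$ is a smooth action, as shown in Proposition \ref{Poisson 1-cocycle}.

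1. Since $\tilde\theta$ is smooth, its tangent map at $e$, namely $X\mapsto (X,d\theta(e)X)$, is a bounded Lie algebra homomorphism $\mathfrak{g}\to\mathfrak{g}\rtimes L^2_{\text{skew}}(\mathfrak{b})$. The semidirect product bracket involves the derived action $d\operatorname{Ad}^{(2)}(e)$.
2. Lemma \ref{4.5} then shows that $\delta$ is a bounded $1$-cocycle with respect to $d\operatorname{Ad}^{(2)}(e)$.
3. It remains to identify $d\operatorname{Ad}^{(2)}(e)$ with $\operatorname{ad}^{(2)}$. Evaluate on fixed $\alpha,\beta\in\mathfrak{b}$: we have $\operatorname{Ad}^{(2)}_g\eta(\alpha,\beta)=\eta(\operatorname{Ad}^*_{g^{-1}}\alpha,\operatorname{Ad}^*_{g^{-1}}\beta)$. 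Differentiate along a smooth curve $c$ with $c(0)=e$ and $c'(0)=X$. Because $\operatorname{Ad}^*$ restricts to a smooth action on $\mathfrak{b}$, the curve $t\mapsto \operatorname{Ad}^*_{c(t)^{-1}}\alpha$ is smooth in $\mathfrak{b}$, and its derivative is $\operatorname{ad}^*_X\alpha$ (up to the sign convention of the paper). Bilinearity and boundedness of $\eta$ then give the product rule.
4. The uniform boundedness principle \cite[Theorem 5.18]{20} lets us pass from these pointwise identities to identities in $L^2_{\text{skew}}(\mathfrak{b})$.

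\emph{Bracket on $\mathfrak{b}$.} Set $[\alpha,\beta]_{\mathfrak{b}}(X):=\delta(X)(\alpha,\beta)$. The plan is to show this is an element of $\mathfrak{b}$ equal to $\widetilde{d(\pi(\tilde\alpha,\tilde\beta))}(e)$, where $\tilde\alpha,\tilde\beta$ are the right-invariant extensions.

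1. In the proof of Proposition \ref{Poisson 1-cocycle} we computed $\theta(x)(\alpha,\beta)=\pi(x)(\tilde\alpha(x),\tilde\beta(x))$. Hence $d(\theta(\cdot)(\alpha,\beta))(e)=d(\pi(\tilde\alpha,\tilde\beta))(e)$.
2. Interchanging the derivative with the evaluation at $(\alpha,\beta)$, which is a bounded linear map, gives $\delta(X)(\alpha,\beta)=d(\pi(\tilde\alpha,\tilde\beta))(e)(X)$.
3. Right-invariant sections have $d\tilde\alpha=0$, as in the proof of Proposition \ref{linearization}. So condition (1) of Definition \ref{3.2} applies: the functional factors through $\mathfrak{b}=\mathbb{F}_e$. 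Separation of points by $\mathbb{F}_e$ makes the lift unique.
4. Therefore $\delta'|_{\mathfrak{b}\times\mathfrak{b}}$ takes values in $\mathfrak{b}$ and coincides with the bracket of Proposition \ref{linearization}. That proposition already supplies boundedness and the Jacobi identity, the latter inherited from condition (2) of Definition \ref{3.2}.

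\emph{Main obstacle.} I expect the hardest point to be justifying $d\tilde\alpha=0$ for right-invariant sections, which is needed to invoke condition (1) of Definition \ref{3.2}. Closely tied to it is the bookkeeping that shows the right-invariant and left-invariant choices produce the same bracket. If closedness cannot be justified directly, I would instead use the multiplicativity $\pi(e)=0$. This gives $d(\pi(\tilde\alpha,\tilde\beta))(e)=d\pi(e)(\cdot)(\alpha,\beta)$, which is independent of the extension. One can then use extensions of the form $d f$ for functions $f$ with $df(e)=\alpha$, where such functions exist. I would fall back on the paper's Proposition \ref{linearization} formulation, taking the closedness of invariant sections as part of the standing hypotheses.
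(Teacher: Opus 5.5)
Your route is the paper's. The cocycle property comes from the homomorphism $g\mapsto(g,\theta(g))$ into $G\rtimes L^2_{\text{skew}}(\mathfrak{b})$ together with Lemma \ref{4.5}. The bracket comes from $\theta(x)(\alpha,\beta)=\pi(x)(\tilde\alpha(x),\tilde\beta(x))$, $\pi(e)=0$, and condition (1) of Definition \ref{3.2}. The cocycle half is fine. Differentiating the smooth homomorphism $\tilde\theta$ directly is cleaner than the paper's appeal to Proposition \ref{proposition 4.10}, which is an integration result.

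\textbf{The gap: right-invariant sections are not closed.} The problem is step 3 of the bracket half, and your claim there is false, not just hard to justify. For right-invariant vector fields $X^R,Y^R$ the function $\tilde\alpha(X^R)\equiv\alpha(X)$ is constant, so
\[
d\tilde\alpha(X^R,Y^R)=-\tilde\alpha\bigl([X^R,Y^R]\bigr)=\pm\,\alpha([X,Y]).
\]
Hence $\tilde\alpha$ is closed only if $\alpha$ annihilates $[\mathfrak g,\mathfrak g]$. For $C^\infty(\mathbb S^1,\mathfrak k)$ with $\mathfrak k$ semisimple, the algebra is perfect, so this forces $\alpha=0$.

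Nothing in the proof of Proposition \ref{linearization} shows that invariant sections are closed. So condition (1) of Definition \ref{3.2} cannot be invoked for $\pi(\tilde\alpha,\tilde\beta)$. Your last fallback, postulating closedness of invariant sections, would force $\mathfrak b\subseteq[\mathfrak g,\mathfrak g]^{\perp}$. That is $\{0\}$ in the main examples, contradicting the requirement that $\mathbb F_e$ separate points of $\mathfrak g$.

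The paper's proof has the same hole. It applies the factorization property to $f=\pi(\alpha,\beta)$ with $\alpha,\beta$ right-invariant and never checks $d\alpha=d\beta=0$. Your step 4 imports the bracket, boundedness and Jacobi identity from Proposition \ref{linearization}, and those rest on the same unchecked factorization for left-invariant sections.

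\textbf{The repair: your first fallback.}
\begin{itemize}
\item Since $\pi(e)=0$, one has $\delta(X)(\alpha,\beta)=d(\pi(\sigma,\tau))(e)(X)$ for \emph{any} smooth local sections $\sigma,\tau$ of $\mathbb F$ with $\sigma(e)=\alpha$ and $\tau(e)=\beta$.
\item Taking $\sigma,\tau$ closed, condition (1) gives $d(\pi(\sigma,\tau))(e)\in i_{\mathbb F}(\mathbb F_e)=\mathfrak b$. Uniqueness of the lift comes from injectivity of $\mathfrak b\hookrightarrow\mathfrak g'$, not from $\mathfrak b$ separating points of $\mathfrak g$.
\item For the Jacobi identity, differentiate condition (2) at $e$ for closed $\sigma,\tau,\rho$ through $\alpha,\beta,\gamma$. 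Because $\pi(e)=0$, only $\langle X,[\alpha,[\beta,\gamma]_{\mathfrak b}]_{\mathfrak b}\rangle+\text{cyclic}$ survives.
\end{itemize}

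What you must still supply is the existence of closed local sections of $\mathbb F$ through every $\alpha\in\mathfrak b$, for instance $\widetilde{df}$ for some $f\in\mathcal A_{\mathbb F}$ defined near $e$ with $\widetilde{df}(e)=\alpha$. For $\mathbb F=T'G$ this is automatic: take $d(\lambda\circ\phi)$ for a chart $\phi$. For a proper weak subbundle it does not follow from Definitions \ref{3.2} and \ref{Lie-Poisson-group}, and has to be proved in the setting at hand or assumed.

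You also need boundedness of $[-,-]_{\mathfrak b}$ into $\mathfrak b$, not merely into $\mathfrak g'$. That requires some control of these extensions as $\alpha$ varies, or a closed-graph argument when $\mathfrak b$ is Fréchet or Silva.
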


\begin{proof}
By Proposition \ref{Poisson 1-cocycle}, $\theta: G \rightarrow L^2_{\text{skew}}(\mathfrak{b})$ is a smooth $1$-cocycle relative to the action $\operatorname{Ad}^{(2)}$. According to Lemma \ref{4.4}, this holds if and only if the map $\tilde{\theta}: G \rightarrow G \rtimes L^2_{\text{skew}}(\mathfrak{b})$ defined by $g \mapsto (g, \theta(g))$ is a smooth Lie group homomorphism.

By Proposition \ref{proposition 4.10}, differentiating $\tilde{\theta}$ at the identity $e \in G$ yields a bounded Lie algebra homomorphism between the corresponding Lie algebras
\[
d\tilde{\theta}(e): \mathfrak{g} \rightarrow \mathfrak{g} \rtimes L^2_{\text{skew}}(\mathfrak{b}).
\]
Evaluating this differential at $X \in \mathfrak{g}$, we obtain $d\tilde{\theta}(e)(X) = (X, d\theta(e)(X)) = (X, \delta(X))$. By Lemma \ref{4.5}, the fact that $X \mapsto (X, \delta(X))$ is a Lie algebra homomorphism implies that $\delta = d\theta(e)$ is a smooth $1$-cocycle of $\mathfrak{g}$ relative to the infinitesimal derived action $\operatorname{ad}^{(2)} = d(\operatorname{Ad}^{(2)})(e)$. Explicit differentiation of the action $\operatorname{Ad}^{(2)}$ defined in Proposition \ref{Poisson 1-cocycle} yields the stated formula for $\operatorname{ad}^{(2)}_X$.

For the second part of the statement, we must recover the Lie bracket on $\mathfrak{b}$ established in Proposition 4.3. Let $\xi, \eta \in \mathfrak{b}$ and extend them to right-invariant sections $\alpha, \beta \in \Gamma(\mathbb{F})$ such that $\alpha(g) = R_{g^{-1}}^*\xi$ and $\beta(g) = R_{g^{-1}}^*\eta$. Consider the smooth function $f \in C^\infty(G)$ defined by the pairing
\[
f(g) := \pi(g)(\alpha(g), \beta(g)).
\]
By the definition of the $1$-cocycle $\theta(g) = R_{g^{-1}}^{**}\pi(g)$, evaluating $\theta(g)$ on the vectors $\xi, \eta$ yields exactly this function:
\[
\theta(g)(\xi, \eta) = \pi(g)(R_{g^{-1}}^*\xi, R_{g^{-1}}^*\eta) = \pi(g)(\alpha(g), \beta(g)) = f(g).
\]
Differentiating this expression at the identity $e$ in the direction of $X \in \mathfrak{g}$, we extend $X$ to a left-invariant vector field $\tilde{X}$ and apply the Lie derivative $\mathcal{L}_{\tilde{X}}$ to the scalar function $f = \pi(\alpha, \beta)$. Using the Leibniz rule for Lie derivatives, we evaluate at $e$:
\[
df(e)(X) = (\mathcal{L}_{\tilde{X}}\pi)(e)(\alpha(e), \beta(e)) + \pi(e)((\mathcal{L}_{\tilde{X}}\alpha)(e), \beta(e)) + \pi(e)(\alpha(e), (\mathcal{L}_{\tilde{X}}\beta)(e)).
\]
Because the Poisson bivector vanishes at the identity ($\pi(e) = 0$), the two terms involving the Lie derivatives of the sections vanish identically. Furthermore, the Lie derivative of a tensor at a point where the tensor vanishes is simply its ordinary differential, so $(\mathcal{L}_{\tilde{X}}\pi)(e) = d\pi(e)(X)$. Thus, we obtain
\[
df(e)(X) = d\pi(e)(X)(\xi, \eta) = d\theta(e)(X)(\xi, \eta) = \delta(X)(\xi, \eta).
\]
By the definition of the dual map $\delta': (L^2_{\text{skew}}(\mathfrak{b}))' \to \mathfrak{g}'$, this evaluation defines the bracket $[\xi, \eta]_{\mathfrak{b}}$ via the duality pairing with $X$
\[
\delta(X)(\xi, \eta) = \langle X, \delta'(\xi \wedge \eta) \rangle = \langle X, [\xi, \eta]_{\mathfrak{b}} \rangle.
\]
On the other hand, applying the factorization property of the differential of the Poisson bivector (Definition 3.2) to the function $f = \pi(\alpha, \beta)$, the differential $df$ factors through the subbundle $\mathbb{F}$ as $df = i_{\mathbb{F}} \circ \widetilde{df}$. Evaluating $df$ at $e$ against $X \in \mathfrak{g}$ gives
\[
df(e)(X) = \langle X, df(e) \rangle = \langle X, \widetilde{d(\pi(\alpha, \beta))}(e) \rangle.
\]
Equating the two expressions for $df(e)(X)$, we obtain
\[
\langle X, [\xi, \eta]_{\mathfrak{b}} \rangle = \langle X, \widetilde{d(\pi(\alpha, \beta))}(e) \rangle.
\]
Since this identity holds for all $X \in \mathfrak{g}$, the non-degeneracy of the pairing implies $[\xi, \eta]_{\mathfrak{b}} = \widetilde{d(\pi(\alpha, \beta))}(e)$.
\end{proof}
\section{Lie bialgebra, Lie-Poisson space and Manin triple}\label{4}
This section deals with the correspondence between Lie bialgebras and Manin triples. We show that if  $(\mathfrak g\oplus\mathfrak b, \mathfrak{g}, \mathfrak{b})$ is  a Manin triple, then $(\mathfrak{g}, \mathfrak{b}, \delta)$ inherits a Lie bialgebra structure. Furthermore, we show that $\mathfrak{g}$ carries a Lie-Poisson space structure with respect to $\mathfrak{b}$. Conversely, if $(\mathfrak{g}, \mathfrak{b}, \delta)$ is a Lie bialgebra structure and $\mathfrak{g}$ is a Lie-Poisson space with respect to $\mathfrak{b}$, then $(\mathfrak{g}\oplus\mathfrak b, \mathfrak{g}, \mathfrak{b})$ constitutes a Manin triple.
\begin{definition}[Convenient Lie bialgebra]
Let $\mathfrak{g}$ be a convenient Lie algebra with a smooth Lie bracket $[-,-]_{\mathfrak{g}}$. Then a convenient Lie bialgebra structure on $\mathfrak{g}$ is a pair $(\mathfrak{g}, \mathfrak{b}, \delta)$, where $\mathfrak{b} \subset \mathfrak{g}'$ separates points in  $\mathfrak{g}$ such that the smooth co-adjoint action of $\mathfrak{g}$ on $\mathfrak{g}'$ restricts to a smooth action of $\mathfrak{g}$ on $\mathfrak{b}$ and there exists a map
\[
\delta : \mathfrak{g} \rightarrow L^2_{skew}(\mathfrak{b})
\]
such that $\delta$ is a smooth $1$-cocycle of $\mathfrak{g}$ on $L^2_{skew}(\mathfrak{b})$ relative to the adjoint action $\operatorname{ad}^{(2)}$ of $\mathfrak{g}$ on $L^2_{skew}(\mathfrak{b})$ (\ref{4.7}), and
\[
[-, -]_{\mathfrak{b}} := \delta'|_{\mathfrak b \wedge\mathfrak b} : \wedge^2\mathfrak{b} \rightarrow \mathfrak{b}
\]
defines a bounded Lie algebra structure on $\mathfrak{b}$, where $\wedge$ denotes the skew-symmetric tensor product completed in the bornological tensor product topology (see \cite[5.6]{20} for the definition of bornological tensor product). 
\end{definition}

\begin{remark}[Topological Lie bialgebra structure]
If $\mathfrak{g}$ is a locally convex topological space, we say $(\mathfrak{g}, \mathfrak{b}, \delta)$ is a topological Lie bialgebra structure on $\mathfrak{g}$ if $\delta $ is a continuous $1$-cocycle of $\mathfrak{g}$ on $L^2_{skew}(\mathfrak{b})$ relative to the adjoint action $\operatorname{ad}^{(2)}$ of $\mathfrak{g}$ on $L^2(\mathfrak{b})$, and
\[
[-, -]_{\mathfrak{b}} := \delta'|_{\mathfrak b \wedge\mathfrak b} : \wedge^2\mathfrak{b} \rightarrow \mathfrak{b}
\]
defines a continuous Lie algebra bracket $\mathfrak{b}$, where $\wedge$ denotes the skew-symmetric tensor product completed in the projective tensor product topology.
\end{remark}
\begin{definition}[Linear-Poisson structure](see also \cite[Corollary 2.11]{25}
Let $V$ be a convenient space such that the subspace $\mathfrak b\subseteq V'$ separates points of $V$ and $\mathfrak b$ is a Lie algebra with a bounded Lie bracket $[-,-]$. Then, the subalgebra

$\mathcal{A}_{\mathfrak{b}}:=\{f\in C^{\infty}(\mathfrak g_{}): df \text{ factors as } i_{\mathfrak b_{}}\circ\widetilde{df} \text{ for  } \widetilde{df}\in C^\infty(V,\mathfrak b_{})\}$ has convenient vector space structure induced by the linear mappings 
$$ \mathcal A_{\mathfrak b_{}} \xrightarrow{\text{incl}} C^{\infty}(V,\mathbb R), f\mapsto f,$$
$$\quad  
 \mathcal  A_{\mathfrak b_{}} \xrightarrow {\widetilde{d}} C^\infty(V, \mathfrak b_{}), f\mapsto \widetilde{df}$$ is a Poisson algebra with the bounded Poisson bracket $\{-,-\}$ defined as
\[
\{f,g\}(v):=\langle v, [df(v), dg(v)]\rangle
\]
for $v\in V$ and $f, g\in \mathcal A_{\mathfrak b}$, where $\langle \cdot, \cdot \rangle$ denotes the non-degenerate pairing between $V$ and $\mathfrak{b}$ (obtained by restricting the natural evaluation pairing between $V$ and $V'$, since $\mathfrak{b} \subseteq V'$ separates points of $V$). 
Then $(V, \mathcal{A}_{\mathfrak{b}}, \{-, -\})$ is called a Linear Poisson structure on $V$. 
\end{definition}
\begin{remark}[Another take on Lie bialgebra structures]
Let $(\mathfrak{g}, \mathfrak{b}, \delta)$ be a Lie bialgebra structure. This is equivalent to the Linear-Poisson structure $(\mathfrak{g}, \mathcal{A}_{\mathfrak{b}}, \{-, -\})$ on $\mathfrak{g}$, where 
\[
\mathcal{A}_{\mathfrak{b}} := \{f \in C^{\infty}(\mathfrak{g}) \mid df \text{ factors as } i_{\mathfrak{b}} \circ \widetilde{df} \text{ for } \widetilde{df} \in C^\infty(\mathfrak{g}, \mathfrak{b})\}
\]
and the Poisson bracket on $\mathcal{A}_{\mathfrak{b}}$ is given by 
\[ 
\{f,g\}(X) :=\langle X, [\widetilde{df}(X), \widetilde{dg}]_{\mathfrak b}\rangle=\delta(X)(\widetilde{df}(X), \widetilde{dg}(X)) 
\] 
for $X \in \mathfrak{g}$ and $f, g \in \mathcal{A}_{\mathfrak{b}}$. Furthermore, the Poisson structure is compatible with the Lie bracket on $\mathfrak{g}$, which means
\[
\{f, g\}([X, Y]) = \{\operatorname{ad}^*_X f, \operatorname{ad}^*_X g\}(Y) - \{\operatorname{ad}^*_Y f, \operatorname{ad}^*_Y g\}(X)
\]
for $f, g \in \mathcal{A}_\mathfrak{b}$ and $X, Y \in \mathfrak{g}$, where $\operatorname{ad}^*$ is the coadjoint action of $\mathfrak{g}$ on $\mathcal{A}_{\mathfrak{b}}$ given by 
\[
\operatorname{ad}^*_X f(Y) := f([X, Y]_{\mathfrak g})
\] 
for $X, Y \in \mathfrak{g}$, $[-,-]_\mathfrak g$ is the Lie bracket on $\mathfrak g$, $[-,-]_{\mathfrak b}$ is the Lie bracket on $\mathfrak b$, and the paring $\langle\cdot,\cdot\rangle$ is as defined in Definition 5.3.
\end{remark}

\begin{definition}[Lie bialgebra morphism]
Let $(\mathfrak{g}, \mathfrak{b}_1, \delta_1)$ and $(\mathfrak{h}, \mathfrak{b}_2, \delta_2)$ be two Lie bialgebras. A Lie algebra homomorphism $\phi: \mathfrak{g} \to \mathfrak{h}$ is called a Lie bialgebra morphism if it satisfies the following two conditions:
\begin{enumerate}
    \item The dual map $\phi^*: \mathfrak{h}' \to \mathfrak{g}'$ restricts to a well-defined bounded linear map between the respective subalgebras:
    \[
    \phi^*|_{\mathfrak{b}_2}: \mathfrak{b}_2 \to \mathfrak{b}_1.
    \]
    \item The 1-cocycles $\delta_1$ and $\delta_2$ satisfy the compatibility condition
    \begin{equation}\label{cocycle compatibility}
    \delta_2 \circ \phi = (\phi^*|_{\mathfrak{b}_2})_* \circ \delta_1,
    \end{equation} 
    where $(\phi^*|_{\mathfrak{b}_2})_*: L^2_{\text{skew}}(\mathfrak{b}_1) \to L^2_{\text{skew}}(\mathfrak{b}_2)$ is the induced pushforward map on the skew-symmetric bilinear forms. Explicitly, for all $X \in \mathfrak{g}$ and dual elements $\alpha, \beta \in \mathfrak{b}_2$, we have
    \[
    \delta_2(\phi(X))(\alpha, \beta) = \delta_1(X)(\phi^*(\alpha), \phi^*(\beta)).
    \]
\end{enumerate}
\end{definition}
Banach Lie-Poisson spaces were introduced in \cite{27} and extended to an arbitrary duality pairing in \cite{32}. In this article, we extend the notion to the convenient setting.

\begin{definition}[Lie-Poisson space]
Let $\mathfrak{g}_+$ be a convenient space and $\mathfrak{g}_- \subset \mathfrak{g}'_+$ be a convenient subspace. Let $\langle \cdot, \cdot \rangle: \mathfrak{g}_+ \times \mathfrak{g}_- \rightarrow \mathbb{K}$ ($\mathbb K\in \{\mathbb R, \mathbb C\}$) be a weak nondegenerate pairing between two convenient spaces $\mathfrak{g}_+$ and $\mathfrak{g}_-$. Then $\mathfrak{g}_+$ is called a Lie-Poisson space with respect to $\mathfrak{g}_-$ if $\mathfrak{g}_-$ is a Lie algebra such that the continuous co-adjoint action of $\mathfrak{g}_-$ on $\mathfrak{g}^*_-$ restricts to $\mathfrak{g}_+$, that is, $\text{ad}^*_{\alpha}x \in \mathfrak{g}_+$, for all $x \in \mathfrak{g}_+$ and $\alpha \in \mathfrak{g}_-$, and $\text{ad}^*: \mathfrak{g}_+ \times \mathfrak{g}_- \rightarrow \mathfrak{g}_+$ is smooth.
\end{definition}

\begin{remark}
We note that a continuous linear map on a Banach space is bounded and thus smooth in the convenient sense, so we can adapt the definition of a Lie-Poisson space in  \cite[Definition 3.12]{32} in the convenient setting safely. The following results follows thereby.
\end{remark}

\begin{corollary}\cite[Theorem 3.14]{32}
If $(\mathfrak{g}_+, \mathfrak{g}_-, \delta)$ is a Lie bialgebra structure, then $\delta^*|_{\mathfrak{g}_- \wedge \mathfrak{g}_-}$ defines a Lie algebra structure on $\mathfrak{g}_-$, and $(\mathfrak{g}_+, \mathcal{A}_{\mathfrak{g}_-}, \{-,-\})$ is a Linear-Poisson structure. Here, the subalgebra
\[
\mathcal{A}_{\mathfrak{g}_-} := \{f \in C^{\infty}(\mathfrak{g}_+) \mid df \text{ factors as } i_{\mathfrak{g}_-} \circ \widetilde{df} \text{ for } \widetilde{df} \in C^\infty(\mathfrak{g}_+, \mathfrak{g}_-)\}
\]
equipped with the convenient vector space structure induced by the linear mappings 
\[
\operatorname{incl}: \mathcal{A}_{\mathfrak{g}_-} \to C^{\infty}(\mathfrak{g}_+, \mathbb{R}), \quad f \mapsto f,
\]
\[
\widetilde{d}: \mathcal{A}_{\mathfrak{g}_-} \to C^\infty(\mathfrak{g}_+, \mathfrak{g}_-), \quad f \mapsto \widetilde{df},
\]
is a Poisson algebra with respect to the bounded Poisson bracket 
\[
\{-,-\}: \mathcal{A}_{\mathfrak{g}_-} \times \mathcal{A}_{\mathfrak{g}_-} \to \mathcal{A}_{\mathfrak{g}_-}, \quad \{f, g\}(x) := \langle x, [\widetilde{df}_x, \widetilde{dg}_x]_{\mathfrak{g}_-} \rangle
\]
for $f, g \in \mathcal{A}_{\mathfrak{g}_-}$ and $x \in \mathfrak{g}_+$. In this expression, $[-,-]_{\mathfrak{g}_-}$ is the Lie bracket on $\mathfrak{g}_-$, and $\langle \cdot, \cdot \rangle$ denotes the non-degenerate pairing between $\mathfrak{g}_+$ and $\mathfrak{g}_-$ (obtained by restricting the natural evaluation pairing between $\mathfrak{g}_+$ and $\mathfrak{g}_+'$, since $\mathfrak{g}_- \subseteq \mathfrak{g}_+'$ separates points of $\mathfrak{g}_+$). 

The fact that $\mathfrak{g}_+$ is a Lie-Poisson space with respect to $\mathfrak{g}_-$ is equivalent to the property that for each $f \in \mathcal{A}_{\mathfrak{g}_-}$, $\{f, \cdot\} \in \Gamma(T\mathfrak{g}_+)$ defines a Hamiltonian vector field.
\end{corollary}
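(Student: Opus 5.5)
The corollary has two parts: the bracket on $\mathfrak{g}_-$ coming from $\delta$, and the Linear-Poisson structure together with the Hamiltonian characterization. I will handle them in that order, each time checking that the corresponding steps of \cite[Theorem 3.14]{32} survive when continuity is replaced by boundedness (convenient smoothness).

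\textbf{Part 1: the bracket on $\mathfrak{g}_-$.} By the definition of a convenient Lie bialgebra, $[-,-]_{\mathfrak{g}_-} := \delta'|_{\mathfrak{g}_-\wedge\mathfrak{g}_-}$ is already a bounded Lie bracket on $\mathfrak{g}_-$. Concretely,
\[
\langle x, [\alpha,\beta]_{\mathfrak{g}_-}\rangle = \delta(x)(\alpha,\beta).
\]
Nondegeneracy of the pairing, i.e.\ $\mathfrak{g}_-$ separating points of $\mathfrak{g}_+$, makes this element of $\mathfrak{g}_-$ unique. So this part needs nothing beyond the definition.

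\textbf{Part 2: the Linear-Poisson structure.}
\begin{enumerate}
\item \emph{$\mathcal{A}_{\mathfrak{g}_-}$ is convenient.} Give it the initial structure with respect to $\operatorname{incl}$ and $\widetilde d$. It is a $c^\infty$-closed subspace of $C^\infty(\mathfrak{g}_+)\times C^\infty(\mathfrak{g}_+,\mathfrak{g}_-)$, namely the kernel of the bounded map $(f,\varphi)\mapsto df - i_{\mathfrak{g}_-}\circ\varphi$, and is therefore convenient.
\item \emph{The bracket is bounded.} Write $\{f,g\}(x)$ as the composite of $\widetilde d\times\widetilde d$, pointwise evaluation, the bounded bracket $[-,-]_{\mathfrak{g}_-}$, and the pairing with $x$. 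By the exponential law \cite[Corollary 3.13]{20} the resulting map $\mathcal{A}\times\mathcal{A}\to C^\infty(\mathfrak{g}_+)$ is bounded bilinear.
\item \emph{The bracket lands in $\mathcal{A}_{\mathfrak{g}_-}$.} This is the key step. Differentiate $h=\{f,g\}$ at $x$ in direction $v$:
\[
dh(x)v = \langle v,[\widetilde{df}_x,\widetilde{dg}_x]\rangle + \langle x,[d\widetilde{df}(x)v,\widetilde{dg}_x]\rangle + \langle x,[\widetilde{df}_x,d\widetilde{dg}(x)v]\rangle .
\]
The first term is represented by an element of $\mathfrak{g}_-$. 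The other two have the form $\langle \operatorname{ad}^*_{\widetilde{dg}_x}x,\ d\widetilde{df}(x)v\rangle$. Here I expect to use the symmetry of second derivatives: $d\widetilde{df}(x)$ is the Hessian, and symmetry allows the roles to be swapped, giving $\langle v, d\widetilde{df}(x)^{\!\top}(\operatorname{ad}^*_{\widetilde{dg}_x}x)\rangle$. This lies in $\mathfrak{g}_-$ only if the covector is represented there, which holds precisely when $\operatorname{ad}^*_{\alpha}x$ makes sense in the pairing.
\item \emph{Main obstacle.} The step above is where the Lie-Poisson hypothesis ($\operatorname{ad}^*_\alpha x\in\mathfrak{g}_+$, smoothly) is really needed. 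I would follow \cite{32}: assume it, rewrite those terms as $d f(x)(\operatorname{ad}^*_{\widetilde{dg}_x}x)$-type expressions via Hessian symmetry, and conclude that $\widetilde{dh}$ is a smooth map $\mathfrak{g}_+\to\mathfrak{g}_-$ built from smooth compositions.
\item \emph{Jacobi and Leibniz.} Jacobi follows from the Jacobi identity of $[-,-]_{\mathfrak{g}_-}$ together with Hessian symmetry, as in the finite-dimensional Lie--Poisson computation. Leibniz is immediate from $\widetilde{d(gh)}=g\,\widetilde{dh}+h\,\widetilde{dg}$.
\end{enumerate}

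\textbf{Part 3: the Hamiltonian characterization.} Compute
\[
\{f,g\}(x) = dg(x)\bigl(-\operatorname{ad}^*_{\widetilde{df}_x}x\bigr),
\]
up to sign. So $\{f,\cdot\}$ is given by the vector field $x\mapsto -\operatorname{ad}^*_{\widetilde{df}_x}x$. This vector field is a smooth section of $T\mathfrak{g}_+$ exactly when $\operatorname{ad}^*$ maps $\mathfrak{g}_+\times\mathfrak{g}_-$ smoothly into $\mathfrak{g}_+$, which is the Lie-Poisson property.

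For the converse, take linear functions $f=\langle\cdot,\alpha\rangle\in\mathcal{A}_{\mathfrak{g}_-}$. Then $X_f(x)=-\operatorname{ad}^*_\alpha x$, so if every such $X_f$ is a smooth vector field, the coadjoint action restricts smoothly to $\mathfrak{g}_+$. Separation of points by $\mathfrak{g}_-$ is used throughout to identify vectors from their pairings.
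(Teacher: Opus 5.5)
Your proposal proves less than the statement claims, but the shortfall lies in the statement, not in your argument. At the closure step (items 3--4 of Part 2) you assume that $\mathfrak{g}_+$ is a Lie--Poisson space with respect to $\mathfrak{g}_-$. The corollary, however, asserts that $(\mathfrak{g}_+,\mathcal{A}_{\mathfrak{g}_-},\{-,-\})$ is a Linear-Poisson structure for \emph{every} Lie bialgebra, and uses the Lie--Poisson property only in the final equivalence. Your own formula for $d\{f,g\}$ shows why no proof can avoid this hypothesis. The cross term $v\mapsto\langle x,[d\widetilde{df}(x)v,\widetilde{dg}_x]\rangle$ equals $-d\widetilde{df}(x)^{\top}\xi$, where $\xi=\langle x,[\widetilde{dg}_x,\cdot\,]\rangle\in\mathfrak{g}_-'$. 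Symmetry of $d^2f(x)$ only tells you that $d\widetilde{df}(x)^{\top}$ maps $\mathfrak{g}_+$ into $\mathfrak{g}_-$, and $\xi$ need not lie in $\mathfrak{g}_+$.

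Here is a concrete counterexample. Take $\mathfrak{g}_+=\ell^1$ abelian and $\mathfrak{g}_-=\ell^1\subset\ell^\infty=\mathfrak{g}_+'$, with the bounded, 2-step nilpotent (hence Jacobi) bracket
\[
[\alpha,\beta]:=\bigl(\alpha_0\,s(\beta)-\beta_0\,s(\alpha)\bigr)e_1,\qquad s(\gamma):=\textstyle\sum_{j\ge 2}\gamma_j,
\]
and set $\delta(x)(\alpha,\beta):=\langle x,[\alpha,\beta]\rangle$. The coadjoint action of $\mathfrak{g}_+$ is zero, so every bounded linear $\delta$ is a cocycle, and this is a Lie bialgebra in the paper's sense. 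Now let $f(x)=\tfrac12\sum_i x_i^2$ and $g(x)=x_0$. Then $\widetilde{df}_x=x$, $\widetilde{dg}_x=e_0$, and $\{f,g\}(x)=-x_1\sum_{j\ge 2}x_j$. Its differential is the sequence $(0,-s(x),-x_1,-x_1,\dots)$, which lies in $\ell^\infty\setminus\ell^1$ whenever $x_1\neq 0$, so $\{f,g\}\notin\mathcal{A}_{\mathfrak{g}_-}$. Consistently, Lie--Poisson fails here, since $\gamma\mapsto\langle x,[e_0,\gamma]\rangle$ is $(0,0,x_1,x_1,\dots)\notin\ell^1$.

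You should therefore state the Lie--Poisson hypothesis openly for the Linear-Poisson assertion, rather than ``following \cite{32} and assuming it.'' With that hypothesis in place, your Parts 2 and 3 give a complete argument for the corrected statement: Hessian symmetry for closure and for cancelling the cross terms in Jacobi, $X_f(x)=-\operatorname{ad}^*_{\widetilde{df}_x}x$, and linear $f$ for the converse.

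As for the route, the paper gives no argument beyond citing \cite[Theorem 3.14]{32} together with the remark that continuous linear maps between Banach spaces are bounded. That remark only shows that the Banach setting sits inside the convenient one. It neither transports a Banach proof to general convenient spaces nor addresses closure of $\mathcal{A}_{\mathfrak{g}_-}$, which is exactly where the hypotheses matter. Your direct computation with convenient tools (exponential law, symmetry of $d^2f$) is the argument that actually works in this setting, and it isolates the single place where Lie--Poisson enters.

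Two small points to tighten:
\begin{enumerate}
\item Boundedness of $\{-,-\}$ into $\mathcal{A}_{\mathfrak{g}_-}$, not just into $C^\infty(\mathfrak{g}_+)$, also requires the $\widetilde{d}$-component. You get it from your explicit formula for $\widetilde{d\{f,g\}}$ together with joint smoothness of $\operatorname{ad}^*$.
\item In the converse of Part 3, linear $f$ only show that each $\operatorname{ad}^*_\alpha$ maps $\mathfrak{g}_+$ boundedly into itself. Joint smoothness of $\operatorname{ad}^*:\mathfrak{g}_+\times\mathfrak{g}_-\to\mathfrak{g}_+$ needs an extra argument, for example a closed-graph argument in the Banach, Fr\'echet or Silva case.
\end{enumerate}
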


\begin{corollary}{\cite[Theorem 5.13]{32}}
Let $(G, \mathbb{F}_{\mathfrak{b}}, \pi)$ be a Poisson Lie group structure. Then $\mathfrak{g}$ is a Lie Poisson space with respect to $\mathfrak{b} := \mathbb{F}_e$.
\end{corollary}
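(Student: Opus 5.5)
The plan is to work under the standing hypothesis of this part of the paper: $\pi^\sharp(\mathbb F_x)\subseteq T_xG$ for all $x\in G$, i.e.\ the Poisson structure is of the form $(G,\mathbb F_{\mathfrak b},\rho)$ as in Remark \ref{smoooth anchor}. Without Hamiltonian vector fields one cannot expect $\operatorname{ad}^*_\alpha x$ to land in $\mathfrak g$ rather than in $\mathfrak b'$. Under this hypothesis, I would verify the two requirements of the definition of a Lie-Poisson space:
\begin{enumerate}
\item for $x\in\mathfrak g$ and $\alpha\in\mathfrak b$, the functional $\operatorname{ad}^*_\alpha x\in\mathfrak b'$, $\beta\mapsto \langle x,[\alpha,\beta]_{\mathfrak b}\rangle$, is represented by an element of $\mathfrak g$;
\item the resulting map $\mathfrak g\times\mathfrak b\to\mathfrak g$ is smooth.
\end{enumerate}
The Lie algebra structure on $\mathfrak b$ is already supplied by Proposition \ref{linearization} and Lemma \ref{4.7}.

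First I rewrite the functional using Lemma \ref{4.7}:
\[
\langle x,[\alpha,\beta]_{\mathfrak b}\rangle=\delta(x)(\alpha,\beta)=d\theta(e)(x)(\alpha,\beta),
\qquad \theta(g)=R^{**}_{g^{-1}}\pi(g).
\]
Next I lift $\theta$ to a $\mathfrak g$-valued map. For fixed $\alpha\in\mathfrak b$, set
\[
\Psi_\alpha(g):=T_gR_{g^{-1}}\big(\rho(R^*_{g^{-1}}\alpha)\big)\in\mathfrak g .
\]
This is well defined because $\rho$ takes values in $TG$, and by construction $\langle \Psi_\alpha(g),\beta\rangle=\theta(g)(\alpha,\beta)$ for all $\beta\in\mathfrak b$.

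The joint map $G\times\mathfrak b\to\mathfrak g$, $(g,\alpha)\mapsto\Psi_\alpha(g)$, is smooth. It is a composition of three smooth maps:
\begin{itemize}
\item the right action $R^*$ on $\mathbb F$, smooth by condition (1) of Definition \ref{Lie-Poisson-group};
\item the anchor $\rho:\mathbb F\to TG$, smooth by Remark \ref{smoooth anchor};
\item the right trivialization $TG\to\mathfrak g$.
\end{itemize}

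Now differentiate at $e$ in the direction $x\in\mathfrak g$ and set $y:=d\Psi_\alpha(e)(x)\in\mathfrak g$. Each $\beta\in\mathfrak b$ is a bounded linear functional on $\mathfrak g$, so differentiation commutes with pairing against $\beta$. Hence
\[
\langle y,\beta\rangle=d\big(\theta(\cdot)(\alpha,\beta)\big)(e)(x)=\delta(x)(\alpha,\beta)=\langle x,[\alpha,\beta]_{\mathfrak b}\rangle .
\]
For the middle equality, pairing with $\beta$ is evaluation of $L^2_{\mathrm{skew}}(\mathfrak b)$ at $(\alpha,\beta)$, which is bounded, so it also commutes with $d$. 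Consequently $\operatorname{ad}^*_\alpha x=y$ (up to the sign convention for $\operatorname{ad}^*$) lies in $\mathfrak g$. The element is unique because $\mathfrak b$ separates the points of $\mathfrak g$. Note that this uses only $\theta(e)=0$ and no Lie derivative computation.

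For smoothness, the map $(x,\alpha)\mapsto d\Psi_\alpha(e)(x)$ is the partial derivative at $e$ of the smooth map $(g,\alpha)\mapsto\Psi_\alpha(g)$. By convenient calculus, the derivative $T(G\times\mathfrak b)\to T\mathfrak g$ is smooth; restricting it to $T_eG\times\{0\}$-directions at the points $(e,\alpha)$ gives a smooth map $\mathfrak g\times\mathfrak b\to\mathfrak g$, which is bilinear.

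The step I expect to be the main obstacle is the identification of $\theta(g)(\alpha,\cdot)$ with an honest tangent vector. One has to check that
\[
\pi(g)\big(R^*_{g^{-1}}\alpha,\,R^*_{g^{-1}}\beta\big)=\big\langle \rho(R^*_{g^{-1}}\alpha),\,R^*_{g^{-1}}\beta\big\rangle ,
\]
and then transport this to $\mathfrak g$ by $TR_{g^{-1}}$ consistently with the duality between $R^*$ and $TR$. Closely tied to this is checking that the pairing between $\mathfrak g$ and $\mathfrak b$ commutes with differentiation in the convenient sense, via \cite[Corollary 2.3]{20}. Once this bookkeeping is done, the remaining steps are formal.
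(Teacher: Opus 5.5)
Your argument is correct, and you are right to add the hypothesis $\pi^\sharp(\mathbb F_x)\subseteq T_xG$. The printed statement omits it, but the introduction and Theorem \ref{D-3}(1) show it is intended; without it, $\operatorname{ad}^*_\alpha x$ is a priori only an element of $\mathfrak b'$.

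The paper gives no proof of its own here. It imports the Banach result \cite[Theorem 5.13]{32}, relying on its remark that bounded linear maps are exactly the convenient-smooth ones. You instead give a self-contained proof in convenient calculus:
\begin{enumerate}
\item Right-trivialize the anchor to $\Psi_\alpha(g)=T_gR_{g^{-1}}\rho(R^*_{g^{-1}}\alpha)\in\mathfrak g$.
\item Observe that $\langle\Psi_\alpha(g),\beta\rangle=\theta(g)(\alpha,\beta)$.
\item Differentiate at $e$, so that $d\Psi_\alpha(e)(x)$ represents $\operatorname{ad}^*_\alpha x$. This representative is unique because $\mathfrak b$ separates points.
\item Read off smoothness and bilinearity of $(x,\alpha)\mapsto\operatorname{ad}^*_\alpha x$ from the tangent map of the jointly smooth map $(g,\alpha)\mapsto\Psi_\alpha(g)$.
\end{enumerate}
Your route actually checks that the citation transfers to the convenient setting. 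It also isolates exactly where the Hamiltonian hypothesis and the smoothness of the actions are used. The citation is shorter, but leaves both of these implicit.

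Three smaller points.
\begin{itemize}
\item Your computation never uses $\theta(e)=0$. The identity $\langle y,\beta\rangle=\delta(x)(\alpha,\beta)$ follows from $\delta=d\theta(e)$ alone; $\pi(e)=0$ enters only inside Lemma \ref{4.7}, where $\delta'$ is identified with the bracket on $\mathfrak b$.
\item The step you flag as the main obstacle is only the definition of the pullback, $(R^*_{g^{-1}}\beta)(v)=\beta(T_gR_{g^{-1}}v)$, together with $\langle\rho(a),b\rangle=\pi(a,b)$.
\item The step that actually carries the weight is the smoothness of $\rho$ as a $TG$-valued bundle map, which you take from Remark \ref{smoooth anchor}. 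That remark argues via the embedding $TG\hookrightarrow T''G$. However, $\pi(\alpha,-)$ is a priori only an element of $\mathbb F_g'$. Weak smoothness tested against the merely point-separating family $\mathbb F_g$ does not by itself give smoothness in $T_gG$; some extra local boundedness is needed. It is safest to treat smoothness of $\rho:\mathbb F\to TG$ as part of the hypothesis encoded in the notation $(G,\mathbb F_{\mathfrak b},\rho)$.
\end{itemize}
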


\begin{definition}[Manin triple]
A triple $(\mathfrak{g}, \mathfrak{g}_+, \mathfrak{g}_-)$ of convenient Lie algebras is called a Manin triple if $\mathfrak{g} = \mathfrak{g}_+ \oplus \mathfrak{g}_-$ and there exists a weakly non-degenerate, symmetric, adjoint-invariant bilinear form $\langle -, - \rangle: \mathfrak g \times \mathfrak g\to \mathbb K$ ($\mathbb K \in \{\mathbb R, \mathbb C\}$) on $\mathfrak{g}$ such that both $\mathfrak{g}_+$ and $\mathfrak{g}_-$ are isotropic, and the map
\[
\mathfrak{g}_+ \times \mathfrak{g}_- \rightarrow \mathfrak{g}_- \subseteq \mathfrak{g}_+', \quad (\alpha, \beta) \mapsto \operatorname{ad}^*_{\alpha}(\beta) = [\alpha, \beta]
\]
is bounded with respect to the topology of $\mathfrak{g}_-$.
\end{definition}

 Now, from Proposition \ref{4.7} and Lemma \ref{proposition 4.10}, we conclude that the Lie algebra of a convenient Poisson Lie group $(G, \mathbb{F}, \pi)$ carries a smooth (bounded) bialgebra structure $(\mathfrak{g}, \mathfrak{b}, \delta)$, where $\mathfrak b:=\mathbb F_{e}.$
\begin{theorem}\label{D-0}
Let $(G, \mathbb{F}, \pi)$ be a convenient Poisson Lie group with Lie algebra $\mathfrak{g}$, and let $\mathfrak{b} = \mathbb{F}_e$. Then, there exists a convenient Lie bialgebra structure $(\mathfrak{g}, \mathfrak{b}, \delta)$ on $\mathfrak{g}$, where the cocycle $\delta$ is defined by
\[
\delta(X)(\alpha, \beta) := d(\pi(\tilde{\alpha}, \tilde{\beta}))(e)(X)
\]
for $X \in \mathfrak{g}$ and $\alpha, \beta \in \mathfrak{b}$, and where $\tilde{\alpha}, \tilde{\beta} \in \Gamma(\mathbb{F})$ are the right-invariant sections extending $\alpha$ and $\beta$, given by $\tilde{\alpha}(g) := R_{g^{-1}*}\alpha$ and $\tilde{\beta}(g) := R_{g^{-1}*}\beta$. Moreover, if $\mathfrak{g}$ and $\mathfrak{b}$ are Fréchet or Silva spaces, then $(\mathfrak{g}, \mathfrak{b}, \delta)$ is a topological Lie bialgebra.
\end{theorem}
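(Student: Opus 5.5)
The plan is to assemble Theorem \ref{D-0} from Proposition \ref{Poisson 1-cocycle}, Lemma \ref{4.7} and the Fréchet/Silva continuity lemmas. Set $\mathfrak b:=\mathbb F_e$. First I would check the standing hypotheses of a convenient Lie bialgebra:
\begin{enumerate}
\item $\mathfrak b\subseteq \mathfrak g'$ separates points of $\mathfrak g$, because $\mathbb F_e$ separates points of $T_eG$ by Definition \ref{3.2}.
\item The coadjoint action of $\mathfrak g$ restricts smoothly to $\mathfrak b$. The group action $\operatorname{Ad}^*_g = L_g^*\circ R_{g^{-1}}^*$ preserves $\mathbb F_e$ and is smooth, since both translation actions restrict smoothly to $\mathbb F$ (condition (1) of Definition \ref{Lie-Poisson-group}). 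Differentiating the smooth map $G\times\mathfrak b\to\mathfrak b$ in $g$ at $e$ gives a bounded bilinear map $\mathfrak g\times\mathfrak b\to\mathfrak b$, and this map is $\operatorname{ad}^*$.
\end{enumerate}

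Next I take $\theta(x)=R^{**}_{x^{-1}}\pi(x)$ from Proposition \ref{Poisson 1-cocycle}, a smooth $1$-cocycle for $\operatorname{Ad}^{(2)}$, and put $\delta:=d\theta(e)$. By Lemma \ref{4.7}, $\delta$ is a bounded $1$-cocycle relative to $\operatorname{ad}^{(2)}$, and $\delta'|_{\mathfrak b\wedge\mathfrak b}$ is a bounded Lie bracket on $\mathfrak b$ agreeing with the bracket of Proposition \ref{linearization}. To match the formula in the statement, I evaluate $\theta(g)$ on $\alpha,\beta$ and use the right-invariant extensions:
\[
\theta(g)(\alpha,\beta)=\pi(g)\bigl(\tilde\alpha(g),\tilde\beta(g)\bigr)=\pi(\tilde\alpha,\tilde\beta)(g).
\]
The evaluations $\eta\mapsto\eta(\alpha,\beta)$ are bounded linear, so they commute with differentiation. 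Hence $\delta(X)(\alpha,\beta)=d(\pi(\tilde\alpha,\tilde\beta))(e)(X)$, which is the defining formula.

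One point needs care: the target of the bracket is $\mathfrak b$, not just $\mathfrak g'$. The functional $X\mapsto\delta(X)(\alpha,\beta)=df(e)(X)$, with $f=\pi(\tilde\alpha,\tilde\beta)$, lies in $\mathfrak b$. This follows from the factorization condition (1) of Definition \ref{3.2}, because right-invariant sections are closed in the sense needed there; alternatively it is already contained in Lemma \ref{4.7}. The Jacobi identity for $[-,-]_{\mathfrak b}$ is inherited from Proposition \ref{linearization}.

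For the topological statement, assume $\mathfrak g$ and $\mathfrak b$ are Fréchet or Silva. I would then argue in three steps:
\begin{enumerate}
\item $\delta$ is bounded linear, hence continuous by the first of the two lemmas.
\item The bracket $\mathfrak b\times\mathfrak b\to\mathfrak b$ is bounded bilinear. In each variable separately it is a bounded linear map on a Fréchet or Silva space, hence continuous. The second lemma then upgrades separate continuity to joint continuity.
\item The map $\delta'|$ on $\mathfrak b\wedge\mathfrak b$ is induced from a continuous bilinear map, so by the universal property of the projective tensor product it extends continuously to the completed projective skew tensor product.
\end{enumerate}

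I expect the main obstacle to be this last step. It requires checking that the bornological and projective completions are compatible, which holds for Fréchet spaces (bornological, hence bounded equals continuous on $\mathfrak b\hat\otimes_\pi\mathfrak b$) and for Silva spaces via their regular inductive limit description. The other delicate point is that the formula above must use right-invariant extensions; with left-invariant extensions one would instead differentiate $L^{**}_{x^{-1}}\pi(x)$, so the convention must be fixed consistently with Proposition \ref{Poisson 1-cocycle}.
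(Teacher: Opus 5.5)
Your plan follows the paper's own route. There, Theorem~\ref{D-0} is obtained by feeding the cocycle $\theta(x)=R^{**}_{x^{-1}}\pi(x)$ of Proposition~\ref{Poisson 1-cocycle} into Lemma~\ref{4.7}, and then invoking the two Fr\'echet/Silva lemmas for continuity. Lemma~\ref{4.7} differentiates $\theta$, identifies $\delta(X)(\alpha,\beta)$ with $d(\pi(\tilde\alpha,\tilde\beta))(e)(X)$ through right-invariant extensions, and matches $\delta'|_{\mathfrak b\wedge\mathfrak b}$ with the bracket of Proposition~\ref{linearization}.

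One justification you give is false: right-invariant sections of $\mathbb F$ are in general \emph{not} closed. For right-invariant vector fields $X^R,Y^R$, the function $\tilde\alpha(Y^R)\equiv\alpha(Y)$ is constant. Hence $d\tilde\alpha(X^R,Y^R)=-\tilde\alpha([X^R,Y^R])=\pm\alpha([X,Y])$. Since the $X^R$ span every tangent space, $d\tilde\alpha=0$ iff $\alpha$ annihilates $[\mathfrak g,\mathfrak g]$. For a perfect Lie algebra such as $C^\infty(\mathbb S^1,\mathfrak k)$ with $\mathfrak k$ semisimple, this forces $\alpha=0$. So condition (1) of Definition~\ref{3.2} cannot be applied to $\pi(\tilde\alpha,\tilde\beta)$ on that ground.

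What actually drives this step is $\pi(e)=0$, since multiplicativity gives $\pi(e)=2\pi(e)$. In a local trivialization, the product rule for the bounded evaluation map yields $d(\pi(\sigma,\tau))(e)(X)=d\pi(e)(X)(\sigma(e),\tau(e))$ for arbitrary local sections $\sigma,\tau$ of $\mathbb F$. So $\delta(\cdot)(\alpha,\beta)$ depends only on the values at $e$. You may therefore:
\begin{enumerate}
\item replace $\tilde\alpha,\tilde\beta$ by closed local sections with the same values at $e$;
\item apply condition (1) of Definition~\ref{3.2} to land in $\mathbb F_e=\mathfrak b$;
\item differentiate condition (2) at $e$ to obtain the Jacobi identity for $[-,-]_{\mathfrak b}$.
\end{enumerate}
The same observation shows that your worry about left- versus right-invariant conventions is moot: both extensions give the same $\delta$.

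This repair does require closed local sections of $\mathbb F$ through every element of $\mathfrak b$, and Definition~\ref{3.2} does not provide these by itself. Your fallback, ``it is already contained in Lemma~\ref{4.7}'', is exactly how the paper argues. However, Lemma~\ref{4.7} and Proposition~\ref{linearization} apply the factorization to invariant, generally non-closed sections without comment, so that is the tacit hypothesis your argument rests on. The remaining steps, including the continuity part, are fine.
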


\begin{theorem}\cite[Theorem 4.9]{32}
Let $(\mathfrak{a}, \mathfrak{g}, \mathfrak{b})$ be a Manin triple, then $(\mathfrak{g}, \mathfrak{b}, \delta)$ carries a Lie bialgebra structure, where $\delta : \mathfrak{g} \rightarrow \wedge^2\mathfrak{b}$ is the restriction of $[-,-]'_{\mathfrak{b}}$ on $\mathfrak{g}$:
\[
\delta(X)(\alpha, \beta) = X([\alpha, \beta]_{\mathfrak{b}}).
\]
Furthermore, $\mathfrak{g}$ carries a Lie-Poisson space structure with respect to $\mathfrak{b}$. 

Conversely, if $(\mathfrak{g}, \mathfrak{b}, \delta)$ is a Lie bialgebra structure and $\mathfrak{g}$ is a Lie-Poisson space with respect to $\mathfrak{b}$, then $(\mathfrak{a}, \mathfrak{g}, \mathfrak{b})$ is a Manin triple for the natural non-degenerate pairing
\[
\langle \cdot, \cdot \rangle_{\mathfrak{a}}: \mathfrak{a} \times \mathfrak{a} \rightarrow \mathbb{K}, \qquad (\mathbb K\in \{\mathbb R, \mathbb C\})
\]
\[
(x, \alpha) \times (y, \beta) \mapsto \langle x, \beta \rangle + \langle y, \alpha \rangle,
\]
with a Lie bracket on $\mathfrak{a}$ given by
\[
[\cdot, \cdot] : \mathfrak{a} \times \mathfrak{a} \rightarrow \mathfrak{a} = \mathfrak{g} \oplus \mathfrak{b}
\]
\[
((x, \alpha), (y, \beta)) \mapsto ([x, y]_{\mathfrak{g}_+} + \operatorname {ad}^*_{\beta}(x) - \operatorname{ad}^*_{\alpha}(y), [\alpha, \beta]_{\mathfrak{g}_-} + \operatorname{ad}^*_{y}(\alpha) - \operatorname{ad}^*_{x}(\beta)).
\]
\end{theorem}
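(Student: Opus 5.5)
Both directions are purely algebraic manipulations inside $\mathfrak a=\mathfrak g\oplus\mathfrak b$, supplemented by boundedness checks at each step; this is the convenient analogue of \cite[Theorem 4.9]{32}.

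\textbf{Forward direction.} Starting from a Manin triple $(\mathfrak a,\mathfrak g,\mathfrak b)$, I use the weakly non-degenerate invariant form to identify $\mathfrak b$ with a subspace of $\mathfrak g'$ via $\alpha\mapsto\langle\cdot,\alpha\rangle$. Isotropy of $\mathfrak g$ and $\mathfrak b$ together with non-degeneracy makes this injective and point-separating on $\mathfrak g$.

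For $x\in\mathfrak g$, $\alpha\in\mathfrak b$ I decompose $[x,\alpha]=[x,\alpha]_{\mathfrak g}+[x,\alpha]_{\mathfrak b}$. Invariance and isotropy give $\langle y,[x,\alpha]\rangle=-\langle [x,y],\alpha\rangle$, so the $\mathfrak b$-component is $\operatorname{ad}^*_x\alpha$. Symmetrically, the $\mathfrak g$-component is $-\operatorname{ad}^*_\alpha x$. Hence the coadjoint action of $\mathfrak g$ restricts to $\mathfrak b$, and the coadjoint action of $\mathfrak b$ restricts to $\mathfrak g$. Both restrictions are bounded, since the bracket on $\mathfrak a$ and the projections onto the summands are bounded; this gives the Lie--Poisson space structure.

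I then define $\delta(X)(\alpha,\beta):=\langle X,[\alpha,\beta]_{\mathfrak b}\rangle$. This is skew and bounded, and its dual restricted to $\mathfrak b\wedge\mathfrak b$ recovers $[\cdot,\cdot]_{\mathfrak b}$ tautologically. The cocycle identity for $\delta$ with respect to $\operatorname{ad}^{(2)}$ follows by expanding the Jacobi identity in $\mathfrak a$ for the triple $(X,\alpha,\beta)$, pairing with $Y\in\mathfrak g$, and substituting the mixed-bracket decomposition above. Smoothness of $\delta$ into $L^2_{skew}(\mathfrak b)$ comes from the uniform boundedness principle \cite[5.18]{20}: it suffices that $X\mapsto\delta(X)(\alpha,\beta)$ is bounded for each fixed $\alpha,\beta$.

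\textbf{Converse direction.} Given the bialgebra and the Lie--Poisson structure, I put the displayed bracket on $\mathfrak a=\mathfrak g\oplus\mathfrak b$. Boundedness of this bracket comes from three sources: the Lie--Poisson assumption, which makes $\operatorname{ad}^*:\mathfrak g\times\mathfrak b\to\mathfrak g$ smooth; the assumption that the coadjoint action of $\mathfrak g$ restricts smoothly to $\mathfrak b$; and the boundedness of the two given brackets. Skew-symmetry is immediate from the formula. Invariance of the pairing is a short check on mixed pairs. Both summands are isotropic by construction, and weak non-degeneracy follows from point separation of $\mathfrak b\subseteq\mathfrak g'$ together with the non-degeneracy of the pairing $\mathfrak g\times\mathfrak b\to\mathbb K$.

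\textbf{Main obstacle.} The hard step is the Jacobi identity on $\mathfrak a$. By trilinearity it reduces to the mixed triples $(x,y,\alpha)$ and $(x,\alpha,\beta)$. Since $\mathfrak b$ separates points of $\mathfrak g$ and, conversely, $\mathfrak g$ separates points of $\mathfrak b$, I verify each component of the Jacobiator by pairing it with arbitrary test elements. After pairing, the $\mathfrak b$-components reduce to the cocycle identity for $\delta$, and the $\mathfrak g$-components reduce to the Jacobi identity in $\mathfrak g$ or in $\mathfrak b$. This pairing argument is legitimate only because weak non-degeneracy suffices to detect vanishing; no strong duality is needed. That is precisely why the Lie--Poisson hypothesis is required in the converse: it ensures $\operatorname{ad}^*_\beta x$ lands in $\mathfrak g$ rather than merely in $\mathfrak b'$.
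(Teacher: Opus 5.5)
Your proposal is correct and follows essentially the same route as the paper, which simply asserts that the algebraic Banach-space proof of \cite[Theorem 4.9]{32} transfers verbatim once continuity is replaced by boundedness (i.e.\ convenient smoothness); you are in effect writing out that transferred argument, making explicit the boundedness checks and the fact that pairing against test elements suffices under weak non-degeneracy. One small bookkeeping correction: for the mixed triple $(x,y,\alpha)$ it is the $\mathfrak{g}$-component of the Jacobiator (paired with $\beta\in\mathfrak{b}$) that reduces to the cocycle identity for $\delta$, while its $\mathfrak{b}$-component reduces to the Jacobi identity in $\mathfrak{g}$ (the representation property of the restricted coadjoint action); the assignment you state holds only for the triple $(x,\alpha,\beta)$.
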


\begin{proof}
In \cite[Theorem 4.9]{32} (see also \cite[Theorem 2.3]{32} ), the theorem is proved in the Banach setting when $\mathfrak{a}, \mathfrak{g}, \mathfrak{b}$ are Banach spaces. But, since our definitions of a Lie bialgebra and a Manin triple in the convenient setting are an adaptation of the definitions in \cite{32}, and as a linear map is smooth in the convenient sense if and only if it is bounded, which is the same as continuous in the Banach setting; whence the same proof also works in our setting.
\end{proof}
\section{Manin triples from weight space decompositions}\label{5}

In this section, we construct examples of Manin triples starting from a Lie algebra that admits a weight space decomposition such that the zero-modes are abelian with respect to the Lie bracket. Furthermore, we provide the weight space decomposition for the spaces of smooth and analytic vectors of a $C^*$-algebra equipped with a strongly continuous circle action and a torus-invariant, faithful, tracial state.
\begin{definition}[Weight space decomposition]\label{Weight space decomposition}
Let $\mathfrak{g}$ be a Lie algebra with a nondegenerate symmetric invariant bilinear form $\langle\langle -, - \rangle\rangle : \mathfrak{g} \times \mathfrak{g} \rightarrow \mathbb{K}$ ($\mathbb K\in \{\mathbb R, \mathbb C\}$) is said to have a weight space decomposition if $\mathfrak{g}$ has the following topological direct sum:
\[
\mathfrak{g} = \mathfrak{h} \oplus \bigoplus_{\alpha \in \mathbb{Z} - \{0\}} \mathfrak{g}_{\alpha}
\]
with
\begin{align*}
[\mathfrak{h}, \mathfrak{h}] &= 0, \\
[\mathfrak{g}_{\alpha}, \mathfrak{g}_{\beta}] &\subseteq \mathfrak{g}_{\alpha+\beta}.
\end{align*}
\end{definition}

\begin{theorem}\label{Weight space decomposition-manin}
Let $\mathfrak{g}$ be a Lie algebra with a weight space decomposition
\[
\mathfrak{g} = \mathfrak{h} \oplus \bigoplus_{\alpha \in \mathbb{Z} - \{0\}} \mathfrak{g}_{\alpha}.
\]
Let us define
\begin{align*}
\mathfrak{p}_1 &:= \{(a, a) : a \in \mathfrak{g}\} \quad \text{(diagonal subalgebra)}, \\
\mathfrak{p}_2 &:= \{(x, y) : x \in \mathfrak{g}_{\ge 0}, y \in \mathfrak{g}_{\le 0}, \pi_0(x) + \pi_0(y) = 0\},
\end{align*}
where
\begin{align*}
\mathfrak{g}_{\ge 0} &:= \mathfrak{h} \oplus \bigoplus_{m > 0} \mathfrak{g}_m, \\
\mathfrak{g}_{\le 0} &:= \mathfrak{h} \oplus \bigoplus_{m < 0} \mathfrak{g}_m,
\end{align*}
and $\pi_0: \mathfrak{g} \rightarrow \mathfrak{h}$ is the continuous projection of $\mathfrak{g}$ on the subalgebra $\mathfrak{h}$. Then $(\mathfrak{p}, \mathfrak{p}_1, \mathfrak{p}_2)$ is a Manin triple.
\end{theorem}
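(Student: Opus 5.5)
The plan is to realize the Manin triple on the double $\mathfrak{p} := \mathfrak{g} \oplus \mathfrak{g}$. It carries the componentwise Lie bracket $[(a,b),(c,d)] := ([a,c],[b,d])$ and the bilinear form
\[
\langle (a,b),(c,d) \rangle_{\mathfrak{p}} := \langle\langle a,c \rangle\rangle - \langle\langle b,d \rangle\rangle .
\]
Symmetry, invariance and weak nondegeneracy of $\langle -,- \rangle_{\mathfrak{p}}$ follow at once from the corresponding properties of $\langle\langle -,- \rangle\rangle$ on each factor. It then remains to check four things, in this order:
\begin{enumerate}
\item $\mathfrak{p}_1$ and $\mathfrak{p}_2$ are (closed) Lie subalgebras;
\item $\mathfrak{p} = \mathfrak{p}_1 \oplus \mathfrak{p}_2$ as a topological direct sum;
\item both summands are isotropic;
\item the map $\mathfrak{p}_1 \times \mathfrak{p}_2 \to \mathfrak{p}_2$ required in the definition of a Manin triple is bounded.
\end{enumerate}

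\textbf{Subalgebras.} $\mathfrak{p}_1$ is a subalgebra because it is the image of the diagonal homomorphism $a \mapsto (a,a)$. For $\mathfrak{p}_2$, the inclusion $[\mathfrak{g}_\alpha,\mathfrak{g}_\beta] \subseteq \mathfrak{g}_{\alpha+\beta}$ shows that $\mathfrak{g}_{\ge 0}$ and $\mathfrak{g}_{\le 0}$ are subalgebras. Let $x,x' \in \mathfrak{g}_{\ge 0}$. The only contribution to $\pi_0[x,x']$ comes from $[\pi_0 x, \pi_0 x']$, because two positive weights never sum to $0$, and a positive weight plus $0$ is positive. This term vanishes since $[\mathfrak{h},\mathfrak{h}]=0$. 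Hence $\pi_0[x,x'] = 0$, and likewise $\pi_0[y,y'] = 0$ for $y,y' \in \mathfrak{g}_{\le 0}$. So the constraint $\pi_0(x)+\pi_0(y)=0$ is preserved under brackets. This is exactly where abelianness of $\mathfrak{h}$ is used. Closedness of both subspaces follows from continuity of the projections onto weight components.

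\textbf{Direct sum.} Given $(u,v) \in \mathfrak{p}$, set $w := u - v$ and split $w = w_+ + w_0 + w_-$ using the topological decomposition. Define
\[
x := w_+ + \tfrac12 w_0, \qquad y := -w_- - \tfrac12 w_0, \qquad a := u - x .
\]
Then $x - y = w$, so $a = v - y$, and therefore $(u,v) = (a,a) + (x,y)$ with $(x,y) \in \mathfrak{p}_2$. All maps involved are compositions of the continuous projections, so the decomposition is topological.

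For uniqueness, suppose $(a,a) \in \mathfrak{p}_1 \cap \mathfrak{p}_2$. Then $a \in \mathfrak{g}_{\ge 0} \cap \mathfrak{g}_{\le 0} = \mathfrak{h}$, so $\pi_0(a) = a$, and the constraint gives $2a = 0$, hence $a = 0$.

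\textbf{Isotropy.} For $\mathfrak{p}_1$ this is immediate: $\langle\langle a,b\rangle\rangle - \langle\langle a,b\rangle\rangle = 0$. For $\mathfrak{p}_2$, I need the orthogonality relations
\[
\langle\langle \mathfrak{g}_\alpha, \mathfrak{g}_\beta \rangle\rangle = 0 \quad \text{unless } \alpha + \beta = 0, \qquad \text{with } \mathfrak{g}_0 := \mathfrak{h}.
\]
These do not follow from invariance alone. I would derive them from the grading: in the examples the grading comes from a circle action (or a grading derivation) that preserves the form, and then orthogonality holds. Otherwise I would add it as an implicit hypothesis of the weight space decomposition. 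Granting it, for $(x,y),(x',y') \in \mathfrak{p}_2$ only the zero modes pair. This gives
\[
\langle\langle x,x'\rangle\rangle - \langle\langle y,y'\rangle\rangle = \langle\langle h,h'\rangle\rangle - \langle\langle -h,-h'\rangle\rangle = 0,
\]
where $h := \pi_0 x$ and $h' := \pi_0 x'$.

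\textbf{Boundedness.} The map in question is the bracket of $\mathfrak{p}$ followed by the continuous projection onto $\mathfrak{p}_2$ from the direct sum step, so it is bounded.

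The main obstacle is the orthogonality of weight spaces just described, which is needed for isotropy of $\mathfrak{p}_2$. A secondary concern arises when the "direct sum" is really a completed sum, as for smooth or analytic loops. Then I would check that $\pi_0$ and the positive and negative projections are continuous on the completion, which holds by the assumed topological direct sum, and that weak nondegeneracy survives.
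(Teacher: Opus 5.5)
Your proposal is correct and follows essentially the paper's route. The paper proves this theorem by deferring to the $C^*$-algebra case in Section~\ref{5}, which uses:
\begin{itemize}
\item the same double $\mathfrak{g}\oplus\mathfrak{g}$ with the split form;
\item the same use of $[\mathfrak{h},\mathfrak{h}]=0$ to preserve the constraint $\pi_0(x)+\pi_0(y)=0$ under brackets;
\item literally the same decomposition $x=P_+(u-v)+\tfrac12\pi_0(u-v)$, $y=P_-(v-u)+\tfrac12\pi_0(v-u)$.
\end{itemize}
You additionally verify uniqueness of the decomposition and the boundedness condition, which the paper omits.

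The orthogonality $\langle\langle \mathfrak{g}_\alpha,\mathfrak{g}_\beta\rangle\rangle=0$ for $\alpha+\beta\neq 0$ that you flag is genuinely missing from the abstract hypotheses. For instance, an abelian $\mathbb{Z}$-graded $\mathfrak{g}$ with a form pairing $\mathfrak{g}_1$ nontrivially with itself makes $\mathfrak{p}_2$ non-isotropic. This orthogonality is exactly what the paper supplies in its special case through the $\alpha$-invariance $\omega(a)=\omega(\pi_0(a))$, so adding it as a hypothesis is the right repair.
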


\begin{proof}
We refer to Section \ref{Construction of Manin triples from $C^*$-algebras}, which is a special case of this theorem, but the proof follows along similar arguments.
\end{proof}

\subsection{Smooth vectors and their weight space decomposition}
Let $A$ be a unital $C^*$-algebra endowed with a strongly continuous circle action \[\alpha : \mathbb{S}^1 \rightarrow \operatorname{Aut}(A),  \quad t \mapsto \alpha_t,\] and let $\omega : A \rightarrow \mathbb{C}$ be a faithful tracial $\alpha$-invariant state  (see \cite[Section 2.3]{4} for the definition of a state). Recall that $A$  is called  $\alpha$-invariant if \[
\omega(\alpha_t(a)) = \omega(a) \quad \text{for all } t \in \mathbb{S}^1, a \in A.
\] 
and $\omega$ is called tracial if
\[
\omega(ab) = \omega(ba)  \quad \text{for all } a,b\in A.
\]
We further define the weight subspaces $A_m$ of $A$ given by
\[
A_m := \{a \in A : \alpha_t(a) = e^{itm}a \ \forall t \in \mathbb{S}^1\}, \quad m \in \mathbb{Z}.
\]
Since $\mathbb{S}^1$ is a compact group, then by \cite{18}, the sum
$
\bigoplus_{m \in \mathbb{Z}} A_m$ is dense in $A$.

\begin{definition}
The space of smooth vectors is defined by
\[
A_{\infty} := \{a \in A : \text{the map } t \mapsto \alpha_t(a) \text{ is } C^{\infty}(\mathbb{R}, A)\}.
\]
For each $k \in \mathbb{N}_0$, define the Fréchet spaces
\[
A^{(k)} := \{a \in A_{\infty} : \|a\|_{(k)} < \infty\}
\]
where the seminorm is given by
\[
\|a\|_{(k)} := \max_{0 \le j \le k} \sup_{t \in [0, 2\pi]} \left\| \frac{d^j}{dt^j}\alpha_t(a) \right\|_A.
\]
Note that $A_{\infty} = \bigcap_{k \in \mathbb{N}_0} A^{(k)}$, and the system of seminorms $(\| \cdot \|_{(k)})_{k \in \mathbb{N}_0}$ endow $A_{\infty}$ with a Fréchet space structure.
\end{definition}

\begin{lemma}
The space $A_{\infty}$ is a Fréchet space.
\end{lemma}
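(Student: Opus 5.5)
The plan is to verify the three defining properties of a Fréchet space directly: $A_\infty$ is a locally convex space whose topology comes from a countable family of seminorms, it is Hausdorff, and it is complete. The key device is to identify $A_\infty$, via the orbit map $a\mapsto(t\mapsto \alpha_t(a))$, with a closed subspace of the Banach spaces $C^k([0,2\pi],A)$ for every $k$.

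\emph{Seminorms and Hausdorff.} First I check that each $\|\cdot\|_{(k)}$ is a finite seminorm on $A_\infty$. For $a\in A_\infty$ the derivatives $t\mapsto \frac{d^j}{dt^j}\alpha_t(a)$ are continuous on the compact interval $[0,2\pi]$, so the suprema are finite. Hence $A^{(k)}\supseteq A_\infty$ for all $k$, and the identity $A_\infty=\bigcap_k A^{(k)}$ is just bookkeeping. The family $(\|\cdot\|_{(k)})_{k\in\mathbb N_0}$ is countable and increasing, so it defines a metrizable locally convex topology. Since each $\alpha_t$ is a $*$-automorphism of a $C^*$-algebra, it is isometric. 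Therefore $\|a\|_{(0)}=\sup_t\|\alpha_t(a)\|_A=\|a\|_A$, so the topology is Hausdorff and finer than the norm topology of $A$.

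\emph{Completeness.} Let $(a_n)$ be Cauchy for every $\|\cdot\|_{(k)}$. Because $\|\cdot\|_{(0)}=\|\cdot\|_A$ and $A$ is complete, $a_n\to a$ in $A$ for some $a\in A$. Put $f_n(t):=\alpha_t(a_n)$. For each $k$, the sequence $(f_n)$ is Cauchy in the Banach space $C^k([0,2\pi],A)$, whose norm is $\max_{j\le k}\sup_t\|f^{(j)}(t)\|$, so it converges there to some $g_k$. The limits are compatible, and $g_k=g_0$, by the standard theorem that uniform convergence of $f_n$ together with uniform convergence of $f_n^{(j)}$ gives differentiability of the limit with $g^{(j)}=\lim f_n^{(j)}$. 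This theorem holds verbatim for Banach-space-valued functions, via the fundamental theorem of calculus for Riemann integrals in $A$. Moreover,
\[
g_0(t)=\lim_n\alpha_t(a_n)=\alpha_t(a),
\]
since $\alpha_t$ is isometric, and hence continuous. So $t\mapsto\alpha_t(a)$ is $C^\infty$ on $[0,2\pi]$.

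To pass from $[0,2\pi]$ to all of $\mathbb R$, I use the group law $\alpha_{t+s}(a)=\alpha_s(\alpha_t(a))$ together with $2\pi$-periodicity. Since $\alpha_s$ is a bounded linear operator, smoothness near any point $t_0$ transfers to smoothness near $t_0+s$. Therefore $a\in A_\infty$, and by construction $\|a_n-a\|_{(k)}\to0$ for every $k$.

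\emph{Main obstacle.} The only step needing care is the interchange of limits and derivatives in the completeness argument, which is exactly where the Banach-valued version of the uniform convergence theorem is used. The point to state explicitly is that the limit element is a genuine element $a\in A$, obtained from $\|\cdot\|_{(0)}$, and not merely an abstract limit curve. This is what makes the limit curve an orbit $\alpha_t(a)$ rather than an arbitrary smooth function.
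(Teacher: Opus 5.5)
Your proof is correct. The paper gives no argument for this lemma: it only asserts, in the definition just before it, that the seminorms $\|\cdot\|_{(k)}$ endow $A_\infty$ with a Fréchet structure. So your write-up supplies a verification the paper omits, rather than following or departing from an existing proof.

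The substantive content is the completeness step you isolate. A sequence that is Cauchy for all $\|\cdot\|_{(k)}$ has a limit $a\in A$ (via $\|\cdot\|_{(0)}$). The $C^k([0,2\pi],A)$-limit curve is then forced to be the orbit $t\mapsto\alpha_t(a)$. Your translation argument $\alpha_{t+s}(a)=\alpha_s(\alpha_t(a))$ correctly disposes of the endpoints of $[0,2\pi]$.

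Two small remarks. First, the isometry of $\alpha_t$ is convenient but not needed. Since $\alpha_0=\mathrm{id}$, you already have $\|a\|_{(0)}\ge\|a\|_A$, which gives both Hausdorffness and the Cauchy property in $A$. Continuity of each $\alpha_t$ then suffices for $g_0(t)=\alpha_t(a)$. Second, your observation that all seminorms are finite on $A_\infty$ shows $A^{(k)}=A_\infty$ as sets. Hence the paper's description of the individual $A^{(k)}$ as ``Fréchet spaces'' should not be taken literally: with the single norm $\|\cdot\|_{(k)}$ it is in general an incomplete normed space. Your argument rightly uses only the whole countable family.

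An equivalent, more structural phrasing: the orbit map identifies $A_\infty$ with the closed subspace of the Fréchet space $C^\infty(\mathbb{S}^1,A)$ consisting of maps with $f(t+s)=\alpha_s(f(t))$. Your sequential argument is exactly a hands-on verification of that closedness.
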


\subsubsection{Weight space decomposition}
We note that $A_m \subseteq A_{\infty}$ for each $n \in \mathbb{Z}$. By  \cite[Proposition 3.3]{18}, the Fourier projection onto $A_m$ is defined by
\[
P_m(a) := \frac{1}{2\pi} \int_0^{2\pi} e^{-imt}\alpha_t(a) \, dt.
\]
The smooth weight spaces decompose $A_{\infty}$ as a topological direct sum:
\[
A_{\infty} = \bigoplus_{m \in \mathbb{Z}} A_{\infty}^m.
\]

\begin{proof}
For $a \in A_{\infty}$, the Fourier integral is well-defined and exists since $t \mapsto \alpha_t(a)$ is smooth. The element $P_m(a)$ satisfies $\alpha_t(P_m(a)) = e^{imt} P_m(a)$:
\[
\alpha_t(P_m(a)) = \frac{1}{2\pi} \int_0^{2\pi} e^{-ims}\alpha_{t+s}(a) \, ds = e^{imt} \frac{1}{2\pi} \int_0^{2\pi} e^{-ims}\alpha_s(a) \, ds = e^{imt} P_m(a),
\]
which uses the periodicity of $\alpha$ on the circle. Thus $P_m(a) \in A_{\infty}^m$. The decomposition $a = \sum_{m \in \mathbb{Z}} P_m(a)$ (with convergence in $A_{\infty}$, proved in section \ref{Partial sum convergence for smooth vectors} below) follows by the Fourier decomposition of the periodic function $t \mapsto \alpha_t(a)$. Uniqueness is immediate from the orthogonality of the weight spaces: if $a = \sum_m x_m$ with $x_m \in A_{\infty}^m$, then pairing with $e^{-imt}$ recovers $x_m = P_m(a)$ uniquely. The direct sum is topological because the Fourier projections $P_m : A_{\infty} \rightarrow A_{\infty}^m$ are continuous with respect to the Fréchet topology (\cite[ Proposition 3.3]{18}).
\end{proof}

\subsubsection{Rapid decay of Fourier coefficients for smooth vectors}
\begin{lemma}[Rapid decay for smooth vectors]\label{Rapid decay for smooth vectors}
For each $a \in A_{\infty}$ and each $k \in \mathbb{N}_0$, there exists a constant $C(k)_a > 0$ such that
\[
\|P_m(a)\|_A \le \frac{C(k)_a}{|m|^k} \quad \text{for all } m \neq 0.
\]
Equivalently, with $\|P_0(a)\|_A$ determined by $P_0(a)$, the Fourier coefficients satisfy
\[
\|P_m(a)\|_A \le \|a\|_{(k)}|m|^{-k}
\]
for each $k \in \mathbb{N}_0$ and $m \neq 0$.
\end{lemma}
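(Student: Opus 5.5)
The plan is to integrate by parts $k$ times in the Fourier projection formula for $P_m(a)$, using that $t\mapsto \alpha_t(a)$ is a smooth $2\pi$-periodic curve in the Banach space $A$. Fix $a\in A_\infty$ and write $\gamma(t):=\alpha_t(a)$. Then $\gamma\in C^\infty(\mathbb{R},A)$ by definition of $A_\infty$, and $\gamma(t+2\pi)=\gamma(t)$ because $\alpha$ is a circle action. The same holds for every derivative $\gamma^{(j)}$. In fact $\gamma^{(j)}(t)=\alpha_t(\gamma^{(j)}(0))$, but for the argument only periodicity of each $\gamma^{(j)}$ is needed, and this follows by differentiating the identity $\gamma(t+2\pi)=\gamma(t)$.

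For $m\neq 0$, use $e^{-imt}=\frac{1}{-im}\frac{d}{dt}e^{-imt}$ and integrate by parts in the Riemann integral of $A$-valued continuous functions:
\[
P_m(a)=\frac{1}{2\pi}\int_0^{2\pi}e^{-imt}\gamma(t)\,dt=\frac{1}{2\pi}\Big[\tfrac{e^{-imt}}{-im}\gamma(t)\Big]_0^{2\pi}+\frac{1}{im}\cdot\frac{1}{2\pi}\int_0^{2\pi}e^{-imt}\gamma'(t)\,dt .
\]
The boundary term vanishes because $e^{-2\pi i m}=1$ and $\gamma(2\pi)=\gamma(0)$. Integration by parts for Banach-valued $C^1$ functions can be justified by applying continuous linear functionals and using Hahn--Banach, or directly from the fundamental theorem of calculus for Banach-valued functions. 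Iterating $k$ times gives
\[
P_m(a)=\frac{1}{(im)^k}\cdot\frac{1}{2\pi}\int_0^{2\pi}e^{-imt}\gamma^{(k)}(t)\,dt .
\]

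Estimating the integral by the supremum of its integrand gives
\[
\|P_m(a)\|_A\le |m|^{-k}\sup_{t\in[0,2\pi]}\|\gamma^{(k)}(t)\|_A\le |m|^{-k}\|a\|_{(k)} .
\]
This is the second form of the claim, and the first follows with $C(k)_a:=\|a\|_{(k)}$. This constant is finite because $\gamma^{(k)}$ is continuous on the compact interval $[0,2\pi]$, which also shows $\|a\|_{(k)}<\infty$ for every $a\in A_\infty$. The case $k=0$ is just $\|P_m(a)\|\le\sup_t\|\alpha_t(a)\|=\|a\|$, using that the $\alpha_t$ are isometric $*$-automorphisms.

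The only point needing care is the vanishing of the boundary terms, that is, the periodicity of all derivatives $\gamma^{(j)}$ and the validity of vector-valued integration by parts. No real obstacle is expected beyond verifying that these Banach-space calculus facts apply.
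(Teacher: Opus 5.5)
Your proposal is correct and follows essentially the same route as the paper: $k$-fold integration by parts in the Fourier integral, with boundary terms vanishing by $2\pi$-periodicity, followed by the estimate $\|P_m(a)\|_A \le |m|^{-k}\sup_t\|\frac{d^k}{dt^k}\alpha_t(a)\|_A \le |m|^{-k}\|a\|_{(k)}$. Two small remarks: your orientation of the integration by parts (differentiate $\gamma$, integrate $e^{-imt}$) is the correct one, whereas the paper's stated choice $u=e^{-imt}$, $dv=\alpha_t(a)\,dt$ is reversed even though it reaches the same formula; and to meet $C(k)_a>0$ literally when $a=0$, take for instance $C(k)_a:=\|a\|_{(k)}+1$.
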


\begin{proof}
Let $a \in A_{\infty}$ be fixed. The Fourier coefficient is
\[
P_m(a) = \frac{1}{2\pi} \int_0^{2\pi} e^{-imt}\alpha_t(a) \, dt.
\]
For $m \neq 0$, perform integration by parts $k$ times. Set $u = e^{-imt}$ and $dv = \alpha_t(a) \, dt$. Then $du = -im \cdot e^{-imt}dt$. By the smoothness of $t \mapsto \alpha_t(a)$, the map $v(t) := \int_0^t \alpha_s(a) \, ds$ is well-defined and differentiable with $\frac{dv}{dt} = \alpha_t(a)$. After $k$ successive integrations by parts, by the periodicity of $\alpha_t(a)$, the boundary terms vanish. We obtain
\[
\int_0^{2\pi} e^{-imt}\alpha_t(a) \, dt = \frac{1}{(im)^k} \int_0^{2\pi} e^{-imt} \frac{d^k}{dt^k}\alpha_t(a) \, dt.
\]
Thus,
\[
P_m(a) = \frac{1}{2\pi(im)^k} \int_0^{2\pi} e^{-imt} \frac{d^k}{dt^k}\alpha_t(a) \, dt.
\]
Taking norms yields:
\[
\|P_m(a)\|_A \le \frac{1}{2\pi|m|^k} \int_0^{2\pi} \left\| \frac{d^k}{dt^k}\alpha_t(a) \right\|_A dt \le \frac{1}{|m|^k} \max_{t \in [0, 2\pi]} \left\| \frac{d^k}{dt^k}\alpha_t(a) \right\|_A.
\]
By definition of the seminorm $\| \cdot \|_{(k)}$:
\[
\|P_m(a)\|_A \le \frac{1}{|m|^k}\|a\|_{(k)}.
\]
\end{proof}

\subsubsection{Partial sum convergence for smooth vectors}\label{Partial sum convergence for smooth vectors}
For $a \in A_{\infty}$, the partial sums
\[
S_N(a) := \sum_{|m| \le N} P_m(a)
\]
converge to $a$ in $A_{\infty}$ with polynomial rate. that means, for each $k \in \mathbb{N}_0$,
\[
\|a - S_N(a)\|_{(k)} \le C(k)_a (k-1) N^{k-1} \quad \text{for } k \ge 2.
\]

\begin{proof}
The error in the partial sum is
\[
a - S_N(a) = \sum_{|m| > N} P_m(a).
\]
By the triangle inequality and Lemma \ref{Rapid decay for smooth vectors},
\[
\|a - S_N(a)\|_A \le \sum_{|m| > N} \|P_m(a)\|_A \le \sum_{|m| > N} \frac{\|a\|_{(k)}}{|m|^k} = \|a\|_{(k)} \sum_{|m| > N} \frac{1}{|m|^k}.
\]
For $k \ge 2$, the tail sum over integers can be bounded using integral comparison:
\[
\sum_{|m| > N} \frac{1}{|m|^k} = 2 \sum_{m = N+1}^{\infty} \frac{1}{m^k} \le 2 \int_N^{\infty} \frac{1}{x^k} dx = \frac{2}{(k-1)N^{k-1}}.
\]
Thus
\[
\|a - S_N(a)\|_A \le \|a\|_{(k)} \cdot \frac{2}{(k-1)N^{k-1}} \rightarrow 0 \quad \text{as } N \rightarrow \infty.
\]
For the Fréchet seminorm $\| \cdot \|_{(k)}$, we also need a bound on higher order derivatives. Note that
\[
\frac{d^j}{dt^j}\alpha_t(S_N(a)) = \sum_{|m| \le N} \frac{d^j}{dt^j}P_m(a)e^{imt}
\]
for $j = 0, 1, \dots, k$. The Fourier coefficients $P_m(a)$ satisfy rapid decay (Lemma \ref{Rapid decay for smooth vectors}), which is preserved under differentiation (multiplication by powers of $m$). Therefore,
\[
\left\| \frac{d^j}{dt^j}(\alpha_t(a) - \alpha_t(S_N(a))) \right\|_A \le \sum_{|m| > N} |m|^j \|P_m(a)\|_A \le \sum_{|m| > N} |m|^j \cdot \|a\|_{(k)}|m|^{-k} = \|a\|_{(k)} \sum_{|m| > N} |m|^{j-k}.
\]
For $j \le k$, we have $j - k \le 0$, so $|m|^{j-k} \le |m|^{-(k-j)}$. Thus,
\[
\sum_{|m| > N} |m|^{j-k} = \sum_{|m| > N} \frac{1}{|m|^{k-j}} \le \frac{2}{(k-j-1)N^{k-j-1}}.
\]
Therefore,
\[
\max_{0 \le j \le k} \sup_{t \in [0, 2\pi]} \left\| \frac{d^j}{dt^j}(\alpha_t(a) - \alpha_t(S_N(a))) \right\|_A \rightarrow 0 \quad \text{as } N \rightarrow \infty.
\]
Thus, $S_N(a) \rightarrow a$ in the Fréchet topology of $A_{\infty}$.
\end{proof}

\subsection{Analytic vectors and their weight decomposition}
Let $A$ be a unital $C^*$-algebra endowed with a strongly continuous circle action $\alpha : \mathbb{S}^1 \rightarrow \text{Aut}(A)$.

\begin{definition}[Space of analytic vectors]
The space of analytic vectors $A_{\omega}$ is defined as the set of elements $a \in A$ such that the map $t \mapsto \alpha_t(a)$ extends to a holomorphic map $f_a: S_r \rightarrow A$ for some $r > 0$, where $S_r := \{z \in \mathbb{C} : |\text{Im}(z)| < r\}$ is the open horizontal strip of width $2r$.
\end{definition}

For each $n \in \mathbb{N}$, we define the normed space $B_n$ as the set of elements extending to the strip $S_{1/n}$:
\[
B_n := \{a \in A_{\omega} : \|a\|_n < \infty\}
\]
where the norm is defined by
\[
\|a\|_n := \sum_{k=0}^n \sup_{z \in S_{1/n}} \left\| \frac{d^k}{dz^k}\alpha_z(a) \right\|_A.
\]

\begin{lemma}\label{5.7}
Each space $(B_n, \| \cdot \|_n)$ is a Banach space.
\end{lemma}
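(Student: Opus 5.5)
The plan is the standard argument that a space of bounded holomorphic functions with bounded derivatives, in the sup-norm on the strip, is complete. Throughout I read $\alpha_z(a)$ as $f_a(z)$, where $f_a$ is the holomorphic extension of $t\mapsto\alpha_t(a)$ to $S_{1/n}$. This extension is unique by the identity theorem for $A$-valued holomorphic functions, applied after composing with functionals in $A'$. Hence $\|\cdot\|_n$ is well defined. It is a norm because the $k=0$ term dominates $\|\alpha_0(a)\|_A=\|a\|_A$.

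\emph{Step 1 (uniform limits).} Let $(a_j)$ be Cauchy in $B_n$ and write $f_j:=f_{a_j}$. For each $0\le k\le n$, the sequence $f_j^{(k)}$ is uniformly Cauchy on $S_{1/n}$ with values in the Banach space $A$. Hence it converges uniformly to a bounded function $g_k:S_{1/n}\to A$. In particular $a_j=f_j(0)\to a:=g_0(0)$ in $A$.

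\emph{Step 2 (holomorphy and derivatives).} I will show that $g_0$ is holomorphic and $g_0^{(k)}=g_k$. For this I use the vector-valued Weierstrass theorem. Uniform limits of holomorphic Banach-valued maps are holomorphic, which one sees by passing to the limit in the Cauchy integral formula on small discs inside $S_{1/n}$, since Bochner/Riemann integrals commute with uniform limits. The same Cauchy formula shows that the derivatives $f_j^{(k)}$ converge locally uniformly to $g_0^{(k)}$, so $g_0^{(k)}=g_k$. Consequently $\sum_{k\le n}\sup_{S_{1/n}}\|g_0^{(k)}\|_A<\infty$.

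\emph{Step 3 (identifying the limit as an analytic vector).} I need to show that $g_0$ restricted to $\mathbb R$ equals $t\mapsto\alpha_t(a)$. Each $\alpha_t$ is a $*$-automorphism of a $C^*$-algebra and is therefore isometric. So for real $t$,
\[
\|\alpha_t(a)-g_0(t)\|_A\le\|\alpha_t(a-a_j)\|_A+\|f_j(t)-g_0(t)\|_A\le \|a-a_j\|_A+\|f_j-g_0\|_{\infty}\to0 .
\]
Thus $g_0$ is a holomorphic extension of the orbit map of $a$ to $S_{1/n}$. This gives $a\in A_\omega$, $f_a=g_0$ and $\|a\|_n<\infty$, so $a\in B_n$. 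Finally, $\|a_j-a\|_n=\sum_k\sup\|f_j^{(k)}-g_k\|_A\to0$ by Step 1. This proves completeness.

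I expect the only real care to be needed in Step 2. There one must note that the derivative bounds hold on the whole open strip, not just on compact subsets; this is why $g_k$ is first obtained directly from the Cauchy property in the norm, and the Cauchy formula is used only locally to identify $g_k$ with $g_0^{(k)}$. Step 3 is where the strong continuity and isometry of the action enter. These are what ensure that the abstract limit is genuinely an analytic vector for $\alpha$.
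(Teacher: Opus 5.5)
Your proof is correct and follows the same route as the paper's: each derivative $f_j^{(k)}$ is uniformly Cauchy on $S_{1/n}$, the Weierstrass theorem gives a holomorphic limit, and its restriction to $\mathbb{R}$ is identified with the orbit of $a=\lim a_j$. Your version is slightly more careful than the paper's in two places. You get the uniform limits of the derivatives directly from the norm, whereas the paper derives them from uniform convergence of $f_j$ on the open strip, which by itself only gives locally uniform convergence. You also use that each $\alpha_t$ is isometric, which is what is actually needed, rather than the strong continuity of $\alpha$ that the paper invokes.
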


\begin{proof}
Let $\{a_j\}_{j \in \mathbb{N}}$ be a Cauchy sequence in $B_n$. From the definition of norm, for every $0 \le k \le n$, the sequence of functions $f^{(k)}_j(z) := \frac{d^k}{dz^k}\alpha_z(a_j)$ is Cauchy with respect to the uniform norm on $S_{1/n}$. In particular, the sequence of holomorphic functions $f_j(z) = \alpha_z(a_j)$ converges uniformly on $S_{1/n}$ to a function $f_{\infty} : S_{1/n} \rightarrow A$. Since uniform limits of holomorphic functions are holomorphic, $f_{\infty}$ is holomorphic. Furthermore, uniform convergence implies uniform convergence of derivatives, so $f^{(k)}_j$ converges to $f^{(k)}_{\infty}$ uniformly on $S_{1/n}$. Since $A$ is complete, the sequence $\{a_j\} = \{f_j(0)\}$ converges to a limit $a_{\infty} \in A$. By the strong continuity of $\alpha$, we have $\alpha_t(a_{\infty}) = \lim_{j \rightarrow \infty} \alpha_t(a_j) = f_{\infty}(t)$ for $t \in \mathbb{R}$. Thus $f_{\infty}$ is the unique analytic extension of the group action $\alpha_t(a_{\infty})$, so $a_{\infty} \in A_{\omega}$. Since the convergence is uniform on $S_{1/n}$, $a_{\infty}$ is in $B_n$ and $\|a_j - a_{\infty}\|_n \rightarrow 0$.
\end{proof}

\textit{Continuous compact injections $B_n \hookrightarrow B_{n+1}$:}
\begin{lemma}\label{5.8}
For each $n \in \mathbb{N}$, the canonical inclusion map $i_n: B_n \hookrightarrow B_{n+1}$ is a continuous linear injection.
\end{lemma}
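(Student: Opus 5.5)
The plan is to show that the identity map $a\mapsto a$ sends $B_n$ into $B_{n+1}$ and satisfies an estimate $\|a\|_{n+1}\le C_n\|a\|_n$ with a constant $C_n$ independent of $a$. Linearity and injectivity are then automatic, since $i_n$ is the restriction of the identity of $A$. Continuity follows from the estimate, because both spaces are normed (Banach by Lemma \ref{5.7}).

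Fix $a\in B_n$ and write $f=f_a$ for its holomorphic extension to $S_{1/n}$, i.e. $f(z)=\alpha_z(a)$. By the identity theorem this extension is unique, so $\alpha_z(a)$ is unambiguous on any smaller strip. Compare the two norms term by term.

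\textbf{Terms with $0\le k\le n$.} Since $S_{1/(n+1)}\subset S_{1/n}$, we have
\[
\sup_{z\in S_{1/(n+1)}}\|f^{(k)}(z)\|_A\le \sup_{z\in S_{1/n}}\|f^{(k)}(z)\|_A .
\]
So these terms of $\|a\|_{n+1}$ are dominated by the corresponding terms of $\|a\|_n$.

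\textbf{The term $k=n+1$.} This is the only genuine point, since $\|a\|_n$ controls no derivative of order $n+1$. Here I would use a vector-valued Cauchy estimate, applied to the holomorphic map $f^{(n)}$.
\begin{itemize}
\item Set $r:=\tfrac1n-\tfrac1{n+1}=\tfrac{1}{n(n+1)}$. For $z\in S_{1/(n+1)}$, the closed disc $\overline{D}(z,\rho)$ lies in $S_{1/n}$ for every $\rho<r$.
\item Cauchy's integral formula for $A$-valued holomorphic functions, or equivalently the scalar formula applied to $\lambda\circ f^{(n)}$ for $\lambda\in A'$ with $\|\lambda\|\le1$ followed by Hahn--Banach, gives
\[
\|f^{(n+1)}(z)\|_A\le \rho^{-1}\sup_{w\in S_{1/n}}\|f^{(n)}(w)\|_A .
\]
\item Letting $\rho\to r$ yields
\[
\sup_{z\in S_{1/(n+1)}}\|f^{(n+1)}(z)\|_A\le n(n+1)\,\|a\|_n .
\]
\end{itemize}

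Summing the two cases gives
\[
\|a\|_{n+1}\le \bigl(1+n(n+1)\bigr)\|a\|_n<\infty .
\]
Hence $a\in B_{n+1}$, and $i_n$ is bounded with $\|i_n\|\le 1+n(n+1)$.

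The main (mild) obstacle is making the Cauchy estimate rigorous for Banach-space-valued holomorphic maps. The scalarization through functionals $\lambda\in A'$ together with Hahn--Banach handles this cleanly, so I would use that route.
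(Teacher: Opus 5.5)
Your proof is correct and is essentially the paper's argument: the terms with $k\le n$ are controlled by the inclusion $S_{1/(n+1)}\subset S_{1/n}$, and the new top-order term by a Cauchy estimate on discs of radius $\frac1n-\frac1{n+1}$. The only difference is harmless: you apply the first-order estimate to $f^{(n)}$ and get the constant $n(n+1)$, whereas the paper applies the $(n+1)$-st order estimate to $f$ itself and gets $(n+1)!\,\epsilon^{-(n+1)}\sup_{S_{1/n}}\|f\|_A$. Both bounds are a multiple of $\|a\|_n$.
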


\begin{proof}
Let $a \in B_n$. Since $S_{1/(n+1)} \subset S_{1/n}$, for any $k \le n$, the supremum of the $k$-th derivative over the smaller strip $S_{1/(n+1)}$ is bounded by the supremum over $S_{1/n}$. For the derivative of order $n+1$, we use Cauchy’s integral estimates. Let $z \in S_{1/(n+1)}$. We choose $\epsilon = \frac{1}{n} - \frac{1}{n+1} > 0$. The closed disk $D(z, \epsilon)$ is contained in $S_{1/n}$. Thus,
\[
\left\| \frac{d^{n+1}}{dz^{n+1}}\alpha_z(a) \right\|_A \le \frac{(n+1)!}{\epsilon^{n+1}} \sup_{\zeta \in S_{1/n}} \|\alpha_{\zeta}(a)\|_A.
\]
Since $\|\alpha_{\zeta}(a)\|_A \le \|a\|_n$, the supremum on $S_{1/(n+1)}$ is finite and bounded by $C_n\|a\|_n$. Summing the bounds for all derivatives $k = 0$ to $n+1$ shows that $\|a\|_{n+1} \le C_n\|a\|_n$ for some constant $C_n$.
\end{proof}

\begin{lemma}\label{5.9}
The inclusion map $i_n: B_n \rightarrow B_{n+1}$ is a compact operator.
\end{lemma}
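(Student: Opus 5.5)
To show that $i_n: B_n \to B_{n+1}$ is compact, I will take a bounded sequence $(a_j)$ in $B_n$, say $\|a_j\|_n \le M$, and extract a subsequence that converges in $\|\cdot\|_{n+1}$. The natural tool is a Montel-type argument on the holomorphic extensions $f_j(z) := \alpha_z(a_j)$, defined on the wide strip $S_{1/n}$. The difficulty is that $A$ is an infinite-dimensional Banach space, so uniform boundedness of vector-valued holomorphic functions does not by itself give relatively compact ranges. I will therefore not apply Montel directly in $A$. Instead I will use the Fourier decomposition of the circle action together with the exponential decay that analyticity on a strip forces on the Fourier coefficients.

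\textbf{Step 1 (Fourier coefficients of $B_n$ elements).} For $a \in B_n$, shift the contour in $P_m(a)=\frac{1}{2\pi}\int_0^{2\pi} e^{-imt}\alpha_t(a)\,dt$ to $\operatorname{Im} z = \mp s$ for any $s<1/n$; periodicity makes the vertical sides cancel. This gives
\[
\|P_m(a)\|_A \le e^{-s|m|}\,\|a\|_n \quad\text{for all } s<\tfrac1n .
\]
Also, $P_m(a)\in A_m$, so $\alpha_z(P_m(a))=e^{imz}P_m(a)$.

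\textbf{Step 2 (reduction to finitely many modes).} Write $a_j=S_N(a_j)+R_N(a_j)$ with $S_N(a)=\sum_{|m|\le N}P_m(a)$. On the narrower strip $S_{1/(n+1)}$, the $k$-th derivative of $\alpha_z(P_m(a))$ has norm at most $|m|^k e^{|m|/(n+1)}\|P_m(a)\|_A$. Combining this with Step 1 (choose $s$ with $1/(n+1)<s<1/n$), I get
\[
\|R_N(a_j)\|_{n+1} \le M \sum_{|m|>N}\sum_{k\le n+1}|m|^k e^{-(s-1/(n+1))|m|} ,
\]
which tends to $0$ as $N\to\infty$, uniformly in $j$. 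This uniform tail estimate is where the gap between the two strips is used essentially.

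\textbf{Step 3 (the main obstacle: compactness of the finite part).} What remains is to show that $(P_m(a_j))_j$ has a convergent subsequence in $A$ for each fixed $m$. Here the sequence lies in the weight space $A_m$, which may be infinite-dimensional, so boundedness alone does not give a convergent subsequence. I expect this to be the genuine difficulty. I would try first to use the faithful tracial invariant state $\omega$ to show that the relevant weight spaces are finite-dimensional, or at least that the inclusion is compact in the GNS $L^2$-norm. Failing that, the statement should be read, as in standard Silva-space examples such as analytic loop spaces, under the assumption $\dim A_m<\infty$; this holds in the intended examples, e.g. $A=C(\mathbb S^1,M_k)$ with the rotation action.

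\textbf{Step 4 (conclusion).} Granting finite-dimensionality of each $A_m$, a diagonal argument yields a subsequence along which $P_m(a_j)$ converges for every $m$. Each finite sum $S_N$ then converges in $\|\cdot\|_{n+1}$, since for a single mode this norm is controlled by a constant times $\|P_m(\cdot)\|_A$. Together with the uniform tail bound of Step 2, this shows the subsequence is Cauchy in $B_{n+1}$, so $i_n$ is compact.
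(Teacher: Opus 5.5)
Your diagnosis in Step 3 is right, and the obstacle cannot be overcome: the lemma is false as stated, and the paper's proof fails at exactly that point. The paper applies Ascoli--Arzel\`a to the $A$-valued functions $f_j(z)=\alpha_z(a_j)$ on a compact rectangle, using only uniform boundedness and equicontinuity. For functions with values in an infinite-dimensional Banach space one also needs each set $\{f_j(z)\}_j$ to be relatively compact in $A$. The paper never checks this, and it genuinely fails. For $b\in A_m$ one has $\alpha_z(b)=e^{imz}b$, so
\[
\|b\|_n=\bigl(1+|m|+\dots+|m|^n\bigr)e^{|m|/n}\,\|b\|_A .
\]
Thus $A_m$ lies in every $B_n$ with a norm equivalent to $\|\cdot\|_A$. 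If some $A_m$ is infinite-dimensional, its unit ball is bounded in $B_n$ but not relatively compact in $B_{n+1}$. For example, take $A=C([0,1])$ with the trivial action and $\omega(f)=\int_0^1 f\,dx$, which is faithful, tracial and invariant. Then $B_n=A$ with $\|\cdot\|_n=\|\cdot\|_A$ for every $n$, so $i_n=\mathrm{id}_A$, which is not compact. For a nontrivial action, $C(\mathbb S^1\times[0,1])$ rotated in the first factor has $A_0\cong C([0,1])$.

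In particular, the first option you float in Step 3 cannot work. The faithful tracial invariant state does not force $\dim A_m<\infty$, and compactness in the GNS $L^2$-norm is beside the point, since $\|\cdot\|_{n+1}$ is built from the $C^*$-norm. The right move is your fallback, stated as an explicit hypothesis. By the display above it is necessary, and your Steps 1, 2 and 4 show it is sufficient. So you obtain the sharp statement: $i_n$ is compact if and only if every $A_m$ is finite-dimensional. This holds in the intended examples, such as $M_n(\mathbb C)$ with a diagonal action or $C(\mathbb S^1,M_k(\mathbb C))$ with rotation.

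Under that hypothesis your argument is sound. The contour shift in Step 1, the uniform tail bound in Step 2 (where the gap $1/(n+1)<s<1/n$ is used), and the diagonal extraction in Step 4 are all correct. Add two routine justifications: the holomorphic extension is $2\pi$-periodic in $\operatorname{Re}z$, and it equals $\sum_m e^{imz}P_m(a)$ on $S_{1/(n+1)}$. Both follow from the identity theorem, the second after noting that this absolutely convergent series has the same Fourier coefficients as $t\mapsto\alpha_t(a)$.

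Compared with the paper's one-step Ascoli--Arzel\`a argument, your Fourier route isolates exactly the finite-dimensional input that is needed. Note that the paper's subsequent claim that $A_\omega$ is a Silva space depends on the same extra hypothesis.
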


\begin{proof}
Let $\{a_j\}_{j \in \mathbb{N}}$ be a bounded sequence in $B_n$, so $\|a_j\|_n \le M$. The family of functions $\mathcal{F} = \{f_j(z) := \alpha_z(a_j)\}$ is uniformly bounded on the strip $S_{1/n}$. We apply the Ascoli-Arzelà theorem. Since $\alpha$ is an action of the circle group $\mathbb{S}^1$, the functions $f_j(z)$ are $2\pi$-periodic in the real part of $z$. Thus, it suffices to consider the compact rectangle:
\[
K := \left\{z \in \mathbb{C} : 0 \le \text{Re}(z) \le 2\pi, |\text{Im}(z)| \le \frac{1}{n+1} \right\}.
\]
The uniform boundedness of $\mathcal{F}$ on $S_{1/n}$ implies the uniform boundedness of the family $\mathcal{F}$ and its derivatives $\mathcal{F}'$ on $K$ (by Cauchy’s estimates). The family $\mathcal{F}$ is therefore equicontinuous on $K$. By the Ascoli-Arzelà theorem, there exists a subsequence $f_{i_j}$ that converges uniformly on $K$ to a holomorphic function $f_{\infty}$. Due to the $2\pi$-periodicity, uniform convergence on $K$ implies uniform convergence on the infinite strip $S_{1/(n+1)}$. Uniform convergence on the open strip $S_{1/(n+1)}$ implies the uniform convergence of all derivatives $\frac{d^k}{dz^k}f_j$ on $S_{1/(n+1)}$ for $k = 0, \dots, n+1$. Thus, $\{a_{i_j}\}$ converges in the norm of $B_{n+1}$.
\end{proof}

\subsubsection{$A_{\omega}$ is a Silva space}

\begin{theorem}
The space of analytic vectors $A_{\omega}$ is a Silva space, given by the strict inductive limit:
\[
A_{\omega} = \varinjlim B_n.
\]
\end{theorem}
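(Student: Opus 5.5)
The strategy is to verify directly that $A_{\omega}$ is a countable increasing union of the Banach spaces $B_n$, that the linking maps are compact injections, and that the inductive limit topology is the natural one. Then $A_{\omega}$ is a Silva space (a DFS space) by definition. Lemmas \ref{5.7}, \ref{5.8} and \ref{5.9} already give that each $B_n$ is Banach and that the inclusions $i_n: B_n \hookrightarrow B_{n+1}$ are continuous, injective and compact. What remains is:
\begin{enumerate}
\item the set-theoretic identity $A_{\omega} = \bigcup_n B_n$;
\item the fact that the inductive limit is Hausdorff and regular, so that it really is a Silva space.
\end{enumerate}

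For the union, take $a \in A_{\omega}$. Then $z\mapsto \alpha_z(a)$ extends holomorphically to some strip $S_r$. Choose $n$ with $1/n < r/2$, say. By $2\pi$-periodicity in $\operatorname{Re} z$, the extension is bounded on the closed strip $\overline{S_{2/n}}\subset S_r$, since it suffices to check boundedness on the compact rectangle $[0,2\pi]\times[-2/n,2/n]$. Periodicity itself transfers to the extension by the identity theorem. Cauchy estimates on discs of radius $1/n$ then bound all derivatives up to order $n$ uniformly on $S_{1/n}$. Hence $\|a\|_n<\infty$ and $a\in B_{n}$. The converse inclusion $B_n\subseteq A_\omega$ is by definition.

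For Hausdorffness, I would use the continuous injection $A_{\omega}\to A$. Each $B_n\to A$, $a\mapsto a=\alpha_0(a)$, satisfies $\|a\|_A\le \|a\|_n$. By the universal property, the inductive limit topology is finer than the topology induced from $A$, which is Hausdorff. Regularity (bounded sets lie and are bounded in some $B_n$) then follows from the general fact that an inductive limit of Banach spaces with compact linking maps is regular, complete and Montel; I would cite the standard result on DFS spaces (e.g.\ Köthe / Floret–Wloka) rather than reprove it.

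The main obstacle is the word ``strict'' in the statement. With compact linking maps, $B_n$ does not carry the subspace topology of $B_{n+1}$ when infinite-dimensional, so the limit is not literally strict. I would therefore interpret the claim as a compact regular inductive limit and remark on this. A secondary point is that the norm $\|\cdot\|_n$ involves derivatives up to order $n$ on an open strip. Finiteness for elements of $B_n$ must be shown via Cauchy estimates from a slightly larger strip, as in the first step. Equivalently, one can replace $\|\cdot\|_n$ by $\sup_{S_{1/n}}\|\alpha_z(a)\|$, which gives the same inductive limit by Lemma \ref{5.8}.
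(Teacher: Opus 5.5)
Your route is the paper's: its proof consists exactly of $A_\omega=\bigcup_n B_n$, Lemmas \ref{5.7}, \ref{5.8}, \ref{5.9}, and the definition of a Silva space. Your additions are correct and supply what the paper leaves implicit: the Cauchy-estimate argument placing each $a\in A_\omega$ in some $B_n$, Hausdorffness via $\|a\|_A\le\|a\|_n$, and the observation that compact linking maps between infinite-dimensional Banach spaces rule out a strict limit.

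The genuine gap is the step you take on trust, which the paper shares: compactness of $i_n\colon B_n\to B_{n+1}$. Lemma \ref{5.9} is false in the stated generality. Its proof applies Arzelà--Ascoli to $A$-valued functions. That requires each set $\{\alpha_z(a_j)\}_j$ to be relatively compact in $A$, but only boundedness is available. For instance, let $\alpha_t=\mathrm{id}$ on an infinite-dimensional $A$, say $A=C([0,1])$ with the Lebesgue state (faithful, tracial, invariant). Every $a$ is then analytic with constant extension, so $\|a\|_n=\|a\|_A$, $B_n=A$ and $i_n=\mathrm{id}_A$, which is not compact. Moreover $A_\omega=A$ is an infinite-dimensional Banach space, hence not Montel and not Silva.

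The obstruction is general. For $a\in A_m$ one has $\alpha_z(a)=e^{imz}a$, so every $\|\cdot\|_n$ restricted to $A_m$ is equivalent to $\|\cdot\|_A$. Also $P_m$ is a continuous projection of $\varinjlim B_n$ onto $A_m$, because $\|P_m a\|_A\le\|a\|_A\le\|a\|_n$. Hence $A_m$ is a complemented Banach subspace of $A_\omega$, and $A_\omega$ is not Montel as soon as some $A_m$ is infinite-dimensional.

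An extra hypothesis is therefore needed, e.g.\ $\dim A_m<\infty$ for all $m$ (as for $C(\mathbb{S}^1)$ with rotations, where $A_\omega=C^\omega(\mathbb{S}^1)$), and compactness must be proved from it:
\begin{enumerate}
\item If $\|a_j\|_n\le M$, the contour shift in Lemma \ref{5.11} gives $\|P_m(a_j)\|_A\le Me^{-|m|r'}$ for every $r'<1/n$.
\item A diagonal argument in the finite-dimensional spaces $A_m$ gives a subsequence with $P_m(a_{j_l})\to b_m$ for all $m$.
\item Choosing $1/(n+1)<r'<1/n$, the tail bound $\sum_{|m|>N}|m|^k e^{|m|/(n+1)}e^{-|m|r'}\to 0$ then shows $a_{j_l}\to\sum_m b_m$ in $B_{n+1}$.
\end{enumerate}
With this in place, the rest of your proposal goes through as written.
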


\begin{proof}
The space $A_{\omega}$ is clearly the union of the $B_n$. By Lemmas \ref{5.7}, \ref{5.8}, and \ref{5.9}, $A_{\omega}$ is a countable inductive limit of Banach spaces with compact structural maps, which by definition is a Silva space.
\end{proof}

\subsubsection{Weight space decomposition}
The analytic weight space $A_{\omega}^m$ is the eigenspace of the action $\alpha$ corresponding to the eigenvalue $e^{imt}$:
\[
A_{\omega}^m := \{a \in A_{\omega} : \alpha_t(a) = e^{imt}a\}.
\]
The Fourier projection $P_m : A_{\omega} \rightarrow A_{\omega}^m$ is defined by
\[
P_m(a) = \frac{1}{2\pi} \int_0^{2\pi} e^{-imt}\alpha_t(a) \, dt.
\]
The space $A_{\omega}$ decomposes as a topological direct sum:
\[
A_{\omega} = \bigoplus_{m \in \mathbb{Z}} A_{\omega}^m.
\]

\begin{lemma}[Rapid decay of Fourier coefficients]\label{5.11}
For each $a \in A_{\omega}$, there exist constants $C_a > 0$ and $\rho_a > 1$ such that the Fourier coefficients satisfy:
\[
\|P_m(a)\|_A \le C_a \rho_a^{-|m|}.
\]
\end{lemma}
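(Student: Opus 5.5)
The plan is the standard contour-shift argument for Fourier coefficients of functions holomorphic in a strip, modelled on the integration-by-parts proof of Lemma \ref{Rapid decay for smooth vectors}.

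Fix $a \in A_{\omega}$. By definition there is some $n$ with $a \in B_n$. Hence $f_a(z) = \alpha_z(a)$ extends holomorphically to $S_{1/n}$, and $M_a := \sup_{z \in S_{1/n}} \|\alpha_z(a)\|_A \le \|a\|_n < \infty$. The extension is $2\pi$-periodic in $\operatorname{Re} z$: the function $z \mapsto f_a(z+2\pi) - f_a(z)$ is holomorphic on the strip and vanishes on $\mathbb{R}$, so it vanishes identically by the identity theorem.

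Choose $0 < s < 1/n$, say $s = 1/(2n)$. For $m > 0$, consider the holomorphic $A$-valued function $g(z) := e^{-imz} f_a(z)$ on $S_{1/n}$. Integrate it over the boundary of the rectangle with vertices $0$, $2\pi$, $2\pi - is$, $-is$. By Cauchy's theorem for Banach-space-valued holomorphic functions (apply continuous functionals and use Hahn--Banach), this contour integral is zero. The two vertical sides cancel by $2\pi$-periodicity of $g$, since $e^{-im(z+2\pi)} = e^{-imz}$. Therefore
\[
P_m(a) = \frac{1}{2\pi}\int_0^{2\pi} e^{-im(t - is)} f_a(t - is)\, dt ,
\]
and since $|e^{-im(t-is)}| = e^{-ms}$, we get $\|P_m(a)\|_A \le e^{-ms} M_a$. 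For $m < 0$, shift the contour upward to $\operatorname{Im} z = +s$ instead, which gives the factor $e^{-|m|s}$. The case $m = 0$ is trivial, since $\|P_0(a)\|_A \le \|a\|_A \le M_a$.

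This yields the claim with $C_a := \|a\|_n$ and $\rho_a := e^{1/(2n)} > 1$. The only point needing care is justifying Cauchy's theorem and the periodic extension in the vector-valued setting. Both follow by composing with arbitrary $\varphi \in A'$, reducing to scalar complex analysis, and then using that $A'$ separates points. No step here looks like a serious obstacle.
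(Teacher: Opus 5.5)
Your proof is correct and follows the paper's argument: use Cauchy's theorem and $2\pi$-periodicity to shift the Fourier integral to the line $\operatorname{Im} z = -s$ for $m \ge 0$ and to $\operatorname{Im} z = +s$ for $m < 0$, which gives $\|P_m(a)\|_A \le e^{-|m|s}\sup\|\alpha_z(a)\|_A$. Your version is slightly more careful than the paper's: the paper bounds by $\sup_{z\in S_{r_a}}\|f(z)\|_A$ over the whole open strip, which can be infinite, whereas you take $s$ strictly inside a strip on which $f_a$ is bounded (via $a \in B_n$), and you also justify the periodicity of the extension and the vector-valued Cauchy theorem.
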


\begin{proof}
Since $a \in A_{\omega}$, there exists $r_a > 0$ such that $f(z) = \alpha_z(a)$ is holomorphic on $S_{r_a}$. By Cauchy’s theorem and the $2\pi$-periodicity of $f(z)$, we can shift the integration contour from the real axis to any path $t \mapsto t + iy$ for $|y| < r_a$:
\[
P_m(a) = \frac{1}{2\pi} \int_0^{2\pi} e^{-im(t+iy)}f(t+iy) \, dt = \frac{e^{my}}{2\pi} \int_0^{2\pi} e^{-imt}f(t+iy) \, dt.
\]
To ensure decay, we choose $y$ such that $my$ is negative. Let $r'$ be a number such that $0 < r' < r_a$. Let $M = \sup_{z \in S_{r_a}}\|f(z)\|_A$.
\begin{enumerate}
    \item If $m \ge 0$: We choose $y = -r'$. Then $e^{my} = e^{-mr'}$.
    \item If $m < 0$: We choose $y = r'$. Then $e^{my} = e^{mr'} = e^{-|m|r'}$.
\end{enumerate}
In both cases, we take the modulus:
\[
\|P_m(a)\|_A \le \frac{e^{-|m|r'}}{2\pi} \int_0^{2\pi} \|f(t \pm ir')\|_A \, dt \le e^{-|m|r'}M.
\]
Setting $C_a = M$ and $\rho_a = e^{r'} > 1$ yields the desired exponential decay.
\end{proof}

\subsubsection{Convergence of partial sums}
The partial sums $S_N(a) := \sum_{|m| \le N} P_m(a)$ converge to $a$ in the Silva topology of $A_{\omega}$.

\begin{proof}
Since $a \in A_{\omega}$, there exists an index $k$ such that $a \in B_k$. We show that the sequence $S_N(a)$ converges to $a$ in the Banach space norm $\| \cdot \|_k$. The error is given by the tail of the series:
\[
\|a - S_N(a)\|_k = \left\| \sum_{|m| > N} P_m(a) \right\|_k.
\]
By the definition of the $\| \cdot \|_k$ norm, this is:
\[
\|a - S_N(a)\|_k = \sum_{j=0}^k \sup_{z \in S_{1/k}} \left\| \frac{d^j}{dz^j} \sum_{|m| > N} e^{imz}P_m(a) \right\|_A.
\]
Since the terms converge absolutely and uniformly on $S_{1/k}$, we interchange the derivative and summation:
\[
\|a - S_N(a)\|_k \le \sum_{j=0}^k \sum_{|m| > N} \sup_{z \in S_{1/k}} \| (im)^j e^{imz} P_m(a) \|_A.
\]
Using Lemma \ref{5.11}, we have $\|P_m(a)\|_A \le C_a \rho_a^{-|m|}$ and $|e^{imz}| \le e^{|m| \cdot |\text{Im}(z)|} \le e^{|m|/k}$.
\[
\|a - S_N(a)\|_k \le C_a \sum_{j=0}^k \sum_{|m| > N} |m|^j e^{|m|/k} \rho_a^{-|m|}.
\]
We require the term $e^{|m|/k}\rho_a^{-|m|} = (e^{1/k}\rho_a^{-1})^{|m|} = (\rho')^{-|m|}$ to be a decay factor. Since $\rho_a > 1$, we can always choose $k$ large enough such that $\rho_a > e^{1/k}$, ensuring $\rho' > 1$. The polynomial factor $|m|^j$ is overcome by the exponential decay of $\rho^{-|m|}$. Thus, the series converges absolutely in $B_k$, and the tail $\sum_{|m| > N}$ converges to 0 as $N \rightarrow \infty$. This proves convergence in the $B_k$ norm and, consequently, in the Silva topology of $A_{\omega}$.
\end{proof}

\subsection{Construction of Manin triples from $C^*$-algebras}\label{Construction of Manin triples from $C^*$-algebras}

\begin{theorem}
Let $A$ be a unital $C^*$-algebra equipped with a strongly continuous circle action $\alpha: \mathbb{S}^1 \rightarrow \text{Aut}(A)$ and a faithful, tracial, $\alpha$-invariant state $\omega: A \rightarrow \mathbb{C}$. Let $\pi_0: A_{\infty} \rightarrow A_{\infty}^0$ be the projection onto the zero-weight elements, given by
\[
\pi_0(a) := \frac{1}{2\pi} \int_0^{2\pi} \alpha_t(a) \, dt.
\]
Consider the Lie algebra $\mathfrak{g} := A_{\infty} \oplus A_{\infty}$ equipped with the componentwise Lie bracket $[(x_1, x_2), (y_1, y_2)] := ([x_1, y_1], [x_2, y_2])$ and the symmetric bilinear form
\[
\langle\langle (a, a'), (x, y) \rangle\rangle := \omega(ax) - \omega(a'y).
\]
Define the Lie subalgebras $\mathfrak{p}_1, \mathfrak{p}_2 \subset \mathfrak{g}$ by
\begin{align*}
\mathfrak{p}_1 &:= \{(a, a) \mid a \in A_{\infty}\}, \\
\mathfrak{p}_2 &:= \{(x, y) \mid x \in A_{\infty}^{\ge 0}, \, y \in A_{\infty}^{\le 0}, \, \pi_0(x) + \pi_0(y) = 0\},
\end{align*}
where $A_{\infty}^{\ge 0} := \bigoplus_{m \ge 0} A_{\infty}^m$ and $A_{\infty}^{\le 0} := \bigoplus_{m \le 0} A_{\infty}^m$. 

If the zero-weight subspace $A_{\infty}^0$ is abelian with respect to the canonical commutator bracket (i.e., $[a, b] = 0$ for all $a, b \in A_{\infty}^0$), then $(\mathfrak{g}, \mathfrak{p}_1, \mathfrak{p}_2)$ is a Manin triple.
\end{theorem}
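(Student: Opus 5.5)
We verify the defining properties of a Manin triple for $(\mathfrak{g}, \mathfrak{p}_1, \mathfrak{p}_2)$ one at a time: the form is symmetric and invariant, $\mathfrak{p}_1,\mathfrak{p}_2$ are subalgebras, $\mathfrak{g}=\mathfrak{p}_1\oplus\mathfrak{p}_2$ topologically, both are isotropic, and the form is weakly non-degenerate. The boundedness of $(\alpha,\beta)\mapsto[\alpha,\beta]$ is automatic, since multiplication in $A_\infty$ is continuous for the Fréchet topology and $A_\infty$ is bornological.

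\emph{Algebraic properties.} Symmetry of $\langle\langle\cdot,\cdot\rangle\rangle$ follows from the trace property $\omega(ax)=\omega(xa)$. Invariance $\langle\langle [u,v],w\rangle\rangle=\langle\langle u,[v,w]\rangle\rangle$ holds componentwise, because $\omega([a,b]c)=\omega(a[b,c])$ after expanding and using cyclicity. The diagonal $\mathfrak{p}_1$ is clearly a subalgebra. For $\mathfrak{p}_2$ we use that $\alpha_t$ acts by algebra automorphisms, so $A^m_\infty A^n_\infty\subseteq A^{m+n}_\infty$; hence $A^{\ge0}_\infty$ and $A^{\le0}_\infty$ are subalgebras, and $\pi_0$ restricted to $A^{\ge0}_\infty$ is multiplicative modulo positive weights. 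For $(x,y),(x',y')\in\mathfrak{p}_2$ we then have $\pi_0[x,x']=[\pi_0x,\pi_0x']=0$ and likewise $\pi_0[y,y']=0$, using that $A^0_\infty$ is abelian. So $\mathfrak{p}_2$ is closed under the bracket.

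\emph{Isotropy.} For $\mathfrak{p}_1$ we have $\omega(ab)-\omega(ab)=0$. For $\mathfrak{p}_2$, the $\alpha$-invariance of $\omega$ gives $\omega(A^m)=0$ for $m\neq0$, since $\omega(a)=\omega(\alpha_t a)=e^{imt}\omega(a)$. Consequently $\omega(xx')=\omega(\pi_0x\,\pi_0x')$ for $x,x'\in A^{\ge0}_\infty$, because the only weight-zero contributions come from the zero modes; the same holds for $y,y'\in A^{\le0}_\infty$. Writing $h=\pi_0x=-\pi_0y$ and $h'=\pi_0x'=-\pi_0y'$, we get $\omega(hh')-\omega(hh')=0$. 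To make "only zero modes contribute" rigorous, we expand $x,x'$ in the convergent Fourier series of Section \ref{Partial sum convergence for smooth vectors} and use continuity of multiplication and of $\omega$.

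\emph{Direct sum and non-degeneracy.} This is the main step. Given $(u,v)\in\mathfrak{g}$, we seek $a$ and $(x,y)\in\mathfrak{p}_2$ with $a+x=u$ and $a+y=v$. We set $x=P_{>0}(u-v)+h$ and $y=-P_{<0}(u-v)-h$ with $h=\tfrac12\pi_0(u-v)$, and $a=u-x$. Then $x-y=u-v$, as required, and $\pi_0x+\pi_0y=0$. For uniqueness, note that $\mathfrak{p}_1\cap\mathfrak{p}_2=\{(a,a):a\in A^0_\infty,\ 2a=0\}=0$. Continuity of the splitting needs $P_{>0}$ and $P_{<0}$ to be continuous on $A_\infty$. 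This is the delicate point: we get it from the rapid decay estimate of Lemma \ref{Rapid decay for smooth vectors}, which bounds $\|P_{>0}a\|_{(k)}$ by $\sum_m |m|^j\|P_m a\|$, hence by a constant times $\|a\|_{(k+2)}$.

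Weak non-degeneracy reduces to that of $(a,x)\mapsto\omega(ax)$ on $A_\infty$. If $\omega(ax)=0$ for all $x\in A_\infty$, then by density of $A_\infty$ in $A$ (it contains $\bigoplus A_m$) and continuity of $\omega$, we may take $x\to a^*$. This gives $\omega(aa^*)=0$, and faithfulness yields $a=0$.
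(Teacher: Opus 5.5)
Your proposal is correct and takes essentially the same route as the paper. You check the subalgebra property through additivity of weights and $\pi_0[x,x']=[\pi_0x,\pi_0x']=0$, isotropy through $\omega(xx')=\omega(\pi_0x\,\pi_0x')$, invariance through the trace property, and non-degeneracy through faithfulness, and your splitting $a=u-x$ is the paper's $a=P_{+}v+\tfrac12\pi_0(u+v)+P_{-}u$. Where the paper only asserts continuity of $P_{\pm}$ and uniqueness of the decomposition, you supply both: the bound $\|P_{>0}a\|_{(k)}\le C\|a\|_{(k+2)}$ from the rapid-decay lemma, and $\mathfrak{p}_1\cap\mathfrak{p}_2=0$.
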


\begin{proof}
We must verify that the triple $(\mathfrak{g}, \mathfrak{p}_1, \mathfrak{p}_2)$ satisfies the four defining conditions of a Manin triple:

(1) \textit{Lie subalgebras}. For $\mathfrak{p}_1 = \{(a, a) \mid a \in A_{\infty}\}$, the componentwise bracket yields
\[
[(a, a), (b, b)] = ([a, b], [a, b]) \in \mathfrak{p}_1,
\]
thus $\mathfrak{p}_1$ is closed under the Lie bracket. For $\mathfrak{p}_2$, let $(x_1, y_1), (x_2, y_2) \in \mathfrak{p}_2$. The bracket is
\[
[(x_1, y_1), (x_2, y_2)] = ([x_1, x_2], [y_1, y_2]).
\]
Writing $x_1 = \sum_{m \ge 0} x_{1,m}$ and $x_2 = \sum_{n \ge 0} x_{2,n}$ with elements in the respective weight spaces $A_{\infty}^m$ and $A_{\infty}^n$, we have
\[
[x_1, x_2] = \sum_{m, n \ge 0} [x_{1,m}, x_{2,n}].
\]
Because the circle action acts via $\alpha_t([x_{1,m}, x_{2,n}]) = e^{i(m+n)t}[x_{1,m}, x_{2,n}]$ and $m+n \ge 0$, we conclude that $[x_1, x_2] \in A_{\infty}^{\ge 0}$. By identical logic, $[y_1, y_2] \in A_{\infty}^{\le 0}$. Furthermore, because $[A_{\infty}^0, A_{\infty}^0] = 0$, the projection of the bracket onto the zero-weight space vanishes: $\pi_0([x_1, x_2]) = [\pi_0(x_1), \pi_0(x_2)] = 0$. Hence, $\pi_0([x_1, x_2]) + \pi_0([y_1, y_2]) = 0 + 0 = 0$, proving $\mathfrak{p}_2$ is a closed Lie subalgebra.

(2) \textit{Isotropy}. For any two elements $(a, a), (b, b) \in \mathfrak{p}_1$, the bilinear form gives
\[
\langle\langle (a, a), (b, b) \rangle\rangle = \omega(ab) - \omega(ab) = 0.
\]
For $\mathfrak{p}_2$, let $(x, y), (x', y') \in \mathfrak{p}_2$. Then $x, x' \in A_{\infty}^{\ge 0}$ and $y, y' \in A_{\infty}^{\le 0}$, and we evaluate
\[
\langle\langle (x, y), (x', y') \rangle\rangle = \omega(xx') - \omega(yy').
\]
Because the state $\omega$ is $\alpha$-invariant, $\omega(a) = \omega(\pi_0(a))$ for all $a \in A_\infty$). Since $x$ and $x'$ only contain weights $m, n \ge 0$, their product $xx'$ only has a zero-weight component if $m=0$ and $n=0$. Therefore, $\pi_0(xx') = \pi_0(x)\pi_0(x')$, which implies $\omega(xx') = \omega(\pi_0(x)\pi_0(x'))$. Similarly, $\omega(yy') = \omega(\pi_0(y)\pi_0(y'))$. Since $\pi_0(x) = -\pi_0(y)$ and $\pi_0(x') = -\pi_0(y')$, we obtain
\[
\omega(\pi_0(x)\pi_0(x')) - \omega(\pi_0(y)\pi_0(y')) = \omega((-\pi_0(y))(-\pi_0(y'))) - \omega(\pi_0(y)\pi_0(y')) = 0.
\]
Therefore, both $\mathfrak{p}_1$ and $\mathfrak{p}_2$ are isotropic.

(3) \textit{Invariance under the adjoint action}. The bilinear form is invariant if for all $(u, v), (a, b), (x, y) \in \mathfrak{g}$, we have
\[
\langle\langle [(u, v), (a, b)], (x, y) \rangle\rangle + \langle\langle (a, b), [(u, v), (x, y)] \rangle\rangle = 0.
\]
Expanding the brackets and the bilinear form yields
\[
\omega([u, a]x) - \omega([v, b]y) + \omega(a[u, x]) - \omega(b[v, y]).
\]
Because the state $\omega$ is tracial, we can cyclically permute the elements: $\omega([u,a]x) = \omega(uax) - \omega(aux)$ and $\omega(a[u,x]) = \omega(aux) - \omega(axu) = \omega(aux) - \omega(uax)$. Summing these two terms gives zero. The same cancellation occurs also for the $v, b, y$ terms, proving invariance.

(4) \textit{Weak non-degeneracy and topological direct sum}. Weak non-degeneracy of the pairing follows directly from the faithfulness of the state $\omega$. Indeed, if $\omega(ax) - \omega(by) = 0$ for all $(x, y) \in \mathfrak{g}$, setting $y=0$ and $x=a^*$ gives $\omega(aa^*) = 0$, which implies $a=0$ (and symmetrically $b=0$). 

For the topological direct sum, we must show that any element $(u, v) \in A_{\infty} \oplus A_{\infty}$ decomposes uniquely into $(a, a) \in \mathfrak{p}_1$ and $(x, y) \in \mathfrak{p}_2$. Let $P_+ = \sum_{m > 0}$ and $P_- = \sum_{m < 0}$ denote the projections onto the strictly positive and strictly negative weight spaces. The unique decomposition is given explicitly by
\[
a = P_+ v + \frac{1}{2}\pi_0(u+v) + P_- u,
\]
\[
x = P_+ (u-v) + \frac{1}{2}\pi_0(u-v),
\]
\[
y = \frac{1}{2}\pi_0(v-u) + P_- (v-u).
\]
Note that, $x \in A_{\infty}^{\ge 0}$ and $y \in A_{\infty}^{\le 0}$. Moreover, $\pi_0(x) + \pi_0(y) = \frac{1}{2}\pi_0(u-v) + \frac{1}{2}\pi_0(v-u) = 0$, so $(x,y) \in \mathfrak{p}_2$. Furthermore, $a + x = u$ and $a + y = v$. Both $\mathfrak{p}_1$ and $\mathfrak{p}_2$ are closed subspaces because the projections $P_+, P_-,$ and $\pi_0$ are continuous.
\end{proof}

\subsection{Cobracket formula}
\begin{proposition}[Cobracket on $\mathfrak{p}_1$]
Let $(\mathfrak{p}_1, \mathfrak{p}_2, \delta)$ be the Lie bialgebra structure on $\mathfrak{p}_1 \cong A_{\infty}$ induced from the Manin triple. For $(X, X) \in \mathfrak{p}_1$ with Fourier expansion $X(\theta) = \sum_k X_k e^{ik\theta}$, the cobracket acts on pairs $(\xi, \eta), (\zeta, \chi) \in \mathfrak{p}_2$ as
\[
\delta(X)((\xi, \eta), (\zeta, \chi)) = \frac{1}{2\pi} \int_0^{2\pi} \left( \omega(X(\theta)[\xi(\theta), \zeta(\theta)]) - \omega(X(\theta)[\eta(\theta), \chi(\theta)]) \right) \, d\theta.
\]
In Fourier modes, this becomes
\[
\delta(X)((\xi, \eta), (\zeta, \chi)) = \sum_{k+m+p=0} \omega(X_k[\xi_m, \zeta_p]) - \sum_{k+n+q=0} \omega(X_k[\eta_n, \chi_q]),
\]
where $\xi = \sum_{m \ge 0} \xi_m e^{im\theta}$, $\eta = \sum_{n \le 0} \eta_n e^{in\theta}$, $\zeta = \sum_{p \ge 0} \zeta_p e^{ip\theta}$, $\chi = \sum_{q \le 0} \chi_q e^{iq\theta}$.
\end{proposition}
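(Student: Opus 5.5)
The plan is to compute $\delta$ directly from its Manin-triple definition. The bilinear form $\langle\langle-,-\rangle\rangle$ identifies $\mathfrak{p}_2$ with a point-separating subspace of $\mathfrak{p}_1'$. The Manin triple theorem recalled above (\cite[Theorem 4.9]{32}) says the cobracket is dual to the bracket of $\mathfrak{p}_2$:
\[
\delta(X)(\beta_1,\beta_2)=\langle\langle (X,X),[\beta_1,\beta_2]_{\mathfrak{p}_2}\rangle\rangle ,\qquad \beta_1,\beta_2\in\mathfrak{p}_2 .
\]
Since $\mathfrak{p}_2$ is a Lie subalgebra of $\mathfrak{g}=A_\infty\oplus A_\infty$ (step (1) of the preceding theorem), its bracket is the componentwise commutator. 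With $\beta_1=(\xi,\eta)$ and $\beta_2=(\zeta,\chi)$, the form gives
\[
\delta(X)((\xi,\eta),(\zeta,\chi))=\omega(X[\xi,\zeta])-\omega(X[\eta,\chi]).
\]
This is the first key step; it uses only the definition of $\langle\langle-,-\rangle\rangle$ and the isotropy and invariance checked in the preceding theorem.

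The second step turns each $\omega$-term into the stated averaged integral over $\theta$. We interpret $X(\theta):=\alpha_\theta(X)$, and likewise $\xi(\theta)=\alpha_\theta(\xi)$ and so on, so that the Fourier expansion $X(\theta)=\sum_k X_k e^{ik\theta}$ has $X_k=P_k(X)$. The state $\omega$ is $\alpha$-invariant and $\alpha_\theta$ is an algebra automorphism. Hence
\[
\omega(X[\xi,\zeta])=\omega\bigl(\alpha_\theta(X[\xi,\zeta])\bigr)=\omega\bigl(X(\theta)[\xi(\theta),\zeta(\theta)]\bigr)
\]
for every $\theta$. Averaging over $\theta\in[0,2\pi]$ then gives the integral formula. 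Continuity of $\theta\mapsto\alpha_\theta(\cdot)$ makes the integral well defined, and continuity of $\omega$ lets us pass it inside.

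The third step is the Fourier-mode formula. Expand $X$, $\xi$, $\zeta$ as convergent series $\sum_k P_k(\cdot)$ in $A_\infty$. This uses the partial-sum convergence established for smooth vectors; the rapid decay lemma makes the triple series absolutely convergent in $A$. Multiplication in $A$ and $\omega$ are continuous, so we may expand trilinearly:
\[
\omega(X[\xi,\zeta])=\sum_{k,m,p}\omega(X_k[\xi_m,\zeta_p]).
\]
Each summand is a weight vector of weight $k+m+p$. Applying $\alpha$-invariance, $\omega(a)=e^{i(k+m+p)t}\omega(a)$ for all $t$, so the term vanishes unless $k+m+p=0$. Equivalently, $\frac{1}{2\pi}\int_0^{2\pi} e^{i(k+m+p)\theta}\,d\theta=\delta_{k+m+p,0}$. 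The constraint $m,p\ge 0$ comes from $\xi,\zeta\in A_\infty^{\ge 0}$, and $n,q\le 0$ for $\eta,\chi$ is analogous.

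The main obstacle is justifying the interchange of the infinite sums with $\omega$ and the product, i.e.\ absolute convergence of the triple series. I would handle it with the bound $\|P_m(a)\|_A\le \|a\|_{(2)}|m|^{-2}$ from Lemma~\ref{Rapid decay for smooth vectors}, which makes $\sum_{k,m,p}\|X_k\|\|\xi_m\|\|\zeta_p\|$ finite. A secondary point is fixing the sign and normalisation conventions for $\delta$ relative to the pairing. I would take them as in the Manin triple theorem, so the formula holds exactly as stated.
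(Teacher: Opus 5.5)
Your proposal is correct and follows the same route as the paper. You pair $(X,X)$ with the componentwise commutator in $\mathfrak{p}_2$ to get $\omega(X[\xi,\zeta])-\omega(X[\eta,\chi])$, then expand in weight components and discard the terms with $k+m+p\neq 0$ (resp.\ $k+n+q\neq 0$) using the $\alpha$-invariance of $\omega$. Two of your steps make explicit what the paper leaves implicit: reading $X(\theta)$ as $\alpha_\theta(X)$ to justify the integral form, and proving absolute convergence via the rapid-decay lemma.
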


\begin{proof}
The cobracket is defined dually by the pairing and bracket structure. For $(X, X) \in \mathfrak{p}_1$ and pairs of elements in $\mathfrak{p}_2$, the pairing with the commutator gives
\[
\langle\langle (X, X), [(\xi, \eta), (\zeta, \chi)] \rangle\rangle = \langle\langle (X, X), ([\xi, \zeta], [\eta, \chi]) \rangle\rangle = \omega(X[\xi, \zeta]) - \omega(X[\eta, \chi]).
\]
Expanding via Fourier modes and using weight orthogonality (terms with $k+m+p \neq 0$ vanish, and similarly for the $\eta, \chi$ component), we obtain the stated formula.
\end{proof}

\subsection{Examples of Manin triples}
\begin{example}[Matrix algebra with diagonal torus action]
Let $A = M_n(\mathbb{C})$ equipped with the normalized trace $\omega(a) = \frac{1}{n}\operatorname{Tr}(a)$ as the faithful tracial state. Consider the diagonal torus action $\alpha: \mathbb{S}^1 \rightarrow \operatorname{Aut}(A)$ given by
\[
\alpha_t(a) = \operatorname{diag}(t^{k_1}, \dots, t^{k_n}) \, a \, \operatorname{diag}(t^{-k_1}, \dots, t^{-k_n}),
\]
where $k_1, \dots, k_n \in \mathbb{Z}$. The corresponding weight spaces are given by
\[
A_m = \operatorname{span}\{E_{ij} \mid k_i - k_j = m\}.
\]
We define the positive and negative weight spaces as direct sums
\[
A_{\ge 0} := \bigoplus_{m \ge 0} A_m, \quad A_{\le 0} := \bigoplus_{m \le 0} A_m.
\]
The zero-weight space $A_0 = \operatorname{span}\{E_{ii}\}$ consists of the diagonal matrices, which satisfies $[A_0, A_0] = 0$. 

The constraint $\pi_0(X) + \pi_0(Y) = 0$ for $(X, Y) \in \mathfrak{p}_2$ means that the diagonal parts of $X$ and $Y$ must be strictly opposite. For an element $(E_{ij}, E_{ij}) \in \mathfrak{p}_1$ and $(X, Y) \in \mathfrak{p}_2$, the pairing evaluates to
\[
\langle\langle (E_{ij}, E_{ij}), (X, Y) \rangle\rangle = \omega(E_{ij}X) - \omega(E_{ij}Y) = \frac{1}{n}(X_{ji} - Y_{ji}).
\]
Using this, the cobracket formula in Fourier modes is given by evaluating the pairing on the commutator $([X, X'], [Y, Y'])$
\[
\delta(E_{ij})((X, Y), (X', Y')) = \omega(E_{ij}[X, X']) - \omega(E_{ij}[Y, Y']).
\]
\end{example}

\begin{example}[Manin triples of $\Gamma^\infty(T\mathbb{S}^1)$]
Consider the Fréchet space $C^{\infty}(\mathbb{S}^1)$ (with the compact-open $C^{\infty}$ topology) equipped with the Lie bracket
\[
[f, g] = f'g - g'f.
\]
This makes $C^{\infty}(\mathbb{S}^1)$ a topological Lie algebra. We note that the map
\[
\phi : C^{\infty}(\mathbb{S}^1) \rightarrow \Gamma^{\infty}(T\mathbb{S}^1), \quad f \mapsto X_f,
\]
defined by sending $f$ to the vector field $X_f := f \frac{d}{dt} \in \Gamma^\infty(T\mathbb{S}^1)$, is a continuous Lie algebra isomorphism. 

Let $A = C^{\infty}(\mathbb{S}^1)$ with the rotation action $(\alpha_t f)(\theta) = f(\theta - t)$. The corresponding weight spaces are given by
\[
A_m = \operatorname{span}\{e^{im\theta}\}.
\]
From this, we construct the Manin triple $(\mathfrak{g}, \mathfrak{p}_1, \mathfrak{p}_2)$ where
\begin{align*}
\mathfrak{p}_1 &= \{(f, f) \mid f \in C^{\infty}(\mathbb{S}^1)\}, \\
\mathfrak{p}_2 &= \left\{(X, Y) \mid X(\theta) = \sum_{k \ge 0} X_k e^{ik\theta}, Y(\theta) = \sum_{m \le 0} Y_m e^{im\theta}, \, X_0 + Y_0 = 0\right\}.
\end{align*}
The $0$-mode space $A_0$ represents the subspace of constant vector fields on $\mathbb{S}^1$, meaning $A_0 \cong \mathbb{C}$. Thus, $[A_0, A_0] = 0$, and $(\mathfrak{g}, \mathfrak{p}_1, \mathfrak{p}_2)$ forms a Manin triple.

The cobracket formula for $(X, X) \in \mathfrak{p}_1$ and pairs $(\xi, \eta), (\zeta, \chi) \in \mathfrak{p}_2$ is
\[
\delta(X)((\xi, \eta), (\zeta, \chi)) = \sum_{\substack{k<0, m \ge 0, p \ge 0 \\ k+m+p=0}} \omega(X_k[\xi_m, \zeta_p]) - \sum_{\substack{k>0, n \le 0, q \le 0 \\ k+n+q=0}} \omega(X_k[\eta_n, \chi_q]).
\]
The polynomial Fourier decay of smooth functions, combined with the continuity of the bilinear form $\omega$ and the Lie bracket, ensures that these formal sums converge.
\end{example}

\begin{example}[Manin triple of $\Gamma^\omega(T\mathbb{S}^1)$]
For $A = C^{\omega}(\mathbb{S}^1)$, we note that the torus action, the Manin triple, and the cobracket formula are identical to the smooth case. We have the following bound
\[
\|X_k\|_A \le C\rho^{-k}, \quad \|Y_m\|_A \le C\rho^{-|m|} \quad \text{for } \rho > 1, C\geq0.
\]
Thus, the sum in the cobracket formula converges absolutely
\[
\sum_{\substack{k<0, m \ge 0, p \ge 0 \\ k+m+p=0}} \|[\xi_m, \zeta_p]\|_A \cdot \|X_k\|_A \le C \sum_{k<0} \rho^{-|k|} < \infty.
\]
\end{example}

\begin{example}[Manin triple of $C^{\infty}(\mathbb{S}^1, \mathfrak{g})$]
(See also \cite[Example 3]{8}). Let $A = C^{\infty}(\mathbb{S}^1, \mathfrak{g})$. From the root space decomposition of the finite-dimensional Lie algebra $\mathfrak{g}$, we have
\[
\mathfrak{g} = \mathfrak{h} \oplus \bigoplus_{\alpha \in \Phi} \mathfrak{g}_{\alpha},
\]
where $\mathfrak{h}$ is a Cartan subalgebra (the maximal commuting subalgebra), and $\Phi \subset \mathfrak{h}'$ is the set of roots. Let $\langle \cdot, \cdot \rangle_{\mathfrak{g}} : \mathfrak{g} \times \mathfrak{g} \rightarrow \mathbb{K}$ be the Killing form on $\mathfrak{g}$. We define the invariant, non-degenerate symmetric bilinear form on $A$ via the Killing form
\[
\langle\langle f, g \rangle\rangle_A := \frac{1}{2\pi}\int_0^{2\pi} \langle f(t), g(t) \rangle_{\mathfrak{g}} \, dt.
\]
Let us define the function spaces
\begin{align*}
A_0 &:= C^{\infty}(\mathbb{S}^1, \mathfrak{h}), \\
A_{\alpha} &:= C^{\infty}(\mathbb{S}^1, \mathfrak{g}_{\alpha}).
\end{align*}
Then we have the following weight space decomposition
\[
C^{\infty}(\mathbb{S}^1, \mathfrak{g}) = A_0 \oplus \bigoplus_{\alpha \in \Phi} A_{\alpha}.
\]
Fixing a choice of positive roots $\Phi^+ \subset \Phi$ and negative roots $\Phi^-$, let us define
\begin{align*}
A_{\ge 0} &:= A_0 \oplus \bigoplus_{\alpha \in \Phi^+} A_{\alpha}, \\
A_{\le 0} &:= A_0 \oplus \bigoplus_{\alpha \in \Phi^-} A_{\alpha}.
\end{align*}
We define the Lie algebra $\mathfrak{p} := A \oplus A$, and its subalgebras
\begin{align*}
\mathfrak{p}_1 &:= \{(x, x) \mid x \in A\}, \\
\mathfrak{p}_2 &:= \{(x, y) \mid x \in A_{\ge 0}, \, y \in A_{\le 0}, \, \pi_0(x) + \pi_0(y) = 0\},
\end{align*}
where $\pi_0: A \rightarrow A_0$ is the canonical projection onto the Cartan component. We observe that $[A_0, A_0]_A = 0$ since $\mathfrak{h}$ is abelian and the Lie bracket on $A$ is defined pointwise. The inclusion $[A_{\alpha}, A_{\beta}] \subseteq A_{\alpha+\beta}$ follows from the identity $[\mathfrak{g}_{\alpha}, \mathfrak{g}_{\beta}] \subseteq \mathfrak{g}_{\alpha+\beta}$. Equipped with the split pairing 
\[
\langle\langle (x, x'), (y, y') \rangle\rangle_{\mathfrak{p}} := \langle\langle x, y \rangle\rangle_A - \langle\langle x', y' \rangle\rangle_A,
\]
the triple $(\mathfrak{p}, \mathfrak{p}_1, \mathfrak{p}_2)$ satisfies the required hypotheses of a weight space decomposition (Definition \ref{Weight space decomposition}) and thus forms a Manin triple by Theorem \ref{Weight space decomposition-manin}.
\end{example}

\begin{example}[Manin triple of $C^{\omega}(\mathbb{S}^1, \mathfrak{g})$]
Let $A = C^{\omega}(\mathbb{S}^1, \mathfrak{g})$ be the space of analytic loops taking values in $\mathfrak{g}$. We define the corresponding root spaces for these analytic loops as $A_{\alpha} := C^{\omega}(\mathbb{S}^1, \mathfrak{g}_{\alpha})$. Following a construction perfectly analogous to the smooth case, we define the subspaces $\mathfrak{p}, \mathfrak{p}_1,$ and $\mathfrak{p}_2$. By verifying the weight space decomposition and the invariance of the bilinear form in the analytic setting, we conclude that $(\mathfrak{p}, \mathfrak{p}_1, \mathfrak{p}_2)$ constitutes a Manin triple.
\end{example}
\section{Integration of Lie bialgebras}\label{6}
Let $G$ be a 1-connected regular convenient Lie group with Lie algebra $\mathfrak{g}$. Let $\mathfrak{b} \subseteq \mathfrak{g}'$ be a convenient subspace such that the coadjoint action $\operatorname{Ad}^* : G \times \mathfrak{g}' \to \mathfrak{g}'$ restricts to a smooth action on $\mathfrak{b}$. Differentiating this restricted action ensures that the infinitesimal coadjoint action $\operatorname{ad}^*$ of $\mathfrak{g}$ on $\mathfrak{b}$ is also well-defined and smooth. Consequently, these naturally induce well-defined, smooth adjoint actions $\operatorname{Ad}^{(2)}$ of $G$ on the space of skew-symmetric bilinear forms $L^2_{\text{skew}}(\mathfrak{b})$ and the $\operatorname{ad}^{(2)}$ action of the Lie algebra $\mathfrak{g}$ on $L^2_{\text{skew}}(\mathfrak{b})$.

Now, suppose $(\mathfrak{g}, \mathfrak{b}, \delta)$ is a convenient Lie bialgebra structure. By definition, the map $\delta: \mathfrak{g} \to L^2_{\text{skew}}(\mathfrak{b})$ is a 1-cocycle relative to the $\operatorname{ad}^{(2)}$ action. Applying Lemma \ref{4.5}, this 1-cocycle condition is strictly equivalent to the statement that the map
\[
\tilde{\delta}: \mathfrak{g} \to \mathfrak{g} \rtimes L^2_{\text{skew}}(\mathfrak{b}), \quad X \mapsto (X, \delta(X))
\]
is a bounded Lie algebra homomorphism into the semidirect product Lie algebra. 

Because the action $\operatorname{Ad}^{(2)}$ of $G$ on $L^2_{\text{skew}}(\mathfrak{b})$ is smooth, we can form the semidirect product Lie group $G \rtimes L^2_{\text{skew}}(\mathfrak{b})$. Furthermore, the regularity of $G$ and $L^2_{\text{skew}}(\mathfrak{b})$ ensure that $G \rtimes L^2_{\text{skew}}(\mathfrak{b})$ is the regular Lie group (by \ref{regularsemidirect}) corresponding to the Lie algebra $\mathfrak{g} \rtimes L^2_{\text{skew}}(\mathfrak{b})$. Since $G$ is 1-connected, we can then apply Proposition \ref{proposition 4.10} to integrate the Lie algebra homomorphism $\tilde{\delta}$. This yields a unique smooth Lie group homomorphism
\[
\tilde{\theta}: G \to G \rtimes L^2_{\text{skew}}(\mathfrak{b}).
\]

By Lemma \ref{4.4}, the second component of this Lie group homomorphism defines a smooth group 1-cocycle $\Theta: G \to L^2_{\text{skew}}(\mathfrak{b})$ relative to the $\operatorname{Ad}^{(2)}$ action (for which we derived the path-integral formula \eqref{1-cocycle}). 

This group 1-cocycle allows us to define a unique multiplicative bivector field $\pi$ by translating the values of the cocycle across the group. Specifically, $\pi \in \Gamma(L^2_{\text{skew}}(\mathbb{F}_{\mathfrak{b}}))$ is generated by setting $\pi(g) := R_{g}^{**}\Theta(g)$, where $\mathbb{F}_{\mathfrak{b}}$ is the vector bundle over $G$ whose fibers are generated by translating $\mathfrak{b}$ via the right action of $G$ (i.e., $\mathbb{F}_{\mathfrak{b}}(g) := R_{g^{-1}}^*\mathfrak{b}$). 

To establish that $(G, \mathbb{F}_{\mathfrak{b}}, \pi)$ constitutes a Poisson Lie group structure, it only remains to be shown that the integrated multiplicative bivector $\pi$ satisfies the Jacobi identity. We show that such $\pi$ satisfy the Jacobi idenity in the next section.

\subsection{The Jacobi identity}

\begin{proposition}
Let $G$ be a 1-connected regular Lie group with Lie algebra $\mathfrak{g}$ and let $(\mathfrak{g}, \mathfrak{b},\delta)$ be a Lie bialgebra structure as described in the previous section, which integrates to a multiplicative bivector $\pi \in \Gamma(L^2_{\text{skew}}(\mathbb{F}_{\mathfrak{b}}))$. Then, $\pi$ satisfies the Jacobi identity (condition (2) of Definition \ref{3.2}).
\end{proposition}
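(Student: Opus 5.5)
We reduce the Jacobi identity for $\pi$ to the Jacobi identity of the bracket $[-,-]_{\mathfrak b}=\delta'|_{\mathfrak b\wedge\mathfrak b}$. We test condition (2) of Definition \ref{3.2} only on right-invariant sections, and use multiplicativity to propagate from the identity.

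First we fix right-invariant sections $\tilde\alpha,\tilde\beta,\tilde\gamma\in\Gamma(\mathbb F_{\mathfrak b})$ extending $\alpha,\beta,\gamma\in\mathfrak b$. The functions $\pi(\tilde\alpha,\tilde\beta)(g)=\Theta(g)(\alpha,\beta)$ are smooth, because $\Theta$ is smooth. We must check that their differentials factor through $\mathbb F_{\mathfrak b}$. For this we differentiate the cocycle identity $\Theta(gh)=\Theta(g)+\operatorname{Ad}^{(2)}_g\Theta(h)$ in $h$ at $h=e$. This gives
\[
d\Theta(g)(R_{g*}\ldots)\ \text{or}\ d\Theta(g)(L_{g*}X)=\operatorname{Ad}^{(2)}_g\delta(X),
\]
so $d(\Theta(\cdot)(\alpha,\beta))(g)\circ L_{g*}$ is the functional $X\mapsto \delta(X)(\operatorname{Ad}^*_{g^{-1}}\alpha,\operatorname{Ad}^*_{g^{-1}}\beta)=\langle X,[\operatorname{Ad}^*\alpha,\operatorname{Ad}^*\beta]_{\mathfrak b}\rangle$, which lies in $\mathfrak b$. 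Hence $d\pi(\tilde\alpha,\tilde\beta)$ is a left translate of an element of $\mathfrak b$. Since $\operatorname{Ad}^*$ preserves $\mathfrak b$, left and right translates of $\mathfrak b$ give the same fiber $\mathbb F_{\mathfrak b}(g)$. Smoothness of $g\mapsto \widetilde{d(\pi(\tilde\alpha,\tilde\beta))}(g)$ follows from smoothness of $\operatorname{Ad}^*$ on $\mathfrak b$ and boundedness of the bracket. This establishes condition (1) for right-invariant sections. For general closed sections we write them locally via a smooth frame $g\mapsto R^*_{g^{-1}}$, so that $\alpha(g)=R^*_{g^{-1}}a(g)$ with $a\in C^\infty(G,\mathfrak b)$, and use the Leibniz rule.

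Next we handle the Jacobiator. We define $J(\alpha,\beta,\gamma)$ as the cyclic sum in condition (2), and study it first on right-invariant sections. The key claim is that $J$ viewed as a trivector is itself multiplicative. This is the standard finite-dimensional fact that $[\pi,\pi]$ is multiplicative when $\pi$ is. We verify it directly by writing $J(g)=\operatorname{Ad}$-translates of cyclic sums of $\Theta$ and $d\Theta$. Concretely, with $\Lambda(g):=R^{***}_{g^{-1}}J(g)\in L^3_{skew}(\mathfrak b)$, a computation using the two formulas above and the cocycle identity shows that $\Lambda$ is a group $1$-cocycle for $\operatorname{Ad}^{(3)}$. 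By the uniqueness part of Proposition \ref{proposition 4.10} (via Lemma \ref{4.4} with $L=L^3_{skew}(\mathfrak b)$, and using that $G$ is $1$-connected), $\Lambda$ vanishes identically if $d\Lambda(e)=0$. Since $\Theta(e)=0$, we get
\[
d\Lambda(e)(X)(\alpha,\beta,\gamma)=\langle X,[\alpha,[\beta,\gamma]_{\mathfrak b}]_{\mathfrak b}+\text{cyc.}\rangle .
\]
This vanishes because $[-,-]_{\mathfrak b}$ is a Lie bracket. Therefore $J=0$ on right-invariant sections.

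Finally we pass to arbitrary closed local sections. $J$ is tensorial in closed sections: the Leibniz computation leaves terms involving $d a$, and these cancel exactly as in the finite-dimensional proof that the Schouten bracket is a tensor. Pointwise, any covector in $\mathbb F_{\mathfrak b}(g)$ is the value of a right-invariant section, so $J=0$ everywhere.

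The main obstacle is the multiplicativity/cocycle step for $\Lambda$. It requires computing the second derivative of $\Theta$ and showing that the extra terms organize into $\operatorname{Ad}^{(3)}_g\Lambda(h)$. To get this I would first try to exploit the fact that $d\Theta$ is expressed through $\delta$ and $\operatorname{Ad}^*$, together with the equivariance of $\delta'$ inherited from the cocycle property.
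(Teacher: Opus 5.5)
Your verification of condition (1) on right-invariant sections, obtained by differentiating the cocycle identity, is correct, and so is your value of $d\Lambda(e)$. The gap is the step you flag yourself: the claim that $\Lambda(g)(\alpha,\beta,\gamma):=J(\tilde\alpha,\tilde\beta,\tilde\gamma)(g)$ is an $\operatorname{Ad}^{(3)}$-cocycle. This claim is false, not merely unproved. Right-invariant sections are not closed, and the cyclic sum of condition (2) is not tensorial off closed sections.

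Differentiating $\Theta(hg)=\Theta(h)+\operatorname{Ad}^{(2)}_h\Theta(g)$ in $h$ gives $d\Theta(g)(R_{g*}X)=\delta(X)+\operatorname{ad}^{(2)}_X\Theta(g)$. Hence
\[
\Lambda(g)(\alpha,\beta,\gamma)=\sum_{\mathrm{cyc}}\Theta(g)\bigl(\alpha,[\beta,\gamma]_{\mathfrak b}\bigr)+\sum_{\mathrm{cyc}}\Theta(g)\bigl(\alpha,\;X\mapsto(\operatorname{ad}^{(2)}_X\Theta(g))(\beta,\gamma)\bigr),
\]
and the second sum is quadratic in $\Theta(g)$. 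If $\Lambda$ were a cocycle, $d\Lambda(e)=0$ would force $\Lambda\equiv0$ on the connected group $G$. It is not zero, already in finite dimensions with $\mathfrak b=\mathfrak g'$.

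There, for arbitrary $1$-forms,
\[
J(\alpha,\beta,\gamma)=T(\alpha,\beta,\gamma)+\sum_{\mathrm{cyc}}d\alpha(\rho\beta,\rho\gamma),\qquad T(\alpha,\beta,\gamma)=\gamma([\rho\alpha,\rho\beta])-\gamma(\rho[\alpha,\beta]_\rho).
\]
Here $T$ is the genuine tensor ($\pm\tfrac12[\pi,\pi]$), and $[\cdot,\cdot]_\rho$ is the bracket from the paper's Lie algebroid remark. On right-invariant forms the correction term is, up to sign, $\sum_{\mathrm{cyc}}\alpha\bigl([\Theta(g)(\beta,\cdot),\Theta(g)(\gamma,\cdot)]_{\mathfrak g}\bigr)$, which is a multiple of $[\Theta(g),\Theta(g)]_{\mathfrak g}$.

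Take the $1$-connected group $SU(2)$ with $\Theta(g)=\operatorname{Ad}_g r-r$ and $r=e_1\wedge e_2$. This is Poisson, because $[r,r]\in\wedge^3\mathfrak{su}(2)$ is automatically $\operatorname{Ad}$-invariant. So $T=0$, yet $\Lambda(\exp(\phi e_1))$ is a nonzero multiple of $(1-\cos\phi)\,e_1\wedge e_2\wedge e_3$. Thus ``$J=0$ on right-invariant sections'' fails for a genuine Poisson Lie group, and the uniqueness argument cannot be run on $J$.

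Your final step has the matching flaw. Tensoriality on closed sections only compares closed sections that have equal values. The right-invariant extension of a covector is not closed, so vanishing on the invariant frame says nothing about $J$ on closed sections.

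What must be evaluated on the frame is $T$, including its $d\alpha$-correction; $T$ is also the object that is multiplicative, being $[\pi,\pi]$. In the present generality, however, $T$ is not even defined. The term $d\tilde\alpha(\rho\tilde\beta,\rho\tilde\gamma)$ needs $\Theta(g)(\beta,\cdot)\in\mathfrak g$, i.e.\ Hamiltonian vector fields, which the proposition does not assume. Proving that $T$ is multiplicative would also need the Schouten calculus, which the paper develops only in the nuclear setting of Section~\ref{8}.

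Your Leibniz-rule extension of condition (1) to closed sections silently needs the same hypothesis. The extra terms $v\mapsto\Theta(g)(da(g)v,b(g))$ are an element of $\mathfrak b'$ composed with $da(g)$. They need not lie in $\mathbb F_{\mathfrak b}(g)$ unless $\Theta(g)(\cdot,b(g))$ is given by an element of $\mathfrak g$.

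For comparison, the paper argues differently: it computes $d(\pi(\alpha,\beta))$ directly on left-invariant sections from multiplicativity. It then closes with the same ``vanishes on an invariant frame, hence on closed sections'' step, so following it would not repair this gap. What is missing is either an argument that works directly with closed sections, or the extra hypothesis $\pi^\sharp(\mathbb F)\subseteq TG$ of Theorem~\ref{D-3} together with a proof that the corrected tensor $T$ is multiplicative.
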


\begin{proof}
Let $\alpha, \beta \in \Gamma_l(\mathbb{F})$ be left-invariant sections, $x \in G$, and $Y \in \mathfrak{g}$. We evaluate the differential $d(\pi(\alpha, \beta))$ at $x$ in the direction of $L_{x*}Y$:
\begin{align*}
d(\pi(\alpha, \beta))(x)(L_{x*}Y) &= \frac{d}{dt}\bigg|_{t=0} \pi(x \exp(tY)) \Big( \alpha(x \exp(tY)), \beta(x \exp(tY)) \Big) \\
\intertext{Applying the multiplicativity condition $\pi(gh) = L_{g}^{**}\pi(h) + R_{h}^{**}\pi(g)$, this splits into two terms:}
&= \frac{d}{dt}\bigg|_{t=0} \Big[ L_{x}^{**}\pi(\exp(tY)) \Big( \alpha(x \exp(tY)), \beta(x \exp(tY)) \Big) \\
&\qquad\quad + R_{\exp(tY)}^{**}\pi(x) \Big( \alpha(x \exp(tY)), \beta(x \exp(tY)) \Big) \Big] \\
\intertext{By the definition of bivector pushforwards ($f^{**}\pi(\mu, \nu) = \pi(f^*\mu, f^*\nu)$), we transfer the actions to pullbacks on the sections:}
&= \frac{d}{dt}\bigg|_{t=0} \Big[ \pi(\exp(tY)) \Big( L_{x}^*\alpha(x \exp(tY)), L_{x}^*\beta(x \exp(tY)) \Big) \\
&\qquad\quad + \pi(x) \Big( R_{\exp(tY)}^*\alpha(x \exp(tY)), R_{\exp(tY)}^*\beta(x \exp(tY)) \Big) \Big] \\
\intertext{Using the invariance properties of the sections, the pullbacks simplify (e.g., $L_x^*\alpha(xh) = \alpha(h)$), leaving the second term independent of $t$:}
&= \frac{d}{dt}\bigg|_{t=0} \Big[ \pi(\exp(tY)) \Big( \alpha(\exp(tY)), \beta(\exp(tY)) \Big) \Big] + \frac{d}{dt}\bigg|_{t=0} \Big[ \pi(x) \Big( \alpha(x), \beta(x) \Big) \Big] \\
\intertext{Since the second term is constant with respect to $t$, its derivative is exactly $0$. The first term evaluates via the Leibniz rule (and the fact that $\pi(e)=0$) to:}
&= d\pi(e)(Y) \Big( \alpha(e), \beta(e) \Big) + 0 \\
&= \langle Y, [\alpha(e), \beta(e)]_{\mathfrak{b}} \rangle \qquad\text{(using equation \ref{linearzitaion_of_pi})}.
\end{align*}

This proves that $d(\pi(\alpha, \beta))(x) = L_{x^{-1}}^* \left( [\alpha(e), \beta(e)]_{\mathfrak{b}} \right)$. 

Because $[\alpha(e), \beta(e)]_{\mathfrak{b}} \in \mathfrak{b}$ and the subbundle fiber is defined as $\mathbb{F}_x = L_{x^{-1}}^*\mathfrak{b}$, the 1-form $d(\pi(\alpha, \beta))$ factors uniquely through $\mathbb{F}$. We now note that the factored section
\[
\widetilde{d(\pi(\alpha, \beta))}(x) = L_{x^{-1}}^* \left( [\alpha(e), \beta(e)]_{\mathfrak{b}} \right),
\]
is the unique left-invariant extension of $[\alpha(e), \beta(e)]_{\mathfrak{b}}$. Thus, the space $\Gamma_l(\mathbb{F})$ is closed under the bracket $[\alpha, \beta]_{\mathbb{F}} := \widetilde{d(\pi(\alpha, \beta))}$.

To establish the Jacobi identity for left-invariant sections $\alpha, \beta, \gamma \in \Gamma_l(\mathbb{F})$, we substitute the factored sections into the Jacobiator at $x \in G$:
\begin{align*} 
\text{Jac}(\alpha, \beta, \gamma)(x) &= \pi(\alpha, [\beta, \gamma]_{\mathbb{F}})(x) + \pi(\beta, [\gamma, \alpha]_{\mathbb{F}})(x) + \pi(\gamma, [\alpha, \beta]_{\mathbb{F}})(x) \\
&= L_{x^{-1}}^* \Big( [\alpha(e), [\beta(e), \gamma(e)]_{\mathfrak{b}}]_{\mathfrak{b}} + [\beta(e), [\gamma(e), \alpha(e)]_{\mathfrak{b}}]_{\mathfrak{b}} + [\gamma(e), [\alpha(e), \beta(e)]_{\mathfrak{b}}]_{\mathfrak{b}} \Big) \\
&= 0.
\end{align*}
For general closed local sections, the Jacobiator forms a smooth alternating 3-tensor field. Because $\mathbb{F}$ is globally spanned by left-invariant sections and the 3-tensor vanishes identically on this frame, it must be zero everywhere on $G$. Thus, condition (2) of Definition \ref{3.2} is satisfied.
\end{proof}
We summarize the result of Theorem \ref{D-0} and the result corresponding to the integration of a Lie bialgebra in the following theorems.

\subsection{Drinfeld theorem in infinite dimensions}

\begin{theorem}\label{D-1}
Let $(G, \mathbb{F}_{\mathfrak{b}}, \pi)$ be a convenient Poisson Lie group with the Lie algebra $\mathfrak g$. Let $\mathfrak b\subseteq \mathfrak g'$ be a convenient space such that the canonical map $\operatorname{incl}:\mathfrak b\to \mathfrak g', \alpha\mapsto \alpha$ is smooth, and the coadjoint action $\operatorname{Ad}^*: G\times \mathfrak g'\to \mathfrak g'$ restricts to the smooth action on $\mathfrak b$. Then the differential $\delta:=d(\pi)(e)$ of $\pi$ at the identity induces a unique (up to isomorphism) convenient Lie bialgebra structure $(\mathfrak{g}, \mathfrak{b}, \delta)$ on $\mathfrak{g}$ where $\mathbb F_{\mathfrak b}(e)=\mathfrak b$. Moreover, if $\mathfrak{g}$ and $\mathfrak{b}$ are Fréchet or Silva, then it defines a topological (continuous) Lie bialgebra structure.
\end{theorem}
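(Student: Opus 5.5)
The plan is to assemble Theorem \ref{D-1} from Proposition \ref{Poisson 1-cocycle}, Lemma \ref{4.7} and Proposition \ref{linearization}, then settle uniqueness and the topological refinement.

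First, since $\mathbb{F}_{\mathfrak b}(e)=\mathfrak b$ and the coadjoint action restricts smoothly to $\mathfrak b$, the right translates $R^*_{g^{-1}}\mathfrak b$ give smooth right-invariant sections. Hence Proposition \ref{Poisson 1-cocycle} applies: $\theta(x)=R^{**}_{x^{-1}}\pi(x)$ is a smooth group $1$-cocycle $G\to L^2_{\mathrm{skew}}(\mathfrak b)$ relative to $\operatorname{Ad}^{(2)}$. Multiplicativity \eqref{multiplicative} at $x=y=e$ gives $\pi(e)=2\pi(e)$, so $\pi(e)=0$. This means $\theta(e)=0$ and $d\theta(e)=d\pi(e)=\delta$; the terms where the derivative falls on $R^{**}_{x^{-1}}$ vanish because $\pi(e)=0$.

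Next, Lemma \ref{4.7}, via Lemmas \ref{4.4} and \ref{4.5}, shows that $\delta$ is a smooth (bounded) Lie algebra $1$-cocycle relative to $\operatorname{ad}^{(2)}$. The same lemma shows that $\delta'|_{\mathfrak b\wedge\mathfrak b}$ equals the bracket $[\xi,\eta]_{\mathfrak b}=\widetilde{d(\pi(\tilde\xi,\tilde\eta))}(e)$ of Proposition \ref{linearization}. That proposition gives boundedness and the Jacobi identity, the latter inherited from condition (2) of Definition \ref{3.2}. Separation of points by $\mathfrak b$ is part of Definition \ref{3.2}, and smoothness of $\operatorname{ad}^*$ on $\mathfrak b$ follows by differentiating the smooth restricted action $\operatorname{Ad}^*$. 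Together these verify every axiom of a convenient Lie bialgebra, and the formula for $\delta$ agrees with Theorem \ref{D-0}.

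Uniqueness up to isomorphism follows because $\delta$ is determined pointwise by $\delta(X)(\xi,\eta)=d(\pi(\tilde\xi,\tilde\eta))(e)(X)$. For a Poisson-isomorphic structure $\Phi$, the cocycle compatibility of Definition \ref{Cocycle compatibility}, differentiated at $e$, gives \eqref{cocycle compatibility}. So isomorphic Poisson Lie groups yield isomorphic bialgebras.

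For the Fréchet/Silva case, I will use the two lemmas after Proposition \ref{linearization}. Bounded linear maps from bornological spaces are continuous, so $\delta$ is continuous once $L^2_{\mathrm{skew}}(\mathfrak b)$ carries its locally convex topology. The bracket on $\mathfrak b$ is bounded and hence separately continuous, so it is jointly continuous. It then extends to the projectively completed $\wedge^2\mathfrak b$, using that the projective and bornological tensor topologies agree for Fréchet spaces, or by restricting to the Banach steps for Silva spaces. The main obstacle is this last identification for Silva spaces, where I would argue stepwise on $E_n\times E_n$ as in the second lemma and pass to the inductive limit.
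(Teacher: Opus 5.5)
Your proposal takes essentially the paper's route. The paper obtains Theorem~\ref{D-1} (via the remark preceding Theorem~\ref{D-0}) by combining Proposition~\ref{Poisson 1-cocycle}, Lemma~\ref{4.7} and Proposition~\ref{linearization}, and gets the Fr\'echet/Silva refinement from the two lemmas on bornological spaces and separately continuous bilinear maps; your uniqueness argument is the same differentiation of the cocycle compatibility that the paper carries out for Poisson Lie group homomorphisms.

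The only place you work harder than needed is the last step. Once the bracket $\mathfrak b\times\mathfrak b\to\mathfrak b$ is jointly continuous, the universal property of $\otimes_\pi$ together with completeness of $\mathfrak b$ already gives the continuous extension to the completed $\wedge^2\mathfrak b$. So no comparison with bornological tensor topologies and no stepwise Silva argument is required.
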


\begin{theorem}\label{D-2}
Let $G$ be a 1-connected convenient regular Lie group. Let $\mathfrak b\subseteq \mathfrak g'$ be a convenient space such that the canonical map $\operatorname{incl}:\mathfrak b\to \mathfrak g', \alpha \mapsto \alpha$ is smooth, and the coadjoint action $\operatorname{Ad}^*: G\times \mathfrak g'\to \mathfrak g'$ restricts to the smooth action on $\mathfrak b$. Then, a convenient Lie bialgebra structure $(\mathfrak{g}, \mathfrak{b}, \delta)$ integrates to a unique (up to isomorphism) convenient Poisson Lie group structure $(G, \mathbb{F}_{\mathfrak{b}}, \pi)$ where $\mathbb F_{\mathfrak b}$ is generated by the right translation of $\mathfrak b$ by the action of $G$,  $\mathbb F_{\mathfrak b}(g):=R_{g}^*\mathfrak b$ and $\pi$ is given by the right translation of the group $1$-cocycle $\Theta$ of $G$, where $\Theta$ is the integration of the Lie algebra $1$-cocycle $\delta$ given by formula \ref{1-cocycle}. 
\end{theorem}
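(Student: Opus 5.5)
The plan is to assemble the construction carried out at the beginning of Section~\ref{6} and then add existence, well-definedness, and uniqueness.

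\emph{Step 1: the integrated cocycle.} By the definition of a convenient Lie bialgebra, $\delta:\mathfrak g\to L^2_{\text{skew}}(\mathfrak b)$ is a bounded $1$-cocycle relative to $\operatorname{ad}^{(2)}$. By Lemma~\ref{4.5}, $X\mapsto (X,\delta(X))$ is then a bounded Lie algebra homomorphism $\mathfrak g\to\mathfrak g\rtimes L^2_{\text{skew}}(\mathfrak b)$. Smoothness of the restricted action $\operatorname{Ad}^*$ on $\mathfrak b$ gives smoothness of $\operatorname{Ad}^{(2)}$, by the argument in the proof of Proposition~\ref{Poisson 1-cocycle}. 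Hence $G\rtimes L^2_{\text{skew}}(\mathfrak b)$ is a regular Lie group, by \ref{regularsemidirect} and the regularity of convenient vector spaces. Since $G$ is 1-connected, Proposition~\ref{proposition 4.10} yields a unique smooth homomorphism $\tilde\theta$. Lemma~\ref{4.4} then gives a smooth group cocycle $\Theta$ with $d\Theta(e)=\delta$ and $\Theta(e)=0$.

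\emph{Step 2: the bundle and the bivector.} Set $\mathbb F_{\mathfrak b}(g):=R_{g^{-1}}^*\mathfrak b$, globally trivialized by $G\times\mathfrak b\to\mathbb F_{\mathfrak b}$, $(g,\xi)\mapsto R_{g^{-1}}^*\xi$. This map is smooth because it is the restriction of the smooth map on $T'G$ and $\operatorname{incl}$ is smooth, so $\mathbb F_{\mathfrak b}$ is a weak subbundle. It separates points because $\mathfrak b$ does. Left translations preserve the bundle: $L_g^*R_{h^{-1}}^*\xi=R_{(gh)^{-1}}^*\operatorname{Ad}^*_{g}\xi$, and $\operatorname{Ad}^*$ preserves $\mathfrak b$ smoothly. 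So condition (1) of Definition~\ref{Lie-Poisson-group} holds. Define $\pi(g):=R_g^{**}\Theta(g)$. In the trivialization, $\pi$ is just $\Theta$, so it is a smooth section.

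\emph{Step 3: the Poisson Lie group axioms.} Multiplicativity \eqref{multiplicative} is the computation in the proof of Proposition~\ref{Poisson 1-cocycle} read backwards, using commutation of left and right translations. The factorization property (1) of Definition~\ref{3.2} and the Jacobi identity (2) come from the preceding Proposition. Its argument works with left-invariant sections; since $\mathbb F_{\mathfrak b}$ is invariant under both actions, this is harmless. That argument shows $d(\pi(\alpha,\beta))$ is the invariant extension of $[\alpha(e),\beta(e)]_{\mathfrak b}$, and the Jacobiator reduces to the Jacobi identity of $[-,-]_{\mathfrak b}$.

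\emph{Step 4: uniqueness and the main obstacle.} Suppose $\pi'$ is another multiplicative bivector on $\mathbb F_{\mathfrak b}$ with $d\pi'(e)=\delta$. Its cocycle $\theta'(g)=R_{g^{-1}}^{**}\pi'(g)$ gives, by Lemma~\ref{4.4}, a homomorphism $g\mapsto(g,\theta'(g))$ with the same differential $X\mapsto (X,\delta(X))$. The uniqueness in Proposition~\ref{proposition 4.10} forces $\theta'=\Theta$, hence $\pi'=\pi$.

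I expect the main obstacle to be the Jacobi step for \emph{arbitrary} closed local sections, not just invariant ones. The plan is to expand a general section in the global frame $\alpha=\sum$ of invariant sections with function coefficients, written via the trivialization as $\alpha(x)=R_{x^{-1}}^*a(x)$ with $a\in C^\infty(U,\mathfrak b)$. The Jacobiator of $\pi$ is then tensorial: the derivative-of-coefficient terms cancel using closedness $d\alpha=0$ together with multiplicativity. Because tensoriality on these infinite sums is delicate, I would prove the tensoriality (the Schouten-type $3$-tensor $[\pi,\pi]_s$) directly in the trivialization and then evaluate it on the invariant frame.
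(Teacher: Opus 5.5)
Your Steps 1, 2 and 4 are exactly the construction at the start of Section~\ref{6} and the uniqueness argument after \eqref{1-cocycle}, and they are fine. The gap is Step 3, which you take from the Jacobi-identity proposition of Section~\ref{6}. Its key claim, that $d(\pi(\alpha,\beta))$ is the invariant extension of $[\alpha(e),\beta(e)]_{\mathfrak b}$, is false.

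Let $\alpha^R,\beta^R$ be the right-invariant extensions of $\xi,\eta\in\mathfrak b$. Then $\pi(\alpha^R,\beta^R)(x)=\Theta(x)(\xi,\eta)$. Differentiating $\Theta(gx)=\Theta(g)+\operatorname{Ad}^{(2)}_g\Theta(x)$ at $g=e$ gives
\[
d\big(\pi(\alpha^R,\beta^R)\big)(x)(R_{x*}Y)=\delta(Y)(\xi,\eta)+\big(\operatorname{ad}^{(2)}_Y\Theta(x)\big)(\xi,\eta).
\]
The second term does not vanish once $\Theta(x)\neq 0$. The paper's left-invariant computation loses the analogous term where it declares the second summand ``independent of $t$''. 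In fact $R^*_{\exp(tY)}\alpha(x\exp(tY))$ is $L^*_{x^{-1}}$ applied to a coadjoint translate of $\alpha(e)$ by $\exp(-tY)$, not $\alpha(x)$. On a compact connected group, the claim would force every multiplicative $\pi$ to vanish.

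For closed sections written as $\alpha(x)=R^*_{x^{-1}}a(x)$ and $\beta(x)=R^*_{x^{-1}}b(x)$, there are also the terms $\Theta(x)(da(x)R_{x*}Y,b(x))+\Theta(x)(a(x),db(x)R_{x*}Y)$. Nothing in the hypotheses puts these functionals of $Y$ into $\mathfrak b$ rather than merely into $\mathfrak g'$.

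In fact Step 3 cannot be closed under the stated hypotheses, because axiom (1) of Definition~\ref{3.2} can fail. Take $G=\mathfrak g=\ell^2$ (abelian), $\mathfrak b=\ell^1\subset\ell^2=\mathfrak g'$, $u=(0,1,1,1,\dots)\in\ell^\infty$, and $\delta(x)(\alpha,\beta)=x_1\big(\alpha(u)\beta_2-\beta(u)\alpha_2\big)$.

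\begin{itemize}
\item Every hypothesis holds. The dual bracket $[\alpha,\beta]_{\mathfrak b}=(\alpha(u)\beta_2-\beta(u)\alpha_2)e_1$ satisfies Jacobi because $[e_1,\cdot\,]_{\mathfrak b}=0$.
\item Your uniqueness step forces $\pi(x)=\delta(x)$ on $\mathbb F_{\mathfrak b}=G\times\ell^1$.
\item Take the closed sections $\alpha=df$ with $f(x)=\tfrac12\sum_{n\ge2}x_n^2/n$, so that $df(x)=(0,x_2/2,x_3/3,\dots)\in\ell^1$, and $\beta=e_2-e_3$.
\item Then $\pi(\alpha,\beta)(x)=x_1\sum_{n\ge2}x_n/n$. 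Its differential $(\sum_{n\ge2}x_n/n,\,x_1/2,\,x_1/3,\dots)$ is not in $\ell^1$ when $x_1\neq0$.
\end{itemize}

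The offending part is precisely the derivative-of-coefficient term you expected to cancel. So an extra hypothesis is needed. One sufficient choice is that each $\Theta(x)(\cdot,\eta)\in\mathfrak b'$ is represented by some $v\in\mathfrak g$, i.e.\ $\pi^\sharp(\mathbb F_{\mathfrak b})\subseteq TG$ as in Theorem~\ref{D-3}. Then $Y\mapsto\Theta(x)(\operatorname{ad}^*_Y\xi,\eta)$ equals $\pm\operatorname{ad}^*_v\xi\in\mathfrak b$, and closedness of $\alpha$ lets you transpose $da(x)$.

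Even then, the Jacobi identity cannot be checked on invariant sections. They are in general not closed (Maurer--Cartan), and the expression $J$ of Definition~\ref{3.2} is not a tensor off closed forms, since
\[
J(f\alpha,\beta,\gamma)=fJ(\alpha,\beta,\gamma)+\pi(\gamma,\alpha)\pi(\beta,df)+\pi(\alpha,\beta)\pi(\gamma,df).
\]
What works is the argument of Section~\ref{8}:
\begin{itemize}
\item build a genuine Schouten-type $3$-tensor and show it is multiplicative;
\item observe that its right trivialization is an $\operatorname{Ad}^{(3)}$-cocycle whose derivative at $e$ is $X\mapsto X\big([[\xi,\eta]_{\mathfrak b},\zeta]_{\mathfrak b}+\text{cyclic}\big)=0$;
\item conclude that it vanishes by the uniqueness in Proposition~\ref{proposition 4.10}.
\end{itemize}
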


\begin{theorem}\label{D-3}
Let $G$ be a 1-connected regular Lie group with the convenient Lie algebra $\mathfrak g$. Let $\mathfrak b\subseteq \mathfrak g'$ be a convenient subspace such that the canonical map $\operatorname{incl}:\mathfrak b\to \mathfrak g', \alpha\mapsto \alpha$ is smooth and the coadjoint action $\operatorname{Ad}^*: G\times \mathfrak g'\to \mathfrak g'$ restricts to the smooth action on $\mathfrak b$.
Then the following assertions are equivalent up to isomorphism :
\begin{enumerate}
    \item $(G, \mathbb{F}_{\mathfrak{b}}, \pi)$ is a Poisson Lie group structure such that $\pi^{\sharp}(\mathbb{F}_x) \subset T_xG$ for each $x \in G$.
    \item $(\mathfrak{g}, \mathfrak{b}, \delta)$ is a Lie bialgebra structure and $\mathfrak{g}$ is a Lie-Poisson space with respect to $\mathfrak{b}$.
    \item $(\mathfrak{d} = \mathfrak{g} \oplus \mathfrak{b}, \mathfrak{g}, \mathfrak{b})$ is a Manin triple.
\end{enumerate}
\end{theorem}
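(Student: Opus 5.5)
The plan is to prove the cycle of implications (1)$\Rightarrow$(2)$\Rightarrow$(3)$\Rightarrow$(2)$\Rightarrow$(1). The equivalence (2)$\Leftrightarrow$(3) is already available: it is exactly the convenient version of \cite[Theorem 4.9]{32} stated and proved in Section \ref{4}. The forward direction constructs $\delta$ from $\pi$, and the backward direction uses the bracket formula on $\mathfrak{d}=\mathfrak g\oplus\mathfrak b$ given there. So the real work is (1)$\Leftrightarrow$(2), and I would organise the proof around Theorems \ref{D-1} and \ref{D-2}.

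\textbf{(1)$\Rightarrow$(2).} Given $(G,\mathbb F_{\mathfrak b},\pi)$, Theorem \ref{D-1} (equivalently Theorem \ref{D-0} together with Lemma \ref{4.7}) gives the Lie bialgebra $(\mathfrak g,\mathfrak b,\delta)$ with $\delta=d\pi(e)$. It remains to show that $\mathfrak g$ is a Lie--Poisson space with respect to $\mathfrak b$, i.e.\ that $\operatorname{ad}^*_\alpha x\in\mathfrak g$ for $x\in\mathfrak g$, $\alpha\in\mathfrak b$, and that this depends smoothly on $(x,\alpha)$.

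\begin{enumerate}
\item Take right-invariant sections $\tilde\alpha,\tilde\beta$. The hypothesis $\pi^\sharp(\mathbb F_x)\subset T_xG$ makes $\rho(\tilde\alpha)=\pi^\sharp\tilde\alpha$ a smooth vector field, which is smooth by Remark \ref{smoooth anchor}.
\item Differentiate at $e$: $\nabla_X(\pi^\sharp\tilde\alpha)(e)\in\mathfrak g$. Since $\pi(e)=0$, this covariant derivative is intrinsic.
\item Pairing with $\beta$ gives $\langle \nabla_X\pi^\sharp\tilde\alpha(e),\beta\rangle=\delta(X)(\alpha,\beta)=\langle X,[\alpha,\beta]_{\mathfrak b}\rangle$.
\item Hence $\operatorname{ad}^*_\alpha X$ is represented by the element $-\nabla_X(\pi^\sharp\tilde\alpha)(e)$ of $\mathfrak g$; this uses that $\mathfrak b$ separates points of $\mathfrak g$.
\item Smoothness in $(X,\alpha)$ follows from the exponential law, since $\alpha\mapsto\tilde\alpha$ and $\rho$ are smooth.
\end{enumerate}

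This is the content of \cite[Theorem 5.13]{32} (the Corollary in Section \ref{4}), and I would adapt that argument.

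\textbf{(2)$\Rightarrow$(1).} Integrate via Theorem \ref{D-2} to get $(G,\mathbb F_{\mathfrak b},\pi)$ with $\pi(g)=R_g^{**}\Theta(g)$. The Jacobi identity was checked in the preceding proposition. What remains is that $\pi^\sharp$ lands in $TG$, i.e.\ that the bounded functional $\Theta(g)(\alpha,-)$ on $\mathfrak b$ is given by an element of $\mathfrak g$.

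I would use the path formula for $\Theta$ coming from the integration of $\tilde\delta$. Writing $g=\mathrm{evol}(\gamma)$ for a smooth curve $\gamma$ in $\mathfrak g$, one has
\[
\Theta(g)=\int_0^1 \operatorname{Ad}^{(2)}_{g(s)}\,\delta(\gamma(s))\,ds .
\]
Then $\Theta(g)(\alpha,\cdot)$ is an integral of expressions of the form $\delta(Y)(\operatorname{Ad}^*\alpha,\cdot)$. By the Lie--Poisson hypothesis each $\delta(Y)(\beta,\cdot)=\langle Y,[\beta,\cdot]\rangle$ is represented by $-\operatorname{ad}^*_\beta Y\in\mathfrak g$, and this depends smoothly on $(Y,\beta)$. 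So the integrand is a smooth curve in $\mathfrak g$. Its weak integral exists in the convenient (Mackey complete) space $\mathfrak g$, and it represents $\Theta(g)(\alpha,\cdot)$ because $\mathfrak b$ separates points. The cocycle and multiplicativity identities then transfer this from the identity to all $g$ via right translation.

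\textbf{Uniqueness and main obstacle.} Uniqueness up to isomorphism comes from the uniqueness statement in Proposition \ref{proposition 4.10}. I expect the main obstacle to be the smoothness in the last step: showing that $(g,\alpha)\mapsto \pi^\sharp(g)\alpha$ is smooth as a map into $TG$ rather than merely pointwise defined. My first attempt would be to control the integrand uniformly over smooth families of curves $\gamma$ and invoke the smoothness of $\mathrm{evol}$ given by regularity. If that does not work, I would instead characterise $\pi^\sharp\tilde\alpha$ as the unique solution of a linear ODE along $\mathrm{evol}$.
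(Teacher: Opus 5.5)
Your proposal is correct and has the same skeleton as the paper. The paper gives no separate proof of Theorem \ref{D-3}; it assembles it from Theorem \ref{D-1} (via Theorem \ref{D-0} and Lemma \ref{4.7}) and the corollary quoting \cite[Theorem 5.13]{32} for (1)$\Rightarrow$(2), from the convenient adaptation of \cite[Theorem 4.9]{32} for (2)$\Leftrightarrow$(3), and from Theorem \ref{D-2} with the Jacobi-identity proposition for (2)$\Rightarrow$(1).

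\textbf{Where you go beyond the paper.} The paper never checks that the integrated bivector satisfies $\pi^\sharp(\mathbb F_x)\subset T_xG$; your argument supplies this. You write $\Theta(g)(\alpha,\cdot)$ as $\beta\mapsto\int_0^1\langle W(s),\beta\rangle\,ds$, where $W$ is a smooth curve in $\mathfrak g$ built from the smooth maps $\operatorname{ad}^*\colon\mathfrak g\times\mathfrak b\to\mathfrak g$, $\operatorname{Ad}$ and $\operatorname{Ad}^*$. You then integrate $W$ in the Mackey complete space $\mathfrak g$; pairing with $\beta\in\mathfrak b\subseteq\mathfrak g'$ commutes with the integral, and point separation only gives uniqueness of the representing element. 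You also use the Hamiltonian hypothesis explicitly in (1)$\Rightarrow$(2), which the paper's corollary leaves implicit.

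\textbf{A convention slip.} With the paper's right evolution, the correct formula is \eqref{1-cocycle}: $\Theta(g_0)=\operatorname{Ad}^{(2)}_{g_0}\int_0^1\operatorname{Ad}^{(2)}_{g(t)^{-1}}\delta(X(t))\,dt$. Your formula $\int_0^1\operatorname{Ad}^{(2)}_{g(s)}\delta(\gamma(s))\,ds$ is right only when $\gamma=g^{-1}g'$ is the left logarithmic derivative. This is harmless, because the set of forms whose partial evaluations are represented in $\mathfrak g$ is $\operatorname{Ad}^{(2)}$-invariant. Note also that the path formula already gives $\Theta$ at every $g$, so no transfer from the identity is needed. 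The only remaining step is that passing from $\Theta(g)$ to $\pi(g)$ by right translation preserves representability.

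\textbf{On the smoothness obstacle.} Statement (1) only asks for the pointwise inclusion, and Remark \ref{smoooth anchor} asserts smoothness of the corestricted anchor, so your argument already suffices. If you want smoothness independently, there is a cleaner route than estimating over families of curves. Factor $\delta=\iota\circ\hat\delta$, where:
\begin{itemize}
\item $\hat\delta(Y)\in L(\mathfrak b,\mathfrak g)$ sends $\alpha$ to the element representing $\delta(Y)(\alpha,\cdot)$, i.e.\ $\pm\operatorname{ad}^*_\alpha Y$;
\item $\iota(A)(\alpha,\beta):=\langle A\alpha,\beta\rangle$.
\end{itemize}
The map $\iota$ is bounded, injective (since $\mathfrak b$ separates points of $\mathfrak g$), and equivariant for the smooth action $A\mapsto\operatorname{Ad}_g\circ A\circ\operatorname{Ad}^*_{g^{-1}}$. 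Hence $\hat\delta$ is a cocycle. Integrating it in the regular group $G\rtimes L(\mathfrak b,\mathfrak g)$ by Proposition \ref{proposition 4.10} gives a smooth $\hat\Theta$, and uniqueness forces $\iota\circ\hat\Theta=\Theta$. This exhibits $\pi^\sharp$ directly as a smooth bundle map into $TG$.
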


The correspondence between Poisson Lie groups and Lie bialgebras can also be lifted to a correspondence between Poisson Lie group homomorphisms and Lie bialgebra morphisms. Let $(G, \mathbb{F}_1, \pi_1)$ and $(H, \mathbb{F}_2, \pi_2)$ be two Poisson Lie group structures, and let $\Phi: G \to H$ be a Poisson Lie group homomorphism. Then the induced map $\phi := d\Phi(e): \mathfrak{g} \to \mathfrak{h}$ is a Lie bialgebra morphism. Let $\mathfrak{b}_1 = \mathbb{F}_1(e)$ and $\mathfrak{b}_2 = \mathbb{F}_2(e)$. The compatibility condition on $\Phi$ (see Definition \ref{poissonhom}) states that
\[
\Theta_2 \circ \Phi = (\phi^*|_{\mathfrak{b}_2})_* \circ \Theta_1,
\]
where $\Theta_1$ and $\Theta_2$ are the $1$-cocycles associated with the bivectors $\pi_1$ and $\pi_2$, respectively. Differentiating this condition at the identity and using the fact that $\Theta_2(e) = 0$, we obtain
\[
\delta_2 \circ \phi = (\phi^*|_{\mathfrak{b}_2})_* \circ \delta_1,
\]
where $\delta_1 = d\Theta_1(e)$ and $\delta_2 = d\Theta_2(e)$. Thus, $\phi$ is indeed a Lie bialgebra morphism.

Conversely, assume $\phi: \mathfrak{g} \to \mathfrak{h}$ is a Lie bialgebra morphism. Since $G$ is connected and simply connected, and $H$ is a regular Lie group, $\phi$ integrates to a unique Lie group homomorphism $\Phi: G \to H$ such that $d\Phi(e) = \phi$. The compatibility condition of $\phi$ reads
\[
\delta_2 \circ \phi = (\phi^*|_{\mathfrak{b}_2})_* \circ \delta_1.
\]
We note that
\begin{align*}
(\delta_2 \circ \phi)([X,Y]) &= \delta_2([\phi(X), \phi(Y)]) \\
&= \operatorname{ad}_{\phi(X)}^{(2)}(\delta_2(\phi(Y))) - \operatorname{ad}^{(2)}_{\phi(Y)}(\delta_2(\phi(X))).
\end{align*}
Thus, $\delta_2 \circ \phi$ is a $1$-cocycle of the Lie algebra $\mathfrak{g}$ with values in $L^2_{\text{skew}}(\mathfrak{b}_2)$, relative to the action of $\mathfrak{g}$ on $L^2_{\text{skew}}(\mathfrak{b}_2)$ given through the Lie algebra morphism $\phi$ as
\[
X \cdot \eta := \operatorname{ad}^{(2)}_{\phi(X)}\eta,
\]
for $X \in \mathfrak{g}$ and $\eta \in L^2_{\text{skew}}(\mathfrak{b}_2)$. Since $d(\Theta_2 \circ \Phi)(e) = \delta_2 \circ \phi$, $G$ is simply connected, and $G \ltimes L^2_{\text{skew}}(\mathfrak{b}_2)$ is a regular Lie group, Proposition \ref{proposition 4.10} implies that the $1$-cocycle $\delta_2 \circ \phi$ integrates to the unique Lie group $1$-cocycle $\Theta_2 \circ \Phi$. 

Now consider the right-hand side, $(\phi^*|_{\mathfrak{b}_2})_* \circ \delta_1 : \mathfrak{g} \to L^2_{\text{skew}}(\mathfrak{b}_2)$. We compute
\begin{align*}
((\phi^*|_{\mathfrak{b}_2})_* \circ \delta_1)([X,Y]) &= (\phi^*|_{\mathfrak{b}_2})_*\left(\operatorname{ad}_{X}^{(2)}(\delta_1(Y)) - \operatorname{ad}_{Y}^{(2)}(\delta_1(X))\right)\\
&= \operatorname{ad}_{\phi(X)}^{(2)}\left((\phi^*|_{\mathfrak{b}_2})_*(\delta_1(Y))\right) - \operatorname{ad}_{\phi(Y)}^{(2)}\left((\phi^*|_{\mathfrak{b}_2})_*(\delta_1(X))\right).
\end{align*}
This last equality holds because the pushforward map intertwines with the coadjoint actions
\[
(\phi^*|_{\mathfrak{b}_2})_* \circ \operatorname{ad}_{X}^{(2)} = \operatorname{ad}^{(2)}_{\phi(X)} \circ (\phi^*|_{\mathfrak{b}_2})_* \in L(L^2_{\text{skew}}(\mathfrak{b}_1), L^2_{\text{skew}}(\mathfrak{b}_2)).
\]
Indeed, let $\eta \in L^2_{\text{skew}}(\mathfrak{b}_1)$ and $\alpha, \beta \in \mathfrak{b}_2$. Evaluating the left-hand side yields
\begin{align*}
((\phi^*|_{\mathfrak{b}_2})_* \circ \operatorname{ad}_{X}^{(2)}\eta)(\alpha, \beta) &= \operatorname{ad}_{X}^{(2)} \eta\left(\phi^*(\alpha), \phi^*(\beta)\right)\\
&= \eta\left(\operatorname{ad}^*_{X}(\phi^*(\alpha)), \phi^*(\beta)\right) + \eta\left(\phi^*(\alpha), \operatorname{ad}^*_{X}(\phi^*(\beta))\right).
\end{align*}
Because $\phi$ is a Lie algebra morphism, its dual map satisfies the identity $\operatorname{ad}^*_{X}(\phi^*(\alpha)) = \phi^*(\operatorname{ad}^*_{\phi(X)}\alpha)$. Substituting this into the equation, we obtain
\begin{align*}
&\eta\left(\phi^*(\operatorname{ad}^*_{\phi(X)}\alpha), \phi^*(\beta)\right) + \eta\left(\phi^*(\alpha), \phi^*(\operatorname{ad}^*_{\phi(X)}\beta)\right)\\
&= (\phi^*|_{\mathfrak{b}_2})_*(\eta)(\operatorname{ad}^*_{\phi(X)}\alpha, \beta) + (\phi^*|_{\mathfrak{b}_2})_*(\eta)(\alpha, \operatorname{ad}^*_{\phi(X)}\beta)\\
&= (\operatorname{ad}_{\phi(X)}^{(2)} \circ (\phi^*|_{\mathfrak{b}_2})_*\eta)(\alpha, \beta).
\end{align*}
Therefore,
\begin{align*}
((\phi^*|_{\mathfrak{b}_2})_* \circ \delta_1)([X,Y]) &= \operatorname{ad}_{\phi(X)}^{(2)}\left((\phi^*|_{\mathfrak{b}_2})_*(\delta_1(Y))\right) - \operatorname{ad}_{\phi(Y)}^{(2)}\left((\phi^*|_{\mathfrak{b}_2})_*(\delta_1(X))\right).
\end{align*}
Hence, $(\phi^*|_{\mathfrak{b}_2})_* \circ \delta_1 : \mathfrak{g} \to L^2_{\text{skew}}(\mathfrak{b}_2)$ is a Lie algebra $1$-cocycle of the Lie algebra $\mathfrak{g}$ relative to the action $\widetilde{\operatorname{ad}}^{(2)}_X(\eta) := \operatorname{ad}_{\phi(X)}^{(2)}\eta$, for $X \in \mathfrak{g}$ and $\eta \in L^2_{\text{skew}}(\mathfrak{b}_1)$. It integrates to the unique Lie group cocycle $(\phi^*|_{\mathfrak{b}_2})_* \circ \Theta_1 : G \to L^2_{\text{skew}}(\mathfrak{b}_2)$. Thus, the equality of Lie algebra $1$-cocycles
\[
\delta_2 \circ \phi = (\phi^*|_{\mathfrak{b}_2})_* \circ \delta_1
\]
integrates to the equality of Lie group $1$-cocycles
\[
\Theta_2 \circ \Phi = (\phi^*|_{\mathfrak{b}_2})_* \circ \Theta_1,
\]
which proves that $\Phi: G \to H$ is a Poisson Lie group homomorphism.\\

 Let us rewrite the correspondence between Poisson lie group homomorphism and Lie bialgebra morphisms in the following theorem:
\begin{theorem}
Let $(G, \mathbb F_1, \pi_1)$ and $(H, \mathbb F_2, \pi_2)$ be two convenient Poisson Lie group structures. Let $\Phi: G\to H$ be a Poisson Lie group homomorphism. Then $d\Phi(e): \mathfrak g \to \mathfrak h$  is a Lie bialgebra morphism. Conversely, if $G$ and $H$ are 1-connected convenient regular Lie groups, then a Lie bialgebra morphism $\phi:\mathfrak g \to \mathfrak h$ between Lie bialgebras $Lie(G)=\mathfrak g$ and $Lie(H)=\mathfrak h$ integrates uniquely to a Poisson Lie group homomorphism $\Phi: G\to H$ such that $d\Phi(e)=\phi$.
\end{theorem}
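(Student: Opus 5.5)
The proof splits into the two directions of the statement, and both reduce to the cocycle description of multiplicative bivectors from Proposition \ref{Poisson 1-cocycle}.

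\emph{Differentiation.} Let $\Phi$ be a Poisson Lie group homomorphism and put $\phi := d\Phi(e)$.

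First, condition (1) of a Poisson morphism at $x=e$ says that $(d\Phi_e)^* = \phi^*$ maps $\mathfrak b_2 = \mathbb F_2(e)$ boundedly into $\mathfrak b_1 = \mathbb F_1(e)$. This is condition (1) of a Lie bialgebra morphism. Since $\Phi$ is a group homomorphism, $\phi$ is a bounded Lie algebra homomorphism.

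Second, I translate the Poisson condition $\pi_2(\Phi(g)) = \Phi_*\pi_1(g)$ into the cocycle identity of Definition \ref{poissonhom}. The relation $\Phi\circ R_g = R_{\Phi(g)}\circ\Phi$ gives $R_{\Phi(g)^{-1}}^{*}$ on $\mathbb F_2$ intertwined with $R_{g^{-1}}^*$ on $\mathbb F_1$ through $(d\Phi)^*$. Applying $R^{**}_{\Phi(g)^{-1}}$ to both sides then yields
\[
\Theta_2(\Phi(g)) = (\phi^*|_{\mathfrak b_2})_*\Theta_1(g).
\]
Differentiating at $e$ gives \eqref{cocycle compatibility}. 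For this step I need the chain rule for the smooth maps $\Theta_i$ into $L^2_{skew}(\mathfrak b_i)$, which Proposition \ref{Poisson 1-cocycle} provides, together with boundedness of the linear map $(\phi^*|_{\mathfrak b_2})_*$.

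\emph{Integration.} Let $\phi$ be a Lie bialgebra morphism. By Proposition \ref{proposition 4.10}, using that $G$ is 1-connected and $H$ regular, there is a unique smooth homomorphism $\Phi$ with $d\Phi(e)=\phi$. Condition (1) of a Poisson morphism at a general point $x$ follows from condition (1) at $e$ by right translation. Indeed, $\mathbb F_i(g)$ is the right translate of $\mathfrak b_i$, and $(d\Phi_x)^*\circ R^*_{\Phi(x)^{-1}} = R^*_{x^{-1}}\circ\phi^*$. Smoothness as a bundle morphism comes from smoothness of the right actions $R^*$.

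For condition (2), consider two maps $G\to L^2_{skew}(\mathfrak b_2)$, namely $\Theta_2\circ\Phi$ and $(\phi^*|_{\mathfrak b_2})_*\circ\Theta_1$. I show both are smooth group 1-cocycles for the action $g\cdot\eta = \operatorname{Ad}^{(2)}_{\Phi(g)}\eta$. For the first this is immediate. For the second I use the intertwining identity
\[
(\phi^*|_{\mathfrak b_2})_*\circ \operatorname{Ad}^{(2)}_g = \operatorname{Ad}^{(2)}_{\Phi(g)}\circ(\phi^*|_{\mathfrak b_2})_*,
\]
which follows from $\operatorname{Ad}^*_g\phi^* = \phi^*\operatorname{Ad}^*_{\Phi(g)}$ on $\mathfrak b_2$. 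That in turn follows from $\Phi$ being a homomorphism.

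Next I differentiate. By the computations above, both cocycles differentiate at $e$ to the same Lie algebra cocycle, namely $\delta_2\circ\phi = (\phi^*|_{\mathfrak b_2})_*\circ\delta_1$. By Lemma \ref{4.4}, each cocycle $c$ gives a homomorphism $g\mapsto(g,c(g))$ into $G\rtimes L^2_{skew}(\mathfrak b_2)$. This semidirect product is regular by \ref{regularsemidirect}, since the action through $\Phi$ is smooth. The two resulting homomorphisms have the same derivative, so the uniqueness part of Proposition \ref{proposition 4.10} forces them to coincide. This gives the group cocycle identity.

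Finally, undoing the right translation turns the cocycle identity back into $\pi_2(\Phi(x)) = \Phi_*\pi_1(x)$. Uniqueness of $\Phi$ is already part of Proposition \ref{proposition 4.10}.

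\emph{Main obstacle.} I expect the hardest point to be the integration step. Proposition \ref{proposition 4.10} is stated for a homomorphism into a regular group with its own Lie algebra, so I must verify two things:
\begin{enumerate}
\item the action $G\times L^2_{skew}(\mathfrak b_2)\to L^2_{skew}(\mathfrak b_2)$, $(g,\eta)\mapsto\operatorname{Ad}^{(2)}_{\Phi(g)}\eta$, is smooth, so that the twisted semidirect product is a regular Lie group;
\item its Lie algebra cocycle condition matches the $\operatorname{ad}^{(2)}_{\phi(X)}$ action.
\end{enumerate}
For the smoothness I would argue exactly as in the smoothness part of Proposition \ref{Poisson 1-cocycle}, using uniform boundedness and composing with the smooth map $\Phi$.
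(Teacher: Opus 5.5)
Your proposal is correct and follows essentially the paper's route. For the first direction you differentiate the cocycle identity $\Theta_2\circ\Phi=(\phi^*|_{\mathfrak b_2})_*\circ\Theta_1$; for the converse you integrate $\phi$ and identify $\Theta_2\circ\Phi$ with $(\phi^*|_{\mathfrak b_2})_*\circ\Theta_1$ via the uniqueness in Proposition \ref{proposition 4.10}, since both integrate the common cocycle $\delta_2\circ\phi=(\phi^*|_{\mathfrak b_2})_*\circ\delta_1$ for the action twisted through $\phi$.

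You are somewhat more careful than the paper: you derive the cocycle identity from the Poisson-morphism condition rather than taking it as definitional, you check condition (1) away from $e$, and you verify the twisted cocycle property at the group level via $\operatorname{Ad}^{(2)}$ rather than at the algebra level via $\operatorname{ad}^{(2)}$. Your anticipated obstacle is also milder than you expect. Only the uniqueness half of Proposition \ref{proposition 4.10} is needed, and the twisted action is simply the composite $\operatorname{Ad}^{(2)}\circ(\Phi\times\mathrm{id})$ of smooth maps.
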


 \begin{center}
    \begin{tikzcd}[row sep=large, column sep=huge]
        G \arrow[r, "\Phi"] \arrow[d] \arrow[dd, "\Theta_1"', bend right=40] 
        & H \arrow[d] \arrow[dd, "\Theta_2", bend left=40] \\
        
        \mathfrak{g} \arrow[r, "\phi"] \arrow[d, "\delta_1" dashed] 
        & \mathfrak{h} \arrow[d,"\delta_2" dashed] \\
        
        L^2_{{skew}}(\mathfrak b_1) \arrow[r, "(\phi^*|_{\mathfrak{b}_2})_*"] 
        & L^2_{{skew}}(\mathfrak b_2)
    \end{tikzcd}
\end{center}

Let us rewrite this equivalence in the category theoretic language.  
\begin{definition}
Let $\CatPL$ be the category whose
\begin{itemize}
    \item \textbf{Objects} are triples $(G, \mathbb{F}, \pi)$, where $G$ is a 1-connected, convenient regular Lie group equipped with a Poisson Lie group structure $(G, \mathbb{F}, \pi)$.
    \item \textbf{Morphisms} are Poisson-Lie group homomorphisms $\Phi: G \to H$.
\end{itemize}
\end{definition}

\begin{definition}
Let $\CatLB$ be the category whose
\begin{itemize}
    \item \textbf{Objects} are triples $(\mathfrak{g}, \mathfrak{b}, \delta)$, where $\mathfrak{g} = \text{Lie}(G)$ is the Lie algebra of a convenient regular Lie group $G$, equipped with a  Lie bialgebra structure $(\mathfrak{g}, \mathfrak{b}, \delta)$.
    \item \textbf{Morphisms} are Lie bialgebra homomorphisms $\phi: \mathfrak{g} \to \mathfrak{h}$.
\end{itemize}
\end{definition}

\begin{theorem}[Equivalence between $\CatPL$ and $\CatLB$]\label{equivalence}
The differentiation functor $\mathcal{L}ie$ defined by taking the tangent map at the identity establishes an equivalence of categories between $\CatPL$ and $\CatLB$. Its inverse is a unique left adjoint functor (Integration, denoted ``Int'') up to natural isomorphism. This means every Lie bialgebra structure $(\mathfrak{g}, \mathfrak{b}, \delta)$ in $\CatLB$ integrates to a unique Poisson Lie structure $(G, \mathbb{F}, \pi)$ on the corresponding Lie group $G$ such that $\mathcal{L}ie(G)=\mathfrak g$.
\end{theorem}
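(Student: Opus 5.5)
The plan is to check directly that $\mathcal{L}ie:\CatPL\to\CatLB$ is a functor, and to build $\mathrm{Int}$ on objects by Theorem \ref{D-2} and on morphisms by the morphism theorem just proved. The equivalence then comes from showing that the two composites are naturally isomorphic to the identities. The left adjointness is formal once this is in place, because an inverse equivalence is both a left and a right adjoint.

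\emph{Step 1 (functoriality of $\mathcal{L}ie$).}
\begin{itemize}
\item On objects, send $(G,\mathbb{F},\pi)$ to $(\mathfrak{g},\mathbb{F}_e,d\pi(e))$; Theorem \ref{D-0} and Theorem \ref{D-1} show this is a Lie bialgebra.
\item On morphisms, send $\Phi$ to $d\Phi(e)$; the first half of the morphism theorem shows this is a Lie bialgebra morphism.
\item Compatibility with identities and compositions is the chain rule $d(\Psi\circ\Phi)(e)=d\Psi(e)\circ d\Phi(e)$. Note that the pushforwards $(\phi^*|_{\mathfrak b})_*$ compose contravariantly twice, hence covariantly.
\end{itemize}

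\emph{Step 2 (the functor $\mathrm{Int}$).}
\begin{itemize}
\item For an object $(\mathfrak g,\mathfrak b,\delta)$ with $\mathfrak g=\mathrm{Lie}(G)$, replace $G$ by its universal covering group $\widetilde G$. This is again regular and has the same Lie algebra.
\item Set $\mathrm{Int}(\mathfrak g,\mathfrak b,\delta):=(\widetilde G,\mathbb F_{\mathfrak b},\pi)$, as produced by Theorem \ref{D-2}. Concretely, $\pi(g)=R^{**}_g\Theta(g)$, where $\Theta$ is the unique integral of $\delta$ obtained from Lemma \ref{4.5}, Proposition \ref{proposition 4.10} and Lemma \ref{4.4}. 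The Jacobi identity holds by the preceding Proposition.
\item For a morphism $\phi$, let $\mathrm{Int}(\phi)$ be the unique homomorphism $\Phi$ with $d\Phi(e)=\phi$. It is Poisson by the converse half of the morphism theorem.
\item Functoriality of $\mathrm{Int}$ follows from the uniqueness clause of Proposition \ref{proposition 4.10}: both $\mathrm{Int}(\psi\circ\phi)$ and $\mathrm{Int}(\psi)\circ\mathrm{Int}(\phi)$ differentiate to $\psi\circ\phi$.
\end{itemize}

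\emph{Step 3 (the natural isomorphisms).}
\begin{itemize}
\item $\mathcal{L}ie\circ\mathrm{Int}=\mathrm{id}$ on the nose. One has $\mathbb F_{\mathfrak b}(e)=\mathfrak b$ and $d\pi(e)=d\Theta(e)=\delta$, using $\pi(e)=0$ and the computation in Lemma \ref{4.7}.
\item For $\mathrm{Int}\circ\mathcal{L}ie\cong\mathrm{id}$, start from $(G,\mathbb F,\pi)$ in $\CatPL$. The identity $\mathrm{id}_{\mathfrak g}$ integrates to an isomorphism of Lie groups $G\to\widetilde G$, since $G$ is already 1-connected. Two checks then make it a Poisson isomorphism:
\begin{itemize}
\item The bundle $\mathbb F$ agrees with $\mathbb F_{\mathfrak b}$, because $\mathbb F$ is $R^*$-invariant by Definition \ref{Lie-Poisson-group}, so $\mathbb F(g)=R^*_{g^{-1}}\mathbb F_e$.
\item The cocycle $\theta(x)=R^{**}_{x^{-1}}\pi(x)$ of Proposition \ref{Poisson 1-cocycle} and the integrated cocycle $\Theta$ have the same derivative $\delta$ at $e$. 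By uniqueness in Proposition \ref{proposition 4.10}, applied to $x\mapsto(x,\theta(x))$ and $x\mapsto(x,\Theta(x))$ via Lemma \ref{4.4}, they coincide. Hence $\pi=R^{**}\Theta$.
\end{itemize}
\item Naturality of these isomorphisms in $\Phi$ again follows from uniqueness of integrated homomorphisms.
\end{itemize}

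\emph{Step 4 (adjunction and uniqueness).} Since $\mathrm{Int}$ is an inverse equivalence, the hom-set bijection $\mathrm{Hom}(\mathrm{Int}\,\mathfrak g,G)\cong\mathrm{Hom}(\mathfrak g,\mathcal{L}ie\,G)$ follows: send $\Phi$ to $d\Phi(e)$, and use Step 3 for bijectivity. Uniqueness up to natural isomorphism is the standard uniqueness of adjoints.

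The main obstacle is Step 3, and in particular the identification $\mathbb F=\mathbb F_{\mathfrak b}$ as convenient bundles and not merely fiberwise. The Poisson Lie group axioms give that $R^*:G\times\mathbb F\to\mathbb F$ is smooth. I would use this to show that the trivialization $G\times\mathfrak b\to\mathbb F$, $(g,\xi)\mapsto R^*_{g^{-1}}\xi$, is a diffeomorphism, with smooth inverse $\alpha_g\mapsto(g,R^*_g\alpha_g)$. The same trivialization defines $\mathbb F_{\mathfrak b}$, so the two convenient structures coincide.

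A secondary subtlety concerns objects of $\CatLB$ whose given $G$ is not 1-connected. Passing to $\widetilde G$ handles this, and I would note that this is where regularity of $\widetilde G$ must be checked.
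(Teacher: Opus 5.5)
Your proof is correct, but it is organized differently from the paper's.

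The paper goes straight to the adjunction $\mathrm{Int}\dashv\mathcal{L}ie$. It asserts that the unit $\eta_Y\colon Y\to\mathcal{L}ie(\mathrm{Int}(Y))$ is a canonical isomorphism and sets $\Phi_{Y,X}(\Psi):=\mathcal{L}ie(\Psi)\circ\eta_Y$. It inverts this by integrating $g\circ\eta_Y^{-1}$ via Proposition~\ref{proposition 4.10}, and uses the cocycle compatibility from the morphism theorem to see that the integral is Poisson. It does not check functoriality of $\mathrm{Int}$, naturality, or invertibility of the counit $\mathrm{Int}(\mathcal{L}ie(X))\to X$. An adjunction with invertible unit only makes $\mathrm{Int}$ fully faithful, so the claimed equivalence and the ``unique Poisson Lie structure'' clause really rest on that last point.

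You instead build an explicit quasi-inverse, verify both composites, and obtain the adjunction formally. Your Step~3 is exactly the missing counit argument. It consists of $\theta=\Theta$, by the uniqueness in Proposition~\ref{proposition 4.10} applied through Lemma~\ref{4.4}; the paper only gestures at this in the uniqueness remark after \eqref{1-cocycle}. It also includes the identification of $\mathbb F$ with $\mathbb F_{\mathfrak b}$ as convenient bundles through the smooth right action, which the paper never addresses. One could instead extract the counit from the converse half of the morphism theorem applied to $\eta$ and $\eta^{-1}$, but that would still need your bundle-level identification.

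Your equality $\mathcal{L}ie\circ\mathrm{Int}=\mathrm{id}$ holds only after identifying $T_e\widetilde G$ with $\mathfrak g$ via $dp(e)$, which is precisely the paper's $\eta_Y$. Your route costs more bookkeeping, but it actually proves the equivalence rather than only the hom-set bijection.

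Both arguments share two assumptions:
\begin{itemize}
\item the existence of a 1-connected regular group with Lie algebra $\mathfrak g$, which the paper simply asserts and you rightly flag;
\item the standing hypothesis of Section~\ref{6} that the group coadjoint action preserves $\mathfrak b$ smoothly. The definition of a convenient Lie bialgebra only asks for the infinitesimal action, so this hypothesis has to be built into the objects of $\CatLB$ before Theorem~\ref{D-2} can be invoked.
\end{itemize}
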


\[
\begin{tikzcd}[row sep=huge, column sep=huge, ampersand replacement=\&]
    \parbox{3.5cm}{\centering \textbf{Category} $\CatPL$ \\ 
    \scriptsize Objects: $(G, \mathbb{F}, \pi)$ \\ 
    \scriptsize Morphisms: $\Phi: G\to H$}   
    \arrow[r, "\mathcal{L}ie"', shift right=1.0ex] 
    \arrow[r, phantom, "\simeq"]
    \& 
    \parbox{4.0cm}{\centering \textbf{Category} $\CatLB$ \\ 
    \scriptsize Objects: $(\mathfrak{g}, \mathfrak{b}, \delta)$ \\ 
    \scriptsize Morphisms: $\phi = d\Phi(e):\mathfrak g \to \mathfrak h$}
    \arrow[l, "\text{Int}"', shift right=1.5ex]
\end{tikzcd}
\]

\begin{proof}

Let $Y \in \mathrm{LieBiAlg_{reg}}$ be a Lie bialgebra and let $X \in \mathrm{PLGrp_{reg}^{sc}}$ be a Poisson-Lie group. To establish the adjunction $\mathrm{Int} \dashv \mathcal{L}ie$, we need to  construct a bijection
$$ \Phi_{Y,X} : \operatorname{Hom}_{\mathrm{PLGrp_{reg}^{sc}}}(\mathrm{Int}(Y), X) \xrightarrow{\sim} \operatorname{Hom}_{\mathrm{LieBiAlg_{reg}}}(Y, \mathcal{L}ie(X)) $$
that is natural in both $Y$ and $X$. 

First, consider the unit of the adjunction for the object $Y$. This is a natural transformation $\eta_Y : Y \to \mathcal{L}ie(\mathrm{Int}(Y))$. Because $\mathrm{Int}(Y)$ is constructed as the unique 1-connected Lie group integrating the Lie algebra of $Y$, the unit $\eta_Y$ is a canonical isomorphism of Lie bialgebras, and its inverse $\eta_Y^{-1} : \mathcal{L}ie(\mathrm{Int}(Y)) \to Y$ is well-defined.

\textit{Construction of $\Phi_{Y,X}$:} 
Given a Poisson Lie group homomorphism $\Psi : \mathrm{Int}(Y) \to X$, we define the corresponding Lie bialgebra homomorphism by applying the differentiation functor and pre-composing with the unit:
$$ \Phi_{Y,X}(\Psi) := \mathcal{L}ie(\Psi) \circ \eta_Y. $$
Since $\mathcal{L}ie(\Psi)$ and $\eta_Y$ are both Lie bialgebra morphisms, their composition is a morphism in $\mathrm{LieBiAlg_{reg}}$.

\textit{Construction of the inverse $\Phi_{Y,X}^{-1}$:}
Let $g \in \operatorname{Hom}_{\mathrm{LieBiAlg_{reg}}}(Y, \mathcal{L}ie(X))$ be a Lie bialgebra homomorphism. Because $\eta_Y$ is an isomorphism, we can construct the composition
$$ \tilde{g} := g \circ \eta_Y^{-1} : \mathcal{L}ie(\mathrm{Int}(Y)) \to \mathcal{L}ie(X). $$
Note that $\tilde{g}$ is a Lie bialgebra homomorphism mapping from the Lie algebra of $\mathrm{Int}(Y)$ to the Lie algebra of $X$. 

Because the domain group $\mathrm{Int}(Y)$ is 1-connected and $X$ is regular, there exists a unique smooth Lie group homomorphism $F(g) : \mathrm{Int}(Y) \to X$ such that its derivative satisfies $\mathcal{L}ie(F(g)) = \tilde{g}$. Furthermore, because $\tilde{g}$ is compatible with respect to the cocycles (condition (2) of Definition \ref{cocycle compatibility}), $F(g)$ respects the multiplicative Poisson structure (equation 4.2 of Definition \ref{Cocycle compatibility}), making it a  morphism in $\mathrm{PLGrp_{reg}^{sc}}$. 
We define $\Phi_{Y,X}^{-1}(g) := F(g)$.
\end{proof}

\subsubsection{Path-integral formula for the group 1-cocycle}

Let $(\mathfrak{g}, \mathfrak{b}, \delta)$ be a Lie bialgebra structure. By Theorem \ref{D-2}, it integrates to a unique Poisson Lie group structure $(G, \mathbb{F}, \pi)$. The corresponding $1$-cocycle $\theta: G \to L^2_{\text{skew}}(\mathfrak{b})$ is the second component of the Lie group homomorphism $\tilde{\theta}: G \to G \rtimes L^2_{\text{skew}}(\mathfrak{b})$, which is the integration of the Lie algebra morphism $\tilde{\delta}: \mathfrak{g} \to \mathfrak{g} \rtimes L^2_{\text{skew}}(\mathfrak{b})$ defined by $X \mapsto (X, \delta(X))$. 

Let $X \in C^\infty(\mathbb{R}, \mathfrak{g})$. Using the regularity of the Lie group $G$ (\cite[Definition 5.3]{22}), there exists a unique curve $g \in C^\infty(\mathbb{R}, G)$ that satisfies the differential equation
\begin{align*}
\begin{cases}   
g(0) = e, \\
g'(t) = R_{g(t)*}X(t).
\end{cases} 
\end{align*}
Let us denote the endpoint of this curve by $\operatorname{evol}_{G}(X) := g(1)$. By \cite[Corollary 5.9]{20}, the evolution of a curve $(X, Y) \in C^\infty(\mathbb{R}, \mathfrak{g} \rtimes L^2_{\text{skew}}(\mathfrak{b}))$ in the semidirect product group is given by
\[
\operatorname{evol}_{G \rtimes L^2_{\text{skew}}(\mathfrak{b})}(X, Y) = (g(1), h(1)),
\]
where the components are determined by
\begin{align*}
g(t) &= \operatorname{Evol}_{G}(X)(t), \\
h(t) &= \operatorname{Ad}^{(2)}_{g(t)} \int_{0}^{t} \operatorname{Ad}^{(2)}_{g(u)^{-1}} Y(u) \, du \in L^2_{\text{skew}}(\mathfrak{b}).
\end{align*}
Here, $\operatorname{Evol}_G: C^\infty(\mathbb{R}, \mathfrak{g}) \to C^\infty(\mathbb{R}, G)$ is the right evolution map, which is the inverse of the right logarithmic derivative
\[
\delta^r: C^\infty(\mathbb{R}, G) \to C^\infty(\mathbb{R}, \mathfrak{g}), \quad \delta^r(g)(t) := R_{g(t)^{-1}*} g'(t).
\]

Now, let $g_0 \in G$, and let $g \in C^\infty(\mathbb{R}, G)$ be a smooth curve such that $g(0) = e$ and $g(1) = g_0$. By taking the right logarithmic derivative of $g$, we obtain the curve $X \in C^\infty(\mathbb{R}, \mathfrak{g})$ given by $X(t) := R_{g(t)^{-1}*} g'(t)$.

Because the homomorphism $\tilde{\theta}$ integrates the Lie algebra morphism $\tilde{\delta}$, the evolution of the curve $(X(t), \delta(X(t)))$ in the Lie algebra maps exactly to the curve $(g(t), \theta(g(t)))$ in the Lie group. Setting $Y(t) := \delta(X(t)) \in C^\infty(\mathbb{R}, L^2_{\text{skew}}(\mathfrak{b}))$, the second component evaluates at $t=1$ to
\[
h(1) = \operatorname{Ad}^{(2)}_{g(1)} \int_{0}^{1} \operatorname{Ad}^{(2)}_{g(u)^{-1}} \delta(X(u)) \, du \in L^2_{\text{skew}}(\mathfrak{b}).
\]
Since $g(1) = g_0$ and $h(1) = \theta(g_0)$, we obtain the following path-integral formula for the Lie group $1$-cocycle $\theta$
\begin{equation}\label{1-cocycle}
    \theta(g_0) = \operatorname{Ad}^{(2)}_{g_0} \int_{0}^{1} \operatorname{Ad}^{(2)}_{g(t)^{-1}} \delta(X(t)) \, dt.
\end{equation}
We must verify the uniqueness of the formula \ref{1-cocycle} of the $1$ -cocycle $\Theta(g)$. By Lemma \ref{4.4}, $\Theta$ is a $1$-cocycle if and only if \[\tilde \Theta : G \to G\rtimes L^2_{skew}(\mathfrak b), g\to(g, \Theta(g)), g\in G. \] is a Lie group homomorphism. But since $G$ is a 1-connected regular Lie group, from Proposition \ref{proposition 4.10}, it follows that $\tilde \Theta $ is a unique lift of a Lie algebra morphism \[\tilde \delta : \mathfrak g \to \mathfrak g\rtimes L^2_{skew}(\mathfrak b), X\to (X,\delta(X)), X\in \mathfrak g,\] which is uniquely defined by $\delta$. Hence, formula \ref{1-cocycle} depends only on the endpoints of the path $g(t)$.

\section{Remarks on Poisson structures in the setting of Bastiani-calculus}\label{7}
For a manifold $M$ modeled on a locally convex space $E$, we encounter a problem that the bundle $p: L^2(T'M)\to M$ may not have a smooth manifold structure due to the fact that $T'M$ may not have a smooth vector bundle structure over $M$.

\begin{remark}

For a manifold $M$ modeled on a locally convex space $E$, its cotangent bundle $T'M$ may not have a smooth vector bundle structure over $M$. This is due to the fact that $E'$ may not have a topology for which the transition chart change maps of $T'M$ are smooth. But if the typical fiber over a point is of the same kind as $E$, then it has a smooth manifold structure depends on the trivialization. It also works if the cotangent bundle is a trivial vector bundle, then $T'M\cong M\times E'$. Thus, any locally convex topology on $E'$ provides a smooth manifold structure on $T'M$. However, if $M$ is a smooth manifold modeled on a convenient space $E$. Let $(U_\alpha,u_\alpha:U_\alpha\to E_\alpha)$ be a smooth atlas. The cotangent bundle $T'M$ carries a smooth manifold structure ( see \cite[33.1]{20}), where the cocycle of a transition function is given by  \[ \phi_{\alpha,\beta}:U_{\alpha}\cap U_{\beta} \to GL(E_\alpha', E_\beta'), x\mapsto d(u_{\beta}\circ u_{\alpha}^{-1})(u_\alpha(x))^*.\]

\end{remark}
We will see in the next section that in the special situation when a manifold $M$, which are regular as topological spaces and modeled on nuclear Fréchet or nuclear Silva space $E$, $L_{\text{alt}}^{k}(T_{x}'M) \cong \hat\wedge^{k}T_{x}M$ is a topological isomorphism \cite[Corollary 2, 21.2]{17}, where $\hat\wedge$ denotes the skew-symmetric tensor product completed in the projective tensor product topology. $\hat\wedge^{k}T_{x}M\cong \hat\wedge^kE$ carries nuclear Fr\'{e}chet structure when $E$ is a nuclear Fr\'{e}chet space and $\hat\wedge^{k}T_{x}M\cong \hat\wedge^kE$ carries a nuclear Silva structure when $E$ is nuclear Silva space (see \ref{Lemma9.13}). Then $\hat\wedge^2TM$ carries a smooth vector bundle structure according to \cite[Definition 3.7.1]{11}. Then the bundle projection $\pi_{\hat\wedge^kTM}: \hat\wedge^kTM \to M$ takes all $v\in \hat\wedge^kT_xM$ to $x\in M$. 
The local trivialization for each chart $\phi: U \to V \subseteq E$ of $M$ is given by

\begin{align*}\theta_\phi : \hat\wedge^kT(U) \to U \times \hat\wedge^kE, \\
\theta_\phi(v)= ((\pi_{\hat\wedge^k TM})_{|\hat\wedge^kTU},\hat\wedge ^kT_x\phi(v))
\end{align*} where $x=\pi_{\hat\wedge^k TM}(v)$ and $\hat\wedge ^kT_x\phi: \hat\wedge^kT_xM \to\hat\wedge^k E$ is the continuous linear extension of the differential $T_x\phi: T_xM\to E$ to the $k$-th exterior power.
 The inverse of $\theta_\phi$ is given by the following mapping: 
 \[U \times \hat\wedge^k E \to \hat\wedge^kTU,\\
(x,y)\to \hat\wedge ^k(T(\phi)^{-1})(\phi(x),y)\]
where $\hat\wedge ^k(T_{x}(\phi)^{-1})$ is the continuous linear extension of the map $T(\phi)^{-1}: U\times E\to TU$ to the wedge product.

\section{Drinfeld theorem for Poisson Lie groups modeled on a nuclear Fréchet and a nuclear Silva space}\label{8}

In this section, we prove the Drinfeld correspondence within the framework of Lie groups modeled on a nuclear Fréchet or a nuclear Silva space, which resembles the Drinfeld correspondence in finite dimensions. We provide an alternative proof to that presented in Section 7 for the integration of a Lie bialgebra structure $(\mathfrak{g}, \mathfrak{g}', \delta)$ on the Lie algebra $\mathfrak{g}$ of a 1-connected regular Lie group $G$ into a compatible Poisson structure $(G, T'G, \pi)$.

We establish a criterion for a bivector $\pi \in \Gamma(\hat\wedge^2TM)$ to define a Poisson structure in terms of the vanishing of the Schouten bracket $[\pi, \pi]_s=0$. We begin by proving the isomorphism between the topological module ${\hat\wedge}^k_{C^{\infty}(M)}\Gamma(TM)$ and the space of sections $\Gamma\left({\hat\wedge}^k TM\right)$, which is required to define the Schouten bracket on the graded commutative algebra $\bigoplus_{n\ge 0}\Gamma(\hat\wedge^n TM)$.

\subsection{A version of the Serre-Swan theorem in infinite dimensions}\label{S-S}
Let $M$ be a smoothly regular manifold modeled on a nuclear topological vector space $E$. We aim to demonstrate the following topological isomorphism
\[
{\hat\wedge}^k_{C^{\infty}(M)}\Gamma(TM) \cong \Gamma\left({\hat\wedge}^k TM\right).
\]
Let $U \subset E$ be an open subset, let $F$ be a nuclear topological vector space, and let $\hat\wedge^k_{C^\infty(M)}$ denote the skew-symmetric $C^\infty(M)$-linear tensor product completed in the projective tensor product topology. We first need to establish the following topological isomorphism for the local mapping spaces
\[
C^{\infty}(U, F) \cong C^{\infty}(U, \mathbb{R}) \hat{\otimes}_{\pi} F.
\]

\paragraph{Density of $C^{\infty}(U, \mathbb{R}) \otimes F$ in $C^{\infty}(U, F)$.}
\begin{remark}[{\cite[4.6]{14}}]
Let $E$ and $F$ be topological vector spaces. The $\epsilon$-topology on $E \tilde{\otimes}_{\epsilon} F$ is the initial topology with respect to the linear map
\[
\psi : E \otimes F \rightarrow L(E'_{\tau}, F)_{\epsilon}
\]
given by $\psi(x, y)(\lambda) := \lambda(x)y$ for $x \in E, y \in F$ and $\lambda \in E'$, where $E'_{\tau}$ denotes the space of continuous linear functionals endowed with the Mackey topology (the topology of uniform convergence on equicontinuous subsets of $E'$). The locally convex topology on $L(E', F)$ is given by the seminorms
\[
\|\alpha\|_{S,q} := \sup_{\lambda \in S} q(\alpha(\lambda))
\]
where $S$ varies over the set of equicontinuous subsets of $E'$, and $q$ varies over the set of continuous seminorms on $F$. Using the Bipolar theorem, we can deduce that the seminorm $p$ on $E$ can be written as
\[
p(y) = \sup_{\lambda \in S} |\lambda(y)|
\]
for all $y \in E$ (see \cite[Section 4.7]{14}).
\end{remark}

\begin{remark}
Let $\mathcal{K} \in \mathcal{K}(U)$ and $\mathcal{L} \in \mathcal{K}(E)$ be cofinal compact subsets. Then the expression
\[
\|\gamma\|_{C^j, \mathcal{K}, \mathcal{L}, q} := \max\{\|\gamma\|_{1, \mathcal{K}, \mathcal{L}, q}, \dots, \|\gamma\|_{C^j, \mathcal{K}, \mathcal{L}, q}\}
\]
defines a continuous seminorm on $C^{\infty}(U, F)$, where
\[
\|\gamma\|_{i, \mathcal{K}, \mathcal{L}, q} = \sup_{x \in \mathcal{K} \times \mathcal{L}^i} q(D^i(\gamma)(x)).
\]
\end{remark}

Let $\gamma \in C^{\infty}(U, F)$ and let $I_F$ be the identity operator on $F$. Since $F$ is a nuclear Fréchet (or Silva) space, we have $L_{\beta}(F, F) \cong F' \hat{\otimes}_{\pi} F$ (see \cite[Proposition 50.5]{31}), meaning $I_F$ can be approximated by finite-rank operators $i_n \in F' \hat{\otimes}_{\pi} F$. The fact that $C^{\infty}(U, \mathbb{R}) \otimes F$ is dense in $C^{\infty}(U, F)$ follows from the following lemma.

\begin{lemma}
For $\gamma_n := i_n \circ \gamma \in C^{\infty}(U) \otimes F$, the sequence $\gamma_n \rightarrow \gamma$ converges with respect to the compact-open $C^k$-topology of $C^{\infty}(U, F)$.
\end{lemma}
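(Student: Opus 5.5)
We need to show $\gamma_n = i_n\circ\gamma \to \gamma$ in the compact-open $C^k$-topology. The plan is to reduce the convergence of derivatives of $\gamma_n$ to uniform convergence of $i_n$ on compact subsets of $F$, and then to use nuclearity of $F$ to get that uniform convergence.

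First, since each $i_n$ is a continuous linear (finite-rank) operator, the chain rule gives
\[
D^j(\gamma_n)(x)(v_1,\dots,v_j) = i_n\bigl(D^j\gamma(x)(v_1,\dots,v_j)\bigr),
\]
so that
\[
\|\gamma_n-\gamma\|_{j,\mathcal K,\mathcal L,q} = \sup_{z\in C_j} q\bigl((i_n - I_F)(z)\bigr),
\qquad C_j := D^j\gamma(\mathcal K\times\mathcal L^j).
\]
Because $D^j\gamma : U\times E^j\to F$ is continuous (in the Bastiani sense, which agrees with the convenient one for Fréchet or Silva spaces) and $\mathcal K\times\mathcal L^j$ is compact, each $C_j$ is a compact subset of $F$. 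The union $C=\bigcup_{j\le k}C_j$ is then also compact. It therefore suffices to show that $i_n\to I_F$ uniformly on compact subsets of $F$, i.e. in $L_c(F,F)$.

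Second, we produce the approximating sequence. Nuclear Fréchet and nuclear Silva spaces are Montel, so bounded sets are relatively compact. Hence $L_\beta(F,F)$ carries the topology of uniform convergence on compacta, and for barrelled $F$ the compact sets are exactly the closures of bounded sets. From $L_\beta(F,F)\cong F'\hat\otimes_\pi F$ (\cite[Proposition 50.5]{31}), $I_F$ lies in the closure of $F'\otimes F$ in $L_\beta(F,F)$. To get a \emph{sequence} $(i_n)$ rather than a net, I would use that $F$ has a Schauder basis (as in the model cases $C^\infty(\mathbb S^1)$ and $C^\omega(\mathbb S^1)$, via the Fourier projections $S_N$ of \S\ref{Partial sum convergence for smooth vectors}). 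Alternatively, I would argue that on a Fréchet space, equicontinuous pointwise convergence implies uniform convergence on compacta (Banach–Steinhaus), and the approximation property supplies such an equicontinuous sequence when $F$ is separable. In the Silva case, I would reduce to the steps $B_n$ of the inductive limit, using that compact subsets lie in some step.

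Finally, $\gamma_n\in C^\infty(U)\otimes F$: writing $i_n=\sum_{l=1}^{r_n}\lambda_l\otimes f_l$, we get $\gamma_n=\sum_l(\lambda_l\circ\gamma)\otimes f_l$ with $\lambda_l\circ\gamma\in C^\infty(U)$. Taking the supremum over $j\le k$ then gives convergence in $\|\cdot\|_{C^k,\mathcal K,\mathcal L,q}$.

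The main obstacle is the second step: extracting a sequence, rather than a net, of finite-rank operators converging to $I_F$ uniformly on compacta. If the general argument is delicate, I would settle for a net, which suffices for density, or invoke the Schauder basis in the concrete examples.
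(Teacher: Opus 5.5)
Your core argument is exactly the paper's proof: you pull $i_n$ through $D^j$ by the chain rule, observe that $D^j\gamma(\mathcal K\times\mathcal L^j)$ is compact, and use that $i_n\to I_F$ uniformly on compact subsets of $F$. The paper takes such a sequence $(i_n)$ for granted from $L_\beta(F,F)\cong F'\hat\otimes_\pi F$, and it even writes $i_n\in F'\hat\otimes_\pi F$. Two of your remarks therefore improve on the paper. First, you insist that $i_n$ have finite rank, so that $\gamma_n=\sum_l(\lambda_l\circ\gamma)\otimes f_l$ actually lies in $C^\infty(U)\otimes F$. Second, you note that the isomorphism only yields a \emph{net} of finite-rank operators.

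One of your proposed ways of extracting a sequence, however, is false. The approximation property together with separability does not produce an equicontinuous sequence of finite-rank operators converging to $I_F$; that is the bounded approximation property. There are nuclear Fr\'echet spaces without it (constructions of Dubinsky and of Vogt). For such $F$ no sequence of finite-rank operators converges to $I_F$ even pointwise, because by Banach--Steinhaus any such sequence would be equicontinuous. The Silva case has the same defect: if $T_n\to I_F$ uniformly on compacta in a nuclear Silva space $F$, then the transposes $T_n^t$ are finite-rank operators converging pointwise to the identity on the nuclear Fr\'echet space $F'_b$. Hence the strong duals of those counterexamples are nuclear Silva spaces with no such sequence, and reducing to the Banach steps $B_n$ cannot help.

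So in general the hypothesis of the lemma, a \emph{sequence} $(i_n)$, is not available. Your fallback of working with a net $(i_\iota)$ is the correct fix, since your argument goes through verbatim and the subsequent density of $C^\infty(U)\otimes F$ in $C^\infty(U,F)$ only needs a net. A genuine sequence exists when $F$ has a Schauder basis, as in the concrete examples. There the Fourier partial sums $S_N$ converge pointwise, are equicontinuous by Banach--Steinhaus, and hence converge uniformly on compacta.
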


\begin{proof}
Let $j \in \mathbb{Z}_+$, let $\mathcal{K} \in \mathcal{K}(U)$ and $\mathcal{L} \in \mathcal{K}(E)$ be compact subsets, and let $q$ be a seminorm on $F$. Then, for the seminorm $\|\cdot\|_{C^j, \mathcal{K}, \mathcal{L}, q}$ on $C^{\infty}(U, F)$ (see \cite[Lemma 2.14, c]{14}), we have
\[
\|\gamma_n - \gamma\|_{C^j, \mathcal{K}, \mathcal{L}, q} = \max_{1 \le i \le j}\left\{\sup_{x \in \mathcal{K} \times \mathcal{L}^i} q(D^i(i_n \circ \gamma - \gamma)(x))\right\} = \max_{1 \le i \le j}\left\{\sup_{x \in \mathcal{K} \times \mathcal{L}^i} q((i_n - I_F)(D^i\gamma(x)))\right\}.
\]
Because each total derivative $D^i\gamma$ is continuous and the domain $x \in \mathcal{K} \times \mathcal{L}^i$ is compact, the image set $D^i\gamma(\mathcal{K} \times \mathcal{L}^i)$ lies within a compact subset of $F$. Because $i_n \rightarrow I_F$ uniformly on compact subsets of $F$, it follows that $\|\gamma_n - \gamma\|_{C^j, \mathcal{K}, \mathcal{L}, q} \rightarrow 0$.
\end{proof}

\paragraph{Topological isomorphism $C^{\infty}(U, F) \cong C^{\infty}(U, \mathbb{R}) \hat{\otimes}_{\pi} F$.}
Since $U$ is a $k_{\mathbb{R}}$-space (being a Silva space) and $F$ is complete, the space $C^{\infty}(U, F)$ is complete. The algebraic tensor product $C^{\infty}(U, \mathbb{R}) \otimes F$ is a dense vector subspace of both $C^{\infty}(U, F)$ and $C^{\infty}(U, \mathbb{R}) \hat{\otimes}_{\pi} F$ (since $F$ is nuclear). Thus, it suffices to show that $C^{\infty}(U, F)$ and $C^{\infty}(U, \mathbb{R}) \hat{\otimes}_{\pi} F$ induce the identical topology on $C^{\infty}(U, \mathbb{R}) \otimes F$. The unique continuous linear map
\[
\tilde{\Lambda}: C^{\infty}(U, F) \rightarrow F \tilde{\otimes}_{\epsilon} C^{\infty}(U) \cong F \hat{\otimes}_{\pi} C^{\infty}(U)
\]
determined by $\Lambda(\gamma \cdot v) = v \otimes \gamma$ for $v \in F$ and $\gamma \in C^{\infty}(U)$ is then a topological isomorphism (see \cite[4.6]{14} for the $\epsilon$-topology on the tensor product, which is isomorphic to the projective tensor product topology since $F$ is nuclear).

Since $F$ is nuclear, we have the topological embedding
\[
\psi : F \otimes C^{\infty}(U) \rightarrow L(F'_{\tau}, C^{\infty}(U))_{\epsilon}.
\]
It now suffices to show that the linear map $h := \psi \circ \Lambda$ is a topological embedding, where $h(\gamma)(\lambda) = (\psi \circ \Lambda)(\gamma)(\lambda) = \lambda \circ \gamma$ for all $\gamma \in Y := F \otimes C^{\infty}(U) \subseteq C^{\infty}(U, F)$ and $\lambda \in F'$. This follows from the equivalence of seminorms between $F \tilde{\otimes}_{\epsilon} C^{\infty}(U)$ and $C^{\infty}(U, F)$, specifically
\[
\|h(\gamma)\|_{S, C^j, \mathcal{K}, \mathcal{L}} = \|\gamma\|_{C^j, \mathcal{K}, \mathcal{L}, q}
\]
for each $j \in \mathbb{N}_0$, compact subsets $\mathcal{K} \in \mathcal{K}(U)$, $\mathcal{L} \in \mathcal{K}(E)$, and continuous seminorm $q$ on $F$, with $S := B_q(0)^{\circ} \subseteq F'$. Let us compute
\begin{align*}
\|h(\gamma)\|_{S, C^j, \mathcal{K}, \mathcal{L}} &= \sup_{\lambda \in S} \|\lambda \circ \gamma\|_{C^j, \mathcal{K}, \mathcal{L}} = \sup_{\lambda \in S} \max_{0 \le i \le j} \left| \sup_{x \in \mathcal{K} \times \mathcal{L}^i} D^i(\lambda \circ \gamma)(x) \right| \\
&= \sup_{\lambda \in S} \max_{0 \le i \le j} \left| \sup_{x \in \mathcal{K} \times \mathcal{L}^i} \lambda(D^i(\gamma)(x)) \right| \\
&= \max_{0 \le i \le j} \left\{ \sup_{x \in \mathcal{K} \times \mathcal{L}^i} \left| \sup_{\lambda \in S} \lambda(D^i(\gamma)(x)) \right| \right\} \\
&= \max_{0 \le i \le j} \left\{ \sup_{x \in \mathcal{K} \times \mathcal{L}^i} q(D^i(\gamma)(x)) \right\} = \|\gamma\|_{C^j, \mathcal{K}, \mathcal{L}, q}.
\end{align*}

\paragraph{Extension to the wedge product: $C^{\infty}(U, \hat\wedge^2 TU) \cong \hat\wedge^2_{C^{\infty}(U)}C^{\infty}(U, TU)$.}
We now define an equivalence relation on the space $\otimes^2_{\pi}C^{\infty}(U, F)$ such that the projective tensor product is $C^{\infty}(U)$-linear on the quotient space. Consider the map 
\[
\tau : \otimes^2_{\pi}(C^{\infty}(U) \otimes_{\pi} F) \rightarrow C^{\infty}(U) \otimes_{\pi} (F \otimes_{\pi} F), \quad (f \otimes u) \otimes (g \otimes v) \mapsto fg \otimes u \otimes v,
\]  
and its extension to the quotient 
\[
\tilde{\tau}: \otimes^2_{\pi}(C^{\infty}(U) \otimes_{\pi} F)/\ker(\tau) \rightarrow C^{\infty}(U)\otimes_{\pi}(\otimes_{\pi}^2 F).
\]
Let $\tau'$ be the extension of $\tau$ to the completion, and let $\tilde{\tau}'$ be the extension of $\tau'$ to the quotient
\begin{align*} 
\tau'&: \hat{\otimes}_{\pi}^2 C^{\infty}(U, F) \rightarrow C^{\infty}(U, \hat{\otimes}_{\pi}^2 F), \\
\tilde{\tau}'&: \hat{\otimes}_{\pi}^2 C^{\infty}(U, F)/\ker(\tau') \rightarrow C^{\infty}(U, \hat\otimes^2_{\pi}F).
\end{align*}
To show that $\tilde{\tau}'$ is a topological isomorphism, we must construct its inverse. We note the map
\begin{align*} 
I : C^{\infty}(U) \otimes_{\pi} (\otimes^2_{\pi}F) &\rightarrow \otimes^2_{\pi}(C^{\infty}(U) \otimes_{\pi} F)/\ker(\tau), \\
f \otimes (u \otimes v) &\mapsto [(f \otimes u) \otimes (1 \otimes v)] 
\end{align*} 
is clearly injective. Let $[(f \otimes u) \otimes (g \otimes v)]$ be an arbitrary equivalence class in the quotient. Since $[(f \otimes u) \otimes (g \otimes v)] = [(fg \otimes u) \otimes (1 \otimes v)]$, $I$ is also surjective.

We now consider the topological embedding 
\[
i : C^{\infty}(U)\otimes_{\pi} (\otimes_{\pi}^2 F) \rightarrow \otimes^2_{\pi}(C^{\infty}(U) \otimes_{\pi} F), \quad f \otimes u \otimes v \mapsto (f \otimes u) \otimes (1 \otimes v).
\] 
Its extension $\tilde{i} : C^{\infty}(U, \hat{\otimes}_{\pi}^2 F) \rightarrow \hat{\otimes}_{\pi}^2 C^{\infty}(U, F)$ is also an embedding. It is straightforward to see that $I$ is the continuous linear inverse of $\tilde{\tau}$ prior to completion. Let 
\[
q : \hat\otimes^2_{\pi}C^{\infty}(U, F) \rightarrow \hat\otimes^2_{\pi}C^{\infty}(U, F)/\ker(\tau')
\] 
be the quotient map. To see that $I' := q \circ \tilde{i}$ is the continuous right inverse of the map $\tilde{\tau}'$, we first observe the exact sequence relations
\begin{align*}
\ker(\tau') \cap \otimes^2_{\pi}(C^{\infty}(U) \otimes_{\pi} F) &= \ker(\tau), \\
\otimes^2_{\pi}C^{\infty}(U, F)/\ker(\tau) &\subset \hat{\otimes}_{\pi}^2 C^{\infty}(U, F)/\ker(\tau').
\end{align*} 
Therefore, the completion of the pre-quotient aligns with the quotient of the completion
\[
\overline{\otimes^2_{\pi}C^{\infty}(U, F)/\ker(\tau)}\cong \hat{\otimes}_{\pi}^2 C^{\infty}(U, F)/\ker(\tau').
\]
We then note that $\tilde{\tau}'|_{\otimes^2_{\pi}(C^{\infty}(U) \otimes_{\pi} F)} = \tilde{\tau}$ is the continuous right inverse of $I'|_{C^{\infty}(U)\otimes_{\pi}^2 F} = I$, meaning their extensions $\tilde{\tau}'$ and $I'$ are continuous mutual inverses by \cite[Lemma 4.5]{14}. Thus, we obtain the topological isomorphism

\begin{equation}\label{wedgetensor}
\tilde{\tau}': \hat{\otimes}_{\pi}^2{}_{C^{\infty}(U)}C^{\infty}(U, F) := (\hat{\otimes}_{\pi}^2 C^{\infty}(U, F))/\ker(\tau') \cong C^{\infty}(U, \hat{\otimes}^2_{\pi}F).
\end{equation}
For any locally convex space $E$ and nuclear space $F$, let $\gamma : F \hat{\otimes}_{\pi} F \rightarrow F \hat{\otimes}_{\pi} F$ be the unique continuous extension of the flip map $x \otimes y \mapsto y \otimes x$. Then we verify that

\begin{align*}
C^{\infty}(U)\hat{\otimes}_{\pi} (\hat{\wedge}^2 F) &:= C^{\infty}(U)\hat{\otimes}_{\pi} (\hat{\otimes}_{\pi}^2 F)/(\ker (Id_{\hat{\otimes}_{\pi}^2 F}-\gamma)) \\ 
&\cong (\hat{\otimes}_{\pi}^2{}_{C^{\infty}(U)}C^{\infty}(U, F))/(\ker(Id_{\hat{\otimes}_{\pi}^2{}_{C^{\infty}(U, F)}}- \tilde{\gamma})) =: \hat{\wedge}^2_{C^{\infty}(U)}C^{\infty}(U, F),
\end{align*}
where $\tilde{\gamma}$ is the unique extension of the map 
\[
\tilde{\gamma}[(f \otimes u) \otimes (g \otimes v)] := [(g \otimes v) \otimes (f \otimes u)] = [(fg \otimes v) \otimes (1 \otimes u)] = I(fg \otimes \gamma(u \otimes v))
\] 
to the completion. 

The isomorphism
\begin{align*}
&C^{\infty}(U)\hat{\otimes}_{\pi} (\hat{\otimes}_{\pi}^2 F)/ \ker (Id_{C^\infty(U)} \otimes Id_{\hat{\otimes}_{\pi}^2 F} - Id_{C^\infty(U)} \otimes \gamma) \\
&\cong (\hat{\otimes}_{\pi}^2{}_{C^{\infty}(U)}C^{\infty}(U, F))/ \ker(Id_{\hat{\otimes}_{\pi}^2{}_{C^{\infty}(U, F)}}- \tilde{\gamma})
\end{align*}
follows from the topological isomorphism $\tilde{\tau}'$ defined in \eqref{wedgetensor}. To see that $I'$ (the inverse of $\tilde{\tau}'$) passes to the quotients, we note that it intertwines the respective flip maps. For any simple tensor $f \otimes u \otimes v$, we have
\begin{align*}
\tilde{\gamma}(I'(f \otimes u \otimes v)) &= \tilde{\gamma}([(f \otimes u) \otimes (1 \otimes v)]) = [(1 \otimes v) \otimes (f \otimes u)] \\
&= [(f \otimes v) \otimes (1 \otimes u)] = I'(f \otimes v \otimes u) \\
&= I'( (Id_{C^\infty(U)} \otimes \gamma)(f \otimes u \otimes v) ).
\end{align*}
By continuity, $I' \circ (Id \otimes \gamma) = \tilde{\gamma} \circ I'$ on the entire completion. Consequently, $I'$ maps $\ker(Id - Id \otimes \gamma)$ bijectively onto $\ker(Id - \tilde{\gamma})$.

Furthermore, because $F$ is a nuclear space, its twofold completed projective tensor product $\hat{\otimes}_{\pi}^2 F$ is also nuclear. This guarantees that the completed projective tensor product preserves the exact sequence $0 \rightarrow \ker(Id - \gamma) \rightarrow \hat{\otimes}_{\pi}^2 F$ when tensoring with $C^\infty(U)$, meaning we can identify
\[
\ker (Id_{C^\infty(U)} \otimes (Id_{\hat{\otimes}_{\pi}^2 F} - \gamma)) = C^\infty(U)\hat{\otimes}_\pi\ker(Id_{\hat{\otimes}_{\pi}^2 F}-\gamma).
\]
Therefore, the image of $C^\infty(U)\hat{\otimes}_\pi\ker(Id_{\hat{\otimes}_{\pi}^2 F}-\gamma)$ under $I'$ coincides with $\ker(Id_{\hat{\otimes}_{\pi}^2{}_{C^{\infty}(U, F)}}- \tilde{\gamma})$. Since $I'$ is a topological isomorphism mapping these closed subspaces onto each other, it induces the desired topological isomorphism on the quotients. Consequently, it can be passed to the wedge product. We thus obtain the topological isomorphisms
\begin{align*}
C^{\infty}(U, \hat\wedge^2 TU) &\cong \hat\wedge^2_{C^{\infty}(U)}C^{\infty}(U, TU), \\
\Gamma(U, \hat\wedge^2 TU) &\cong \hat\wedge^2 \Gamma(U, TU). 
\end{align*}
Since $M$ is smoothly regular, smooth cut-off functions exist. Thus, we can extend this local trivialization globally to obtain the topological isomorphism for the manifold
\begin{equation}\label{Iso}
\psi : \Gamma(M, \wedge^2 TM) \rightarrow \wedge^2_{C^{\infty}(M)}(\Gamma(TM)).
\end{equation}

The following commutative diagram outlines the construction of the topological isomorphisms passing to the quotient wedge products:

\begin{figure}[h]
    \centering
    \begin{tikzcd}[row sep=large, column sep=large]
        0 \arrow[r] 
        & \ker(Id - \tilde{\gamma}) \arrow[r, hook] \arrow[d, "I'|"', "\cong"] 
        & \frac{\hat{\otimes}_{\pi}^2 C^{\infty}(U, F)}{\ker(\tau')} \arrow[r, "q_1"] \arrow[d, "I'"', shift right=1ex] 
        & \wedge^2_{C^{\infty}(U)}C^{\infty}(U, F) \arrow[r] \arrow[d, "\cong", "\tilde{\tau}'_{wedge}"'] 
        & 0 \\
        0 \arrow[r] 
        & C^\infty(U)\hat{\otimes}_\pi\ker(Id - \gamma) \arrow[r, hook] 
        & C^{\infty}(U) \hat{\otimes}_{\pi} (\hat{\otimes}_{\pi}^2 F) \arrow[r, "q_2"] \arrow[u, "\tilde{\tau}'"', shift right=1ex] 
        & C^{\infty}(U) \hat{\otimes}_{\pi} (\hat\wedge^2 F) \arrow[r] 
        & 0
    \end{tikzcd}
    \caption{Isomorphisms for the infinite-dimensional Serre-Swan theorem.}
    \label{fig:wedge_isomorphism}
\end{figure}
\subsection{The Schouten bracket}\cite{24}
Let $M$ be a smooth manifold modeled on a convenient space $E$. Let $p:TM\to M$ be the kinematic tangent bundle over $M$ of $TM$.

Consider the vector bundle \[\hat\wedge^k TM = \bigcup_{x \in M}  \hat\wedge^k_{\beta} T_xM\] over $M$, where $ \hat\wedge^k_{\beta} T_xM$ denotes the completion of the space of skew-symmetric tensors within $ \otimes^k_{\beta} T_xM$; the space of bornological tensor products of $T_xM$ (see \cite[Chapter 1, 5.7]{20}). The vector bundle  $p':\hat\wedge^k TM\to M$ carries a smooth convenient vector bundle structure over $M$ (see \cite[section 1.3]{24})

\begin{proposition}
Let $M$ be smoothly regular convenient manifold. The unique extension of the map
\begin{align*}
    [-,-]_s : \wedge^k \Gamma (TM) \times \wedge^l \Gamma(TM) \rightarrow \wedge^{k+l-1} \Gamma(TM),\\
[X_1 \wedge \dots \wedge X_k, Y_1 \wedge \dots \wedge Y_l] = \sum_{i,j} (-1)^{i+j} [X_i, Y_j] \wedge X_1 \wedge \dots \hat{X}_i \dots X_k \wedge Y_1 \dots \hat{Y}_j \dots Y_l
\end{align*}
defines a map
\[
[-,-]_s : \Gamma(\hat\wedge^k TM) \times \Gamma(\hat\wedge^l TM) \rightarrow \Gamma(\hat\wedge^{k+l-1} TM),
\]
on the  graded commutative algebra $\bigoplus_{n\geq0}\Gamma(\hat\wedge^n TM)$, which we call the Schouten bracket.
\end{proposition}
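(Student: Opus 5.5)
The plan is to first define the bracket on decomposable elements of the algebraic exterior powers $\wedge^k_{C^\infty(M)}\Gamma(TM)$ by the displayed formula. Next, check that it is well defined (compatible with the $C^\infty(M)$-balancing and the antisymmetry relations). Then show that it is bounded in each variable with respect to the bornology inherited from $\Gamma(\hat\wedge^k TM)$. Finally, extend it by density, using the Serre--Swan type isomorphism \eqref{Iso} of Section~\ref{S-S}, generalized from $k=2$ to arbitrary $k$ by the same argument, to identify $\Gamma(\hat\wedge^k TM)$ with the completion $\hat\wedge^k_{C^\infty(M)}\Gamma(TM)$.

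For well-definedness, I would follow the finite-dimensional route. Replacing $X_1$ by $fX_1$ produces, through $[fX_1,Y_j]=f[X_1,Y_j]-(Y_j f)X_1$, extra terms of the form $-(Y_jf)X_1\wedge\dots$. These are exactly the terms produced by moving $f$ to any other slot, so the formula descends to the $C^\infty(M)$-balanced quotient. Alternating under permutations of the $X_i$ (resp. $Y_j$) is checked via the signs $(-1)^{i+j}$. Alternatively, the bracket can be characterised intrinsically as the unique bi-derivation with $[X,f]_s=X(f)$ and $[X,Y]_s=[X,Y]$ satisfying the graded Leibniz rule. That characterisation gives well-definedness at once, and it is the one I would actually use.

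For boundedness, I would work locally. On a chart $U$, \eqref{Iso} gives $\Gamma(U,\hat\wedge^kTU)\cong C^\infty(U,\hat\wedge^kE)$. For decomposable fields the bracket is a finite sum of terms of the form $D(\,\cdot\,)$ contracted with another factor, that is, first-order differential operators in each argument. I would therefore write a local formula
\[
[P,Q]_s=\mathrm{Alt}\big(dP\cdot Q \pm dQ\cdot P\big),
\]
where $dP\colon U\to L(E,\hat\wedge^kE)$ is contracted against one $E$-factor of $Q$. Here contraction and $\mathrm{Alt}$ are continuous maps between completed projective tensor products (by the nuclearity/bornological tensor product properties), and $d$ is bounded on $C^\infty(U,\cdot)$ by \cite[Theorem 3.18]{20}. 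This formula agrees with the given one on the dense subspace $C^\infty(U)\otimes\wedge^k E$ established in the density lemma, and it is manifestly bounded bilinear on all of $C^\infty(U,\hat\wedge^kE)\times C^\infty(U,\hat\wedge^lE)$. Hence it is the unique bounded extension. Since it is built from $d$ and algebraic operations, it is local, so the pieces glue across charts using smooth cut-off functions (smooth regularity of $M$). That gives a global bounded operator.

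The main obstacle is making the contraction $L(E,\hat\wedge^kE)\times\hat\wedge^lE\to\hat\wedge^{k+l-1}E$ rigorous. It requires the canonical map $E'\hat\otimes\hat\wedge^kE\cong L(E,\hat\wedge^kE)$ together with the evaluation pairing $E'\times E\to\mathbb R$ to extend continuously to completed tensor products. I would first try to prove this using nuclearity of $E$ (so that $\epsilon$ and $\pi$ topologies coincide, as in Section~\ref{S-S}) and the continuity of evaluation in the convenient setting \cite[Corollary 3.13]{20}. Failing that, I would argue bornologically with $\hat\otimes_\beta$, as in the definition of $\hat\wedge^k_\beta$. Chart independence then follows from uniqueness of the extension, since both chart expressions agree on the dense decomposable subspace.
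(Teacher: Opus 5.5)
Your outline is sound and shares the paper's skeleton: define the bracket on decomposables, check compatibility with the $C^\infty(M)$-balancing, and transfer to $\Gamma(\hat\wedge^k TM)$ via \eqref{Iso}. The difference is in how you obtain the analytic extension.

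The paper simply invokes the universal property of the projective tensor product. The $\mathbb{R}$-multilinear formula in the vector fields extends to the completed $\mathbb{R}$-tensor powers of $\Gamma(TM)$, descends to the $C^\infty(M)$-balanced quotient, and is carried to $\Gamma(\hat\wedge^kTM)$ by \eqref{Iso}. No chart computation appears.

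You instead write the extended operator explicitly in a chart, as a first-order bidifferential operator built from $d$ and a contraction. This makes existence, boundedness and locality (hence gluing) transparent. You then use density only for the uniqueness asserted in the statement and for chart independence.

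The price is the bounded contraction $L(E,\hat\wedge^kE)\times\hat\wedge^lE\to\hat\wedge^{k+l-1}E$. Your bornological fallback is the better option here. The map $(A,y_1,\dots,y_l)\mapsto A(y_l)\otimes y_1\otimes\dots\otimes y_{l-1}$ is bounded multilinear, so it factors through $\otimes_\beta$ with no nuclearity assumption. Hence the existence half of your argument fits the generality of the statement, an arbitrary smoothly regular convenient manifold. The uniqueness half, like the paper's whole proof, still rests on \eqref{Iso} for general $k$, which the paper establishes only for $k=2$ and nuclear model spaces.

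Your well-definedness computation is also the accurate form of the paper's corresponding step. The paper asserts that the bracket is $C^\infty(G)$-linear in each component, which is not the case. What is actually needed is your observation: replacing $X_i$ by $fX_i$ produces the extra term $-\sum_j(-1)^{j-1}(Y_jf)\,X_1\wedge\dots\wedge X_k\wedge Y_1\wedge\dots\widehat{Y_j}\dots\wedge Y_l$, which is independent of $i$. The paper's displayed correction term contracts $df$ with the $X_i$ and is inconsistent with $[fX,Y]=f[X,Y]-(Yf)X$.

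One small point: the bi-derivation characterisation gives uniqueness, not existence. It is your direct check, not that characterisation, that proves the formula descends to $\wedge^k_{C^\infty(M)}\Gamma(TM)$.
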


\begin{proof}
The map on the simple tensors extends uniquely to the completed projective tensor products of the vector fields by the universal property of the projective tensor product. To see that the Schouten bracket can be extended to $\wedge^k_{C^{\infty}(M)}\Gamma(TM) \times \wedge^l_{C^{\infty}(M)}\Gamma(TM)$, we note that the Schouten bracket is $C^{\infty}(G)$-linear in each component.
\[
[X_1 \wedge \dots f X_i \dots \wedge X_k, Y_1 \wedge \dots Y_l]_s = f[X_1 \dots \wedge X_k, Y_1 \wedge \dots Y_l] + (-1)^{k-i} i(df)(X_1 \wedge \dots X_k) \wedge Y_1 \wedge \dots Y_l
\]
where $i(df)(X_1 \wedge \dots X_k) := \sum_j (-1)^{j-1} df(X_j) \wedge X_1 \dots \hat{X}_j \dots X_k$. Now, using the isomorphism \ref{Iso} in Section 9.1, we can extend $[-,-]_s$ to $\Gamma(\hat\wedge^k TM) \times \Gamma(\hat\wedge^l TM)$.
\subsection{Approximation of left/right-invariant multi-vector fields}
Recall that a bivector field $\pi \in \Gamma(L^2_{\text{skew}}(T'G))$ is called left invariant if $\pi(gh) = L_g^{**}\pi(h)$, and right invariant if $\pi(gh) = R_h^{**}\pi(g)$ for every $g, h \in G$.
\end{proof}

\begin{lemma}\label{8.8}
The space of all sections $\Gamma(L^2_{\text{skew}}(T'G))$ is topologically isomorphic to $C^{\infty}(G, L^2_{\text{skew}}(\mathfrak{g}'))$. Under this isomorphism, the subspace of left-invariant skew-symmetric bivectors $\Gamma_l(L^2_{\text{skew}}(T'G))$ is topologically isomorphic to the vector space $L^2_{\text{skew}}(\mathfrak{g}').$
\end{lemma}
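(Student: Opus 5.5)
The plan is to trivialize $T'G$ by left translation and then reduce left invariance to constancy of the trivialized map. Since left translations $L_g$ are diffeomorphisms and the group multiplication is smooth, the map
\[
\Lambda: G\times \mathfrak g' \to T'G,\qquad (g,\alpha)\mapsto L_{g^{-1}}^*\alpha \in T'_gG
\]
is a smooth vector bundle isomorphism. Its inverse is $\alpha_g\mapsto (g, L_g^*\alpha_g)$, which is smooth because the action $L^*:G\times T'G\to T'G$ is smooth. In the convenient setting this is verified by composing with smooth curves and using the exponential law \cite[Corollary 3.13]{20}. Applying the construction of $L^2_{skew}(\mathbb F)$ from Section 2 to the global trivialization $\tau=\Lambda^{-1}$ gives a global trivialization
\[
\Theta_\tau: L^2_{skew}(T'G)\to G\times L^2_{skew}(\mathfrak g'),\qquad \omega_g\mapsto \bigl(g,\,L_{g}^{**}\omega_g\bigr),
\]
with $(L_g^{**}\omega)(\alpha,\beta)=\omega(L_{g^{-1}}^*\alpha,L_{g^{-1}}^*\beta)$.

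Sections of a trivial bundle correspond to smooth maps into the fibre. So the first step is to define
\[
\Phi:\Gamma(L^2_{skew}(T'G))\to C^\infty(G,L^2_{skew}(\mathfrak g')),\qquad \Phi(\pi)(g):=L_g^{**}\pi(g).
\]
Its inverse sends $f\mapsto (g\mapsto L_{g^{-1}}^{**}f(g))$. Both maps are linear. They are bounded because the structure on $\Gamma$ is the one induced from $C^\infty(G,L^2_{skew}(T'G))$. Composition with the smooth fibre-linear map $\Theta_\tau$ and with its inverse is bounded by the exponential law and \cite[Proposition 5.6]{20}, and the same holds after projecting to the second factor. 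This gives the topological (bornological, hence convenient) isomorphism. For the Fréchet or Silva case, continuity follows from the lemma that bounded maps on such spaces are continuous.

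For the second assertion, I will show that $\pi$ is left invariant if and only if $\Phi(\pi)$ is constant. The defining identity $\pi(gh)=L_g^{**}\pi(h)$ translates, by the cocycle property $L^{**}_{gh}=L^{**}_{h}L^{**}_{g}$ in the appropriate order, into $\Phi(\pi)(gh)=\Phi(\pi)(h)$ for all $g,h$. Taking $h=e$, this says that $\Phi(\pi)\equiv\Phi(\pi)(e)=\pi(e)$. Conversely, for any $\eta\in L^2_{skew}(\mathfrak g')$ the section $g\mapsto L_{g^{-1}}^{**}\eta$ is smooth and left invariant. Hence $\Gamma_l$ corresponds under $\Phi$ to the constant maps. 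The space of constant maps is the image of the bounded injection $\eta\mapsto \mathrm{const}_\eta$, which has bounded left inverse $\mathrm{ev}_e$. This subspace is therefore closed (it is the kernel of $f\mapsto f-\mathrm{const}_{f(e)}$) and isomorphic to $L^2_{skew}(\mathfrak g')$ via $\mathrm{ev}_e$.

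I expect the main obstacle to be the smoothness of $\Lambda$ and its inverse as convenient bundle maps, together with the induced map on $L^2_{skew}$. Here $\mathfrak g'$ carries the bounded-set topology, and the transition $g\mapsto (T_eL_g)^*$ must be smooth into $L(\mathfrak g',\mathfrak g')$ in the convenient sense. I would handle this by currying: $L^*$ is smooth on $G\times T'G$, so the curried map into $C^\infty(\mathfrak g',T'G)$ is smooth, and it lands in the $c^\infty$-closed subspace of linear maps. The pullback on bilinear forms, $A\mapsto A^*(\cdot,\cdot)$, is bounded polynomial in $A$ and hence smooth. Keeping track of the order of composition in the cocycle identity for $L^{**}$ is the only other point that needs care.
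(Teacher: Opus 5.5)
Your argument is correct, but it takes a different route from the paper. You trivialize $T'G$, and hence $L^2_{\text{skew}}(T'G)$, by \emph{left} translations. The paper instead uses the right trivialization $a\mapsto (p(a),R^{**}_{p(a)^{-1}}a)$, and justifies its smoothness by checking the induced right action of $G$ on $L^2_{\text{skew}}(T'G)$ pointwise via the uniform boundedness principle.

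In the paper's picture a left-invariant $\pi$ becomes the orbit map $g\mapsto \operatorname{Ad}^{(2)}_g\pi_e$. The identification of $\Gamma_l$ with $L^2_{\text{skew}}(\mathfrak g')$ through $\pi\mapsto\pi(e)$ therefore tacitly relies on smoothness of $\operatorname{Ad}^{(2)}$, which is what makes $\eta\mapsto\operatorname{Ad}^{(2)}_{(\cdot)}\eta$ bounded into $C^\infty(G,L^2_{\text{skew}}(\mathfrak g'))$. What that choice buys is compatibility with the rest of the paper: it is the same trivialization in which multiplicative bivectors become the cocycles $\theta(x)=R^{**}_{x^{-1}}\pi(x)$.

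Your choice makes left-invariant sections literally the constant maps. The second assertion then reduces to the complemented embedding $\eta\mapsto\mathrm{const}_\eta$ with bounded left inverse $\mathrm{ev}_e$. Also, obtaining $\Theta_\tau$ from the Section 2 construction applied to a smooth trivialization of $T'G$ gives a chart of the bundle atlas directly, with no Fréchet/Silva hypothesis needed.

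One bookkeeping point, since you flagged the order as delicate. Your explicit $L^{**}_g$, given by $(L_g^{**}\omega)(\alpha,\beta)=\omega(L^*_{g^{-1}}\alpha,L^*_{g^{-1}}\beta)$, is the pushforward by $L_{g^{-1}}$. The paper's condition $\pi(gh)=L^{**}_g\pi(h)$ uses the pushforward by $L_g$. Writing $\phi_*$ for pushforward, the computation you need is $\Phi(\pi)(gh)=(L_{h^{-1}})_*(L_{g^{-1}})_*(L_g)_*\pi(h)=\Phi(\pi)(h)$, which is exactly what you assert.
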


\begin{proof}
We first show that the bundle $L^2_{\text{skew}}(T'G)$ is trivializable as $G \times L^2_{\text{skew}}(\mathfrak{g}').$ Consider the following action of $G$ on $L^2_{\text{skew}}(T'G)$:
\[
R : G \times L^2_{\text{skew}}(T'G) \rightarrow L^2_{\text{skew}}(T'G), \quad (g, \omega) \mapsto R_{g^{-1}}^{**}\omega,
\] 
where $R_{g^{-1}}^{**}$ denotes the pushforward of bivectors by right translation. Since $L^2_{\text{skew}}(T'G)$ and $G \times L^2_{\text{skew}}(\mathfrak{g}')$ are both Fréchet (resp. Silva) manifolds, it suffices to show that the map is smooth in the convenient sense. According to \cite[Theorem 5.19]{20}, $R$ is smooth if and only if for each pair of covectors $(\alpha_1, \alpha_2) \in T'G \times_G T'G$, 
\[ 
\hat{R} : G \times L^2_{\text{skew}}(T'G) \rightarrow \mathbb{R}, \quad (g, \omega) \mapsto (R_{g^{-1}}^{**}\omega)(\alpha_1, \alpha_2) := \omega(R_{g^{-1}}^*\alpha_1, R_{g^{-1}}^*\alpha_2) 
\] 
is smooth for every $g \in G$ and $\omega \in L^2_{\text{skew}}(T'G)$. This holds because the pullback action $R_{g^{-1}}^*$ of $G$ on covectors and the evaluation map are both smooth.

Consequently, letting $p: L^2_{\text{skew}}(T'G) \to G$ denote the bundle base-point projection, the trivialization map 
\[
L^2_{\text{skew}}(T'G) \rightarrow G \times L^2_{\text{skew}}(\mathfrak{g}'), \quad a \mapsto (p(a), R_{p(a)^{-1}}^{**}a)
\] 
is smooth, as the action of $G$ on $L^2_{\text{skew}}(T'G)$ is smooth. It possesses a smooth inverse given by
\[
(g, a_0) \mapsto R_g^{**}a_0.
\] 
We therefore conclude that the space of sections $\Gamma(L^2_{\text{skew}}(T'G)) \cong C^{\infty}(G, L^2_{\text{skew}}(\mathfrak{g}')).$ 

Now, we identify the subspace of left-invariant sections under this isomorphism. Let $\tilde{\pi} \in C^{\infty}(G, L^2_{\text{skew}}(\mathfrak{g}'))$ be the function corresponding to a left-invariant section $\pi \in \Gamma_l(L^2_{\text{skew}}(T'G))$. By the definition of left-invariance, $\pi_g = L_g^{**} \pi_e$. Under the right trivialization, the corresponding function evaluates to:
\[
\tilde{\pi}(g) = R_{g^{-1}}^{**} \pi_g = R_{g^{-1}}^{**} L_g^{**} \pi_e = \operatorname{Ad}^{(2)}_g \pi_e.
\]
Since $\tilde{\pi}(g) = \operatorname{Ad}^{(2)}_g \pi_e$ (where $\operatorname{Ad}^{(2)}_g$ is the induced adjoint action on bivectors), any left-invariant section is uniquely and completely determined by its value at the identity $e$. Therefore, the evaluation map $\pi \mapsto \pi(e)$ yields a topological isomorphism between the space of left-invariant sections $\Gamma_l(L^2_{\text{skew}}(T'G))$ and the fiber $L^2_{\text{skew}}(\mathfrak{g}').$
\end{proof}

\begin{proposition}[Approximation of left/right-invariant multi-vector fields]
A left-invariant (resp. right-invariant) $k$-vector field $\pi \in \Gamma_l(\hat{\wedge}^k TG)$ can be approximated by elements in the algebraic wedge product $\wedge^k \Gamma_l(TG)$ (resp. $\wedge^k \Gamma_r(TG)$), where $\Gamma_l(TG)$ (resp. $\Gamma_r(TG)$) denotes the real vector space of left-invariant (resp. right-invariant) vector fields on $G$.
\end{proposition}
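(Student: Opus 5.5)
The plan is to reduce the statement to a density statement in the fiber at the identity, and then transport it along $G$ with the invariant trivialization. I treat the left-invariant case; the right-invariant case is identical with $R$ in place of $L$.

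First, I would prove the analogue of Lemma \ref{8.8} for $\hat{\wedge}^k TG$. The left trivialization $\hat{\wedge}^k TG \to G\times \hat{\wedge}^k\mathfrak g$, $v\mapsto (p(v), \hat{\wedge}^k T L_{p(v)^{-1}}(v))$, is a smooth bundle isomorphism, so $\Gamma(\hat{\wedge}^k TG)\cong C^\infty(G,\hat{\wedge}^k\mathfrak g)$. Under this isomorphism the left-invariant $k$-vector fields correspond exactly to the constant maps. Hence evaluation at the identity
\[
\operatorname{ev}_e:\Gamma_l(\hat{\wedge}^k TG)\to \hat{\wedge}^k\mathfrak g,\qquad \pi\mapsto\pi(e),
\]
is a topological isomorphism. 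Its inverse is $\xi\mapsto \xi^L$ with $\xi^L(g)=\hat{\wedge}^k TL_g(\xi)$, and it is continuous because constant maps embed continuously in $C^\infty(G,\hat{\wedge}^k\mathfrak g)$.

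Second, I would compare the algebraic wedge product $\wedge^k\Gamma_l(TG)$ with its image. For $X_1,\dots,X_k\in\mathfrak g$ with left-invariant extensions $X_i^L$, the pointwise wedge satisfies $(X_1^L\wedge\dots\wedge X_k^L)(g)=\hat{\wedge}^kTL_g(X_1\wedge\dots\wedge X_k)$. So $X_1^L\wedge\dots\wedge X_k^L=(X_1\wedge\dots\wedge X_k)^L$, and the image of $\wedge^k\Gamma_l(TG)$ under $\operatorname{ev}_e$ is exactly the algebraic exterior power $\wedge^k\mathfrak g\subseteq\hat{\wedge}^k\mathfrak g$. Since $\hat{\wedge}^k\mathfrak g$ is, by definition, the completion of $\wedge^k\mathfrak g$ (projective, equivalently bornological, topology in the nuclear Fréchet/Silva setting of this section), the subspace $\wedge^k\mathfrak g$ is dense. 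Given $\pi\in\Gamma_l(\hat{\wedge}^k TG)$, I would choose a net (a sequence in the metrizable Fréchet case) $\xi_n\in\wedge^k\mathfrak g$ with $\xi_n\to\pi(e)$. Then $\xi_n^L\in\wedge^k\Gamma_l(TG)$, and $\xi_n^L=\operatorname{ev}_e^{-1}(\xi_n)\to\operatorname{ev}_e^{-1}(\pi(e))=\pi$ in $\Gamma(\hat{\wedge}^kTG)$.

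The main obstacle is the continuity of the trivialization on $\hat{\wedge}^k TG$, i.e. that $g\mapsto \hat{\wedge}^k TL_g$ depends smoothly on $g$ as an operator on the completed wedge power. I would get this from the smoothness of $G\times\mathfrak g\to TG$, $(g,X)\mapsto TL_g X$, together with functoriality of $\hat{\otimes}_\pi$: multilinear maps extend continuously and smoothly to completed projective tensor products, using the bundle structure of $\hat{\wedge}^kTM$ described in Section \ref{7}. In the Silva case, where the relevant approximations are only Mackey-convergent, I would additionally note that convergence in $\hat{\wedge}^k\mathfrak g$ can be realized by a Mackey-convergent sequence, since Silva spaces are regular inductive limits. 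This makes the approximation valid in the convenient topology as well.
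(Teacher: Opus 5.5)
Your proposal is correct and follows essentially the same route as the paper: identify $\Gamma_l(\hat{\wedge}^k TG)$ with $\hat{\wedge}^k\mathfrak g$ by evaluation at $e$ (the analogue of Lemma \ref{8.8}), use that the algebraic wedge $\wedge^k\mathfrak g$ is dense in its completion, and transport this back through $\mathfrak g\cong\Gamma_l(TG)$. The differences are only in the level of detail: you use the left trivialization, in which left-invariant sections become constant maps (the paper's Lemma \ref{8.8} uses the right trivialization, where they become $g\mapsto \operatorname{Ad}^{(2)}_g\pi_e$); you verify $X_1^L\wedge\dots\wedge X_k^L=(X_1\wedge\dots\wedge X_k)^L$ explicitly; and you point out the smooth dependence of $\hat{\wedge}^kTL_g$ on $g$ and the sequential (Mackey) approximation in the Silva case, both of which the paper leaves implicit.
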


\begin{proof}
We denote by $\Gamma_l(\hat{\wedge}^k TG)$ the subspace of left-invariant elements in the space of sections $\Gamma(\hat{\wedge}^k TG)$. 

By the result of Lemma \ref{8.8}, evaluating a left-invariant section at the identity element $e \in G$ yields a topological isomorphism between the space of left-invariant sections and the fiber over the identity. For the completed exterior algebra of the tangent bundle, this gives the topological isomorphism:
\[
\Gamma_l(\hat{\wedge}^k TG) \cong \hat{\wedge}^k \mathfrak{g}.
\]
By the definition of the completed topological wedge product, the algebraic wedge product $\wedge^k \mathfrak{g}$ is a dense subspace of the completion $\hat{\wedge}^k \mathfrak{g}$. 

Since the Lie algebra $\mathfrak{g}$ is canonically isomorphic as a topological vector space to the space of left-invariant vector fields $\Gamma_l(TG)$, the algebraic wedge product $\wedge^k \mathfrak{g}$ corresponds exactly to $\wedge^k \Gamma_l(TG)$. 

Consequently, any element in the completed space $\hat{\wedge}^k \mathfrak{g}$ can be approximated by elements in $\wedge^k \mathfrak{g}$. Passing this density back through our topological isomorphism, it follows that any left-invariant section $\pi \in \Gamma_l(\hat{\wedge}^k TG)$ can be approximated by elements in the algebraic wedge product $\wedge^k \Gamma_l(TG)$. 

The proof for right-invariant multi-vector fields follows an identical argument by substituting right translations and using $\Gamma_r(TG)$.
\end{proof}

\subsection{Multiplicative bivector and the Schouten bracket}

\begin{lemma}\label{8.10}
$s \in \Gamma(\hat\wedge^k TG)$ is left (right) invariant if and only if $L_X s = 0$ for every right (left) invariant $X \in \Gamma(TG)$.
\end{lemma}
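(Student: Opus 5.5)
The idea is to relate Lie derivatives along invariant vector fields to the flows of translations. Then left-invariance of $s$ becomes invariance under the flows of right-invariant vector fields. Recall that the flow of the right-invariant field $X^r$ with $X^r(e)=X\in\mathfrak g$ is the left translation $\mathrm{Fl}^{X^r}_t = L_{\exp(tX)}$. Since $G$ is a regular Lie group, $\exp$ exists and these flows are globally defined and smooth; indeed $t\mapsto \exp(tX)g$ is the evolution curve of the constant curve $X$ in the right-trivialization used in the path-integral formula \eqref{1-cocycle}. The symmetric statement holds for left-invariant fields: $\mathrm{Fl}^{X^l}_t = R_{\exp(tX)}$. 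Accordingly, I will only treat the left-invariant case; the right-invariant case follows by interchanging the roles of $L$ and $R$.

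\emph{Direction ($\Rightarrow$).} Suppose $s(gh)=L_{g*}s(h)$, where $L_{g*}$ denotes the induced action on $\hat\wedge^k TG$, extended continuously from $\wedge^k$. Then $(\mathrm{Fl}^{X^r}_t)^*s = L_{\exp(-tX)*}\, s\circ L_{\exp(tX)} = s$ for all $t$. Differentiating at $t=0$ gives $\mathcal L_{X^r}s = \frac{d}{dt}\big|_{0}(\mathrm{Fl}^{X^r}_t)^*s = 0$. This uses the convenient-calculus formula for the Lie derivative of a tensor field along a vector field that has a global flow, from \cite[Section 33]{20} and \cite{24}.

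\emph{Direction ($\Leftarrow$).} Suppose $\mathcal L_{X^r}s=0$ for all $X$. Set $c(t):=(\mathrm{Fl}^{X^r}_t)^*s\in\Gamma(\hat\wedge^kTG)$. The flow property gives $c'(t) = (\mathrm{Fl}^{X^r}_t)^*\mathcal L_{X^r}s = 0$, so $c$ is constant and equal to $s$. Hence $L_{\exp(tX)*}s = s\circ L_{\exp(tX)}$ for all $X$ and $t$. In infinite dimensions $\exp$ need not be locally surjective, so one cannot argue via a neighbourhood of $e$. Instead I will use regularity directly. For an arbitrary smooth curve $g(t)$ with $g(0)=e$, let $X(t)=\delta^r g(t)$ be its right logarithmic derivative. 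Then $g(t)$ is the time-dependent flow of the right-invariant time-dependent field $X(t)^r$. The same constancy argument gives
\[\tfrac{d}{dt}\bigl(L_{g(t)^{-1}*}\, s\circ L_{g(t)}\bigr) = -\,L_{g(t)^{-1}*}\bigl(\mathcal L_{X(t)^r}s\bigr)\circ L_{g(t)} = 0.\]
Since $G$ is connected (it is $1$-connected throughout this section) and smoothly arcwise connected, every $g\in G$ is reached by such a curve. Therefore $L_{g*}s(h)=s(gh)$ for all $g,h$, i.e. $s$ is left invariant.

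The main obstacle is justifying the derivative formula $\frac{d}{dt}(\mathrm{Fl}_t)^*s=(\mathrm{Fl}_t)^*\mathcal L_Xs$ for sections of the completed bundle $\hat\wedge^kTG$, where $s$ is not a finite sum of wedges. I would handle it in the right trivialization $\hat\wedge^kTG\cong G\times\hat\wedge^k\mathfrak g$, as in Lemma \ref{8.8}. There $s$ becomes a smooth map $\tilde s:G\to\hat\wedge^k\mathfrak g$, and pullback by $L_{g}$ becomes $\tilde s\mapsto \tilde s\circ L_g$. This holds because right-trivialized left translation acts trivially on the fibre: $R_{(gh)^{-1}*}L_{g*}R_{h*}=\mathrm{id}$. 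In this picture, $\mathcal L_{X^r}s$ corresponds to the ordinary derivative $d\tilde s(X^r)$ plus a term from the bracket of $X^r$ with right-invariant frames, and that term vanishes because right-invariant and left-invariant fields commute. Thus, tentatively, $\mathcal L_{X^r}s=0$ means $d\tilde s(g)(R_{g*}X)=0$ for all $g$ and $X$, and everything reduces to the scalar chain rule along curves, which holds in the convenient setting. One still has to check this identification of $\mathcal L_{X^r}$ on simple tensors $X_1^r\wedge\cdots\wedge X_k^r$ by the Leibniz rule. It then extends to all of $\Gamma(\hat\wedge^kTG)$ by continuity and by the density statement of the preceding proposition together with the isomorphism \eqref{Iso}.
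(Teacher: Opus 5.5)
Your strategy is sound, but the step you single out as the main obstacle is set up in the wrong trivialization, and as written it fails.

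\emph{Left translation is not trivial on the fibre.} In the right trivialization $\tilde s(g)=R_{g^{-1}*}s(g)$, take $v=c'(0)\in\mathfrak g$. Then $R_{(gh)^{-1}*}L_{g*}R_{h*}v=\frac{d}{dt}\big|_{t=0}\,g\,c(t)\,g^{-1}=\operatorname{Ad}_g v$. So pullback by $L_g$ becomes $\tilde s\mapsto \operatorname{Ad}_{g^{-1}}\circ\tilde s\circ L_g$, not $\tilde s\circ L_g$. This is exactly why Lemma \ref{8.8} finds $\tilde\pi(g)=\operatorname{Ad}^{(2)}_g\pi_e$ for left-invariant $\pi$.

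\emph{The bracket term does not vanish.} The frame of the right trivialization consists of right-invariant fields, and $[X^r,Y^r]=-[X,Y]^r\neq 0$. So the bracket term in $\mathcal L_{X^r}s$ is, up to sign, $\operatorname{ad}_X$ applied to $\tilde s$, not zero.

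Hence the identification ``$\mathcal L_{X^r}s=0$ iff $d\tilde s(X^r)=0$'' is false. It would even contradict your direction $(\Rightarrow)$, since left-invariant sections correspond to the non-constant maps $g\mapsto\operatorname{Ad}_g\tilde s(e)$. The density results you invoke would not help either: the preceding proposition concerns only invariant fields, and \eqref{Iso} is proved only for $k=2$ in the nuclear case.

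\emph{The repair.} Use the left trivialization $\hat s(g):=L_{g^{-1}*}s(g)\in\hat\wedge^k\mathfrak g$.
\begin{itemize}
\item There $L_{(gh)^{-1}*}L_{g*}L_{h*}=\mathrm{id}$, so $\widehat{L_g^*s}=\hat s\circ L_g$, and $s$ is left invariant iff $\hat s$ is constant.
\item Since $X^r$ has the global flow $L_{\exp(tX)}$, you get directly $\widehat{\mathcal L_{X^r}s}=\frac{d}{dt}\big|_{t=0}\,\hat s\circ L_{\exp(tX)}=d\hat s(X^r)$. No density or continuity argument is needed. Equivalently, apply Leibniz on a \emph{left}-invariant frame, where $[X^r,Y^l]=0$ does kill the bracket terms; this is presumably what you meant.
\item As $X\mapsto R_{g*}X$ is onto $T_gG$, the hypothesis $\mathcal L_{X^r}s=0$ for all $X$ says exactly $d\hat s=0$.
\item Your time-dependent computation then reduces to the chain rule $\frac{d}{dt}\hat s(g(t)h)=d\hat s(g(t)h)\bigl(X(t)^r(g(t)h)\bigr)=0$. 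The sign in your displayed formula should be $+$, which is harmless here.
\item Smooth arcwise connectedness of $G$ then makes $\hat s$ constant.
\end{itemize}

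Once corrected, your route is genuinely different from, and stronger than, the paper's. The paper cites \cite[Proposition 10.5]{33} for $(\Rightarrow)$. For $(\Leftarrow)$ it integrates only along one-parameter subgroups, which yields invariance under $\langle\exp(\mathfrak g)\rangle$. It must then assume this subgroup is open, a hypothesis verified only in examples (via Thurston's simplicity theorem and Tsuboi's regimentation lemma). Testing along arbitrary smooth curves, i.e.\ using right logarithmic derivatives instead of $\exp$, removes that hypothesis. The lemma then holds for every connected $G$, with regularity used only to identify $\mathcal L_{X^r}$ through the flow $L_{\exp(tX)}$.
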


\begin{lemma}\label{8.11}
If $s \in \Gamma(\hat\wedge^k TG)$ is left invariant and $t \in \Gamma(\hat\wedge^l TG)$ is right invariant, then $[t, s]_s = 0$.
\end{lemma}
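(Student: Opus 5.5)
The plan is to reduce the statement to the case of decomposable invariant multivectors and then pass to the limit by continuity. Fix $s \in \Gamma(\hat\wedge^k TG)$ left invariant and $t \in \Gamma(\hat\wedge^l TG)$ right invariant.

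First, by the approximation proposition for left/right-invariant multi-vector fields, I choose nets $s_n \in \wedge^k \Gamma_l(TG)$ and $t_m \in \wedge^l \Gamma_r(TG)$ with $s_n \to s$ and $t_m \to t$. Each $s_n$ is a finite sum of wedges $X_1\wedge\dots\wedge X_k$ of left-invariant vector fields. Each $t_m$ is a finite sum of wedges $Y_1\wedge\dots\wedge Y_l$ of right-invariant vector fields.

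Second, I compute the Schouten bracket on decomposables. The defining formula gives
\[
[Y_1\wedge\dots\wedge Y_l, X_1\wedge\dots\wedge X_k]_s=\sum_{i,j}(-1)^{i+j}[Y_i,X_j]\wedge Y_1\wedge\dots\widehat{Y_i}\dots\wedge Y_l\wedge X_1\wedge\dots\widehat{X_j}\dots\wedge X_k .
\]
It therefore suffices to know that $[Y,X]=0$ whenever $Y$ is right invariant and $X$ is left invariant. This holds because the flow of a right-invariant field $Y=R_{*}Y(e)$ is given by left multiplication $g\mapsto \exp(tY(e))g$. Indeed, $G$ is regular, so $\exp$ and these flows exist. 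Left multiplications commute with the right translations that define $X$, so $\mathcal L_Y X=0$. Equivalently, one can invoke Lemma \ref{8.10} with $k=1$: $X$ is left invariant, so $\mathcal L_Y X=0$ for every right-invariant $Y$. Hence every term vanishes, and $[t_m,s_n]_s=0$ for all $m,n$ by bilinearity.

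Third, I pass to the limit. This step needs continuity of the extended Schouten bracket $\Gamma(\hat\wedge^l TG)\times\Gamma(\hat\wedge^k TG)\to\Gamma(\hat\wedge^{k+l-1}TG)$. By construction in the proposition defining $[-,-]_s$, the bracket is the unique continuous extension through the topological isomorphism \eqref{Iso} of the Serre--Swan type result. In the nuclear Fréchet case it is therefore separately continuous, and hence jointly continuous by the earlier lemma on separately continuous bilinear maps. In the Silva case the same lemma applies on the Banach steps. Taking limits then gives $[t,s]_s=\lim[t_m,s_n]_s=0$.

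The main obstacle is this continuity and limit step. The approximation proposition gives density in the completed fibre $\hat\wedge^k\mathfrak g$, which then has to be transported to convergence in the section space $\Gamma(\hat\wedge^kTG)$. I would handle this through the isomorphism of Lemma \ref{8.8} type, $\Gamma_l(\hat\wedge^kTG)\cong\hat\wedge^k\mathfrak g$, whose inverse $a\mapsto L_{*}a$ is continuous into sections. If joint continuity is delicate, I would instead approximate only $s$, keeping $t$ fixed, and use continuity of $[t,-]_s$ alone. In that variant the decomposable computation has to be redone with $t$ general right invariant. This can be done using the derivation property: $[t,X_1\wedge\dots\wedge X_k]_s=\sum_j \pm\,[t,X_j]_s\wedge\dots$, together with $[t,X_j]_s=-\mathcal L_{X_j}t=0$ by Lemma \ref{8.10}.
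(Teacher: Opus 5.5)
Your argument follows essentially the same route as the paper. The paper gives no separate proof of this lemma: it refers to [33, Proposition 10.5] for the finite-dimensional argument (invariant multivector fields are generated by invariant vector fields, and left- and right-invariant vector fields commute), and the preceding approximation proposition supplies the density needed to carry that argument over, which is how you proceed.

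One correction: you do not need joint continuity, and your justification for it does not work. The section space $\Gamma(\hat\wedge^k TG)\cong C^\infty(G,\hat\wedge^k\mathfrak g)$ is in general neither Fréchet nor Silva, so the lemma on separately continuous bilinear maps does not apply to it. Take iterated limits instead: first $[t_m,s]_s=\lim_n[t_m,s_n]_s=0$ by continuity of $[t_m,-]_s$, then $[t,s]_s=\lim_m[t_m,s]_s=0$ by continuity of $[-,s]_s$. This also makes your fallback variant via the derivation property and Lemma \ref{8.10} unnecessary.
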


\begin{lemma}\label{8.12}
$s \in \Gamma(\hat\wedge^k TG)$ is multiplicative if and only if $s(e) = 0$ and for each left invariant section $X \in \Gamma_l(TG)$, $L_X s$ is left invariant. Equivalently, if and only if $s(e) = 0$ for each right invariant section $Y \in \Gamma_r(TG)$, $L_Y s$ is right invariant.
\end{lemma}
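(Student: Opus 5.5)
The strategy is to reduce multiplicativity to a statement about the right-trivialized function $\tilde s(g) := R_{g^{-1}}^{**}s(g)$. By the trivialization used in Lemma \ref{8.8}, adapted to $\hat\wedge^k TG$ with values in $\hat\wedge^k\mathfrak{g}$, this function is smooth. Applying $R_{(gh)^{-1}}^{**}$ to the multiplicativity identity, exactly as in the proof of Proposition \ref{Poisson 1-cocycle}, shows that $s$ is multiplicative if and only if $\tilde s$ is a group $1$-cocycle for $\operatorname{Ad}^{(k)}$. That is,
\[
\tilde s(gh)=\tilde s(g)+\operatorname{Ad}^{(k)}_g\tilde s(h).
\]
Setting $g=h=e$ in this identity gives $s(e)=0$ immediately, which settles that part of both directions.

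Next I translate the Lie derivative condition. For $X=\xi^L$ left invariant, its flow is $\phi_t=R_{\exp(t\xi)}$. Hence
\[
(L_Xs)(g)=\frac{d}{dt}\Big|_{t=0}R_{\exp(-t\xi)}^{**}s(g\exp(t\xi)).
\]
I then express this through $\tilde s$. Using $R_{(g\exp t\xi)^{-1}}^{**}=R_{g^{-1}}^{**}\circ R_{\exp(-t\xi)}^{**}$, the right-trivialization of $L_Xs$ at $g$ is $\frac{d}{dt}|_{0}\tilde s(g\exp t\xi)$. By Lemma \ref{8.8}, a section is left invariant if and only if its right trivialization has the form $g\mapsto\operatorname{Ad}^{(k)}_g c$ for a constant $c$. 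So the condition "$L_Xs$ is left invariant" becomes
\[
d\tilde s(g)(L_{g*}\xi)=\operatorname{Ad}^{(k)}_g\, d\tilde s(e)(\xi)\qquad\text{for all } g,\xi. \tag{$\ast$}
\]

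For ($\Rightarrow$), I differentiate the cocycle identity $\tilde s(g\exp t\xi)=\tilde s(g)+\operatorname{Ad}^{(k)}_g\tilde s(\exp t\xi)$ at $t=0$, which gives ($\ast$) directly.

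For ($\Leftarrow$), I fix $g$ and compare the two curves $a(t)=\tilde s(gh(t))$ and $b(t)=\tilde s(g)+\operatorname{Ad}^{(k)}_g\tilde s(h(t))$, where $h(t)$ is any smooth curve from $e$. Since $G$ is connected, a smooth curve reaches any $h$; in the convenient setting it suffices to check $c^\infty$-curves. I write $h'(t)=L_{h(t)*}\xi(t)$ and apply ($\ast$) at the points $gh(t)$ and $h(t)$. This gives $a'(t)=\operatorname{Ad}^{(k)}_{gh(t)}d\tilde s(e)\xi(t)=b'(t)$. Together with $a(0)=b(0)$, this shows $a=b$. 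The conclusion uses only that a smooth curve into a convenient space with zero derivative is constant, which holds by testing against functionals.

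The right-invariant version is symmetric. One uses the left trivialization $L_{g^{-1}}^{**}s(g)$ and the flows $L_{\exp t\xi}$ of right-invariant fields. Alternatively, one applies the inversion map, which sends a multiplicative $s$ to $-s$ and swaps left- and right-invariant sections.

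The main obstacle is justifying the Lie derivative formula and the trivialization for $\hat\wedge^kTG$ rather than for $L^2_{skew}(T'G)$. This requires the flow of $\xi^L$ to exist, which holds since $G$ is regular so $\exp$ exists, and requires the pushforward action $R^{**}$ on $\hat\wedge^k\mathfrak g$ to be smooth. I would obtain smoothness by extending $\hat\wedge^k T R_g$ continuously, as in the local trivializations in Section \ref{7}, and then invoking the uniform boundedness principle as in Lemma \ref{8.8}.
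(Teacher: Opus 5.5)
Your proof is correct, and for the converse it takes a genuinely different and more general route than the paper.

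\textbf{How the paper argues.} The forward direction is deferred to Proposition 10.5 of [33]. The converse is obtained via the exponential map, i.e.\ by working along one-parameter subgroups. This only reaches the subgroup $\langle\exp(\mathfrak g)\rangle$. The paper therefore needs the extra hypothesis that this subgroup is open, hence equal to $G_0$. It checks that hypothesis case by case for loop groups and diffeomorphism groups, using Thurston's simplicity theorem and Tsuboi's regimentation lemma.

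\textbf{How you argue.} You pass to the right trivialization $\tilde s$. Multiplicativity becomes the cocycle identity, and the Lie-derivative condition becomes the pointwise identity $(\ast)$. You then integrate $(\ast)$ along an \emph{arbitrary} smooth curve from $e$ to $h$, using its left logarithmic derivative. The equality $a'=b'$ follows from $(\ast)$ at $gh(t)$ and at $h(t)$, together with $\operatorname{Ad}^{(k)}_g\operatorname{Ad}^{(k)}_{h(t)}=\operatorname{Ad}^{(k)}_{gh(t)}$.

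\textbf{What your route buys.} Smooth-arc components of a convenient manifold are $c^\infty$-open, so a connected $G$ is smoothly arc-connected. Your argument therefore needs only connectedness. The statement leaves connectedness implicit, but it is genuinely necessary: on $\mathbb R^2\times\mathbb Z_2$, the bivector that is $0$ on one component and $\partial_x\wedge\partial_y$ on the other satisfies the infinitesimal conditions but is not multiplicative. The exponential map enters your proof only to realize $L_X$ through the flow $R_{\exp(t\xi)}$. So your approach makes the hypothesis $\langle\exp(\mathfrak g)\rangle=G_0$ superfluous for this lemma. The paper's route stays verbatim with the finite-dimensional argument, at the cost of that extra hypothesis.

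\textbf{One small slip.} Inversion sends a multiplicative $k$-vector field $s$ to $(-1)^{k+1}s$, not to $-s$ in general. This is harmless, since only invariance properties are transported, and your direct symmetric argument via the left trivialization works anyway.
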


\begin{remark}

For proofs of the forward directions of Lemmas \ref{8.10}, \ref{8.11}, and \ref{8.12}, we refer to \cite[Proposition 10.5]{33}. To prove the converse, we observe that since our Lie groups are regular, the exponential map $\exp : \mathfrak{g} \rightarrow G$ exists. 
Following the reasoning in \cite[Proposition 10.5]{33}, the converse can be established within the subspace $\langle \exp(\mathfrak{g}) \rangle$ of the Lie group $G$ generated by $\exp(\mathfrak g)$.

If this subgroup $\langle \exp(\mathfrak{g}) \rangle$ is open in $G$, it is also closed, and thus it coincides with the identity component $G_0$ of $G$. This allows us to extend the result from the infinitesimal level to the identity component $G_0$, and thus to the entire group if $G$ is connected. For instance, $\langle \exp(\mathfrak{g}) \rangle$ is an open subset of $G$ when $G$ is a smooth (or analytic) loop group of a finite-dimensional Lie group, or when $G$ is the group of smooth (or analytic) diffeomorphisms of the circle $\mathbb{S}^1$.
\end{remark}

\begin{proposition}\label{[]=0}
If $\pi \in \Gamma(\hat\wedge^2 TG)$ is a multiplicative bivector, then $[\pi, \pi]_s \in \Gamma(\hat\wedge^3 TG)$ is a multiplicative bivector.
\end{proposition}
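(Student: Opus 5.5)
The plan is to use the infinitesimal characterisation of multiplicativity in Lemma \ref{8.12}. We will show that $[\pi,\pi]_s(e)=0$ and that $L_X[\pi,\pi]_s$ is left invariant for every $X\in\Gamma_l(TG)$. The converse direction of Lemma \ref{8.12} then yields multiplicativity of $[\pi,\pi]_s$, via the exponential-map argument of the preceding remark for connected $G$.

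\emph{Vanishing at the identity.} Since $\pi$ is multiplicative, $\pi(e)=0$ by Lemma \ref{8.12}. We use the Serre–Swan isomorphism \eqref{Iso}, and the approximation of sections by finite sums $\sum f_i\,X_i\wedge Y_i$, to write $\pi$ locally near $e$. Each term of $[\pi,\pi]_s$, computed with the formula in the Proposition on the Schouten bracket and its Leibniz rule in $f$, contains either a factor $f_i$ or a factor $f_i f_j$, or a term of the form $i(df_i)(\cdots)\wedge(f_j\,\cdots)$. Because $\pi(e)=0$, the relevant coefficients vanish at $e$: one can choose a local frame in which the $f_i$ satisfy $f_i(e)=0$. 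Every term of $[\pi,\pi]_s(e)$ is therefore killed by at least one factor $f_j(e)=0$. Continuity of the Schouten bracket on $\Gamma(\hat\wedge^2TG)$ then passes the identity $[\pi,\pi]_s(e)=0$ to the limit.

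\emph{Lie derivative step.} For $X\in\Gamma_l(TG)$ we have $L_X=[X,-]_s$, and the Schouten bracket satisfies the graded Jacobi identity. The Jacobi identity holds on the dense algebraic wedge products $\wedge^k\Gamma(TM)$ by the classical computation, and it extends by continuity to $\Gamma(\hat\wedge^kTG)$. This gives
\[
L_X[\pi,\pi]_s=[L_X\pi,\pi]_s+[\pi,L_X\pi]_s=2[L_X\pi,\pi]_s .
\]
By Lemma \ref{8.12}, $\sigma:=L_X\pi$ is left invariant. It remains to show that $[\sigma,\pi]_s$ is left invariant. By Lemma \ref{8.10}, this is equivalent to $L_Y[\sigma,\pi]_s=0$ for all right-invariant $Y$. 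Applying Jacobi again,
\[
L_Y[\sigma,\pi]_s=[L_Y\sigma,\pi]_s+[\sigma,L_Y\pi]_s=[\sigma,L_Y\pi]_s .
\]
Here $L_Y\sigma=0$ by Lemma \ref{8.10}, because $\sigma$ is left invariant. By the second form of Lemma \ref{8.12}, $L_Y\pi$ is right invariant. Lemma \ref{8.11} then gives $[\sigma,L_Y\pi]_s=0$. Hence $L_X[\pi,\pi]_s$ is left invariant, and Lemma \ref{8.12} applied in degree $3$ finishes the proof.

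The main obstacle is justifying the graded Jacobi identity and the derivation property $L_X=[X,-]_s$ on the completed spaces $\Gamma(\hat\wedge^kTG)$. Two facts are needed for this. First, $[-,-]_s$ must be separately continuous, which follows from its construction via \eqref{Iso} and the continuity of the Lie bracket of vector fields. Second, one must use density of $\wedge^k_{C^\infty(G)}\Gamma(TG)$. A secondary subtle point is the vanishing at $e$, where $\pi$ must be approximated by elements whose coefficients vanish at $e$. To arrange this, we would subtract $\pi(e)$ extended by a cut-off function, which is available because $G$ is smoothly regular.
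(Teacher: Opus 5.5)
Your proof is essentially the paper's: both use the derivation property $L_X[\pi,\pi]_s=[L_X\pi,\pi]_s+[\pi,L_X\pi]_s$ together with Lemmas \ref{8.10}--\ref{8.12} to obtain $L_YL_X[\pi,\pi]_s=0$, and hence left invariance of $L_X[\pi,\pi]_s$. Your extra check that $[\pi,\pi]_s(e)=0$ is genuinely required by Lemma \ref{8.12} and is skipped in the paper; to make it clean, approximate $\pi$ in the global left-invariant frame by $\pi_n=\sum_i f_i\,X_i^L\wedge Y_i^L$, where $X_i^L,Y_i^L$ are the left-invariant extensions of $X_i,Y_i\in\mathfrak g$, and use $\pi_n-\pi_n(e)^L=\sum_i (f_i-f_i(e))\,X_i^L\wedge Y_i^L\to\pi$, which needs no cut-off function (note that it is $\pi_n(e)$, not $\pi(e)=0$, that gets subtracted); after this, every Leibniz term of the bracket carries an undifferentiated coefficient vanishing at $e$.
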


\begin{proof}
From the graded Jacobi identity of the Schouten bracket (\cite[Theorem 1.4]{24}), it follows that the Lie derivative of the Schouten bracket $[\pi, \pi']_s$ satisfies
\[
L_X[\pi, \pi'] = [L_X\pi, \pi']_s + [\pi, L_X\pi']
\]
for $X \in \Gamma(TG), \pi \in \Gamma(\hat\wedge^k TG), \pi' \in \Gamma(\hat\wedge^l TG)$. Now, if $\pi \in \Gamma(\hat\wedge^2 TG)$ is multiplicative, then for a left-invariant vector field $X \in \Gamma_l(TG)$, and a right invariant vector field $Y \in \Gamma_r(TG)$, we have
\[
L_Y L_X[\pi, \pi] = 2([L_Y L_X\pi, \pi]_s + [L_Y\pi, L_X\pi]) = 0
\]
which follows from the last three lemmas. By Lemma 6.8, we conclude that $[\pi, \pi]_s$ is multiplicative.
\end{proof}

\subsection{Drinfeld theorem for regular Poisson Lie groups modeled on a nuclear Fréchet space or a nuclear Silva space}
In this section we will prove a special case of Theorem \ref{D-3} when a Lie group is modeled on a nuclear Fréchet space or a nuclear Silva space.
Let $G$ be smooth Lie group with the Lie algebra $\mathfrak g$, modeled on a nuclear Fr\'echet or a nuclear Silva space.   We begin by showing that the vector bundle $L^2_{skew}(T'G)$ is topologically isomorphic to $\hat\wedge^k TG$. To establish this, we first identify the fiber over identity $L^2_{skew}(\mathfrak g')$ with the space $\hat\wedge^k \mathfrak g$ using the following two lemmas.
\begin{lemma}
If $E$ is a nuclear Fréchet or a nuclear Silva space, then $L^2_b(E'_b)$ is nuclear Fréchet or nuclear Silva resp. and there is a topological isomorphism between
\[
L^2_c(E'_c) \cong L^2_b(E'_b) \cong E \hat{\otimes}_{\pi} E.
\]
\end{lemma}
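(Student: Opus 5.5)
The plan is to reduce everything to classical facts about nuclear spaces and their duals. The argument splits into two halves: the nuclear Fréchet case and the nuclear Silva case. These are exchanged by strong duality, since the strong dual of a nuclear Fréchet space is a nuclear (DF) Silva-type space and conversely; in both cases the spaces are reflexive, in fact Montel.

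\textbf{Step 1: the three topologies on $L^2(E')$ agree.} Since $E$ is Montel, bounded subsets of $E'_b$ are relatively compact. Equicontinuous, bounded and compact subsets of $E'$ therefore coincide, so $E'_c=E'_b$. Next, $E'_b$ is Fréchet (when $E$ is Silva) or a (DF)/Silva space (when $E$ is Fréchet), so it is bornological. Hence bounded bilinear forms on $E'_b$ are the same as separately continuous ones. By the earlier lemma on separately continuous bilinear maps on Fréchet or Silva spaces, they are also the same as continuous ones. The topologies of uniform convergence on bounded sets and on compact sets of $E'\times E'$ then coincide because bounded sets are relatively compact. This gives $L^2_c(E'_c)=L^2_b(E'_b)$ as topological vector spaces.

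\textbf{Step 2: identify the space with $E\hat\otimes_\pi E$.} Let $F:=E'_b$. Continuous bilinear forms on $F\times F$ are exactly $(F\hat\otimes_\pi F)'$, and the topology of uniform convergence on products of bounded sets is the strong topology on this dual, because for (DF) or Fréchet spaces every bounded set in $F\hat\otimes_\pi F$ is contained in the closed absolutely convex hull of $B\otimes B$ with $B$ bounded; this is Grothendieck's ``problème des topologies'', which holds for Fréchet spaces and, in the nuclear case, for (DF) spaces. So
\[
L^2_b(E'_b)\cong (E'_b\hat\otimes_\pi E'_b)'_b .
\]
Nuclearity gives $E'_b\hat\otimes_\pi E'_b\cong (E\hat\otimes_\pi E)'_b$ (see \cite[Proposition 50.7]{31} and the cited \cite[21.2]{17}). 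Taking strong duals again and using reflexivity of the nuclear Fréchet or Silva space $E\hat\otimes_\pi E$, which is again nuclear Fréchet or Silva, yields $L^2_b(E'_b)\cong E\hat\otimes_\pi E$. The same chain shows that $L^2_b(E'_b)$ is nuclear Fréchet (respectively nuclear Silva), since $E\hat\otimes_\pi E$ is: projective tensor products of nuclear Fréchet spaces are nuclear Fréchet, and for Silva spaces $E=\varinjlim E_n$ one uses the compact inductive limit of the tensor products.

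\textbf{Main obstacle.} The delicate step is the Silva case of the dual tensor identification. Here $E'_b$ is nuclear Fréchet, and one needs $(E\hat\otimes_\pi E)'_b\cong E'_b\hat\otimes_\pi E'_b$ for a (DF) space $E$. This also requires verifying that $E\hat\otimes_\pi E$ is again a Silva space, with compact linking maps and regular limit. I would first try to obtain it by duality from the Fréchet case applied to $E'_b$: write $E=(E'_b)'_b$, apply the Fréchet statement to $E'_b$, and then use reflexivity. If that fails, I would use the explicit sequence-space model of nuclear Silva spaces, namely the duals of Köthe echelon spaces, where $\hat\otimes_\pi$ is computed coordinatewise.
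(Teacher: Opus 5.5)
Your proposal is correct and follows essentially the paper's route: identify $L^2_b(E'_b)$ with $(E'_b\hat\otimes_\pi E'_b)'_b$, use the nuclear duality $(E\hat\otimes_\pi E)'_b\cong E'_b\hat\otimes_\pi E'_b$ together with reflexivity, and treat the Silva case by passing to the nuclear Fr\'echet space $E'_b$, which is exactly what the paper does; your version is in fact more explicit than the paper about that first identification and about the nuclearity of $L^2_b(E'_b)$. One correction: Grothendieck's probl\`eme des topologies fails for general Fr\'echet spaces (Taskinen's counterexamples), so in the Silva case it is the nuclearity of $E'_b$, not merely its being Fr\'echet, that justifies identifying the strong topology on $(E'_b\hat\otimes_\pi E'_b)'$ with uniform convergence on products of bounded sets.
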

where $E \hat{\otimes}_{\pi} E$ denotes the completed projective tensor product of $E$ with itself. Moreover, \[ L^2_{skew}(E')\cong \hat\wedge^2E\] where $\hat\wedge^2E$ denotes the space of completed skew-symmetric tensors in $E \hat{\otimes}_{\pi} E$.

\begin{proof}
We first note that every nuclear space is a semi-Montel space, according to \cite[Propositions 50.2, 50.12]{31}. Being either a Fréchet or a Silva space, $E$ is necessarily barrelled. Consequently, $E$ is a semi-Montel, barrelled, and quasi-complete space, which implies that $E$ is a Montel space and, therefore, reflexive.

Since every bounded subset of a Montel space is relatively compact, the topology of uniform convergence on bounded subsets coincides with the topology of uniform convergence on compact subsets. This equivalence yields the topological isomorphism $L^2_c(E'_c) \cong L^2_b(E'_b)$. 

Let $E$ be a nuclear Fréchet space and set $F = E'_b$. Since $F$ is a Silva space, it is reflexive with respect to the topology of uniform convergence on bounded subsets on $F'_b$. By \cite[9.9 Corollary 1]{28}, $F \hat{\otimes}_{\pi} F$ is reflexive, and the following are topological isomorphisms:
\begin{align*}
(F \hat{\otimes}_{\pi} F) &\cong ((F'_b \hat{\otimes}_{\pi} F'_b)'_b)'_b, \\
F'_b \hat{\otimes}_{\pi} F'_b &\cong E \hat{\otimes}_{\pi} E.
\end{align*}
Since $F'_b = E$ is a nuclear Fréchet space, it follows that $(F'_b \hat{\otimes}_{\pi} F'_b)'_b \cong (F'_b)'_b \hat{\otimes}_{\pi} (F'_b)'_b \cong F \hat{\otimes}_{\pi} F$. Thus, we obtain the topological isomorphism $(F \hat{\otimes}_{\pi} F)'_b \cong E \hat{\otimes}_{\pi} E$. Furthermore, we have the identification $(F \hat{\otimes}_{\pi} F)'_b \cong L^2(E'_b)_b$. Thus, $L^2(E'_b)_b\cong E \hat{\otimes}_{\pi} E.$

Now, if $E$ is a nuclear Silva space, then $F := E'$ is a nuclear Fréchet space. According to \cite[Theorem 9.9]{28}, there exists a topological isomorphism between $F'_b \hat{\otimes}_{\pi} F'_b$ and $L^2(F)_b$. Since $E$ is a nuclear Silva space, it is reflexive, which implies $F'_b = (E'_b)'_b \cong E$. Thus, $L^2(E'_b)_b\cong E \hat{\otimes}_{\pi} E$. \\
\textit{Extension to the skew-symmetric subspaces:}
To see that this isomorphism extends to the wedge product, consider the continuous flip map $\tau: E \hat{\otimes}_{\pi} E \to E \hat{\otimes}_{\pi} E$, which is the unique extension of $x \otimes y \mapsto y \otimes x$. Similarly, define the transposition map $\sigma: L^2(E'_b) \to L^2(E'_b)$ by $\sigma(B)(\lambda, \mu) = B(\mu, \lambda)$.

The isomorphism $\Psi: E \hat{\otimes}_{\pi} E \to L^2(E'_b)$ constructed above satisfies the equivariance condition $\Psi \circ \tau = \sigma \circ \Psi$. The space $\hat\wedge^2 E$ is defined as the closed subspace of antisymmetric tensors (the $(-1)$-eigenspace of $\tau$), and $L^2_{\text{skew}}(E'_b)$ is the subspace of alternating bilinear forms (the $(-1)$-eigenspace of $\sigma$). Since $\Psi$ is a topological isomorphism that maps the eigenspaces of $\tau$ to the corresponding eigenspaces of $\sigma$, it restricts to a topological isomorphism between the skew-symmetric subspaces:
\[
\hat\wedge^2 E \cong L^2_{\text{skew}}(E'_b).
\]
\end{proof}
\begin{lemma}\label{Lemma9.13}
If $\mathfrak{g}$ is a nuclear  Fréchet (or Silva) space, then $\hat\wedge^2 \mathfrak g$ is also endowed with a Fréchet (resp. Silva) topological vector space structure.
\end{lemma}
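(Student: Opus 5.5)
The plan is to realize $\hat\wedge^2\mathfrak g$ as a complemented closed subspace of $\mathfrak g\hat\otimes_\pi\mathfrak g$. We then use the permanence properties of nuclear Fréchet and nuclear Silva spaces under completed projective tensor products and under passage to closed (complemented) subspaces.

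Let $\tau$ be the continuous flip on $\mathfrak g\hat\otimes_\pi\mathfrak g$ used in the previous lemma. Then $P:=\tfrac12(\mathrm{Id}-\tau)$ is a continuous projection, and its image is exactly $\hat\wedge^2\mathfrak g=\ker(\mathrm{Id}+\tau)$. So $\hat\wedge^2\mathfrak g$ is a closed, topologically complemented subspace, and it carries the subspace (equivalently, the quotient) topology.

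\emph{Fréchet case.} If $\mathfrak g$ is nuclear Fréchet, then $\mathfrak g\hat\otimes_\pi\mathfrak g$ is again a (nuclear) Fréchet space. It is metrizable because the projective tensor product of two metrizable spaces has a countable fundamental system of seminorms $p_n\otimes q_n$, and it is complete by construction. A closed subspace of a Fréchet space is Fréchet, so $\hat\wedge^2\mathfrak g$ is Fréchet, and it is nuclear since nuclearity passes to subspaces.

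\emph{Silva case.} If $\mathfrak g$ is nuclear Silva, I would use duality, which I expect to be the main step. The strong dual $F:=\mathfrak g'_b$ is nuclear Fréchet. By the previous lemma, $\mathfrak g\hat\otimes_\pi\mathfrak g\cong L^2_b(F)\cong (F\hat\otimes_\pi F)'_b$. Here $F\hat\otimes_\pi F$ is nuclear Fréchet by the first case. The strong dual of a nuclear Fréchet space is a nuclear Silva space (a DFS space), because nuclear Fréchet spaces are Fréchet–Schwartz and their strong duals are countable inductive limits of Banach spaces with compact linking maps. Hence $\mathfrak g\hat\otimes_\pi\mathfrak g$ is nuclear Silva.

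It then remains to see that the complemented subspace $\hat\wedge^2\mathfrak g=P(\mathfrak g\hat\otimes_\pi\mathfrak g)$ is Silva. The cleanest route is again duality. The transpose $P'$ is a continuous projection on $F\hat\otimes_\pi F$, and its image is a closed subspace of a nuclear Fréchet space, hence nuclear Fréchet; concretely this image is the space of symmetric-complement/skew tensors $\hat\wedge^2F$. The strong dual of this image is topologically isomorphic to $\hat\wedge^2\mathfrak g$, via the restriction of $\Psi$ from the previous lemma, which intertwines $\tau$ and $\sigma$. This dual is therefore again nuclear Silva.

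The delicate point is checking that the strong dual of a complemented subspace is the corresponding complemented subspace of the dual. I would handle it directly: if $P'$ is a continuous projection, then $(\operatorname{im}P')'_b\cong\operatorname{im}P''$ topologically, and reflexivity (Montel) of all spaces involved identifies $P''$ with $P$.

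Alternatively, one can argue directly on the inductive limit. Write $\mathfrak g\hat\otimes_\pi\mathfrak g=\varinjlim B_n$ with compact inclusions. Then $P$ maps each $B_n$ into some $B_m$ by Grothendieck's factorization theorem, so $\hat\wedge^2\mathfrak g=\varinjlim (B_n\cap\hat\wedge^2\mathfrak g)$. The spaces $B_n\cap\hat\wedge^2\mathfrak g$ are closed in $B_n$, hence Banach, with compact restricted inclusions.
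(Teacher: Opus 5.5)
Your proof is correct, but it takes a different route from the paper's. The paper models $\hat\wedge^2\mathfrak g$ as a quotient of $\mathfrak g\hat\otimes_\pi\mathfrak g$ by a closed subspace. In the Fréchet case it uses that such quotients of Fréchet spaces are Fréchet. In the Silva case it takes $E=\mathfrak g\hat\otimes_\pi\mathfrak g$ to be Silva without proof, writes $E/K$ as the inductive limit of the Banach quotients $E_n/(E_n\cap K)$, and checks compactness of the induced linking maps by hand.

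You instead use $P=\tfrac12(\mathrm{Id}-\tau)$ to exhibit $\hat\wedge^2\mathfrak g$ as a complemented closed subspace. In the Silva case you then argue either by duality with $\hat\wedge^2 F$, $F=\mathfrak g'_b$, or by intersecting the Banach steps with $\hat\wedge^2\mathfrak g$ and using $P$ together with Grothendieck factorization to match the inductive-limit and induced topologies.

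Your version gains several things. It proves that $\mathfrak g\hat\otimes_\pi\mathfrak g$ is nuclear Silva rather than assuming it. It gives nuclearity of $\hat\wedge^2\mathfrak g$ and the identification $(\hat\wedge^2\mathfrak g)'_b\cong\hat\wedge^2F$ for free. Your remark that the subspace and quotient topologies agree also reconciles the $(-1)$-eigenspace definition of the preceding lemma with the quotient definition used here. Note that the paper's quotient by $\ker(\tau)$ is literally a quotient by $\{0\}$, since $\tau$ is an involution; the intended subspace is $\ker(\mathrm{Id}-\tau)=\ker P$.

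The paper's quotient route is in principle more general, since separated quotients of DFS spaces are DFS without any complementation. However, its compactness step as written asserts that $p_n^{-1}(B)$ is bounded in $E_n$, which fails as soon as $E_n\cap K\neq 0$. It must be repaired by lifting bounded subsets of the quotient Banach space to bounded subsets of $E_n$ via the open mapping theorem. Your direct-limit variant avoids this issue: compactness of the restricted inclusions is immediate, and the only delicate point, agreement of the topologies, is settled by the continuous projection $P$.
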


\begin{proof}
\textit{Case 1:}
Suppose $\mathfrak{g}$ is a nuclear Fréchet space. Let $\tau : \mathfrak{g} \otimes \mathfrak{g} \rightarrow \mathfrak{g} \otimes \mathfrak{g}$ be the flip map defined by $a \otimes b \mapsto b \otimes a$, and let $\tau$ also denote its unique continuous extension $\tau : \mathfrak{g} \hat{\otimes}_{\pi} \mathfrak{g} \rightarrow \mathfrak{g} \hat{\otimes}_{\pi} \mathfrak{g}$. We define the exterior power as the quotient $\hat\wedge^2_{\pi} \mathfrak{g} := (\mathfrak{g} \hat{\otimes}_{\pi} \mathfrak{g})/\ker(\tau)$. Since $\mathfrak{g} \hat{\otimes}_{\pi} \mathfrak{g}$ is a Fréchet space and $\ker(\tau)$ is a closed subspace, the quotient $\hat\wedge^2_{\pi} \mathfrak{g}$ is itself a Fréchet space.

\textit{Case 2:}
Suppose $\mathfrak{g}$ is a nuclear Silva space. Let $E$ be a Silva space and $q : E \rightarrow E/\ker(\tau)$ be the quotient map. We will show that $W := E/\ker(\tau) = \varinjlim E_n/(E_n \cap \ker(\tau))$ is a regular direct limit of Banach spaces. 

Recall that $E = \varinjlim E_n$ is a regular direct limit of Banach spaces $E_n$ with compact linking maps $i_n : E_n \rightarrow E_{n+1}$. The topology on $E$ is final with respect to the inclusion maps $q_n : E_n \rightarrow E$, and the topology on $E/\ker(\tau)$ is final with respect to the quotient map $q$. Consequently, the topology on $E/\ker(\tau)$ is final with respect to the compositions $\tilde{q}_n := q \circ q_n$. These maps lift to injections $\tilde{i}_n : E_n/(\ker(\tau) \cap E_n) \rightarrow E/\ker(\tau)$, and the topology on $E/\ker(\tau)$ is final with respect to these induced mappings.

It remains to show that the induced linking maps are compact. Let $p_n : E_n \to E_n/(\ker(\tau) \cap E_n)$ be the canonical quotient map. Since $p_n$ is a continuous open map between Banach spaces, it maps bounded sets to bounded sets. Let $B \subset E_n/(\ker(\tau) \cap E_n)$ be a bounded set. We note that that $p_n^{-1}(B)$ is bounded in $E_n$, indeed,  let $V \subset E_n$ be a 0-neighborhood, then ${p}_n(V)$ is a 0-neighborhood of $\ker(\tau) \cap E_n$, hence there exists some $\lambda \in \mathbb{R}_+$ such that $B \in \lambda {p}_n(V)$, this implies ${p}_n^{-1}(B) \in \lambda V$. Thus, $p_n^{-1}(B)$ is bounded.

Since $i_n$ is a compact operator, the image $i_n(p_n^{-1}(B))$ is relatively compact in $E_{n+1}$. Because the linking map for the quotient spaces is given by $\tilde{i}_n(B) = p_{n+1}(i_n(p_n^{-1}(B)))$, and $p_{n+1}$ is continuous, $\tilde{i}_n(B)$ is relatively compact. Thus, $\hat\wedge^2_{\pi} \mathfrak{g}$ is a Silva space.\end{proof}

\textit{Nuclear Fr\'echet (res. Silva ) vector bundle structure on $L^2_{skew}(T'G)\cong\hat\wedge^2 TG$ .}\\
We observe that the canonical map $T : \hat\wedge^2 TG \rightarrow G \times \hat\wedge^2_{\pi} \mathfrak{g}$, defined by $T(\alpha) := (p(\alpha), dR_{p(\alpha)^{-1}}(p(\alpha))\alpha)$, is a bijection. Since the right translation map $dR_{g^{-1}}$ is a topological isomorphism for each $g \in G$, it commutes with the completion of the exterior power. This allows us to endow $\hat\wedge^2 TG$ with a smooth vector bundle structure (following \cite[3.7]{11}), which admits a global trivialization with the typical fiber $\hat\wedge^2_{\pi} \mathfrak{g}$.

We combine the results of last two lemmas to obtain a topological identification of vector bundle $L^2_{skew}(T'G)$ and $\hat\wedge^2 TG$.
\begin{proposition}
  Let $G$ be smooth Lie group with the Lie algebra $\mathfrak g$, modeled on a nuclear Fr\'echet or a nuclear Silva space.  Then the vector bundle $L^2_{skew}(T'G)$ is topologically isomorphic to $\hat\wedge^2 TG$.
\end{proposition}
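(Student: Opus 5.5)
The natural strategy is to build the isomorphism fiberwise and glue it using the global right trivializations already available on both sides. Lemma \ref{8.8} gives a global trivialization
\[
L^2_{skew}(T'G)\xrightarrow{\ \cong\ } G\times L^2_{skew}(\mathfrak g'),\qquad a\mapsto \bigl(p(a),R^{**}_{p(a)^{-1}}a\bigr),
\]
and the construction preceding the statement gives
\[
T:\hat\wedge^2 TG\to G\times\hat\wedge^2_\pi\mathfrak g,\qquad \alpha\mapsto \bigl(p(\alpha),dR_{p(\alpha)^{-1}}\alpha\bigr).
\]
Here $\mathfrak g'$ carries the strong topology $\mathfrak g'_b$, which for a Montel space agrees with the $c$-topology. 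The lemma preceding Lemma \ref{Lemma9.13} supplies a topological isomorphism $\Psi:\hat\wedge^2\mathfrak g\to L^2_{skew}(\mathfrak g'_b)$. On simple tensors it is given by
\[
\Psi(X\wedge Y)(\lambda,\mu)=\lambda(X)\mu(Y)-\lambda(Y)\mu(X).
\]
I would define the bundle map as the composite
\[
\Phi:=\Theta^{-1}\circ(\mathrm{id}_G\times\Psi)\circ T ,
\]
where $\Theta$ denotes the trivialization from Lemma \ref{8.8}. This is a smooth bundle isomorphism over $\mathrm{id}_G$ with smooth inverse, because $\Psi$ is a continuous linear isomorphism between Fréchet (respectively Silva) spaces by Lemma \ref{Lemma9.13}, and continuous linear maps are bounded, hence smooth.

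The remaining task is to check that $\Phi$ is canonical, that is, independent of the chosen trivializations. Concretely, on each fiber $\Phi_g$ should be the map $\Psi_g:\hat\wedge^2T_gG\to L^2_{skew}(T'_gG)$ defined by the same pairing formula. This reduces to an equivariance identity:
\[
R^{**}_{g^{-1}}\circ\Psi_g=\Psi\circ\hat\wedge^2 dR_{g^{-1}}(g).
\]
I would verify it first on simple tensors $X_g\wedge Y_g$, using $(R_{g^{-1}}^*\lambda)(X_g)=\lambda(dR_{g^{-1}}X_g)$. It then extends to all of $\hat\wedge^2T_gG$ by continuity and density of the algebraic wedge product in its completion. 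Since both sides are continuous linear maps, equality on a dense subspace suffices.

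The main obstacle I expect is the continuity needed in that last step. The map $\hat\wedge^2 dR_{g^{-1}}(g)$ must be a topological isomorphism of completed projective wedge powers. This holds because $\otimes_\pi$ is functorial for continuous linear maps and commutes with the flip, so the extension preserves the antisymmetric subspace. I also need $R^{**}_{g^{-1}}$ to be continuous on $L^2_{skew}(\mathfrak g'_b)$, which should follow because $R^*_{g^{-1}}$ is a strong-topology isomorphism of $\mathfrak g'_b$ (it is the transpose of a continuous isomorphism).

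Finally, I would conclude that $\Phi$ intertwines the two trivializations and restricts fiberwise to the canonical $\Psi_g$, which gives the asserted topological isomorphism of vector bundles.
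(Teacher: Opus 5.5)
Your proposal is correct and follows essentially the paper's route. The paper gives no separate proof; it simply combines the fiber identification $\hat\wedge^2\mathfrak g\cong L^2_{skew}(\mathfrak g'_b)$ and Lemma \ref{Lemma9.13} with the two global right trivializations, namely Lemma \ref{8.8} and the map $T$. Your additional equivariance check, showing the composite is fiberwise the canonical pairing map, is a harmless strengthening that the paper leaves implicit.
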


We are now equipped to prove the Drinfeld theorem for a Lie group modeled on a nuclear Fréchet or a nuclear Silva space.

Let $G$ be a 1-connected Lie group modeled on a nuclear Fréchet or a nuclear Silva space, and let $\mathfrak{g}$ be its Lie algebra. Suppose $(G, \pi)$ is a smooth Poisson Lie group structure, where $\pi \in \Gamma(\hat\wedge^2 TG)$. By Theorem \ref{D-0}, there exists a continuous Lie bialgebra structure $(\mathfrak{g}, \mathfrak{g}', \delta)$ on $\mathfrak{g}$.

Conversely, assume $G$ is the unique 1-connected regular Lie group corresponding to the Lie algebra $\mathfrak{g}$, and let $(\mathfrak{g}, \mathfrak{g}', \delta)$ be a continuous Lie bialgebra structure on $\mathfrak{g}$. The cocycle $\delta$ integrates to a multiplicative bivector $\pi \in \Gamma(\hat\wedge^2 TG)$ by the results of Lemma \ref{4.4}, Lemma \ref{4.5}, and Proposition \ref{proposition 4.10} in section \ref{3}. To conclude that $(G, T'G, \pi)$ constitutes a Poisson structure on $G$, we must show that $\pi$ satisfies the Jacobi identity. This follows from the following proposition.

\begin{proposition}
Let $G$ be a 1-connected regular Lie group modeled on a nuclear Fréchet or a nuclear Silva Lie algebra $\mathfrak{g}$. If $(\mathfrak{g}, \mathfrak{g}', \delta)$ is a Lie bialgebra structure, then $\delta$ integrates to a smooth multiplicative bivector $\pi \in \Gamma(\hat{\wedge}^2 TG)$ such that $\pi$ is a Poisson bivector (which means the bracket $\{f, g\} := \pi(df, dg)$ satisfies the Jacobi identity). Consequently, $(G, T'G, \pi)$ is a Poisson Lie group structure.
\end{proposition}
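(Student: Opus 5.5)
The plan is the classical finite-dimensional route, transported with the nuclear machinery of this section: show that $[\pi,\pi]_s$ is multiplicative, show that its value at $e$ vanishes because $[-,-]_{\mathfrak g'}$ satisfies Jacobi, and conclude that $[\pi,\pi]_s=0$, which is equivalent to the Jacobi identity for $\{f,g\}=\pi(df,dg)$.

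\textbf{Step 1 (construction of $\pi$).} The multiplicative bivector is already available. Lemma \ref{4.5} and Proposition \ref{proposition 4.10}, applied to the regular semidirect product $G\rtimes L^2_{\text{skew}}(\mathfrak g')$, give the group cocycle $\Theta$ with $d\Theta(e)=\delta$. Setting $\pi(g):=R_g^{**}\Theta(g)$ gives a section that is multiplicative by the computation in Proposition \ref{Poisson 1-cocycle} read backwards. Under the identification $L^2_{\text{skew}}(T'G)\cong\hat\wedge^2TG$ of the last proposition, we regard $\pi\in\Gamma(\hat\wedge^2TG)$. Smoothness follows from smoothness of $\Theta$ and of the global right trivialization (Lemma \ref{8.8}).

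\textbf{Step 2 ($[\pi,\pi]_s$ is multiplicative).} By Proposition \ref{[]=0}, $[\pi,\pi]_s\in\Gamma(\hat\wedge^3TG)$ is multiplicative, so in particular $[\pi,\pi]_s(e)=0$. Lemma \ref{8.12} also gives that $L_X[\pi,\pi]_s$ is left invariant for every $X\in\Gamma_l(TG)$.

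\textbf{Step 3 (reduction to the identity).} Since $[\pi,\pi]_s$ is multiplicative, it suffices to show that its linearization $d([\pi,\pi]_s)(e):\mathfrak g\to\hat\wedge^3\mathfrak g$ vanishes. The right-translated cocycle $\Theta_3(g):=R_{g^{-1}}^{**}[\pi,\pi]_s(g)$ is a smooth group $1$-cocycle, by the same computation as in Proposition \ref{Poisson 1-cocycle}. Its derivative at $e$ is that linearization. By the uniqueness in Proposition \ref{proposition 4.10}, using that $G$ is $1$-connected and $G\rtimes\hat\wedge^3\mathfrak g$ is regular, a cocycle with zero derivative is the zero cocycle. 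Hence $[\pi,\pi]_s\equiv 0$ once $d\Theta_3(e)=0$.

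\textbf{Step 4 (main obstacle: the derivative at $e$).} We must compute $d\Theta_3(e)(Y)(\alpha,\beta,\gamma)$ for $Y\in\mathfrak g$ and $\alpha,\beta,\gamma\in\mathfrak g'$. Since $\pi(e)=0$, this reduces to a term quadratic in $\delta$. I expect to find $\pm2\langle Y,\operatorname{Jac}_{\mathfrak g'}(\alpha,\beta,\gamma)\rangle$, where $\operatorname{Jac}_{\mathfrak g'}$ is the Jacobiator of $[-,-]_{\mathfrak g'}=\delta'$; this vanishes because $\delta'$ is a Lie bracket. The first approach is to choose right-invariant $1$-forms $\tilde\alpha,\tilde\beta,\tilde\gamma$ extending $\alpha,\beta,\gamma$ and use the formula $[\pi,\pi]_s(\tilde\alpha,\tilde\beta,\tilde\gamma)=2\sum_{\mathrm{cyc}}\pi(d\pi(\tilde\alpha,\tilde\beta),\tilde\gamma)$, corrected by terms involving $d\tilde\alpha$ etc., which all carry a factor $\pi$. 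Differentiating at $e$ and using $\pi(e)=0$ kills every term except $\sum_{\mathrm{cyc}}d\pi(e)(Y)\big(d\pi(\tilde\alpha,\tilde\beta)(e),\gamma\big)$. By Lemma \ref{4.7}, $d\pi(\tilde\alpha,\tilde\beta)(e)=[\alpha,\beta]_{\mathfrak g'}$, so this sum equals $\sum_{\mathrm{cyc}}\langle Y,[[\alpha,\beta],\gamma]\rangle=0$.

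The delicate points are these. The Schouten bracket is only defined on $\hat\wedge^k$ via the Serre--Swan isomorphism \eqref{Iso}, so we first verify the formula on the dense algebraic wedge $\wedge^2\Gamma_r(TG)$, using the approximation proposition, and then pass to limits using continuity of $[-,-]_s$ and of evaluation at $e$. Nuclearity is used to ensure that $\hat\wedge^3\mathfrak g\cong L^3_{\text{alt}}(\mathfrak g')$ separates the relevant evaluations.

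Finally, $[\pi,\pi]_s=0$ gives the Jacobi identity for $\pi(df,dg)$ through the identity of \cite[Proposition 7.7]{5}. By Remark \ref{smoooth anchor} the anchor takes values in $TG$ in the nuclear case, so that identity applies. Hence $(G,T'G,\pi)$ is a Poisson Lie group.
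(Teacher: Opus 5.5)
Your proposal follows the paper's proof: multiplicativity of $[\pi,\pi]_s$, the right-translated group $1$-cocycle $g\mapsto R_{g^{-1}}^{**}[\pi,\pi]_s(g)$ with values in $L^3_{\text{skew}}(\mathfrak g')$, identification of its derivative at $e$ with the Jacobiator of $[-,-]_{\mathfrak g'}$, and uniqueness of integration over the $1$-connected $G$ to force $[\pi,\pi]_s\equiv 0$. The one substantive difference is that you justify the formula for the linearized cocycle, which the paper only asserts: you use right-invariant extensions, note that $\pi(e)=0$ kills all terms in which neither factor of $\pi$ is differentiated, and use $d(\pi(\tilde\alpha,\tilde\beta))(e)=[\alpha,\beta]_{\mathfrak g'}$ (from the computational part of Lemma 4.7, which does not need Jacobi); two minor differences are that you need $C^\infty(G)$-combinations of right-invariant fields, not $\wedge^2\Gamma_r(TG)$ itself, as your dense subspace since $\pi$ is not right-invariant, and that $\pi^\sharp$ lands in $TG$ by reflexivity of $\mathfrak g$ rather than by Remark \ref{smoooth anchor}.
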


\begin{proof}
By \cite[Theorem 3.6]{22}, it follows that $\pi$ is a Poisson bivector if and only if its Schouten bracket vanishes: $[\pi, \pi]_S = 0$.

Since $[\pi, \pi]_S$ is multiplicative (by \cref{[]=0}), the map $\theta(x) := R_{x^{-1}}^{**}[\pi, \pi]_S(x)$ defines a smooth group 1-cocycle. Consequently, the map $\hat{\theta}(x) := (x, \theta(x))$ defines a Lie group homomorphism into the semidirect product
\[
\hat{\theta} : G \rightarrow G \rtimes L^3_{\text{skew}}(\mathfrak{g}').
\]

Differentiating at the identity, we obtain a Lie algebra homomorphism $\hat{\delta} := d_e \hat{\theta} : \mathfrak{g} \rightarrow \mathfrak{g} \rtimes L^3_{\text{skew}}(\mathfrak{g}')$. It follows that the second component, $[\delta, \delta] := \operatorname{pr}_2(\hat{\delta})$, is a Lie algebra 1-cocycle relative to the adjoint action $\operatorname{ad}^{(3)}$ of $\mathfrak{g}$ on $L^3_{\text{skew}}(\mathfrak{g}')$ given by 
\begin{align*}
\operatorname{ad}^{(3)} &: \mathfrak{g} \to \mathfrak{gl}(L^3_{\text{skew}}(\mathfrak{g}')), \\
\operatorname{ad}^{(3)}_X \eta (\alpha, \beta, \gamma) &= \eta(\operatorname{ad}^*_X \alpha, \beta, \gamma) + \eta(\alpha, \operatorname{ad}^*_X \beta, \gamma) + \eta(\alpha, \beta, \operatorname{ad}^*_X \gamma)
\end{align*}
for $X \in \mathfrak{g}$, $\eta \in L^3_{\text{skew}}(\mathfrak{g}')$, and $\alpha, \beta, \gamma \in \mathfrak{g}'$. 

Evaluating this cocycle on an element $X \in \mathfrak{g}$ and dual elements $\alpha, \beta, \gamma \in \mathfrak{g}'$, we obtain
\[
[\delta, \delta](X)(\alpha, \beta, \gamma) = [ [\alpha, \beta]_{\mathfrak{g}'}, \gamma]_{\mathfrak{g}'} (X) + \text{cyclic permutations}(\alpha, \beta, \gamma).
\]
Since the expression on the right-hand side of the equality corresponds exactly to the evaluation of $X$ on the Jacobiator of the Lie bracket on $\mathfrak{g}'$, and $\mathfrak{g}'$ satisfies the Jacobi identity by definition, we obtain $[\delta, \delta] = 0$. 

Thus, the Lie algebra homomorphism $\hat{\delta}$ is trivial in its second component. Since $G$ is 1-connected, this implies that the corresponding Lie group homomorphism $\hat{\theta}$, and consequently the Lie group 1-cocycle $\theta$, are also trivial. Therefore, $\theta(x) = 0$ for all $x \in G$, which implies $[\pi, \pi]_S = 0$ everywhere.
\end{proof}
 We rewrite the Drinfeld correspondence in the context of this section in the following theorem:

\begin{theorem}\label{D-4}
Let $G$ be a Lie group modeled on a nuclear Fréchet or a nuclear Silva space, let $\mathfrak{g}$ be the Lie algebra of $G$ and $(G, T'G, \pi)$ be a smooth Poisson Lie group structure, where $\pi \in {\Gamma(\hat\wedge}^2TG)$. Then there exists a continuous Lie bialgebra structure $(\mathfrak{g}, \mathfrak{g}', \delta)$ on $\mathfrak{g}$. Conversely, if $G$ is the unique 1-connected regular Lie group of a Lie algebra $\mathfrak{g}$ and $(\mathfrak{g}, \mathfrak{g}', \delta)$ is a continuous Lie bialgebra structure on $\mathfrak{g}$. Then there exists a unique smooth Poisson Lie group structure $(G, T'G, \pi)$ on $G$ such that $\pi \in {\Gamma(\hat\wedge}^2TG)$ is a smooth bivector.
\end{theorem}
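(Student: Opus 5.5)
Both directions of Theorem \ref{D-4} reduce to results already established, once the nuclear identifications are in place. The plan is first to fix the dictionary. By the lemma identifying $L^2_{\text{skew}}(E'_b)\cong\hat\wedge^2E$ for nuclear Fréchet or Silva $E$, together with the global right trivialization of $\hat\wedge^2TG\cong G\times\hat\wedge^2_\pi\mathfrak g$, a smooth bivector $\pi\in\Gamma(\hat\wedge^2TG)$ is the same as a smooth section of $L^2_{\text{skew}}(T'G)$. Hence $(G,T'G,\pi)$ fits Definition \ref{3.2} with $\mathbb F=T'G$ and $\mathfrak b=\mathfrak g'$. Here I will use that bump functions exist, so $d(\pi(\alpha,\beta))$ automatically factors through $T'G$, and the factorization condition (1) is vacuous. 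I must also check that $\operatorname{Ad}^*$ acts smoothly on $\mathfrak g'$, with the strong topology making it convenient. For this I will use reflexivity of Montel spaces and the exponential law, exactly as in the smoothness argument of Proposition \ref{Poisson 1-cocycle}.

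For the forward direction, I apply Theorem \ref{D-0} (equivalently Theorem \ref{D-1}) with $\mathfrak b=\mathfrak g'$. The cocycle $\theta(x)=R^{**}_{x^{-1}}\pi(x)$ is smooth by Proposition \ref{Poisson 1-cocycle}. Its derivative $\delta=d\theta(e)$ is a $1$-cocycle whose dual restricts to the bracket on $\mathfrak g'$ by Lemma \ref{4.7}. Continuity, rather than mere boundedness, then follows from the two lemmas on bounded and separately continuous maps on Fréchet/Silva spaces. This uses that $\mathfrak g'_b$ is again nuclear Silva or Fréchet, and that $\hat\wedge^2\mathfrak g$ is Fréchet or Silva by Lemma \ref{Lemma9.13}.

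For the converse, I start from the continuous Lie bialgebra $(\mathfrak g,\mathfrak g',\delta)$. Forming the regular semidirect product $G\rtimes L^2_{\text{skew}}(\mathfrak g')$ (\S\ref{regularsemidirect}), I integrate $\tilde\delta=(\mathrm{id},\delta)$ via Lemma \ref{4.5}, Proposition \ref{proposition 4.10} and Lemma \ref{4.4} to a smooth group cocycle $\Theta$. I then set $\pi(g)=R_g^{**}\Theta(g)$, which is multiplicative by construction. The Jacobi identity is supplied by the preceding proposition: $[\pi,\pi]_S$ is multiplicative by Proposition \ref{[]=0}, its linearization is the Jacobiator of $[-,-]_{\mathfrak g'}$ and hence zero, and 1-connectedness forces $[\pi,\pi]_S=0$. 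Uniqueness follows because two such $\pi$ give cocycles $\Theta$ with the same derivative $\delta$, so the uniqueness clause of Proposition \ref{proposition 4.10} makes the homomorphisms $\tilde\Theta$ coincide. Since $\pi$ is recovered from $\Theta$ by right translation, $\pi$ is unique.

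I expect the main obstacle to be the identification of the linearization of $[\pi,\pi]_S$ with the Jacobiator of $\delta'$. Concretely, I need $d_e(R^{**}_{x^{-1}}[\pi,\pi]_S)$ to equal the cyclic sum $\langle X,[[\alpha,\beta],\gamma]\rangle+\text{cyc}$. I would establish this on the dense subspace $\wedge^2\Gamma_r(TG)$ using the approximation proposition for invariant multivectors, computing the Schouten bracket on simple tensors there, and then extend by continuity of the bracket via the Serre–Swan isomorphism \eqref{Iso}.
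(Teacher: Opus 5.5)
Your overall route is the paper's. The forward direction is Theorem~\ref{D-0} with $\mathfrak b=\mathfrak g'$, plus the lemmas upgrading boundedness to continuity on Fréchet/Silva spaces. The converse integrates $\tilde\delta$ to a cocycle $\Theta$, sets $\pi(g)=R_g^{**}\Theta(g)$, and gets the Jacobi identity from three facts: $[\pi,\pi]_S$ is multiplicative, its linearization vanishes, and $G$ is 1-connected. Uniqueness comes from Proposition~\ref{proposition 4.10}. The paper never derives the identity $[\delta,\delta](X)(\alpha,\beta,\gamma)=[[\alpha,\beta]_{\mathfrak g'},\gamma]_{\mathfrak g'}(X)+\text{cyclic}$; it only asserts it. You are right that this is where the work lies, but the method you propose for it does not work as written.

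\emph{Why the density step fails.} The approximation proposition only concerns invariant multivector fields, and neither $\pi$ nor $[\pi,\pi]_S$ is invariant. In the right trivialization $\Gamma(\hat\wedge^2TG)\cong C^\infty(G,\hat\wedge^2\mathfrak g)$, the right-invariant bivectors are exactly the constant functions, which form a closed subspace. By contrast, $\pi$ corresponds to $\Theta$, which vanishes at $e$. So $\pi$ lies in the closure of $\wedge^2\Gamma_r(TG)$ only if $\pi=0$, and computing the Schouten bracket on such simple tensors tells you nothing about $[\pi,\pi]_S$.

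\emph{How to repair it.} Apply the invariant approximation to a different bivector.
Since $[\pi,\pi]_S(e)=0$, the derivative at $e$ of $x\mapsto R^{**}_{x^{-1}}[\pi,\pi]_S(x)$ in direction $X$ equals $(\mathcal L_{X^l}[\pi,\pi]_S)(e)=2[\mathcal L_{X^l}\pi,\pi]_S(e)$.
By Lemma~\ref{8.12}, $\mathcal L_{X^l}\pi$ is left-invariant, and since $\pi(e)=0$ its value at $e$ is $\delta(X)$.
Approximate \emph{this} field by $\sum_i a_i^l\wedge b_i^l$ with $\sum_i a_i\wedge b_i\to\delta(X)$ in $\hat\wedge^2\mathfrak g$.
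The graded Leibniz rule gives $[\pi,a^l\wedge b^l]_S=a^l\wedge\mathcal L_{b^l}\pi-\mathcal L_{a^l}\pi\wedge b^l$, up to sign convention. Because $\pi(e)=0$, its value at $e$ is $a\wedge\delta(b)-\delta(a)\wedge b$.
Pass to the limit using continuity of $\delta$ and of the Schouten bracket; this is where \eqref{Iso} is genuinely needed. Evaluating on $\alpha,\beta,\gamma\in\mathfrak g'$ then gives $\sum_{\mathrm{cyc}}\delta(X)(\gamma,[\alpha,\beta]_{\mathfrak g'})$, which is the Jacobiator of $\delta'$ paired with $X$, up to normalization.
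With this repair the rest of your plan goes through as in the paper.
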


\subsection{Examples}
The main examples fit in the setting of this setting include smooth loop groups $C^{\infty}(\mathbb{S}^{1},G)$, analytic loop groups $C^{\omega}(\mathbb{S}^{1},G)$ of a 1-connected real Lie group $G$, as well as $\mathrm{Diff}^{\infty}(M)$ and $\mathrm{Diff}^{\omega}(M)$.

We recall that assumptions on a Lie groups $G$ for which we proved the Drinfeld correspondence are that the modeling space of $G$ must be a nuclear Fréchet or a nuclear Silva space, $G$ should be a 1-connected regular Lie group, and the group generated by the subspace $\exp(\mathfrak{g})$ of $G$ must be equal to $G$.\\

\textit{Examples: (1) $LG := C^{\infty}(\mathbb{S}^1, G)$, (2) $ALG := C^{\omega}(\mathbb{S}^1, G)$. }
\begin{theorem}
Let $G$ be a real smooth Lie group, then the smooth loop group $LG := C^{\infty}(\mathbb{S}^1, G)$ is a regular Lie group with the Lie algebra $C^{\infty}(\mathbb{S}^1, \mathfrak{g})$, which is a nuclear Fréchet space. And the analytic loop group $ALG := C^{\omega}(\mathbb{S}^1, G)$ is a regular Lie group with Lie algebra $C^{\omega}(\mathbb{S}^1, \mathfrak{g})$, which is a nuclear Silva topological space. To see the regularity of $C^{\infty}(\mathbb{S}^1, G)$ and $C^{\omega}(\mathbb{S}^1, G)$, we refer to \cite{12} Examples a and f.
\end{theorem}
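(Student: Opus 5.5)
The statement has two parts. First, $C^{\infty}(\mathbb{S}^1, G)$ and $C^{\omega}(\mathbb{S}^1, G)$ are regular Lie groups with the stated Lie algebras. Second, these Lie algebras are nuclear Fréchet and nuclear Silva spaces respectively. I will treat the two parts separately.

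\textbf{Lie group structure and Lie algebras.} For the smooth loop group, I take the standard chart construction. Fix a chart $\exp|_V: V\to U$ of $G$ at $e$ with $V\subseteq\mathfrak g$ open and convex. The set $C^\infty(\mathbb{S}^1,V)$ is open in $C^\infty(\mathbb{S}^1,\mathfrak g)$, and I use it as a chart at the identity loop. Group multiplication and inversion are pushforwards by smooth maps, so they are smooth by the exponential law \cite[Corollary 3.13]{20}. Translating this chart by left multiplication gives an atlas. The tangent space at the constant loop $e$ is $C^\infty(\mathbb{S}^1,\mathfrak g)$, and differentiating the pointwise conjugation shows the bracket is the pointwise one.

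For $C^\omega(\mathbb{S}^1,G)$ the same construction works. I would first pass to a complexification $G_{\mathbb C}$ locally, or use germs of holomorphic maps on annuli, and then invoke \cite[Section 30--31]{20} to make the pushforwards real analytic, hence convenient smooth.

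\textbf{Regularity.} This is cited directly from \cite{12}, Examples a and f. The underlying argument is worth recording: the evolution equation $g'(t)=R_{g(t)*}X(t)$ with $X\in C^\infty(\mathbb{R},C^\infty(\mathbb{S}^1,\mathfrak g))$ is solved pointwise in $\theta\in\mathbb{S}^1$ using regularity of the finite-dimensional group $G$. Smooth (respectively analytic) dependence of ODE solutions on parameters then makes the solution a smooth curve in the loop group, and the evolution map is smooth again by the exponential law.

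\textbf{Nuclearity.} For the Fréchet case, $C^\infty(\mathbb{S}^1,\mathfrak g)\cong C^\infty(\mathbb{S}^1)\otimes\mathfrak g$ is a finite direct sum of copies of $C^\infty(\mathbb{S}^1)$. Via Fourier series, $C^\infty(\mathbb{S}^1)$ is the space of rapidly decreasing sequences $s$, which is a nuclear Fréchet space \cite[Ch.~50--51]{31}.

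For the Silva case, I would realize $C^\omega(\mathbb{S}^1,\mathfrak g)$ as an inductive limit, as in the $A_\omega=\varinjlim B_n$ construction of Section~\ref{5}. The steps are:
\begin{enumerate}
\item Write $C^\omega(\mathbb{S}^1,\mathfrak g)=\varinjlim_n \mathcal{O}^b(A_{1/n},\mathfrak g_{\mathbb C})^{\mathrm{real}}$, where $A_{1/n}$ are shrinking annuli and $\mathcal{O}^b$ denotes bounded holomorphic maps.
\item Check that the restriction maps $\mathcal{O}^b(A_{1/n})\to\mathcal{O}^b(A_{1/(n+1)})$ are compact, which follows from the Montel theorem as in Lemma~\ref{5.9}.
\item Check that these restriction maps are nuclear. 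Using Laurent coefficients, the restriction acts on coefficient sequences as a diagonal operator with exponentially decaying weights $e^{-|m|(1/n-1/(n+1))}$, and such a diagonal operator is nuclear.
\item Conclude that a countable inductive limit with nuclear linking maps is a nuclear (DFN) Silva space.
\end{enumerate}

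The main obstacle is the analytic case: making the Lie group structure on $C^\omega(\mathbb{S}^1,G)$ convenient-smooth and verifying that its model space is nuclear Silva. For the Lie group structure I would rely on \cite{12} and \cite{20}. For nuclearity, the Laurent-coefficient diagonal-operator estimate in step 3 is the key point and is the step I expect to be hardest.
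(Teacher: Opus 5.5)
Your proposal is correct. For the smooth case it matches the paper: you identify $C^\infty(\mathbb{S}^1)$ with the rapidly decreasing sequences $s$, and you cite \cite{12} for regularity. The paper does not construct the Lie group structures at all, so your chart sketch is extra.

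The real difference is the nuclearity of $C^\omega(\mathbb{S}^1,\mathfrak g)$.

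\textbf{The paper's route.} It uses the Bergman spaces $H^2(A_r)$ of square-integrable holomorphic functions on annuli. Rotation equivariance shows that the monomials $z^n$ diagonalize $T^*T$ for the restriction map $T$. The paper computes the eigenvalues $c_n$ explicitly and shows $\sum\sqrt{c_n}<\infty$, so $T$ is trace class. It then argues by duality. The dual is a projective limit of Hilbert spaces with trace-class links, hence nuclear Fr\'echet. The original space is recovered as the strong dual of that space, using reflexivity.

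\textbf{Your route.} You stay with the sup-norm Banach spaces $\mathcal O^b(A_{1/n})$. You show directly, by Cauchy estimates on Laurent coefficients, that the linking maps are nuclear, and you conclude DFN without passing to the dual.

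\textbf{What each buys.} Your route avoids the duality detour. It also works with the Banach steps that actually define the usual Silva topology on $C^\omega$. So you need not check that a different inductive system gives the same topology; the paper treats this briefly, essentially at the level of sets via Paley--Wiener and Parseval. The paper's Hilbert route gives exact singular values. It also matches literally the appendix definition of a nuclear Silva space as a limit of Hilbert spaces with trace-class links. From nuclear Banach links, that form is obtained by a routine interleaving with Hilbert spaces.

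\textbf{Two small clarifications.}
\begin{enumerate}
\item $\mathcal O^b$ is not a sequence space, so the ``diagonal operator'' of step 3 should be stated as a nuclear representation:
\[
f|_{A_{1/(n+1)}}=\sum_m a_m(f)\,z^m,\qquad \sum_m \|a_m\|\,\|z^m\|\le \sum_m e^{-|m|(1/n-1/(n+1))}<\infty .
\]
Equivalently, factor the restriction through weighted $\ell^\infty\to\ell^1$.
\item Take the annuli to be $\{e^{-1/n}<|z|<e^{1/n}\}$. Your weights already assume this, and these annuli are invariant under $z\mapsto 1/\bar z$, which the real structure needs.
\end{enumerate}
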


\begin{lemma}
$C^{\infty}(\mathbb{S}^1, \mathfrak{g})$ is a nuclear Fréchet space (see also \cite[Lemma 30.3]{20}).
\end{lemma}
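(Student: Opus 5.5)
The plan is to realize $C^{\infty}(\mathbb{S}^1,\mathfrak{g})$ as a finite product of copies of $C^{\infty}(\mathbb{S}^1,\mathbb{R})$. Nuclearity and the Fréchet property are both stable under finite products, so it will be enough to treat the scalar case. Fixing a basis $e_1,\dots,e_n$ of the finite-dimensional real Lie algebra $\mathfrak{g}$, the map
\[
C^{\infty}(\mathbb{S}^1,\mathbb{R})^n \to C^{\infty}(\mathbb{S}^1,\mathfrak{g}),\qquad (f_1,\dots,f_n)\mapsto \textstyle\sum_i f_i e_i ,
\]
is a linear bijection. It is a homeomorphism for the compact-open $C^\infty$ topology, because the seminorms $\sup_{t}\|f^{(j)}(t)\|$ for any norm on $\mathfrak{g}$ are equivalent to the maxima of the coordinate seminorms $\sup_t |f_i^{(j)}(t)|$. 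This reduction is routine.

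For the scalar case I will first establish the Fréchet property. The topology on $C^{\infty}(\mathbb{S}^1,\mathbb{R})$ is given by the countable family of seminorms $p_k(f)=\max_{j\le k}\sup_{t\in[0,2\pi]}|f^{(j)}(t)|$, so the space is metrizable. Completeness follows by the usual argument: a $p_k$-Cauchy sequence converges uniformly together with all of its derivatives, and the limit is smooth and periodic.

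For nuclearity I will use the Fourier isomorphism $f\mapsto(\hat f(m))_{m\in\mathbb{Z}}$ onto the sequence space $s(\mathbb{Z})$ of rapidly decreasing sequences, with the seminorms $q_k(c)=\sum_m (1+|m|)^k|c_m|$. Continuity in one direction comes from the integration-by-parts estimate $|\hat f(m)|\le p_k(f)|m|^{-k}$, which is exactly Lemma~\ref{Rapid decay for smooth vectors} applied to the rotation action on $A=C(\mathbb{S}^1)$. This gives $q_k(\hat f)\le C\,p_{k+2}(f)$. Continuity in the other direction follows from $p_k(\sum c_m e^{imt})\le q_k(c)$. Surjectivity onto $s(\mathbb{Z})$ holds because rapidly decreasing coefficients produce a series that converges in every $p_k$, as in \S\ref{Partial sum convergence for smooth vectors}.

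Once this isomorphism is in place, nuclearity reduces to that of the Köthe space $s(\mathbb{Z})$. I will check this with the Grothendieck–Pietsch criterion: for each $k$ we need
\[
\sum_m \frac{(1+|m|)^k}{(1+|m|)^{k+2}}<\infty,
\]
which holds. Alternatively, I could cite \cite[Proposition 50.2]{31} or \cite[Lemma 30.3]{20} directly.

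The main obstacle is the nuclearity step. Everything else is a direct estimate, whereas nuclearity needs either the Köthe-space criterion or an external reference. If a self-contained argument is preferred, I would factor the canonical maps $\widehat{E}_{q_{k+2}}\to\widehat{E}_{q_k}$ through $\ell^1$ as diagonal operators with summable diagonal $(1+|m|)^{-2}$; such diagonal operators are nuclear.
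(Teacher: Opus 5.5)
Your proposal is correct and follows essentially the same route as the paper: the paper simply cites Tr\`eves for the Fourier isomorphism $C^{\infty}(\mathbb{S}^1,\mathbb{R})\cong\mathfrak{s}$ and for the nuclearity of $\mathfrak{s}$, while you carry out the reduction from $\mathfrak{g}$-valued to scalar functions, the two-sided seminorm estimates, and the Grothendieck--Pietsch check yourself. One small fix: for real-valued $f$ the coefficients satisfy $\hat f(-m)=\overline{\hat f(m)}$, so the Fourier map lands in the closed real subspace of Hermitian-symmetric sequences in $s(\mathbb{Z})$ rather than onto all of $s(\mathbb{Z})$ (alternatively, use the basis $\cos mt,\sin mt$); that subspace is still nuclear Fr\'echet, so the conclusion is unaffected.
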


\begin{proof}
 $C^{\infty}(\mathbb{S}^1, \mathbb{R})$ is topologically isomorphic to the space of rapidly decreasing real sequences $\mathfrak s$ by \cite[Theorem 51.3]{31}. Therefore, $C^{\infty}(\mathbb{S}^1, \mathbb{R})$ is a nuclear Fréchet space (see \cite[Remark 51.1]{31}).
\end{proof}

\begin{proposition}
$C^{\omega}(\mathbb{S}^1, \mathfrak{g})$ is a nuclear Silva space (see also \cite[Lemma 30.3]{20}).
\end{proposition}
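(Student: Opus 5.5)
To show that $C^{\omega}(\mathbb{S}^1, \mathfrak{g})$ is a nuclear Silva space, I would first reduce to scalar-valued functions. Choosing a basis of the finite-dimensional Lie algebra $\mathfrak{g}$ gives a topological isomorphism $C^{\omega}(\mathbb{S}^1, \mathfrak{g}) \cong C^{\omega}(\mathbb{S}^1, \mathbb{R})^{\dim \mathfrak{g}}$. Finite products of Silva spaces are Silva spaces, and finite products of nuclear spaces are nuclear, so it suffices to treat $C^{\omega}(\mathbb{S}^1, \mathbb{R})$.

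\textbf{Silva structure.} I would present $C^{\omega}(\mathbb{S}^1,\mathbb{R})$ exactly as for analytic vectors in Section~\ref{5}, with $A = C(\mathbb{S}^1)$ and the rotation action. Let $B_n$ be the space of real-analytic $f$ whose periodic holomorphic extension is bounded on the strip $S_{1/n}$, with the corresponding norm. Lemma~\ref{5.7} shows each $B_n$ is Banach. Lemma~\ref{5.8} shows the inclusion $B_n \hookrightarrow B_{n+1}$ is continuous, and Lemma~\ref{5.9} shows it is compact, by Cauchy estimates and Arzel\`a--Ascoli on a fundamental rectangle. Since every real-analytic periodic function extends holomorphically to some strip, $C^{\omega}(\mathbb{S}^1,\mathbb{R}) = \bigcup_n B_n$. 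Moreover, this inductive limit topology is the standard topology on $C^\omega(\mathbb{S}^1)$, namely germs on a complex neighbourhood of $\mathbb{S}^1$, because the strips form a cofinal system of such neighbourhoods. Hence $C^{\omega}(\mathbb{S}^1,\mathbb{R})$ is a Silva space.

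\textbf{Nuclearity.} I would use Fourier coefficients. By Lemma~\ref{5.11} and its converse, a function $f = \sum_m c_m e^{imt}$ lies in some $B_n$ if and only if $|c_m| \le C\rho^{-|m|}$ for some $\rho > 1$. This identifies $C^{\omega}(\mathbb{S}^1)$ topologically with the (DFS) sequence space
\[
\varinjlim_{k} \ell^\infty\bigl(e^{-|m|/k}\bigr),
\]
the union over $k$ of weighted $\ell^\infty$ spaces with weights $e^{|m|/k}$. This is a K\"othe echelon (co-echelon) space, and nuclearity can be checked by the Grothendieck--Pietsch criterion. The ratio of consecutive weights is
\[
\frac{e^{-|m|/k}}{e^{-|m|/(k+1)}} = e^{-|m|/(k(k+1))},
\]
which is summable in $m$. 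So each linking map $\ell^\infty(e^{-|m|/k}) \to \ell^\infty(e^{-|m|/(k+1)})$ factors through $\ell^1$ as a diagonal operator with summable diagonal, and is therefore nuclear. An inductive limit with nuclear linking maps is nuclear.

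An alternative argument for nuclearity: the strong dual of $C^{\omega}(\mathbb{S}^1)$ is the space of analytic functionals, a nuclear Fr\'echet space (a K\"othe space of sub-exponentially growing sequences). A Silva space is nuclear if and only if its Fr\'echet dual is.

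\textbf{Main obstacle.} The hardest step is showing the Fourier map is a \emph{topological} isomorphism onto the co-echelon space. This requires controlling the strip norms by the weighted sup-norms of the coefficients with a loss of strip width. For the direction from coefficients to functions, I would apply the estimate $|e^{imz}| \le e^{|m|/k}$ as in the partial-sum convergence argument, which gives boundedness of the inclusion from one weighted space into the next Banach space. For the reverse direction, contour shifting as in Lemma~\ref{5.11} bounds the coefficients by the strip norm. Together these two bounds make the inductive systems equivalent, so the limits coincide.
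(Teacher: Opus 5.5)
Your proof is correct, but it takes a different route from the paper's.

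\textbf{The paper's route.} The paper works with Hilbert steps. It realizes $C^{\omega}(\mathbb{S}^1,\mathbb{R})$ as $\varinjlim_{r\to 0}H^2(A_r)$, the Bergman spaces on annuli around $\mathbb{S}^1$. It shows the restriction maps $H^2(A_R)\to H^2(A_r)$ are trace class: rotation-equivariance diagonalizes $T^*T$ on the monomials $z^n$, and the singular values $\sqrt{c_n}$ are computed explicitly. It then dualizes: the dual is a projective limit of Hilbert spaces with trace-class links, hence nuclear Fr\'echet, and reflexivity returns nuclearity and the Silva property to the original space.

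\textbf{Your route.} You reuse the strip Banach spaces $B_n$ of Section~\ref{5} (Lemmas~\ref{5.7}--\ref{5.9}) for the Silva property. You obtain nuclearity by passing through Fourier coefficients to a co-echelon space, where the Grothendieck--Pietsch condition reduces to summability of $e^{-|m|/(k(k+1))}$.

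\textbf{What each buys.} Your argument is more elementary: no Bergman-space integrals and no duality or reflexivity step. More importantly, you actually justify the topological identification. The interlacing $B_k\to\ell^\infty(e^{|m|/k})\to B_{k+1}$ does this, via contour shifting in one direction and $|e^{imz}|\le e^{|m|\,|\mathrm{Im}\,z|}$ with a loss of strip width in the other. The paper only asserts that $C^{\omega}(\mathbb{S}^1,\mathbb{R})=\varinjlim H^2(A_r)$ as locally convex spaces.

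The paper's Hilbert steps with trace-class bonding maps match verbatim the appendix definition of a nuclear Silva space, so its dualization step is in fact dispensable. With your $\ell^\infty$ steps you instead need the general fact that a countable inductive limit of Banach spaces with nuclear linking maps is nuclear. Alternatively, replace $\ell^\infty$ by weighted $\ell^2$ spaces, where the same diagonal operator is trace class and the two systems still interlace; this lands exactly in the appendix definition.

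\textbf{One small gap.} Section~\ref{5} produces complex-valued functions, so you should pass to the closed real subspace. Nuclearity and the (DFS) property are both inherited by closed subspaces.
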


\begin{proof}
It suffices to show that the scalar-valued space $C^{\omega}(\mathbb{S}^1, \mathbb{R})$ is a nuclear Silva space, as tensoring with the finite-dimensional space $\mathfrak{g}$ preserves this property (see also \cite[Theorem 11.4]{20}). 

Let $A_r = \{z \in \mathbb{C} \mid 1-r < |z| < 1+r\}$ be the open annulus of thickness $r \in (0, 1)$ centered around the unit circle. We define the Hilbert space of square-integrable holomorphic functions on this annulus as
\[
H^2(A_r) := \left\{ f : A_r \rightarrow \mathbb{C} \mid f \text{ is holomorphic and } \int_{A_r} |f(z)|^2 \, dA < \infty \right\}.
\]
We will show that $C^{\omega}(\mathbb{S}^1, \mathbb{R})$ is the locally convex direct limit of $H^2(A_r)$ as $r \to 0$, and that the dual linking maps are nuclear (specifically, trace-class) operators.

\paragraph{Direct limit of Hilbert spaces.}
Let $f \in C^{\omega}(\mathbb{S}^1, \mathbb{R})$. Then there exists some $r \in (0, 1)$ such that $f$ extends to a bounded holomorphic function $F$ on the annulus $A_r$. Writing the Fourier expansion on the circle
\[
f(e^{i\theta}) = \sum_{k \in \mathbb{Z}} \hat{f}(k)e^{ik\theta},
\]
the fact that $f$ admits a holomorphic extension to $A_r$ implies, by the Paley–Wiener theorem (see, for example, \cite[Theorem 3.1]{29}), that there are constants $C, \lambda > 0$ such that
\[
|\hat{f}(k)| \le Ce^{-\lambda|k|} \quad \text{for all } k \in \mathbb{Z}.
\]
This exponential decay ensures that $\sum_{k \in \mathbb{Z}} |\hat{f}(k)|^2 < \infty$. By Parseval’s identity, the holomorphic extension $F$ has finite $L^2$-norm over $A_{r'}$ for any $0 < r' < r$, meaning $F \in H^2(A_{r'})$. We therefore obtain the topological identification
\[
C^{\omega}(\mathbb{S}^1, \mathbb{R}) = \varinjlim_{r \to 0} H^2(A_r)
\]
as a locally convex regular direct limit of Hilbert spaces.

\paragraph{Nuclearity of the dual space.}
Since each $H^2(A_r)$ is a Hilbert space, by the Riesz Representation theorem, it is anti-linearly isometric to its dual $H^2(A_r)'$. The dual space of the direct limit is exactly the projective limit of the duals:
\[
C^{\omega}(\mathbb{S}^1, \mathbb{R})' = \varprojlim_{r \to 0} H^2(A_r)'.
\]
Consider two radii $0 < r < R < 1$. The linking map in the direct system is the inclusion $H^2(A_R) \hookrightarrow H^2(A_r)$. The corresponding dual linking map is the restriction operator $T : H^2(A_R) \rightarrow H^2(A_r)$ defined by $T(f) = f|_{A_r}$. We will show that $T$ is a trace-class operator.

The monomials $\{z^n \mid n \in \mathbb{Z}\}$ form an orthogonal basis for both $H^2(A_R)$ and $H^2(A_r)$. Let us analyze the operator $T^*T : H^2(A_R) \to H^2(A_R)$. First, we observe that $T$ is equivariant with respect to rotations $R_\theta(z) = e^{i\theta}z$. For any $f, g \in H^2(A_R)$, the inner product on $H^2(A_r)$ yields
\begin{align*}
\langle T^*T (R_\theta f), g \rangle_R &= \langle T(R_\theta f), T(g) \rangle_r \\
&= \langle R_\theta (Tf), Tg \rangle_r \\
&= \langle Tf, R_{-\theta} (Tg) \rangle_r \\
&= \langle Tf, T(R_{-\theta} g) \rangle_r \\
&= \langle T^*T f, R_{-\theta} g \rangle_R \\
&= \langle R_\theta (T^*T f), g \rangle_R.
\end{align*}
Since this holds for all $g$, we have $T^*T \circ R_\theta = R_\theta \circ T^*T$, meaning $T^*T$ commutes with rotations. Because the monomials $z^n$ are exactly the eigenvectors of the rotation action, they must also be the eigenvectors of $T^*T$. Thus, $T^*T(z^n) = c_n z^n$ for some eigenvalues $c_n > 0$.

We compute $c_n$ explicitly by taking the inner product:
\[
c_n = \frac{\langle T^*T(z^n), z^n \rangle_R}{\|z^n\|^2_R} = \frac{\langle T(z^n), T(z^n) \rangle_r}{\|z^n\|^2_R} = \frac{\int_{A_r} |z|^{2n} \, dA}{\int_{A_R} |z|^{2n} \, dA}.
\]
Evaluating the area integrals in polar coordinates ($dA = s \, ds \, d\theta$), we get:
\[
c_n = \frac{\int_0^{2\pi} \int_{1-r}^{1+r} s^{2n+1} \, ds \, d\theta}{\int_0^{2\pi} \int_{1-R}^{1+R} s^{2n+1} \, ds \, d\theta} = \frac{(1+r)^{2n+2} - (1-r)^{2n+2}}{(1+R)^{2n+2} - (1-R)^{2n+2}}.
\]

To prove that $T$ is a trace-class operator, we must show that the sum of its singular values converges, i.e., $\sum_{n \in \mathbb{Z}} \sqrt{c_n} < \infty$. We apply the ratio test to the sequence $a_n = \sqrt{c_n}$ as $n \to \infty$:
\[
\lim_{n \rightarrow \infty} \frac{a_{n+1}}{a_n} = \lim_{n \rightarrow \infty} \sqrt{\frac{c_{n+1}}{c_n}} = \lim_{n \to \infty} \sqrt{\frac{(1+r)^{2n+4}}{(1+R)^{2n+4}}} = \frac{1+r}{1+R}.
\]
Because $r < R$, the ratio $\frac{1+r}{1+R}$ is strictly less than $1$. An identical symmetric limit holds for $n \to -\infty$ using the $(1-r)$ and $(1-R)$ terms. By the ratio test, the series $\sum \sqrt{c_n}$ converges absolutely, proving that $T$ is a trace-class operator.

Since $C^{\omega}(\mathbb{S}^1, \mathbb{R})'$ is a projective limit of Hilbert spaces where each linking map is trace-class, \cite[Chapter 7.3, Corollary 3]{28} proves that $C^{\omega}(\mathbb{S}^1, \mathbb{R})'$ is a nuclear Fréchet space. By \cite[Chapter IV, Theorem 9.6]{28}, the strong dual of a nuclear Fréchet space is a nuclear Silva space. 

Furthermore, because $C^{\omega}(\mathbb{S}^1, \mathbb{R})'$ is a nuclear Fréchet space, it is a Montel space. The strong dual of a Montel space is again Montel, and Montel spaces are reflexive. Therefore, the canonical evaluation map yields the topological isomorphism
\[
C^{\omega}(\mathbb{S}^1, \mathbb{R}) \cong C^{\omega}(\mathbb{S}^1, \mathbb{R})''.
\]
Thus, the original space $C^{\omega}(\mathbb{S}^1, \mathbb{R})$ is a nuclear Silva space.
\end{proof}

\textit{Examples: (3) $\widetilde{\mathrm{Diff}^\infty(M)_0}$, (4) $\widetilde{\mathrm{Diff}^\omega(M)_0}$.} \\
Let $M$ be a real compact connected manifold of dimension $n$. The exponential maps $\exp : \Gamma^\infty(TM) \to \mathrm{Diff}^\infty(M)$ and $\exp : \Gamma^\omega(TM) \to \mathrm{Diff}^\omega(M)$ are not necessarily local diffeomorphisms, but we will see that the subgroup generated by the image $\exp(\Gamma^\infty(TM))$ (respectively, $\exp(\Gamma^\omega(TM))$) equals the identity component $\mathrm{Diff}^\infty(M)_0$ (respectively, $\mathrm{Diff}^\omega(M)_0$). Let $\widetilde{\exp}: \Gamma^\infty(TM) \to \widetilde{\mathrm{Diff}^\infty(M)_0}$ be the lifting of the exponential map to the universal covering group. Then, the group generated by $\widetilde{\exp}(\Gamma^\infty(TM))$ (respectively, $\widetilde{\exp}(\Gamma^\omega(TM))$) equals $\widetilde{\mathrm{Diff}^\infty(M)_0}$ (respectively, $\widetilde{\mathrm{Diff}^\omega(M)_0}$).

\begin{remark}
Let $M$ be a compact, connected smooth manifold. Although the exponential map $\exp: \Gamma^\infty(TM) \to \mathrm{Diff}^\infty(M)$ is not locally surjective (see \cite[p. 9]{25}), the subgroup $H := \langle \exp(\Gamma^\infty(TM)) \rangle$ generated by its image equals the entire identity component $\mathrm{Diff}^\infty(M)_0$. 

To see this, note that $H$ is a normal subgroup of $\mathrm{Diff}^\infty(M)_0$. Indeed, for any diffeomorphism $\varphi \in \mathrm{Diff}^\infty(M)_0$ and any vector field $X \in \Gamma^\infty(TM)$, conjugation yields $\varphi \circ \exp(X) \circ \varphi^{-1} = \exp(\varphi_* X)$, where the pushforward $\varphi_* X$ is again a smooth vector field in $\Gamma^\infty(TM)$. By \cite[Theorem 1]{38} of Thurston, the identity component of the diffeomorphism group of a connected manifold, $\mathrm{Diff}^\infty(M)_0$, is a simple group. Since $H$ is a non-trivial normal subgroup, it cannot be a proper subgroup, and thus we must conclude that $H = \mathrm{Diff}^\infty(M)_0$.
\end{remark}

\begin{remark}
Let $M$ be a real-analytic $U(1)$-fibered compact, connected smooth manifold. Let $G = \mathrm{Diff}^\omega(M)$ be the Lie group of real-analytic diffeomorphisms on $M$, modeled on the nuclear Silva space $\mathfrak{g} = \Gamma^\omega(TM)$. Let $H = \langle \exp(\mathfrak{g}) \rangle$ denote the subgroup generated by the image of the exponential map. 

While the exponential map $\exp : \Gamma^\omega(TM) \to \mathrm{Diff}^\omega(M)$ is not a local diffeomorphism in the Silva topology of $\mathrm{Diff}^\omega(M)$, Tsuboi's Regimentation lemma implies that $H$ contains an open neighborhood of the identity.

Let $\{\xi_1, \dots, \xi_N\}$ be a finite set of analytic vector fields spanning the tangent space $T_xM$ at any point $x \in M$ (\cite[Proposition 8.1]{37}). For each choice $\kappa$ of $n$ vector fields among $\{\xi_1, \dots, \xi_N\}$, let $\Delta_\kappa$ be the determinant of the corresponding vector fields. We construct analytic partition functions $\mu_\kappa = (\Delta_\kappa)^4 / \sum_\lambda (\Delta_\lambda)^4$ (see \cite[Section 8]{37}).
    
By the Regimentation lemma \cite[Theorem 7.1]{37}, any $\varphi \in G$ sufficiently close to the identity admits a decomposition $\varphi = \prod_\kappa g_\kappa$ such that each factor $g_\kappa$ satisfies the condition that $g_\kappa - \mathrm{id}$ is divisible by $\mu_\kappa$.
    
Since $\mu_\kappa$ contains the factor $(\Delta_\kappa)^4$, the map $g_\kappa$ satisfies the hypothesis of the inverse function theorem for singular Jacobians ($g_\kappa - \mathrm{id}$ divisible by $\Delta^r$ for $r \ge 3$, \cite[Theorem 6.12]{37}). This theorem guarantees that each regimented factor $g_\kappa$ can be expressed as a finite composition of flows
\[ 
g_\kappa(x) = \left( \exp(t_1 \xi_{k_1}) \circ \cdots \circ \exp(t_n \xi_{k_n}) \right)(x), 
\]
where $t_i(x)$ are real-analytic functions. The map $x \mapsto \exp(t_i(x) \xi_{k_i})(x)$ corresponds to the time-1 flow of the global analytic vector field $X_i = t_i \cdot \xi_{k_i} \in \mathfrak{g}$.

Consequently, every $\varphi$ in a sufficiently small neighborhood of the identity belongs to $H$, implying that $H$ is an open subgroup of $G$. Since the identity component $\mathrm{Diff}^\omega(M)_0$ is connected, an open subgroup must be the entire component, meaning $H = \mathrm{Diff}^\omega(M)_0$.
\end{remark}

\begin{lemma}
Let $M$ be a compact manifold. The group of smooth diffeomorphisms $\mathrm{Diff}^{\infty}(M)$ and the group of real-analytic diffeomorphisms $\mathrm{Diff}^{\omega}(M)$ are regular Lie groups modeled on the space of smooth vector fields $\Gamma^{\infty}(TM)$ and the space of real-analytic vector fields $\Gamma^{\omega}(TM)$, respectively. Furthermore, $\Gamma^{\infty}(TM)$ is a nuclear Fréchet space, while $\Gamma^{\omega}(TM)$ is a nuclear Silva space.
\end{lemma}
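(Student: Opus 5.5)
The lemma has three parts: regularity of the two groups, the identification of their modeling spaces, and nuclearity of those spaces. I will treat the smooth and analytic cases separately. For the Lie group structures and regularity I will rely on the literature rather than reprove them. $\mathrm{Diff}^{\infty}(M)$ is a regular Lie group modeled on $\Gamma^{\infty}(TM)$ by \cite[Theorem 43.1, 38.4]{20}, and $\mathrm{Diff}^{\omega}(M)$ is a regular Lie group modeled on $\Gamma^{\omega}(TM)$ by \cite[Theorem 43.4]{20} (see also the examples in \cite{12}). The charts come from a Riemannian (respectively real-analytic) exponential map, $X \mapsto \exp\circ X$. Regularity comes from the fact that a smooth curve in the Lie algebra is a time-dependent vector field on the compact manifold $M$. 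Its flow exists for all times, depends smoothly (respectively analytically) on the data, and gives the evolution map $\operatorname{evol}$.

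For nuclearity of $\Gamma^{\infty}(TM)$, I will cover $M$ by finitely many charts and use a partition of unity. This embeds $\Gamma^{\infty}(TM)$ as a closed complemented subspace of a finite product $\prod_i C^{\infty}(K_i,\mathbb{R}^n)$. An alternative is to embed $M$ in some $\mathbb{R}^N$ and realize $\Gamma^{\infty}(TM)$ as a complemented subspace of $C^{\infty}(M,\mathbb{R}^N)$. Each $C^{\infty}(K,\mathbb{R})$ is nuclear Fréchet. This is the same argument as for $C^{\infty}(\mathbb{S}^1,\mathbb{R})$ in the preceding lemma, via \cite[Theorem 51.5]{31}. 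Products, closed subspaces and finite tensor products with $\mathbb{R}^n$ preserve nuclear Fréchet spaces, which gives the claim.

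The analytic case needs more care and is the main obstacle. Using Grauert's embedding, I will realize $M$ as a real-analytic submanifold of some $\mathbb{R}^N$ and choose a complexification $M_{\mathbb{C}}$ with a basis of Stein neighborhoods $U_k \downarrow M$. Then $\Gamma^{\omega}(TM)=\varinjlim_k \Gamma_b(U_k, TM_{\mathbb{C}})$, where each step is the Banach space of bounded holomorphic sections with real restriction to $M$. The linking maps are restrictions $U_k \to U_{k+1}$ with $\overline{U_{k+1}}\subset U_k$. Montel's theorem (Cauchy estimates plus Arzelà–Ascoli, as in Lemma \ref{5.9}) shows these maps are compact, so the limit is a Silva space.

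For nuclearity I will follow the scheme used for $C^{\omega}(\mathbb{S}^1,\mathbb{R})$. I will show that the restriction maps factor through trace-class operators. The most robust way is to replace the bounded-holomorphic Banach spaces by Bergman spaces $H^2(U_k)$, which give an equivalent inductive system by the mean-value inequality. I will then use that the restriction $H^2(U_k)\to H^2(U_{k+2})$ between relatively compact nested domains is nuclear, since it is a composition of two Hilbert–Schmidt restrictions. Once the linking maps are nuclear, the strong dual is a nuclear Fréchet space by \cite[Chapter 7.3]{28}. Reflexivity (Montel) then gives that $\Gamma^{\omega}(TM)$ is a nuclear Silva space, as at the end of the proof for $C^{\omega}(\mathbb{S}^1,\mathbb{R})$.
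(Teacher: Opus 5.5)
Your proof is correct. For regularity and the smooth case it is essentially the paper's argument, but in the analytic case you prove what the paper only cites.

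\textbf{Regularity and the smooth case.} The paper also cites \cite[Theorem 43.1]{20} and \cite[Theorem 43.4]{20} for regularity. It also gets nuclearity of $\Gamma^{\infty}(TM)$ by reducing locally to $C^{\infty}(U,\mathbb{R}^n)$ and invoking \cite[Theorem 51.5]{31}. One harmless inaccuracy: the earlier argument for $C^{\infty}(\mathbb{S}^1,\mathbb{R})$ used \cite[Theorem 51.3]{31}, the isomorphism with $\mathfrak{s}$, so it is not literally ``the same argument''.

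\textbf{The analytic case.} The paper simply cites \cite[Theorem 11.4]{20} for the fact that $\Gamma^{\omega}(TM)$ is a nuclear Silva space. You instead generalize the paper's own annulus computation for $C^{\omega}(\mathbb{S}^1,\mathbb{R})$. The explicit eigenvalues $c_n$ are replaced by the Bergman-kernel identity $\|r\|_{HS}^2=\int_{U_{k+1}}K_{U_k}(z,z)\,dV(z)$ for the restriction $r\colon H^2(U_k)\to H^2(U_{k+1})$. This is finite because $K_{U_k}(z,z)$ is bounded on the compact set $\overline{U_{k+1}}\subset U_k$. Hence each restriction is Hilbert--Schmidt, and the composite of two consecutive ones is trace class.

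\textbf{What each route buys.} Your route is self-contained. It directly exhibits $\Gamma^{\omega}(TM)$ as an inductive limit of Hilbert spaces with trace-class bonding maps, which is exactly the definition of a nuclear Silva space in the paper's appendix. So the detour through the strong dual and reflexivity is not even needed. The cost is the complexification machinery. You also implicitly identify the standard topology on $\Gamma^{\omega}(TM)$ with the germ topology, which is what the cited Kriegl--Michor result supplies.

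\textbf{Two small points to tighten.} You do not need the $U_k$ to be Stein. You do want each $U_k$ to be invariant under the antiholomorphic involution fixing $M$, so that the reality condition cuts out a complemented real form compatible with restriction. You also want every connected component of each $U_k$ to meet $M$. Then the identity theorem makes the bonding maps injective, and the system is a genuine inductive limit.
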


\begin{proof}
If $M$ is a compact manifold of dimension $n$, $\mathrm{Diff}^{\infty}(M)$ is a Lie group with Lie algebra $\Gamma^{\infty}(TM)$. Locally, this space is isomorphic to $C^{\infty}(U, \mathbb{R}^n)$, where $U \subseteq M$ is an open subset. According to \cite[Proposition 1.7.16]{11}, $C^{\infty}(U, \mathbb{R}^n)$ is a Fréchet space. The nuclearity of $\Gamma^{\infty}(TM)$ follows from \cite[Theorem 51.5]{31} and its corollary (see also \cite[Lemma 30.3]{20}). The regularity of $\mathrm{Diff}^{\infty}(M)$ is established in \cite[Theorem 43.1]{20}.

In the real-analytic case, $\mathrm{Diff}^{\omega}(M)$ is modeled on the Lie algebra $\Gamma^{\omega}(TM)$. By \cite[Theorem 11.4]{20}, $\Gamma^{\omega}(TM)$ is a nuclear Silva space. While the regularity of analytic diffeomorphism groups can be subtle in Milnor's sense (see \cite{13}), it follows from \cite[Theorem 43.4]{20} that $\mathrm{Diff}^{\omega}(M)$ is a regular Lie group in the sense of Kriegl and Michor \cite[Definition 38.4]{20}.
\end{proof}
\section*{Acknowledgments}
First and foremost, I thank Prof. Dr. Helge Glöckner for many important discussions and hints. 
Furthermore, I am very grateful to Prof. Peter W. Michor, Prof. Alice Barbora Tumpach, Prof. Tomasz Goli{\'n}ski, Dr. Stepan Maximov, Dr. Milan Niestijl and Dr. Dawid Kern for their helpful discussions and motivation during various stages of this research.
This research was supported by the German Research Foundation (Deutsche Forschungsgemeinschaft), SFB-TRR 358/1 2023 – 491392403.

\appendix

\section{Concepts and notation of infinite-dimensional calculus}
\label{appendix:calculus}

\subsection{Topological properties of spaces}
\begin{enumerate}
    \item A Hausdorff topological space $X$ is a \textit{k}-space if a subset $A \subseteq X$ is closed if and only if $A \cap K$ is closed in $K$ for each compact subset $K \subseteq X$. Every metrizable topological space is a \textit{k}-space.
    \item A Hausdorff space $X$ is a \textit{$k_{\mathbb{R}}$-space} if a function $f: X \rightarrow \mathbb{R}$ is continuous if and only if it is $k$-continuous (continuous on compact subsets). Every \textit{k}-space is a $k_{\mathbb{R}}$-space.
    \item A space $X$ is a \textit{$k^{\infty}$-space} if the cartesian power $X^{n}$ is a \textit{k}-space for each $n \in \mathbb{N}$. A Hausdorff space $X$ is called \textit{hemicompact} if $X=\bigcup_{n\in \mathbb{N}}K_n$ for a sequence $K_1\subseteq K_2\subseteq\cdots$ of compact sets $K_n\subseteq X$ such that each compact subset of $X$ is a subset of some $K_n$. Hemicompact \textit{k}-spaces are also called $k_{\omega}$-spaces.
    \item A \textit{Silva space} (or DFS-space) is the locally convex inductive limit of a sequence of Banach spaces $E_{1} \subseteq E_{2} \subseteq \dots$ with compact bonding maps. Every Silva space is a $k_{\omega}$-space. A nuclear Silva space is the inductive limit of a sequence of Hilbert spaces with nuclear (trace-class) bonding maps.
    \item A \textit{nuclear Fréchet space} is the projective limit of a sequence of Hilbert spaces with nuclear (trace-class) bonding maps.
    \item A \textit{Montel space} is a barrelled topological vector space with the Heine-Borel property (that means every closed and bounded set is compact). A nuclear Fréchet or a nuclear Silva space is Montel.
\end{enumerate}

\subsection{Differential calculus framework}
We work in the setting of Keller's $C_{c}^{r}$-calculus (see \cite{11}) and Kriegl-Michor's convenient setting of infinite-dimensional differential calculus (see \cite{20}).

\begin{definition}
Let $\mathbb{K} \in \{\mathbb{R}, \mathbb{C}\}$ and let $E, F$ be locally convex topological vector spaces. Let $U \subseteq E$ be an open set and $r \in \mathbb{N}_{0}$. A map $f: U \rightarrow F$ is called $C_{\mathbb{K}}^{0}$ if it is continuous. A continuous map $f: U \rightarrow F$ is called a $C_{\mathbb{K}}^{r}$ map if the iterated directional derivatives
\[
d^{k}f(x, y_{1}, \dots, y_{k}) := D_{y_{k}} \cdots D_{y_{1}}f(x)
\]
exist for all $k \in \mathbb{N}$ with $k \le r$, for all $x \in U$, and all $y_{1}, \dots, y_{k} \in E$, where the maps
\[
d^{k}f: U \times E^{k} \rightarrow F, \quad (x, y_{1}, \dots, y_{k}) \mapsto d^{k}f(x, y_{1}, \dots, y_{k})
\]
are continuous. The directional derivative is defined as
\[
D_{y}f(x) = \lim_{t \rightarrow 0} \frac{f(x+ty) - f(x)}{t}.
\]
A map is smooth (or $C_{\mathbb{K}}^{\infty}$-smooth) if it is $C_{\mathbb{K}}^{r}$ for all $r \in \mathbb{N}_{0}$.
\end{definition}

\begin{definition}[Complex analytic function, \cite{11}]
Let $E$ and $F$ be complex locally convex spaces and let $U\subseteq E$ be open. A map $f: U \to F$ is called complex analytic ($\mathbb{C}$-analytic, or $C^\omega_{\mathbb{C}}$) if it is continuous and, for each $x \in U$, there exists a sequence $(p_k)_{k\in \mathbb{N}_0}$ of continuous homogeneous polynomials $p_k: E \to F$ of degree $k$ such that
\[
f(x+ y) = \sum_{k=0}^{\infty}p_k(y)
\] 
as a pointwise limit for all $y$ in a $0$-neighborhood $Y\subseteq U$ such that $x+ Y\subseteq U$.
\end{definition}

\begin{definition}[Real analytic function, \cite{11}]
Let $E, F$ be locally convex real topological vector spaces. Let $U \subseteq E$ be an open set. A map $f: U \rightarrow F$ is called real analytic ($\mathbb{R}$-analytic, or $C^\omega_{\mathbb{R}}$) if it extends to a complex analytic mapping $f: V\to E_{\mathbb{C}}$ on an open subset $V$ of $U$ in $E_{\mathbb{C}}$.
\end{definition}

\subsection{Convenient calculus}
The convenient setting due to Frölicher-Kriegl \cite{20} provides the natural framework for our work.

\begin{definition}
A locally convex topological vector space $E$ is called a \textit{convenient} space if a curve $c: \mathbb{R} \rightarrow E$ is smooth if and only if $\lambda \circ c: \mathbb{R} \rightarrow \mathbb{R}$ is smooth for all continuous linear functionals $\lambda \in E^{\prime}$.
\end{definition}

For a convenient vector space $E$, the $c^{\infty}$-topology is the final topology with respect to the set of all smooth curves $c \in C^{\infty}(\mathbb{R}, E)$. This topology may differ from the original locally convex topology, but for Fréchet (and hence Banach) and Silva spaces, both topologies coincide.

\subsection{Space of multilinear maps}
Let $E, F$ be locally convex or convenient spaces. We denote:
\begin{itemize}
    \item $L_{b}(E, F)$: the space of continuous linear maps from $E$ to $F$ with the topology of uniform convergence on bounded subsets of $E$.
    \item $L_{c}(E, F)$: the space of continuous linear maps from $E$ to $F$ with the topology of uniform convergence on compact subsets of $E$.
    \item $L^{k}(E_{1}, \dots, E_{k}, F)$: the space of continuous $k$-linear maps from $E_{1} \times \dots \times E_{k}$ to $F$.
    \item $E^{\prime}$: the space of bounded linear functionals on $E$. If $E$ is a convenient space, this coincides with the space of continuous linear maps since a convenient space is bornological.
\end{itemize}
All these spaces possess locally convex topological vector space and convenient vector space structures (see \cite[Proposition 5.6]{20}).

\subsection{Smooth vector bundle structure on $L^{2}(T'M)$}
Let
\[
L^2_{\text{skew}}(T'M) := \bigcup_{x\in M} L^2_{\text{skew}}(T'_{x}M)
\]
be a bundle over a convenient manifold $M$, where $L^2_{\text{skew}}(T'_xM)$ denotes the space of skew-symmetric bilinear forms on $T'_xM$ equipped with the topology of uniform convergence on bounded subsets of $T'_xM\times T'_xM$. Then $L^{2}(T'M)$ carries a smooth vector bundle structure over $M$ (see \cite[Section 33]{20}, \cite[Section 1.1, 3.1]{24}).

\begin{definition}[Smooth map]
For convenient locally convex spaces $E$ and $F$, and a $c^\infty$-open subset $U \subseteq E$, a map $f: U \rightarrow F$ is called smooth if for all smooth curves $c: \mathbb{R} \rightarrow E$ such that $c(\mathbb{R}) \subseteq U$, the composition $f \circ c \in C^\infty(\mathbb{R}, F)$.
\end{definition}

\begin{remark}[{\cite[Theorem 3.18]{20}}]
Let $E$ and $F$ be locally convex spaces, and let $U \subseteq E$ be $c^{\infty}$-open. Then the differential map $d: C^{\infty}(U, F) \rightarrow C^{\infty}(U, L(E, F))$, given by
\[
df(x)v := \lim_{t \rightarrow 0} \frac{f(x+tv) - f(x)}{t},
\]
exists for $x \in U, v \in E$, and is linear and bounded (smooth). The chain rule holds:
\[
d(f \circ g)(x) = df(g(x)) dg(x).
\]
\end{remark}

\subsubsection{Topologies on $C^{\infty}(E,F)$}
(see \cite[Definition 3.6, Definition 3.11]{20}). Let $E$ and $F$ be two convenient spaces, and let $U \subset E$ be an open set. The space of smooth curves on $E$, $C^{\infty}(\mathbb{R}, E)$, carries a convenient topological structure as the initial topology with respect to the following mappings:
\[
C^{\infty}(\mathbb{R}, E) \xrightarrow{d_k} C^{\infty}(\mathbb{R}, E) \xrightarrow{r_{|K}} \ell^{\infty}(K, E)
\]
for each $k \in \mathbb{N}_{0}$, where $d_{k}$ denotes the $k$-th derivative of curves, $K$ runs over the compact subsets of $\mathbb{R}$, and $r_{|K}$ is the restriction of curves to the space of bounded functions on $K$ equipped with the topology of uniform convergence on $K$.

The space of smooth maps $U \subseteq E \rightarrow F$ carries a convenient topological structure as the initial topology with respect to the following mappings:
\[
C^{\infty}(U, F) \xrightarrow{c^{*}} C^{\infty}(\mathbb{R}, F)
\]
for every $c \in C^{\infty}(\mathbb{R}, E)$ such that $c(\mathbb{R}) \subseteq U$, where $c^{*}: C^{\infty}(U, F) \rightarrow C^{\infty}(\mathbb{R}, F)$ is defined as $c^{*}(\gamma) := \gamma \circ c$ for $\gamma \in C^{\infty}(U, F)$.

\begin{remark}
The locally convex compact-open-$C^{\infty}$ topology on $C^{\infty}(U, E)$ is finer than the convenient topology (see \cite[1.7]{11}). Let $U \subseteq E$ be an open subset. Then for $k \in \mathbb{N}_{0} \cup \{\infty\}$, the compact-open-$C^{k}$ topology on $C^{k}(U, E)$ is the final topology with respect to the mapping
\[
\mathfrak{p}: C^{k}(U, E) \rightarrow \prod_{j \le k} C(U \times E^{j}, \mathbb{K}), \quad f \mapsto (d^{j}f)_{j \le k}
\]
where $C(U \times E^{j}, \mathbb{K})$ is equipped with the topology of uniform convergence on compact subsets of $U \times E^{j}$.
\end{remark}

\subsection{Convenient manifold, bundle, and Lie group}

\begin{definition}[Convenient Manifold {\cite[27.1]{20}}]
A convenient manifold $M$ is a set equipped with a smooth atlas $\{(U_{\alpha}, \phi_{\alpha})\}_{\alpha \in A}$, where $M = \bigcup_{\alpha} U_{\alpha}$ and each chart $\phi_{\alpha}: U_{\alpha} \rightarrow E_{\alpha}$ is a bijection onto an open subset of a convenient vector space $E_{\alpha}$. Transition maps $\phi_{\alpha\beta} = \phi_{\alpha} \circ \phi_{\beta}^{-1}$ are required to be smooth in the convenient sense (meaning they map smooth curves to smooth curves).
\end{definition}

Given a manifold $M$ modeled on $E$, we denote:
\begin{itemize}
    \item $T_{x}M$: the tangent space at $x \in M$, consisting of kinematic tangent vectors (the space of all derivatives $c^{\prime}(0)$ at 0 of smooth curves $c \in C^{\infty}(\mathbb{R}, M)$ with $c(0)=x$). $TM := \bigcup_{x \in M} T_{x}M$ defines a convenient smooth vector bundle over $M$ (see \cite[28.12]{20}).
    \item $T_{x}^{\prime}M$: the kinematic cotangent space at $x \in M$, defined as the dual space $(T_{x}M)'$. The kinematic $T^{\prime}M := \bigcup_{x \in M} T_{x}^{\prime}M$ defines a smooth vector bundle over $M$ (see \cite[33.1]{20}).
\end{itemize}

\begin{definition}[Smoothly regular convenient manifold]
A convenient manifold $M$ is called \textit{smoothly regular} if the initial topology on $M$ with respect to the family of smooth functions $C^{\infty}(M, \mathbb{R})$ coincides with the manifold topology. Equivalently, $M$ is smoothly regular if the smooth functions separate points and generate the topology of $M$.
\end{definition}

\begin{definition}[Smoothly regular manifold {\cite[Definition 3.5.30]{11}}]
Let $M$ be a $C^{\infty}$-manifold modeled on a real locally convex space. $M$ is called a smoothly regular manifold if the following equivalent conditions (adapted from $C^{r}$-regularity in \cite[Proposition 3.5.29]{11}) are satisfied:
\begin{enumerate}
    \item[(a)] The topology on $M$ is initial with respect to the family of smooth functions $C^{\infty}(M, \mathbb{R})$.
    \item[(b)] For every $x \in M$ and neighborhood $U \subseteq M$ of $x$, there exists a smooth function $f: M \rightarrow \mathbb{R}$ such that $f(x) \neq 0$ and $\text{supp}(f) \subseteq U$.
    \item[(c)] For every $x \in M$ and neighborhood $U \subset M$ of $x$, there exists a smooth function $f: M \rightarrow \mathbb{R}$ such that $f(M) \subseteq [0, 1]$, $\text{supp}(f) \subseteq U$, and $f|_{V} = 1$ for some neighborhood $V \subseteq M$ of $x$.
\end{enumerate}
\end{definition}

\begin{definition}[Smoothly paracompact manifold {\cite[27.4]{20}}]
A convenient manifold $M$ is smoothly paracompact if every open cover of $M$ admits a smooth partition of unity subordinate to it. Note that a smoothly paracompact manifold is automatically smoothly regular and admits smooth bump functions (see \cite[16.10]{20}).
\end{definition}

\begin{definition}[Regular Lie group]
A convenient Lie group $G$ with Lie algebra $\mathfrak{g}$ is called \textit{regular} (in the sense of Kriegl and Michor \cite[38.4]{20}) if:
\begin{enumerate}
    \item For every smooth curve $\xi \in C^{\infty}(\mathbb{R}, \mathfrak{g})$, the initial value problem for the right logarithmic derivative:
    \[
    \begin{cases}
    \dot{g}(t) = R_{g(t)*}\xi(t) \\
    g(0) = e
    \end{cases}
    \]
    has a unique solution $g = \text{Evol}(\xi) \in C^{\infty}(\mathbb{R}, G)$.
    \item The evolution map $\text{Evol}: C^{\infty}(\mathbb{R}, \mathfrak{g}) \rightarrow C^{\infty}(\mathbb{R}, G)$ is smooth.
\end{enumerate}
See \cite[Definition 6.1]{11} for the regularity of a Lie group (in the sense of Milnor \cite{25}), modeled on a locally convex topological vector space.
\end{definition}

\subsection{References for differential calculus}
The standard references for our differential calculus framework are:
\begin{itemize}
    \item \cite{20}: Kriegl-Michor, \textit{The Convenient Setting of Global Analysis} (1997).
    \item \cite{11}: Neeb-Glöckner, \textit{Infinite-Dimensional Lie groups} (2026).
\end{itemize}

\end{document}